\documentclass[11pt]{article}
\usepackage[margin=1in]{geometry}
\usepackage{amsthm}
\usepackage{amsmath}
\usepackage{amssymb}
\usepackage{enumerate}
\usepackage{graphicx}

\usepackage{tikz}
\usetikzlibrary{positioning, shapes.multipart}
\usepackage{mathtools}
\usepackage{algorithm}
\usepackage{algorithmic}
\usepackage{booktabs}
\usepackage{enumitem}
\usepackage{subcaption}
\newtheorem{theorem}{Theorem}
\newtheorem{lemma}[theorem]{Lemma}
\newtheorem{corollary}[theorem]{Corollary}

\newtheorem{observation}[theorem]{Observation}
\theoremstyle{definition}
\newtheorem{definition}[theorem]{Definition}
\newtheorem{problem}{Problem}
\newtheorem*{restatedDeterTheoremP}{Theorem \ref{thm:deterministic}}
\newtheorem*{restatedRandTheoremP}{Theorem \ref{thm:randomized-color-matching}}

\newcommand{\defn}[1]{\textbf{\emph{#1}}}
\newcommand{\dist}{\operatorname{dist}}
\newcommand{\miss}{\operatorname{miss}}
\newcommand{\cen}{\operatorname{cen}}

\title{More Efficient Parallel $(\Delta+1)$-Edge Coloring}
\author{Jeremy T. Fineman\\
    Georgetown University\\
    \texttt{jf474@georgetown.edu}
    \and 
    Seyed Ali Mohammadi\\
    Georgetown University\\
    \texttt{sm3933@georgetown.edu}}

\date{}

\begin{document}
\maketitle

\begin{abstract}
    This paper gives two parallel algorithms for $\Delta+1$ edge coloring, where $\Delta$ denotes the maximum degree of any vertex.  The first is a deterministic parallel algorithm with $\tilde{O}(\Delta^3)$ span and $\tilde{O}(m \Delta^3)$ work.  Our second algorithm and our main result is a more efficient randomized algorithm, achieving $\tilde{O}(\Delta^2)$ span and $\tilde{O}(m \Delta^2 )$ work both with high probability. These bounds substantially improve over the recent deterministic parallel algorithm of Elkin and Khuzman~\cite{elkin2026efficient}, which has $\tilde{O}(\Delta^4)$ span and $\tilde{O}(m \Delta^5)$ work. Our deterministic algorithm thus represents a $\tilde{O}(\Delta)$ improvement on span and $\tilde{O}(\Delta^2)$ on work compared to their algorithm, and our randomized algorithm improves the span and work by $\tilde{O}(\Delta^2)$ and $\tilde{O}(\Delta^3)$ factors, respectively. Moreover, our improvements do not come at the expense of larger logarithmic factors.
\end{abstract}

\section{Introduction}
Given a simple graph $G=(V,E)$, a \defn{proper edge coloring} is an assignment of colors to the edges such that no two edges incident to the same vertex receive the same color. A proper $k$-edge coloring uses colors from $[k]:=\{1,2,\ldots,k\}$, and the minimum such $k$ is called the \defn{edge chromatic number} of $G$. Vizing~\cite{vizing} proved constructively that every simple graph of maximum degree $\Delta$ admits a proper $(\Delta+1)$-edge coloring. In general, this bound is best possible: a complete graph with an odd number of vertices requires $\Delta+1$ colors. Moreover, deciding whether a graph of maximum degree $\Delta$ admits a $\Delta$ coloring or requires $\Delta+1$ colors is NP-complete~\cite{holyer1981np}. Naturally, allowing more colors makes the problem substantially easier. For example, the line graph $L(G)$ of the input graph $G$ has maximum degree at most $2\Delta-2$, and therefore a greedy vertex coloring of $L(G)$ with $2\Delta-1$ colors immediately gives a proper $(2\Delta-1)$-edge coloring of $G$. For smaller palettes, however, the vertex-coloring techniques do not simply translate to edge coloring. Our goal in this paper is to produce a parallel PRAM algorithm for $(\Delta+1)$-edge coloring that is efficient as possible.

\begin{problem}[$(\Delta+1)$-Edge Coloring]\label{prob:delta-plus-one}
Given a simple graph $G=(V,E)$ with maximum degree $\Delta$, find a proper edge coloring $\chi:E\to[\Delta+1]$.
\end{problem}

In the sequential setting, Vizing's theorem~\cite{vizing} gives an $O(mn)$-time algorithm for Problem~\ref{prob:delta-plus-one}, later improved to $\tilde{O}(m\sqrt n)$ time~ ~\cite{arjomandi1982efficient,gabow1985algorithms}.\footnote{The soft-O notation $\tilde{O}$ hides polylogarithmic factors; that is, for any constant $k$, $O(f(n)\log^{k} f(n)) = \tilde{O}(f(n))$.} Recently there has been significant progress~\cite{sinnamon2019fast,faster2024,even-faster2025,Near-Vizing2025}, culminating in a randomized algorithm running in $O(m\log\Delta)$-time with high probability~\cite{near-linear-time} and an $m^{1+o(1)}$-time deterministic algorithm~\cite{deter-almost-linear2026}. 

The parallel setting, however, has received comparatively less attention~\cite{karloff1987efficient,liang1996parallel,liang1997parallel}. Most recently, Elkin and Khuzman~\cite{elkin2026efficient} gave a deterministic parallel algorithm with $\tilde{O}(m\Delta^5)$ work and $\tilde{O}(\Delta^4)$ span. For large $\Delta$, this is far from work efficient. Elkin and Khuzman's algorithm builds on Vizing's algorithm~\cite{vizing}, which incrementally colors the graph. At a high level, Vizing's method chooses an uncolored edge (in a partially colored graph) and tries to color it; coloring an uncolored edge entails recoloring some already colored edges, while ensuring that the coloring remains proper.  A natural approach for parallelism is to color multiple uncolored edges simultaneously, but the recolorings that arise may interfere with each other.  Thus, a key challenge for parallelism is to find a large set of uncolored edges whose recolorings can proceed without interference.  Our algorithm also builds off of Vizing's approach, but we color more edges in parallel.

In more detail, most $(\Delta+1)$-coloring algorithms rely on the two fundamental structures from Vizing's proof~\cite{vizing}, fans and alternating paths (see Section~\ref{sec:prelim} for formal definitions). Suppose that the graph is partially colored with $\Delta+1$ colors and that $e=(u,v)$ is uncolored. Informally, a \defn{fan} for $e$ consists of the center $u$ together with a sequence of its neighbors such that colors \defn{missing} (that is, colors not used by any incident edge and hence available for use) at vertices of the fan can be shifted along the edges incident to $u$, i.e., a local recoloring within the neighborhood of $u$ called \defn{rotating} its fan. In the easier case, rotating the fan directly frees a color that can be assigned to $e$. In the harder case, a maximal \defn{alternating path} which is path that starts at $u$ consists of two alternating colors along its edge need to be found, \defn{flipping} the colors of the path, that is exchanging the two colors with each other on the path edges and the endpoints of the path, and then rotating the fan colors the uncolored edge.  Proving that this strategy always enables $e$ to be colored is the brilliance of Vizing's theorem~\cite{vizing}.

The most fundamental bottleneck for parallelizing a fan-based algorithm is that a fan for $e$ may intersect the fans from up to $\Theta(\Delta^2)$ other uncolored edges. (That is, a particular center vertex $u$ may share neighbors with $\Delta^2$ other center vertices.) Thus, when only coloring edges whose fans do not interfere, the best one could hope for in general is to color a $1/\Delta^2$ fraction of the uncolored edges in each parallel round of the algorithm. This limit from fan conflicts would necessitate a parallel algorithm with $\Omega(\Delta^2)$ rounds and hence $\Omega(\Delta^2)$ span and $\Omega(m\Delta^2)$ work. Our goal is to come as close to this fundamental bottleneck as possible. Indeed, ignoring logarithmic factors, our randomized algorithm essentially achieves this: in each round, every uncolored edge is colored with probability $\Omega(1/\Delta^2)$, yielding $\tilde{O}(\Delta^2)$ span and $\tilde{O}(m\Delta^2)$ work.  All prior parallel algorithms have significant overhead beyond this barrier. 

Fan neighborhoods are not the only source of interference; the alternating paths can also interfere with each other as well as with other fans. Each alternating path is associated with a pair of colors that should be flipped. Perhaps the most natural way to mitigate this source of interference is to consider only a single color pair in each round of the algorithm, which is what Elkin and Khuzman do~\cite{elkin2026efficient}. But there are $\Theta(\Delta^2)$ color pairs, so this strategy loses another $1/\Delta^2$ factor in the number of uncolored edges that could potentially be colored simultaneously in each parallel round. We address this interference and decrease the total number of parallel rounds. Elkin and Khuzman~\cite{elkin2026efficient} instead focus on the underlying parallel subroutines necessary to implement each round efficiently. We thus leverage much of their machinery when translating our high-level strategy to a PRAM realization. 

\paragraph{Color reduction.} 
As in much of the edge-coloring literature~\cite{liang1996parallel,elkin2026efficient,near-linear-time,sinnamon2019fast}, our algorithm first colors the graph with slightly more than $\Delta+1$ colors and then repeatedly decreases the number of colors in the palette by one, incorporating this color reduction subroutine into a divide-and-conquer algorithm for $(\Delta+1)$-coloring. (For an overview of the divide-and-conquer, see Section~\ref{subsec:solve_problem_1}.) Elkin and Khuzman~\cite{elkin2026efficient} use the same divide-and-conquer strategy at the top level, as do many prior parallel and sequential edge-coloring algorithms. The main contribution of this paper is an improved algorithm for the color reduction step (Problem~\ref{prob:color-reduction}), which improves the overall work and span of the resulting $(\Delta+1)$-coloring.

\begin{problem}[Color reduction]\label{prob:color-reduction}
Given a simple graph $G=(V,E)$ of maximum degree $\Delta$ and given a proper edge coloring $\chi:E\to[\Delta+k]$ using $\Delta + k$ colors, for constant integer $k > 1$, produce a coloring $\chi':E\to[\Delta+k-1]$ using one color fewer.
\end{problem}

One way to reduce the number of colors by one is to uncolor all edges of some color $c$ and recolor them using the remaining palette. Since the edges of every color in a proper edge coloring form a matching, the set of uncolored edges forms a matching as well. If we choose $c$ to be a color that appears on the fewest edges, then the resulting set of uncolored edges $U_\chi$ satisfies $|U_\chi|\leq m/(\Delta+k)$. Since $k$ is constant, we have $|U_\chi|=O(m/\Delta)$.\footnote{Choosing $c$ to be a color that appears on the fewest edges is needed only for our deterministic bound; our randomized algorithm can be applied to any color, since it remains valid for an uncolored matching of size $O(m)$.} Therefore, the main algorithmic task is the following: given a proper partial edge coloring whose set of uncolored edges $U_\chi$ forms a matching of size $O(m/\Delta)$, color all edges in $U_\chi$ using the current palette. Throughout, we focus on the case in which the palette consists of $\Delta+1$ colors and refer to this problem as \defn{color-matching}. An algorithm for color-matching that applies to any palette of size at least $\Delta+1$ yields an algorithm for Problem~\ref{prob:color-reduction} and consequently repeatedly applying this reduction yields an algorithm for Problem~\ref{prob:delta-plus-one}.

\subsection*{Results}
Our main results are the following: 
\begin{theorem}[Deterministic color reduction]\label{thm:deterministic} 
There exists a deterministic CRCW PRAM algorithm for Problem~\ref{prob:color-reduction} with $O(m\Delta^3\log^2 n)$ work and $O(\Delta^3\log^4 n)$ span.
\end{theorem}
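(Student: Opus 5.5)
Following the reduction described before Theorem~\ref{thm:deterministic}, I would first uncolor the least-used color class $c$. This leaves a proper partial coloring with palette $[\Delta+k-1]$, whose uncolored set $U_\chi$ is a matching of size at most $m/(\Delta+k)=O(m/\Delta)$. It then suffices to solve color-matching deterministically within the claimed bounds. The algorithm proceeds in rounds. In each round I would select, deterministically, a large subset $S\subseteq U_\chi$ of uncolored edges whose Vizing chains (fan around one endpoint plus one alternating path) do not interfere. I would color all of $S$ in parallel using the Elkin--Khuzman subroutines for building fans, finding the maximal alternating paths, flipping them and rotating fans. The target is that each round colors a $\Omega(1/\Delta^2)$ fraction of the current uncolored edges, up to the loss discussed below. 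Then $O(\Delta^2\log n)$ rounds finish the matching, and each round must cost $O(\Delta\log^3 n)$ span and $O(m\Delta\log n/\Delta\cdot\ldots)$ work, charged so that the totals are $O(m\Delta^3\log^2 n)$ work and $O(\Delta^3\log^4 n)$ span.

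\textbf{Handling interference.} For fans, I would build a conflict graph on uncolored edges: two are adjacent if their centers are within distance $2$. This graph has degree $O(\Delta^2)$, so a deterministic independent-set or coloring step yields a set of fan-disjoint edges of size $\Omega(|U_\chi|/\Delta^2)$. For alternating paths, instead of fixing a single color pair per round as Elkin--Khuzman do, I would group the edges of the independent set by their $\{\alpha,\beta\}$ pair. Paths of different color pairs can still intersect. So I would truncate or shorten paths: any path longer than some threshold $L$ is not flipped in full. Instead I would use the standard trick of stopping at an edge and uncoloring/recoloring locally, or simply defer it. A counting argument bounds the total length of the disjoint $\{\alpha,\beta\}$-paths per pair by $O(m)$, so only a small fraction of chosen edges have long paths for a suitable $L$. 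Paths sharing a pair and overlapping are resolved by keeping one per path, since each maximal bichromatic path has two ends.

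\textbf{Cost per round.} Computing the maximal bichromatic paths for all pairs at once can be done by list ranking / pointer jumping on the $\{\alpha,\beta\}$-subgraphs, giving $O(\log n)$ span per pair and $O(\log^2 n)$ overall. This costs work proportional to the lengths involved, which is where the extra $\Delta$ factor in work and span (giving $\Delta^3$ rather than $\Delta^2$) enters. I would absorb it by charging each round $O(m\Delta/\Delta)$-ish work over $O(\Delta^3)$ rounds, or $O(m)$ work over $O(\Delta^3\log n)$ rounds if the per-round progress is only $1/\Delta^3$.

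\textbf{Main obstacle.} I expect the hard step to be the deterministic guarantee that a constant fraction of the fan-independent set survives the alternating-path conflicts after the flips, without losing more than one extra factor of $\Delta$. Concretely, after flipping some paths, another edge's fan may become invalid because its missing colors changed. My first attempt would be a deterministic charging argument: each flipped path can invalidate only $O(\Delta)$ selected fans, namely those at its two endpoints' neighborhoods, once fans are $2$-hop independent. This argument would yield progress $\Omega(1/\Delta^3)$ per round, which is consistent with the $\Delta^3$ bound.
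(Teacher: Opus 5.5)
There is a genuine gap in how you handle the alternating paths. You let the selected u-edges use arbitrary pairs simultaneously and only group them by pair. But pairs that share a color interfere. An $\{\alpha,\beta\}$-path and an $\{\alpha,\gamma\}$-path can contain the same $\alpha$-edge, and then the two flips prescribe different new colors for it. They can also both end at a vertex $v$ that is missing $\alpha$, one via a $\beta$-edge and one via a $\gamma$-edge; flipping both leaves two $\alpha$-edges at $v$. Both situations already occur for paths of length three whose centers are far apart. So neither your distance-$2$ independence of centers nor any length threshold $L$ excludes them.

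The proposed fixes do not help either. Truncating a path and ``uncoloring locally'' creates new uncolored edges, which destroys the matching invariant and the progress measure. Your counting argument does not make the selected paths short, because each colored edge lies in $\Delta$ different bichromatic subgraphs.

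Your charging argument also goes in the wrong direction. A flipped path does not only affect fans near its two endpoints: it recolors a fan edge $(u,v_i)$ at every fan center it passes through. A single long path can therefore invalidate arbitrarily many vertex-disjoint fans, and the out-degree of path-fan conflicts can be $\Omega(|A|)$.

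The missing idea is to restrict each round to a \emph{color-disjoint} set of pairs. Partition all unordered pairs into $\ell=O(\Delta)$ matchings of the palette via a $1$-factorization of $K_{\Delta+1}$ (Lemma~\ref{lem:color-pair-partition}). Then activate only the u-edges whose required pair lies in the most popular class, or the trivial fans if those are more numerous. This is exactly where the extra factor $\Delta$ is lost, and it buys everything else:
\begin{itemize}
\item distinct active paths are edge-disjoint and their flips commute (Lemma~\ref{lem:parallel-pair-flips});
\item all needed components are labeled in a single $O(m)$-size auxiliary graph (Lemma~\ref{lem:parallel-pairs}), whereas labeling components for pairs that share colors can cost $\Theta(m\Delta)$ per round;
\item crucially, the \emph{in}-degree of path-fan conflicts at any fan is $O(\Delta)$, because each fan color lies in at most one active pair (Lemma~\ref{lem:path-fan-conflicts}).
\end{itemize}
The conflict graph then has $O(\Delta^2|A|)$ edges, and the Goldberg--Spencer routine yields a conflict-free set of size $\Omega(|A|/\Delta^2)=\Omega(\lambda/\Delta^3)$.

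Your cost accounting also needs repair. The span per round is $O(\log^3 n)$, not $O(\Delta\log^3 n)$. For work, building the fan conflict graph on all of $U_\chi$ in every round costs $\Theta(\Delta^2\lambda_t)$. Since $\lambda_t$ shrinks only by a factor $1-\Theta(1/\Delta^3)$ per round, this can add up to $\Delta^5\lambda_0=\Theta(m\Delta^4)$, a factor $\Delta$ over budget. The paper instead builds the conflict graph only on the selected class $A_t$ and amortizes. From $|A_t|=O(\Delta^2|I_t|)$ and $\sum_t|I_t|\le\lambda_0=O(m/\Delta)$ it gets $\sum_t|A_t|=O(m\Delta)$. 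The conflict-dependent work is therefore $O(m\Delta^3\alpha(n)\log n)$, on top of $O(m\log n)$ fixed work in each of $O(\Delta^3\log n)$ rounds.
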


\begin{theorem}[Randomized color reduction]\label{thm:randomized-color-matching} 
There exists a randomized CRCW PRAM algorithm for Problem~\ref{prob:color-reduction}  with $O(m\Delta^2\log^2 n\log\Delta)$ work and $O(\Delta^2\log^2 n\log\Delta)$ span, both with high probability.
\end{theorem}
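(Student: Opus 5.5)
The plan is to reduce Theorem~\ref{thm:randomized-color-matching} to a randomized color-matching procedure. First, uncolor all edges of one color $c$. Every color class is a matching, so the uncolored set $U_\chi$ is a matching, and the palette now has $\Delta+k-1\ge \Delta+1$ colors. It then suffices to color $U_\chi$ with this palette within the claimed bounds. We do this in rounds. Each round will color every currently uncolored edge with probability $\Omega(1/\Delta^2)$, independently of the past given the current coloring. A single edge then survives $t=C\Delta^2\log n$ rounds with probability at most $(1-\Omega(1/\Delta^2))^{t}\le n^{-C'}$. A union bound over the at most $m$ edges of $U_\chi$ shows that all are colored w.h.p. after $O(\Delta^2\log n)$ rounds. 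Since the bound in Theorem~\ref{thm:randomized-color-matching} is $O(\Delta^2\log^2 n\log\Delta)$ span, each round must have span $O(\log n\log\Delta)$ and work $O(m\log n\log\Delta)$.

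\textbf{One round.} For each uncolored edge $e=(u,v)$ we do the following.
\begin{enumerate}
\item Pick an orientation with $u$ as fan center, and pick a random color $\alpha$ missing at $u$.
\item Build the Vizing fan $F_e$ at $u$. This is a local sequential-in-$\Delta$ procedure, but it can be done with pointer jumping over the ``next fan vertex'' function on $u$'s $\le\Delta$ neighbors in $O(\log\Delta)$ span and $O(\Delta)$ work. If the fan is not directly closable, identify the color $\beta$ missing at the fan's terminal vertex and the maximal $\alpha\beta$-alternating path $P_e$ starting at $u$.
\item Call $e$ a \emph{candidate} with probability $p=\Theta(1/\Delta^2)$.
\item A candidate $e$ \emph{succeeds} if no other candidate's fan lies within distance $2$ of $u$ and no other candidate's chosen pair interferes with $P_e$.
\end{enumerate}
All successful edges apply their flip and rotate simultaneously; non-interference guarantees that the result is a proper partial coloring. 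Only about $\Delta^2$ other centers are within distance two, so by a union bound the fan-conflict probability is at most $p\Delta^2$, which is a small constant for suitable $p$.

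\textbf{Main obstacle: path interference.} Alternating paths can be long, up to $\Theta(n)$. They can also pass through many other fans, and computing each one costs work proportional to its length. My first attempt is to randomize the choice of $\alpha$ (and, through the fan, effectively $\beta$). Each edge of the graph lies on exactly one $\alpha\beta$-path for each pair. So a fixed candidate's path $P_e$ meets another candidate $e'$'s structure only if $e'$ selected the same color pair on the same component, or touches $P_e$ with a compatible pair.

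To keep total path work near-linear, I would use the standard Vizing-chain trick: take the two choices of which endpoint is the center, or shorten the path by a random choice of $\alpha$. Then the expected length of the path used is bounded by $O(m/|U_\chi|)$-type averaging, as in the near-linear sequential algorithms. In addition, each path edge is charged to at most one candidate, because candidates sharing a path component with the same pair conflict and are discarded. Showing that the interference probability is a constant bounded away from $1$, so that success has probability $\Omega(p)$, is the hardest step. If the direct argument fails, I would fall back on processing $O(\log\Delta)$ random color-class groups per round. This is where the extra $\log\Delta$ factor would be spent.

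\textbf{Implementation and accounting.} Computing all alternating paths in parallel is done with list ranking / Euler-tour-style pointer jumping on the two-color subgraphs; each two-color subgraph is a disjoint union of paths and cycles. Following Elkin and Khuzman's machinery, this gives $O(\log n)$ span per round and total work $O(m\log n)$ per level. Fan rotations and flips are local parallel writes. Multiplying the per-round cost by $O(\Delta^2\log n)$ rounds, with the $\log\Delta$ factor from color grouping, yields $O(m\Delta^2\log^2 n\log\Delta)$ work and $O(\Delta^2\log^2 n\log\Delta)$ span w.h.p. That is the claimed bound for Problem~\ref{prob:color-reduction}.
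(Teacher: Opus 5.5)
\textbf{There is a genuine gap, and it is the step you flag as hardest.} You sample candidates independently at rate $\Theta(1/\Delta^2)$ and put no restriction on their required pairs. Then two candidates whose pairs share exactly one color, say $\{\alpha,\beta\}$ and $\{\alpha,\gamma\}$, can have paths through the same $\alpha$-edge. Flipping both is then ill-defined, since one flip recolors that edge $\beta$ and the other recolors it $\gamma$.

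The number of centers whose $\{\alpha,\gamma\}$- or $\{\beta,\gamma\}$-path meets $P_e$ is bounded only by $O(|P_e|\,\Delta)$, and $|P_e|$ can be $\Theta(n)$. Your sampling therefore gives no constant bound on the expected number of interfering candidates. Choosing $\alpha$ at random among the colors missing at $u$ does not help. A center of degree $\Delta$ has only $O(1)$ missing colors (two when the palette is $[\Delta+1]$), so the first colors of many centers can coincide. The ``$O(m/|U_\chi|)$ averaging'' does not rescue this either: $m/\lambda$ is large once few edges remain, and bounding path lengths does not remove overlapping flips.

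Your proposal has two further problems. First, your success condition as stated omits path-fan conflicts, where a far-away candidate's path runs through a fan edge of $F_e$ or ends at a fan vertex. Second, your $O(m\log n)$ per-round work for labeling all paths holds only if the pairs in use are color-disjoint. With $\Theta(\Delta^2)$ distinct pairs, the two-colored subgraphs together can have $\Theta(m\Delta)$ edges.

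\textbf{What the paper does instead.} It makes the active pairs color-disjoint: a nontrivial fan is activated only if its ordered required pair lies in a uniformly random ordered matching $M$ of the palette. Active paths are then identical or edge-disjoint. All of them are labeled in $O(m\log n)$ work via one layered auxiliary graph. Each fan also receives only $O(\Delta)$ path-fan conflicts deterministically (Lemma~\ref{lem:path-fan-conflicts}).

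This activation has probability only $\Theta(1/\Delta)$, however. If nearby fans share a required pair, the $\Theta(\Delta^2)$ fan-fan conflicts would force a further $1/\Delta^2$ thinning, giving only $1/\Delta^3$ per round.

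\textbf{The idea your proposal is missing} is to first randomize the designated missing colors. Restrict to a power-of-two subpalette $\Gamma$ that holds more than half of the centers. Then, in each of $b=\lfloor\log_2(\Delta+1)\rfloor$ phases, flip with probability $1/2$ the $\{c,\operatorname{flip}_i(c)\}$-alternating path starting at each center. Each phase uses color-disjoint pairs and so runs in parallel. Each center's color ends up uniform on $\Gamma$. The only dependence is between the two endpoints of a common path. This gives $O(\Delta)$ expected same-colored centers within distance two (Theorem~\ref{thm:expectation-same-color}).

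Combined with $M$, this leaves $O(\Delta)$ active conflicts with constant probability. Sampling at rate $\Theta(1/\Delta)$ followed by orient-and-delete then yields success probability $\Omega(1/\Delta^2)$ per center. Finally, the $\log\Delta$ factor in the bounds comes from these $b$ sequential phases of $O(\log n)$ span each, not from a color-grouping fallback.
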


\begin{corollary}[$(\Delta+1)$-Edge Coloring]\label{cor:main-delta-plus-one}
There exist deterministic and randomized algorithms for Problem~\ref{prob:delta-plus-one} on CRCW PRAM whose work and span match the bounds of Theorems~\ref{thm:deterministic} and~\ref{thm:randomized-color-matching}, respectively.
\end{corollary}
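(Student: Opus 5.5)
We derive the corollary from Theorems~\ref{thm:deterministic} and~\ref{thm:randomized-color-matching} with the standard divide-and-conquer (Euler-partition) scheme referenced in Section~\ref{subsec:solve_problem_1}. The aim is a recursion in which color reduction, applied only a constant number of times per level on graphs of geometrically shrinking degree, dominates the cost.

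\emph{Splitting the graph.} Given $G$ with maximum degree $\Delta$, add a dummy vertex joined to every odd-degree vertex, so that all degrees become even. Compute an Euler tour of each connected component in parallel. Alternately assign the tour edges to two subgraphs $G_1,G_2$ and discard the dummy edges. Each vertex then has degree at most $\lceil \Delta/2\rceil+1$ in each $G_i$, and the edge sets of $G_1$ and $G_2$ partition $E$. Euler tours can be computed on a CRCW PRAM in $O(m)$ work and $O(\log n)$ span (possibly randomized, or deterministic with an extra $\log$ factor), so this step is negligible.

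\emph{Recursing and merging.} Recursively color $G_1$ and $G_2$, in parallel, each with $\Delta_i+1$ colors where $\Delta_i\le \Delta/2+O(1)$. Use disjoint palettes for the two subgraphs. The union is a proper coloring of $G$ with $\Delta_1+\Delta_2+2\le \Delta+k$ colors for some constant $k$. Apply Problem~\ref{prob:color-reduction} $k-1=O(1)$ times to reach $\Delta+1$ colors. Each application satisfies the hypothesis, since the current palette has $\Delta+k'$ colors with $k'$ constant. If needed, one can first cap the slack: while $\Delta_i+1$ exceeds the degree bound plus a constant, just reduce. Once $\Delta$ is below a constant, color greedily with $2\Delta-1$ colors and reduce a constant number of times, or use a base-case coloring directly.

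\emph{Cost analysis.} Let $W(m,\Delta)$ and $S(m,\Delta)$ denote work and span. In the deterministic case,
\[
W(m,\Delta)\le W(m_1,\Delta/2+O(1))+W(m_2,\Delta/2+O(1))+O(m\Delta^3\log^2 n),
\]
with $m_1+m_2=m$, and
\[
S(\Delta)\le S(\Delta/2+O(1))+O(\Delta^3\log^4 n).
\]
The per-level cost decreases geometrically. At depth $i$ the level costs $\sum_j m_j(\Delta/2^i)^3=m\Delta^3/8^i$, so both recurrences sum to the top-level term.

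The main obstacle is controlling the additive $O(1)$ in the degree: $\Delta/2+1$ compounds over $\log\Delta$ levels. The fix is to note that $\Delta_i\le \Delta/2^i+2$ throughout. Until $\Delta_i$ is constant, this is at most $2\Delta/2^i$, so geometric decay still holds up to constant factors. The randomized case is identical with $\Delta^2$ in place of $\Delta^3$, with ratio $1/4$ per level. The w.h.p. guarantees combine by a union bound over the $O(n)$ subproblems, after adjusting the constant in the w.h.p. exponent. Hence the totals match Theorems~\ref{thm:deterministic} and~\ref{thm:randomized-color-matching}.
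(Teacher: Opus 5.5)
Your argument is correct, but it reaches the corollary through a different decomposition than the paper. The paper does not recurse top-down. It plugs the color-matching algorithms into Elkin and Khuzman's Procedure \textsc{Edge-Coloring}, which works in three stages:
\begin{itemize}
\item it partitions $E(G)$ in one shot into $\lceil\Delta/2\rceil$ subgraphs of maximum degree at most two (via an Euler cycle, the bipartite in/out graph, and a $\lceil\Delta/2\rceil$-edge coloring of that bipartite graph);
\item it $3$-colors each piece;
\item it merges the pieces pairwise over $h=\lceil\log\lceil\Delta/2\rceil\rceil$ levels.
\end{itemize}
Because the pieces have maximum degree at most two and are merged in a balanced tree, the invariants are exact: at level $t$ the maximum degree is at most $2^{t+2}$ and at most $2^{t+2}+2$ colors are used. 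So each merge needs exactly one reduction (plus possibly one final reduction), and correctness is inherited directly from their framework.

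Your recursive Euler halving produces children of degree up to $\lceil\Delta/2\rceil+1$. You therefore have to absorb an additive drift, spend up to four reductions per merge (the merged palette has at most $\Delta+5$ colors), and supply a separate constant-degree base case. In exchange, your argument is self-contained: it uses only Euler tours and list ranking, not a bipartite edge-coloring subroutine or a degree-two coloring routine. The cost analysis is the same in both. Both use the geometric sums $\sum_t 8^t=O(\Delta^3)$ and $\sum_t 4^t=O(\Delta^2)$ over levels of edge-disjoint subproblems, and a union bound over fewer than $n$ randomized calls.

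A few details need tightening.
\begin{itemize}
\item The drift obeys $x_i\le\lceil x_{i-1}/2\rceil+1$, which gives $x_i\le\Delta/2^i+3$ rather than $\Delta/2^i+2$. For $\Delta=7$ you can get $5$ and then $4>7/4+2$. This is harmless asymptotically.
\item ``Color greedily'' is not a PRAM step. The base case should use a standard parallel symmetry-breaking routine for constant $\Delta$ (e.g., deterministic coin tossing on the line graph) to get a $(2\Delta-1)$-edge coloring, followed by $O(1)$ reductions.
\item In the union bound, each randomized call must have its round count set in terms of the global $n$, with its cost charged to its own edge count after discarding isolated vertices. A guarantee that is only high-probability in the size of the subgraph would not suffice, since deep in your recursion a subgraph can have $O(1)$ edges. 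Changing the constant in the exponent does not fix that. The paper argues this way as well.
\end{itemize}
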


\subsection*{Overview of the Algorithms}

Throughout this section, fix a proper partial $(\Delta+1)$-edge coloring $\chi$ such that the set $U_\chi$ of uncolored edges forms a matching of size $O(m/\Delta)$, for each uncolored edge, we fix one of its endpoints as its \defn{center}, for each vertex in the graph we fix one of its missing colors as the \defn{designated missing color}. Our goal is to color all edges of $U_\chi$ using the same palette $[\Delta+1]$, without introducing any new colors while preserving properness. Vizing's proof shows how to extend such a coloring by one edge at a time; the difficulty in the parallel setting is to color many edges of $U_\chi$ simultaneously. After constructing a fan for an uncolored edge, the fan determines whether the edge can be colored by rotating the fan alone or whether a bichromatic alternating-path flip is needed first. In the latter case, the two colors $(\alpha,\beta)$ of the required path are determined by the fan itself: $\alpha$ is a color missing at the fan center (the designated missing color of the center), while $\beta$ is a color missing at its terminal leaf. Since the palette contains $\Delta+1$ colors, there are $\Theta(\Delta^2)$ possible color pairs. The case that most of the uncolored edges can be resolved by rotation alone is easier, so for the remainder of this overview we focus only on uncolored edges that require flipping an alternating path. 

Elkin and Khuzman~\cite{elkin2026efficient} process a single color pair in each iteration. They first build an auxiliary graph on the uncolored edges and take an independent set, ensuring that the fans of the selected uncolored edges are vertex-disjoint. Among these fans, they then select the color pair associated with the largest number of uncolored edges. For this group, the corresponding bichromatic paths are either vertex-disjoint or identical and can therefore be processed in parallel. They then build a second auxiliary conflict graph capturing the remaining interactions between these bichromatic paths and the selected fans, and extract a large independent set. The corresponding uncolored edges can then be colored simultaneously.

\paragraph{Processing many color-disjoint pairs at once.}
Our approach processes $\Theta(\Delta)$ different color pairs simultaneously. (This is a natural idea, and the resulting deterministic algorithm serves as a warmup for the remainder.) 
The potential concern here is that processing alternating paths for multiple color pairs could lead to more interference than before.  We argue that as long as the set of color pairs is ``color disjoint,'' then the total number of interferences remains similar --- the update for each uncolored edge conflicts with $O(\Delta^2)$ other uncolored edges. Thus, we can process $\Theta(\Delta)$ times more uncolored edges in each round.  Moreover, amortized across all rounds, the work per round of our deterministic algorithm is $\tilde{O}(m)$, yielding a $\Theta(\Delta)$-factor improvement over the $\tilde{O}(m\Delta)$ per-round work of Elkin and Khuzman's algorithm~\cite{elkin2026efficient} and a $\Theta(\Delta^2)$-factor improvement to total work.

In more detail, a set of color pairs is \defn{color-disjoint}  if each color occurs in at most one pair (i.e., the set of pairs is a matching of colors).  A set of color-disjoint pairs can contain as many as $\Theta(\Delta)$ color pairs. At a high level, our deterministic algorithm proceeds by repeating the following in each round: (1) fix any family of $\Theta(\Delta)$ color-disjoint sets that covers every color pair; (2) map uncolored edges to the color pair needed for their alternating path; (3) choose the color-disjoint set from the family that satisfies the most uncolored edges; (4) identify which uncolored edges interfere with each other by building a conflict graph; (5) find a large independent set in the conflict graph, and color the corresponding uncolored edges. The largest set from the family hits at least a $\Omega(1/\Delta)$ fraction of the uncolored edges, and at least a $\Omega(1/\Delta^2)$ fraction of those can be colored simultaneously, giving rise to $\tilde{O}(\Delta^3)$ rounds. 

To argue that the interference is $O(\Delta^2)$ per uncolored edge, we classify the conflicts into three categories, formalized in Section~\ref{subsec:conflicts} --- fan-fan conflicts, path-fan conflicts, and path-path conflicts. At a high level, a conflict occurs when two updates (rotating a fan or flipping a path) would make changes to the color of the same edge or change the designated missing color of the same vertex.  Notably, color-disjointness implies that all distinct maximal alternating paths are edge disjoint. Thus, there are effectively no path-path conflicts. Moroever, a fan conflicts with $O(\Delta)$ different paths on an edge and $O(\Delta)$ paths at a vertex end. The conflicts are thus dominated by the $O(\Delta^2)$ fan-fan conflicts for any particular uncolored edge.

\paragraph{Why our deterministic bound stops at $\Delta^3$ and how randomization helps.}
Each round of the deterministic algorithm colors an $\Omega(1/\Delta^3)$ fraction of the uncolored edges: a factor $1/\Delta$ is lost because only one of $\Theta(\Delta)$ color-pair classes is active at a time, and a further factor $1/\Delta^2$ is lost because the fan-fan conflict degree is as large as $\Theta(\Delta^2)$. Suppose we instead sample a color-disjoint set of $\Theta(\Delta)$ color pairs at random and consider only the uncolored edges whose required color pair (for their alternating paths) lies in this set. Since only about a $1/\Delta$ fraction of all color pairs are selected, one might hope that this sampling already thins the uncolored edges enough that the number of fan-fan conflicts drops from $O(\Delta^2)$ to $O(\Delta)$. This conclusion, however, is false: if nearby conflicting fans all share the same color pair, then whenever that color pair is selected we can still only allow a $1/\Delta^2$ fraction of those fans to proceed in parallel. On the other hand, in the lucky case that the required color pairs are independent and uniformly random, a random set would indeed contain only $O(\Delta)$ fan-fan conflicts. The key idea of our randomized algorithm is thus to somewhat randomize the required color pairs for each uncolored edges, which we do by providing an efficient  procedure that randomizes the designated missing colors at the fan centers (the first color of the required pairs).

\paragraph{Randomizing the missing colors.}  The goal of this step is to ensure that the designated missing color at each center is roughly uniformly random. More accurately, here we would like to ensure that if $u$ is the center of an uncolored edge and has the designated missing color $\alpha$, then at most $O(\Delta)$ center vertices in $u$'s 2-hop neighborhood also have $\alpha$ as their designated missing color. Thus, $u$ does not have too many fan-fan conflicts with nearby fans that involve the same color pair. (We later address the different-color fan-fan conflicts.) Interestingly, the efficient sequential algorithm of Assadi et al.~\cite{near-linear-time} benefits from doing the opposite; they instead concentrate the missing colors  through a process they call \emph{popularization}, whereby the goal is to ensure that a large fraction of vertices are missing the same color. Popularization seems to exacerbate the challenges in parallelism. Instead our parallel algorithm benefits from spreading out the missing colors out at random.

The main idea of our algorithm for randomizing colors is as follows. Consider a vertex $u$ that has missing color $\gamma$. Consider the binary representation of $\gamma$, choose any bit index $i$, and let $\gamma'$ be the color obtained by changing the $i$th bit of $\gamma$.  There is a unique (possibly empty) maximal $(\gamma,\gamma')$ alternating path $P$ that has $u$ as an endpoint. Flipping the colors on $P$ would result in $u$ missing $\gamma'$.  Instead, suppose we flip $P$ with probability $1/2$, then $u$ ends with missing color $\gamma$ or $\gamma'$, each with probability~$1/2$.  In other words, the $i$th bit of $u$'s missing color is uniformly random, and all other bits in its missing color are unchanged.  

Our algorithm builds off of this idea and proceeds roughly as follows in $O(\lg \Delta)$ rounds.  In the $i$th round, consider bit $i$.  Match each color with the color obtained by changing the $i$th bit, giving a collection of color pairs.  For each alternating path for each color pair, flip the path independently with probability $1/2$.  Importantly, these color pairs are color disjoint by construction, so we can perform flipping procedure for all such paths in parallel.  Note that although the paths are flipped independently, this recoloring process is not entirely independent --- the missing color for the other end of path $P$ changes if and only if $u$'s missing color changes. Nevertheless, we argue that it is independent enough to achieve the end goal. 

\paragraph{Random matching of the palette.} Randomizing the missing colors at the centers reduces the same-color fan-fan conflicts. To control the remaining different-color conflicts, we choose a uniformly random matching of the palette into color-disjoint pairs and order every pair independently at random. An uncolored edge that requires an alternating path is \defn{active} if the ordered pair required by its path appears in this  ordered matching. An uncolored edge is activated with probability $\Theta(1/\Delta)$. Consequently, we show that among the $\Theta(\Delta^2)$ different-color fans that may conflict with the fan of a fixed active edge, only $O(\Delta)$ remain active with constant probability. Together with randomizing the missing colors, the random matching reduces the number of relevant conflicts of a fixed active uncolored edge from $O(\Delta^2)$ to $O(\Delta)$ with constant probability.

\paragraph{Random sampling and final conflict removal.} After the last two steps, conditioned on a fixed uncolored edge being active, with constant probability only $O(\Delta)$ other active uncolored edges can block it. We therefore further sample every active edge independently with probability $\Theta(1/\Delta)$. For each remaining conflict between two sampled uncolored edges, we orient the conflict toward one and remove the other. Unlike in the deterministic algorithm, we do not need to build a conflict graph and extract a large independent set from it; where instead an additional conflict-removal step gives us the set of conflict-free uncolored edges. The required color pairs of the surviving uncolored edges still have pairwise color-disjoint required color pairs, so their alternating-path operations and coloring the corresponding edges can be done in parallel. Combining the $\Theta(1/\Delta)$ activation probability with the $\Theta(1/\Delta)$ sampling probability and the constant survival probability from the final conflict removal step, we obtain that each uncolored edge is colored with probability $\Omega(1/\Delta^2)$ in each round. Hence $O(\Delta^2\log n)$ rounds suffice to color all edges of $U_\chi$ with high probability.

\subsection*{Related Work}
Karloff and Shmoys~\cite{karloff1987efficient} presented several parallel algorithms for edge coloring. In particular, they gave a deterministic algorithm for $(\Delta+1)$-edge coloring with $O(m\Delta^6\log^4 n+m\Delta^8\log n)$ work and $O(\Delta^5\log^4 n+\Delta^7\log n)$ span.~\footnote{All previous parallel results are restated in terms of work and span.} They also gave a randomized algorithm for $\Delta+\tilde{O}(\sqrt{\Delta})$ edge coloring with $\tilde{O}(m)$ work and polylog span. Liang et al.~\cite{liang1996parallel} later presented a deterministic algorithm for $(\Delta+1)$-edge coloring and claimed a span bound of $\tilde{O}(\Delta^{3.5})$. However, Elkin and Khuzman~\cite{elkin2026efficient}  recently showed an error in their analysis and showed that the algorithm instead has $O\bigl((n\Delta^3+n^2)(\Delta^{4.5}\log^3\Delta\log n+\Delta^4\log^4 n)\bigr)$ work and $O(\Delta^{4.5}\log^3\Delta\log n+\Delta^4\log^4 n)$ span. Liang et al.~\cite{liang1997parallel} also gave another deterministic algorithm with $O((m+n)\Delta^9\log^2 n)$ work and $O(\Delta^9\log^2 n)$ span. Finally, Elkin and Khuzman~\cite{elkin2026efficient} developed several deterministic parallel algorithms for $(\Delta+1)$-edge coloring. Their algorithm with the smallest span in general graphs costs $O(m\Delta^5\log^4 n)$ work and $O(\Delta^4\log^4 n)$ span. We improve these bounds in both the deterministic and randomized settings. Our deterministic algorithm costs $O(m\Delta^3\log^2 n)$ work and $O(\Delta^3\log^4 n)$ span. Our randomized algorithm costs $O(m\Delta^2\log^2 n\log\Delta)$ work and $O(\Delta^2\log^2 n\log\Delta)$ span, both with high probability. Our algorithms improve both the $\Delta$-dependent factors and the logarithmic factors in the runtime compared with previous results. Table~\ref{tab:related-work} summarizes these results.

\begin{table}[t]
\centering
\small
\begin{tabular}{lccc}
\toprule
\textbf{Algorithm} & \textbf{Work} & \textbf{Span} & \textbf{ } \\
\midrule
Karloff and Shmoys*~\cite{karloff1987efficient}
& $\tilde{O}(m\Delta^8)$
& $\tilde{O}(\Delta^7)$
& Deterministic \\
Liang et al.*~\cite{liang1996parallel}
& $\tilde{O}(n\Delta^{7.5}+n^2\Delta^{4.5})$
& $\tilde{O}(\Delta^{4.5})$
& Deterministic \\
Liang et al.*~\cite{liang1997parallel}
& $\tilde{O}(m\Delta^9)$
& $\tilde{O}(\Delta^9)$
& Deterministic \\
Elkin and Khuzman*~\cite{elkin2026efficient}
& $\tilde{O}(m\Delta^5)$
& $\tilde{O}(\Delta^4)$
& Deterministic \\
Ours
& $\tilde{O}(m\Delta^3)$
& $\tilde{O}(\Delta^3)$
& Deterministic \\
Ours
& $\tilde{O}(m\Delta^2)$
& $\tilde{O}(\Delta^2)$
& Randomized \\
\bottomrule
\end{tabular}

\vspace{0.5em}
\caption{Summary of the parallel $(\Delta+1)$-edge coloring algorithms.
\normalfont\footnotesize* Restated in terms of work and span.}
\label{tab:related-work}
\end{table}

Edge coloring has also received substantial attention in several other computational settings. In the sequential setting, a long line of work has studied algorithms for $(\Delta+1)$-edge coloring~\cite{vizing,arjomandi1982efficient,gabow1985algorithms,faster2024,even-faster2025,near-linear-time,deter-almost-linear2026}, culminating in the randomized near-linear time algorithm of Assadi et al.~\cite{near-linear-time}. In the distributed setting, much of the progress on $(\Delta+1)$-edge coloring has focused on bounded- or low-degree graphs, since their algorithms incur a large $\Delta^{O(1)}$ dependence in their round complexity~\cite{bernshteyn2022fast,christiansen2023power,bernshteyn2025fast}. 
In both the parallel and distributed settings, substantially faster algorithms are known when additional colors or when using $(1+\epsilon)\Delta$ colors are allowed~\cite{elkin2026efficient,liang1995fast,furer1996parallel,davies2023improved,dubhashi1998near,panconesi2001some,elkin20142delta,fischer2017deterministic,ghaffari2018deterministic,balliu2022distributed,chang2017complexity}.
Additionally, edge coloring has also been studied extensively in other models, including the dynamic~\cite{christiansen2023power,christiansen2026deterministic,bhattacharya2024nibbling,bhattacharya2018dynamic,barenboim2017fully,duan2019dynamic} and online~\cite{dudeja2025randomized,blikstad2024online,blikstad2025online,bhattacharya2020online,saberi2021greedy,cohen2019tight} settings.

\subsection*{Roadmap}
The remainder of the paper is organized as follows. Section~\ref{sec:prelim} introduces the necessary preliminaries and notation, together with the lemmas used throughout the paper. In Section~\ref{sec:BB-deter-alg}, we show how multiple color-disjoint pairs can be processed in parallel and discuss their conflicts and the corresponding conflict graph, and then present our deterministic algorithm. Section~\ref{sec:randomized} presents our randomized algorithm and the procedures it invokes, including the randomization of missing colors. Additional technical details and proofs are deferred to Appendix~\ref{sec:appendix}.

\section{Preliminaries}\label{sec:prelim}

Let $G=(V,E)$ be a simple graph with $n=|V|$ vertices, $m=|E|$ edges, maximum degree $\Delta$, and $\chi:E\to[\Delta+1]\cup\{\bot\}$ a proper partial edge coloring, where an edge $e$ with $\chi(e)=\bot$ is uncolored. Let $U_\chi:=\{e\in E:\chi(e)=\bot\}$ be the set of uncolored edges and $\lambda_\chi:=|U_\chi|$. For a vertex $v$, $\miss_\chi(v):=[\Delta+1]\setminus\{\chi(e): e\text{ incident to }v,\ \chi(e)\neq\bot\}$ is the set of colors \defn{missing} (available) at $v$. Since $v$ has at most $\Delta$ incident edges, $\miss_\chi(v)\neq\varnothing$ at all times; for each vertex $v$, we fix one of its missing colors and call it the \defn{designated missing color}, denoted by $\varphi(v)$. Assume throughout that $U_\chi$ is a matching (as would be true when applying the color-reduction approach) and $\lambda_\chi = O(m/\Delta)$; then we assign to each uncolored edge $e=(u,v)\in U_\chi$ a distinguished endpoint $\cen(e)\in\{u,v\}$, called its \defn{center}, so that distinct uncolored edges have distinct centers. Throughout, we use the Vizing fan and alternating paths, which have been used extensively in the literature~\cite{vizing,gabow1985algorithms,arjomandi1982efficient,sinnamon2019fast,near-linear-time};  We define the fan based on the designated missing color of the vertices as follows.

\begin{definition}[Vizing fan or c-fan]\label{def:vizing-fan}
A \defn{Vizing fan} is a sequence
\[
F=(u,\alpha),(v_1,c_1),\dots,(v_k,c_k)
\]
where $u,v_1,\dots,v_k$ are distinct vertices and $\alpha,c_1,\dots,c_k\in[\Delta+1]$ are colors such that:
\begin{enumerate}[label=(\arabic*)]
    \item $\alpha = \varphi(u)$ is the designated missing color of $u$ and $c_i = \varphi(v_i)$ for $i\in[k]$;
    \item $v_1,\dots,v_k$ are distinct neighbors of $u$;
    \item $\chi(u,v_1)=\bot$ and $\chi(u,v_i)=c_{i-1}$ for all $i>1$;
    \item either $c_k\in\miss_\chi(u)$ or $c_k\in\{c_1,\dots,c_{k-1}\}$.
\end{enumerate}
\end{definition}

We say that $F$ is \defn{$\alpha$-primed} has \defn{center} $u$, and \defn{leaves} $v_1,\dots,v_k$. Its edges $(u,v_1),\dots,(u,v_k)$ are the \defn{fan edges}, and $V(F):=\{u,v_1,\dots,v_k\}$ is its vertex set. If $c_k\in\miss_\chi(u)$, we call the fan \defn{trivial}, Figure~\ref{fig:trivial-fan} shows a trivial fan where the missing color of the last leaf vertex $v_5$ is the same as the missing color of the center $u$ i.e., $c_5=\alpha$: \defn{rotating} the colors along the fan so that $(u,v_k)$ becomes uncolored (formally, setting $\chi(u,v_i)\leftarrow c_i$ for $i=1,\dots,k-1$) and then coloring $(u,v_k)$ by $c_k$, which is missing at both $u$ and $v_k$, colors the originally uncolored edge $(u,v_1)$ without using any alternating path. For instance, after rotating the trivial fan of Figure~\ref{fig:trivial-fan} the edge $(u,v_5)$ is colored with $\alpha$; the designated missing colors of the vertices of the fan are updated accordingly, resulting in Figure~\ref{fig:after-rotate}. 
\begin{figure}[ht]
    \centering
    \begin{subfigure}[b]{0.48\columnwidth}
        \centering
%    \scalebox{.9}{
        \begin{tikzpicture}

            \node[circle, draw=red, fill=white, line width=0.7mm,  label=left:{$u,\textcolor{red}{\alpha}$}] at (0:0cm) (u) {};
            \node[circle, draw=blue, fill=white, line width=0.7mm, label=right:{$v_1,\textcolor{blue}{c_1}$}] at (290:1.5cm) (v1) {};
            \node[circle, draw=green, fill=white, line width=0.7mm, label=right:{$v_2,\textcolor{green}{c_2}$}] at (325:1.5cm) (v2) {};
            \node[circle, draw=violet, fill=white, line width=0.7mm, label=right:{$v_3,\textcolor{violet}{c_3}$}] at (0:1.5cm) (v3) {};
            \node[circle, draw=teal, fill=white, line width=0.7mm, label=right:{$v_4,\textcolor{teal}{c_4}$}] at (35:1.5cm) (v4) {};
            \node[circle, draw=red, fill=white, line width=0.7mm, label=right:{$v_5,\textcolor{red}{c_5=\alpha}$}] at (70:1.5cm) (v5) {};
            \draw[dotted, line width=0.7mm] (u) -- (v1);
            \draw[blue, line width=0.7mm] (u) -- node[above] {$\textcolor{blue}{c_1}$} (v2);
            \draw[green, line width=0.7mm] (u) -- node[above] {$\textcolor{green}{c_2}$} (v3);
            \draw[violet, line width=0.7mm] (u) -- node[above] {$\textcolor{violet}{c_3}$} (v4);
            \draw[teal, line width=0.7mm] (u) -- node[above left] {$\textcolor{teal}{c_4}$} (v5);
        \end{tikzpicture}
%}
        \caption{A trivial fan.}
        \label{fig:trivial-fan}
    \end{subfigure}
    \hfill
    \begin{subfigure}[b]{0.48\columnwidth}
        \centering
%\scalebox{.9}{
        \begin{tikzpicture}

            \node[circle, draw=black, fill=white, line width=0.7mm, label=left:{$u,c'$}] at (0:0cm) (ur) {};
            \node[circle, draw=black, fill=white, line width=0.7mm, label=right:{$v_1,c''$}] at (290:1.5cm) (v1r) {};
            \node[circle, draw=blue, fill=white, line width=0.7mm, label=right:{$v_2,\textcolor{blue}{c_1}$}] at (325:1.5cm) (v2r) {};
            \node[circle, draw=green, fill=white, line width=0.7mm,label=right:{$v_3,\textcolor{green}{c_2}$}] at (0:1.5cm) (v3r) {};
            \node[circle, draw=violet, fill=white, line width=0.7mm, label=right:{$v_4,\textcolor{violet}{c_3}$}] at (35:1.5cm) (v4r) {};
            \node[circle, draw=teal, fill=white, line width=0.7mm, label=right:{$v_5,\textcolor{teal}{c_4}$}] at (70:1.5cm) (v5r) {};
            \draw[blue, line width=0.7mm] (ur) -- node[right] {$\textcolor{blue}{c_1}$} (v1r);
            \draw[green, line width=0.7mm] (ur) -- node[above] {$\textcolor{green}{c_2}$} (v2r);
            \draw[violet, line width=0.7mm] (ur) -- node[above] {$\textcolor{violet}{c_3}$} (v3r);
            \draw[teal, line width=0.7mm] (ur) -- node[above] {$\textcolor{teal}{c_4}$} (v4r);
            \draw[red, line width=0.7mm] (ur) -- node[above left] {$\textcolor{red}{\alpha}$} (v5r);

        \end{tikzpicture}
%}
        \caption{Fan after rotation.}
        \label{fig:after-rotate}
    \end{subfigure}
    \caption{The left figure shows a trivial fan centered at $u$, with $\alpha$ as its missing color. For each $i$, $v_i$ denotes the $i$-th leaf and $c_i$ denotes the missing color at $v_i$. Each colored edge is assigned the corresponding color $c_i$, while the dashed edge is uncolored. The right figure shows the fan after the fan rotation, where $c'$ and $c''$ denote the updated designated missing colors of $u$ and $v_1$.}
    \label{fig:fan-and-rotate}
\end{figure}

\begin{definition}[$\{\alpha,\beta\}$-alternating path]\label{def:alt-path}
For distinct colors $\alpha,\beta$, an \defn{$\{\alpha,\beta\}$-alternating path} is a path whose edges are colored $\alpha$ and $\beta$ alternately. It starts at $u$ if $u$ is one endpoint and one of $\alpha,\beta$ is missing at $u$, and it is \defn{maximal} if it cannot be extended; a maximal path starting at $u$ ends at a vertex $w$ at which one of $\alpha,\beta$ is also missing. \defn{Flipping} such a path exchanges the two colors $\alpha,\beta$ on its edges and the endpoints.
\end{definition}

For a proper partial coloring $\chi$ and a color pair $p=\{\alpha,\beta\}$ of distinct colors, let $E_\chi(p)$ be the set of edges with colors $\alpha$ or $\beta$, and let $H_\chi(p)$ be the subgraph of $G$ induced by edges $E_\chi(p)$. Every vertex has at most one incident edge of each color; therefore vertices in $H_\chi(p)$ have maximum degree at most two. Its components are therefore paths or even cycles, with the two colors alternating along paths and cycles. Consider a vertex in $H_\chi(p)$. If $\alpha\in\miss_\chi(u)$ or $\beta\in\miss_\chi(u)$, then $u$ is an endpoint of its component. Let $K_\chi(p,u)$ be the component of $H_\chi(p)$ containing $u$.

\begin{theorem}[Vizing's theorem~\cite{vizing}]\label{thm:vizing}
Every simple graph $G$ of maximum degree $\Delta$ admits a proper $(\Delta+1)$-edge coloring.
\end{theorem}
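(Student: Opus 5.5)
We prove Theorem~\ref{thm:vizing} by induction on the number of uncolored edges, using the fan and alternating-path machinery of Definitions~\ref{def:vizing-fan} and~\ref{def:alt-path}. Start with the empty coloring $\chi\equiv\bot$, which is trivially proper. We show that any proper partial coloring with $U_\chi\neq\varnothing$ can be changed into a proper partial coloring with one fewer uncolored edge, never recoloring an edge to $\bot$ except temporarily during a fan rotation. Since $\miss_\chi(v)\neq\varnothing$ always, designated missing colors $\varphi(v)$ exist. Fix an uncolored edge $(u,v_1)$ and build a maximal fan greedily. Let $c_1=\varphi(v_1)$. Given $c_{i}$, if $c_i\in\miss_\chi(u)$ or $c_i$ equals an earlier $c_j$, stop. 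Otherwise some edge $(u,v_{i+1})$ has color $c_i$; this edge is unique by properness, and $v_{i+1}$ is new because the colors $c_1,\dots,c_{i}$ are distinct. Set $c_{i+1}=\varphi(v_{i+1})$. The process terminates because $u$ has at most $\Delta$ neighbors, and the result satisfies Definition~\ref{def:vizing-fan}.

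\emph{Trivial case.} If $c_k\in\miss_\chi(u)$, rotate the fan: set $\chi(u,v_i)\leftarrow c_i$ for $i<k$ and $\chi(u,v_k)\leftarrow c_k$. Properness needs two checks.
\begin{itemize}
\item At $u$, the multiset of colors is permuted, except that $c_k$ (missing at $u$) is added.
\item At $v_i$, the color $c_i$ was missing, and the old color $c_{i-1}$ on $(u,v_i)$ is removed.
\end{itemize}

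\emph{Nontrivial case.} Otherwise $c_k=c_j$ for some $j<k$. Let $\alpha=\varphi(u)$ and $\beta=c_k$. Take the maximal $\{\alpha,\beta\}$-alternating path $P=K_\chi(\{\alpha,\beta\},u)$ starting at $u$. Since $\beta$ is not missing at $u$, $P$ starts with the $\beta$-edge $(u,v_{j+1})$. Flipping $P$ keeps the coloring proper, because $P$ is a whole component of $H_\chi(p)$ and both its endpoints miss one of the two colors. The flip makes $\beta$ missing at $u$. Then split into cases on where $P$ ends.
\begin{itemize}
\item If $P$ does not end at $v_j$, then after flipping, $\beta$ is still missing at $v_j$. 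Since $\beta=c_j$ is missing at $u$, the prefix fan $(u,v_1),\dots,(v_j,c_j)$ becomes trivial; rotate it.
\item If $P$ ends at $v_j$, then $P$ does not end at $v_k$, since the path has exactly two endpoints and its first one is $u$. Also, $P$ passes through $v_{j+1}$ and must arrive at $v_j$ via an $\alpha$-edge, since $\beta$ is missing at $v_j$. So after the flip, $\beta$ is still missing at $v_k$, and the full fan (or the prefix up to $v_k$) becomes trivial; rotate it.
\end{itemize}

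The main obstacle is this last case. One must check carefully that the fan edges $(u,v_i)$ keep their colors after the flip, so that the prefix remains a valid fan. The only fan edge colored $\alpha$ or $\beta$ is $(u,v_{j+1})$. For the prefix $v_1,\dots,v_j$, no edge of that prefix has color $\alpha$ or $\beta$, because $\alpha$ is missing at $u$ and $\beta=c_j$ labels the edge $(u,v_{j+1})$, which lies outside the prefix. For the full fan, handle $(u,v_{j+1})$ explicitly: after the flip it has color $\alpha$. One then shows that the sequence rerooted appropriately still yields a trivial fan with last color $\beta$ missing at both $u$ and $v_k$. Equivalently, one can first rotate up to $v_j$ so that $(u,v_{j+1})$'s role is handled before flipping; we would present this standard version. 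Each step colors one more edge, so after $|E|$ steps the coloring is a full proper $(\Delta+1)$-edge coloring.
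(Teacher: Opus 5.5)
Your greedy fan construction, the trivial-case rotation, and the subcase where $P$ does not end at $v_j$ are correct and follow the paper's argument. The paper sketches this in Section~2 and gives the details in the proof of Lemma~\ref{lem:parallel-vizing-operations}.

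\textbf{The gap.} The gap is the subcase you flag yourself, where $P$ ends at $v_j$. You assert that ``the full fan becomes trivial'' but never show it. After noticing that $(u,v_{j+1})$ changes from $\beta$ to $\alpha$, you say only that the sequence can be ``rerooted appropriately,'' and then promise a different version (rotate up to $v_j$ first) that you do not give. As it stands, the full sequence is not a Vizing fan after the flip. Condition~(3) needs $\chi(u,v_{j+1})=c_j=\beta$, and $\beta$ must be missing at $v_j$; neither holds any more.

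The promised alternative is also not a drop-in replacement. Suppose you shift the fan so that $(u,v_j)$ is uncolored and then flip $P$ ending at $v_j$. Afterwards $u$ misses $\beta$ but not $\alpha$, while $v_j$ misses $\alpha$ but not $\beta$. So $(u,v_j)$ still cannot be colored. That version needs a separate argument: compare the $\{\alpha,\beta\}$-components at $v_j$ and $v_k$, and flip one that avoids $u$.

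\textbf{How to close it.} The missing observation is short.
\begin{itemize}
\item A flip only exchanges $\alpha$ and $\beta$. So every edge colored outside $\{\alpha,\beta\}$ keeps its color, and every color outside $\{\alpha,\beta\}$ keeps its missing status at every vertex.
\item Since $c_1,\dots,c_{k-1}$ are pairwise distinct and differ from $\alpha$, the only fan data affected are the edge $(u,v_{j+1})$, the center $u$, and the leaves $v_j$ and $v_k$.
\item Because $v_j$ is an endpoint of $P$, the flip makes $\alpha$ missing at $v_j$. This is exactly the new color of $(u,v_{j+1})$, so setting $\varphi(v_j):=\alpha$ restores condition~(3). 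No rerooting is needed.
\item The leaf $v_k$ misses $\beta$, so it cannot be an internal vertex of $P$. It is not an endpoint either, since the endpoints are $u$ and $v_j$. Hence $\beta$ is still missing at $v_k$, and it is now also missing at $u$.
\end{itemize}
Therefore $(u,\beta),(v_1,c_1),\dots,(v_j,\alpha),\dots,(v_k,\beta)$ is a trivial fan, and rotating it colors $(u,v_1)$. This is how the paper handles the case $z_e=v_j$ in the proof of Lemma~\ref{lem:parallel-vizing-operations}. With this step added, your induction is complete.
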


We next briefly explain the intuition behind Vizing's theorem and the role of the fans and alternating paths. The proof posits roughly the following algorithm, which maintains the invariant that $\chi$ is a partial $(\Delta+1)$-edge coloring. To fully color the graph, repeat the following. Choose any uncolored edge $(u,v)$. Let $\alpha=\varphi(u)$ be the designated missing color of $u$ and construct an $\alpha$-primed Vizing fan with first leaf $v_1=v$. If the fan $F$ is trivial, then color $(u,v)$ immediately by rotating the colors along the fan, as explained above (see Figure~\ref{fig:fan-and-rotate} for an example).
Otherwise, the fan is nontrivial, and by condition~(4) of Definition~\ref{def:vizing-fan} its terminal color satisfies $c_k=c_j$ for some $j<k$.  Let $P$ be the maximal $\{\alpha,c_k\}$-alternating path starting at $u$. Flipping the colors of this path makes the fan trivial by changing the missing color of the center from $\alpha$ to $c_k$. Since there are two leaves $v_k$ and $v_j$ whose missing color equals the second color of the path, namely $c_k=c_j$, even if the path terminates at the fan itself, at either $v_k$ or $v_j$, the other leaf still has missing color $c_k=c_j$. Therefore, either way, the fan after the path flip becomes trivial and the uncolored edge can be colored by fan rotation. This process ends after every uncolored edge has been colored.  

Figure~\ref{fig:nontrivial-fan} shows a nontrivial fan with $c_5=c_3$ and $\alpha \neq c_i$ for $1 \leq i \leq 5$, its maximal $\{\alpha,c_3\}$-alternating path is $u-v_4-v_6-v_7-v_8-v_9$ that ends at $v_9$, Figure~\ref{fig:path-flip} shows the coloring after the maximal path has flipped so that $c_3$ is now missing at $u$ and is equal to the missing color of at least one leaf, then the (prefix) fan is rotated and the uncolored edge $(u,v_1)$ is colored, Figure~\ref{fig:rotate-and-color} shows the coloring after these operations. Note that the alternating path may terminate at one of the leaves of the fan. For example, in Figure~\ref{fig:nontrivial-fan}, suppose that the path terminates at $v_3$ instead of $v_9$ (with at least another vertex in between). Flipping the path changes the missing color at $v_3$ from $c_3$ to $\alpha$. However, by the definition of a nontrivial fan, the terminal missing color is repeated, so there exists another leaf at which $c_3$ is missing; here, this leaf is $v_5$. Since $c_5=c_3$ remains missing at $v_5$, the fan can still be rotated using $v_5$ to color the uncolored edge.

\begin{figure*}[t]
    \centering

    \begin{subfigure}[t]{0.32\textwidth}
        \centering
        \resizebox{\linewidth}{!}{%
            \begin{tikzpicture}

            \node[circle, draw=red, fill=white, line width=0.7mm, label=left:{$u,\textcolor{red}{\alpha}$}] at (0:0cm) (u) {};
            \node[circle, draw=blue, fill=white, line width=0.7mm, label=right:{$v_1,\textcolor{blue}{c_1}$}] at (295:1.5cm) (v1) {};
            \node[circle, draw=green, fill=white, line width=0.7mm, label=right:{}] at (330:1.5cm) (v2) {};
            \node[circle, draw=violet, fill=white, line width=0.7mm, label=right:{$v_3,\textcolor{violet}{c_3}$}] at (0:1.5cm) (v3) {};
            \node[circle, draw=teal, fill=white, line width=0.7mm, label=above:{$v_4,\textcolor{teal}{c_4}$}] at (30:1.5cm) (v4) {};
            \node[circle, draw=violet, fill=white, line width=0.7mm, label=left:{$v_5,\textcolor{violet}{c_5=c_3}$}] at (65:1.5cm) (v5) {};

            \node[circle, draw=black, fill=white, line width=0.7mm, label=above:{$v_6$}] at (15:3cm) (v6) {};
            \node[circle, draw=black, fill=white, line width=0.7mm, label=above:{$v_7$}] at (5:4.5cm) (v7) {};
            \node[circle, draw=black, fill=white, line width=0.7mm, label=below:{$v_8$}] at (350:4.5cm) (v8) {};
            \node[circle, draw=red, fill=white, line width=0.7mm, label=below:{$v_9,\textcolor{red}{\alpha}$}] at (340:3cm) (v9) {};

            \draw[dotted, line width=0.7mm] (u) -- (v1);
            \draw[blue, line width=0.7mm] (u) -- (v2);
            \draw[green, line width=0.7mm] (u) -- (v3);
            \draw[violet, line width=0.7mm] (u) --node[above] {$\textcolor{violet}{c_3}$} (v4);
            \draw[teal, line width=0.7mm] (u) -- (v5);
            \draw[red, line width=0.7mm] (v4) -- node[above] {$\textcolor{red}{\alpha}$} (v6);
            \draw[violet, line width=0.7mm] (v6) --node[above] {$\textcolor{violet}{c_3}$} (v7);
            \draw[red, line width=0.7mm] (v7) -- node[left] {$\textcolor{red}{\alpha}$} (v8);
            \draw[violet, line width=0.7mm] (v8) --node[above] {$\textcolor{violet}{c_3}$} (v9);

        \end{tikzpicture}
        }
        \caption{A non-trivial fan with its alternating path.}
        \label{fig:nontrivial-fan}
    \end{subfigure}
    \hfill
    \begin{subfigure}[t]{0.32\textwidth}
        \centering
        \resizebox{\linewidth}{!}{%
            \begin{tikzpicture}

            \node[circle, draw=violet, fill=white, line width=0.7mm, label=left:{$u,\textcolor{violet}{c_3}$}] at (0:0cm) (u) {};
            \node[circle, draw=blue, fill=white, line width=0.7mm, label=right:{$v_1,\textcolor{blue}{c_1}$}] at (295:1.5cm) (v1) {};
            \node[circle, draw=green, fill=white, line width=0.7mm, label=right:{}] at (330:1.5cm) (v2) {};
            \node[circle, draw=violet, fill=white, line width=0.7mm, label=right:{$v_3,\textcolor{violet}{c_3}$}] at (0:1.5cm) (v3) {};
            \node[circle, draw=teal, fill=white, line width=0.7mm, label=above:{$v_4,\textcolor{teal}{c_4}$}] at (30:1.5cm) (v4) {};
            \node[circle, draw=violet, fill=white, line width=0.7mm, label=left:{$v_5,\textcolor{violet}{c_5=c_3}$}] at (65:1.5cm) (v5) {};

            \node[circle, draw=black, fill=white, line width=0.7mm, label=above:{$v_6$}] at (15:3cm) (v6) {};
            \node[circle, draw=black, fill=white, line width=0.7mm, label=above:{$v_7$}] at (5:4.5cm) (v7) {};
            \node[circle, draw=black, fill=white, line width=0.7mm, label=below:{$v_8$}] at (350:4.5cm) (v8) {};
            \node[circle, draw=violet, fill=white, line width=0.7mm, label=below:{$v_9,\textcolor{violet}{c_3}$}] at (340:3cm) (v9) {};

            \draw[dotted, line width=0.7mm] (u) -- (v1);
            \draw[blue, line width=0.7mm] (u) -- (v2);
            \draw[green, line width=0.7mm] (u) -- (v3);
            \draw[red, line width=0.7mm] (u) -- node[above] {$\textcolor{red}{\alpha}$} (v4);
            \draw[teal, line width=0.7mm] (u) -- (v5);
            \draw[violet, line width=0.7mm] (v4) --node[above] {$\textcolor{violet}{c_3}$} (v6);
            \draw[red, line width=0.7mm] (v6) -- node[above] {$\textcolor{red}{\alpha}$} (v7);
            \draw[violet, line width=0.7mm] (v7) --node[left] {$\textcolor{violet}{c_3}$} (v8);
            \draw[red, line width=0.7mm] (v8) -- node[above] {$\textcolor{red}{\alpha}$} (v9);

        \end{tikzpicture}
        }
        \caption{After the path flip.}
        \label{fig:path-flip}
    \end{subfigure}
    \hfill
    \begin{subfigure}[t]{0.32\textwidth}
        \centering
        \resizebox{\linewidth}{!}{%
            \begin{tikzpicture}
            \node[circle, draw=black, fill=white, line width=0.7mm, label=left:{$u,c'$}] at (0:0cm) (u) {};
            \node[circle, draw=black, fill=white, line width=0.7mm, label=right:{$v_1,c''$}] at (295:1.5cm) (v1) {};
            \node[circle, draw=blue, fill=white, line width=0.7mm, label=right:{}] at (330:1.5cm) (v2) {};
            \node[circle, draw=green, fill=white, line width=0.7mm, label=right:{$v_3,\textcolor{green}{c_2}$}] at (0:1.5cm) (v3) {};
            \node[circle, draw=teal, fill=white, line width=0.7mm, label=above:{$v_4,\textcolor{teal}{c_4}$}] at (30:1.5cm) (v4) {};
            \node[circle, draw=violet, fill=white, line width=0.7mm, label=left:{$v_5,\textcolor{violet}{c_5=c_3}$}] at (65:1.5cm) (v5) {};

            \node[circle, draw=black, fill=white, line width=0.7mm, label=above:{$v_6$}] at (15:3cm) (v6) {};
            \node[circle, draw=black, fill=white, line width=0.7mm, label=above:{$v_7$}] at (5:4.5cm) (v7) {};
            \node[circle, draw=black, fill=white, line width=0.7mm, label=below:{$v_8$}] at (350:4.5cm) (v8) {};
            \node[circle, draw=violet, fill=white, line width=0.7mm, label=below:{$v_9,\textcolor{violet}{c_3}$}] at (340:3cm) (v9) {};

            \draw[blue, line width=0.7mm] (u) -- (v1);
            \draw[green, line width=0.7mm] (u) -- (v2);
            \draw[violet, line width=0.7mm] (u) -- (v3);
            \draw[red, line width=0.7mm] (u) -- node[above] {$\textcolor{red}{\alpha}$} (v4);
            \draw[teal, line width=0.7mm] (u) -- (v5);
            \draw[violet, line width=0.7mm] (v4) --node[above] {$\textcolor{violet}{c_3}$} (v6);
            \draw[red, line width=0.7mm] (v6) -- node[above] {$\textcolor{red}{\alpha}$} (v7);
            \draw[violet, line width=0.7mm] (v7) --node[left] {$\textcolor{violet}{c_3}$} (v8);
            \draw[red, line width=0.7mm] (v8) -- node[above] {$\textcolor{red}{\alpha}$} (v9);

        \end{tikzpicture}
        }
        \caption{After rotating the fan and coloring the edge.}
        \label{fig:rotate-and-color}
    \end{subfigure}

    \caption{In the figure, $u$ is the center of the fan with missing color $\alpha$, the vertices are labeled $v_i$, and $c_i$ denotes colors. Each colored edge is assigned either the color $\alpha$ or one of the colors $c_i$, while the dashed edge is uncolored. Figure~\ref{fig:nontrivial-fan} shows a non-trivial fan, since $\alpha \neq c_i$ for $1 \leq i \leq 5$ and $\miss(v_3)=\miss(v_5)=c_3$. For this fan, $(\alpha,v_3)$ is its \emph{required pair}, and the path $u-v_4-v_6-v_7-v_8-v_9$ is the maximal $\{\alpha,c_3\}$-alternating path starting at $u$ and ending at $v_9$, where $\alpha$ is missing. Figure~\ref{fig:path-flip} shows the coloring obtained after flipping this alternating path, after which $c_3$ is missing at $u$ and is the same as the missing color of at least one leaf (here both $v_3, v_5$). Finally, Figure~\ref{fig:rotate-and-color} shows the (prefix) fan rotation followed by coloring the previously uncolored edge $(u,v_1)$, where $c'$ and $c''$ denote the other updated missing colors.}
    \label{fig:nontrivial-fan-path}
\end{figure*}

\begin{definition}[Required pair]\label{def:good-pair}
Let $F=(u,\alpha),(v_1,c_1),\dots,(v_k,c_k)$ be an $\alpha$-primed Vizing fan with center $u$. The required pair of $F$ is the ordered pair of colors $\pi_F$ defined as follows.
\begin{itemize}
    \item If $F$ is nontrivial, i.e., $c_k\notin\miss_\chi(u)$, then by Definition~\ref{def:vizing-fan}, $c_k\in\{c_1,\ldots,c_{k-1}\}$. We call $\beta:=c_k$ the \defn{terminal color} of $F$ and define the ordered \defn{required pair} of $F$ to be $\pi_F:=(\alpha,\beta)$. Since $\alpha\in\miss_\chi(u)$ whereas $\beta\notin\miss_\chi(u)$, we have $\alpha\neq\beta$. The first color $\alpha$ is the designated color missing at the center, while $\beta$ is the second color used in the alternating path: flipping the maximal $\{\alpha,\beta\}$-alternating path starting at $u$ and then rotating the (prefix) fan colors the uncolored edge of $F$. For instance, the required pair of the nontrivial fan in Figure~\ref{fig:nontrivial-fan} is $(\alpha,c_3)$.
    \item If $F$ is trivial, i.e., $c_k\in\miss_\chi(u)$, no alternating path is needed, and we simply set $\pi_F:=(\alpha,\alpha)$.
\end{itemize}
For a uncolored edge $e$ with associated fan $F_e$, let $\pi_e:=\pi_{F_e}$.
\end{definition}

\paragraph{Parallel model.}
Because we use or augment some subroutines from Elkin and Khuzman~\cite{elkin2026efficient}, we adopt the same parallel model. That is, our parallel algorithms are stated for the ARBITRARY CRCW PRAM model. Processors operate synchronously on a shared memory; concurrent reads are allowed, and if multiple processors concurrently write to the same memory location, one of the written values is chosen arbitrarily. Each memory word contains $\Omega(\log n)$ bits, so vertex, edge, color, component, and record identifiers fit in $O(1)$ words. We measure the complexity of our algorithm using work and span: the \defn{work} is the total number of operations performed across all processors, and the \defn{span} (also called depth in much of the PRAM literature) is the longest chain of sequentially dependent operations, i.e., the parallel time with an unbounded number of processors.~\footnote{Some of the results referenced herein, bound the number of processors and parallel time, but we restate those results with respect to work, which is the product of both, and the span, which matches the parallel time.}

Throughout the paper, we use several standard deterministic parallel subroutines: comparison-based sorting, prefix sums, reductions, and compaction. (See~\cite{jaja1992parallel} for classical PRAM algorithms for many of these problems.) As sorting is the most expensive of these, and this paper does not focus on shaving the last logarithmic factor from the performance, we charge every such subroutine on $N$ items the sorting cost of $O(N\log N)$ work and $O(\log N)$ span, achievable deterministically by Cole's merge sort~\cite{cole1988parallel}.  For the randomized algorithm, we additionally assume that each processor can generate an independent uniformly random $O(\log n)$-bit word in $O(1)$ work and span.  We also make use of a random permutation, which can trivially be achieved by sorting random numbers. 

We additionally use a data representation similar to that of Elkin and Khuzman~\cite{elkin2026efficient}, deferring its discussion to the appendix. Additionally, the next three Lemmas~\ref{lem:parallel-paths}, \ref{lem:parallel-fans}, and \ref{lem:parallel-fan-rotations} roughly follow from their results and routines; likewise, we defer their proofs to the appendix as well. 

\begin{lemma}\label{lem:parallel-paths}
Let $H$ be a graph with $n$ vertices and maximum degree at most two, so that every connected component of $H$ is a path, a cycle, or an isolated vertex. In $O(n\log n)$ work and $O(\log n)$ span, one can identify and label all connected components of $H$ and the two endpoints of every path component containing at least one edge.
\end{lemma}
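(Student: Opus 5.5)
We would reduce the problem to standard pointer-jumping (list ranking) on a successor structure. First, in $O(n\log n)$ work and $O(\log n)$ span we normalize the representation of $H$: by sorting the edge list by endpoint we obtain, for every vertex $v$, its at most two neighbors $N(v)$, and hence its degree. Vertices of degree $0$ are isolated components and are labeled by themselves. Vertices of degree $1$ are exactly the endpoints of path components with at least one edge, so endpoints are identified locally; what remains is to assign a common component label to all vertices of each path or cycle.

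For labeling, we run pointer jumping with minimum-label propagation. Each vertex $v$ keeps a label $\ell(v)$, initially its own identifier, and a set of at most two ``pointers'' $P(v)$, initially $N(v)$. In each round, every $v$ sets $\ell(v)\leftarrow \min\bigl(\ell(v),\min_{w\in P(v)}\ell(w)\bigr)$ and then replaces $P(v)$ by the set of vertices at distance exactly $2^{i+1}$ along the component, computed as $\{x\in P(w): w\in P(v),\ x\ne v\}$ and pruned back to the (at most two) far vertices. The invariant is that after round $i$, $\ell(v)$ equals the minimum identifier within distance $2^{i}$ of $v$ along its path or cycle. Since every component has at most $n$ vertices, after $\lceil\log_2 n\rceil+1$ rounds every vertex carries the minimum identifier of its component. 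Each round costs $O(n)$ work and $O(1)$ span, so the total is $O(n\log n)$ work and $O(\log n)$ span, within the claimed bound.

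The main obstacle is maintaining the doubled pointers correctly. On a path, one must detect when a jump runs off the end; we handle this by keeping an explicit direction for each pointer, i.e., a left and a right pointer, and setting a pointer to null once it passes an endpoint. On a cycle, the two directions can meet or wrap around. A cleaner way to handle both is to orient every component first: we fix for each vertex a ``successor'' consistently along the component. We would do this by taking, for each edge, both directed copies and defining a successor on directed edges, $\mathrm{succ}(u\to v)=(v\to w)$ with $w$ the other neighbor of $v$. This yields a functional graph whose components are directed paths and cycles with maximum in- and out-degree one. Pointer jumping with minimum propagation over this graph, run for $O(\log n)$ rounds, labels each directed component; wrap-around on cycles is harmless for a minimum computation. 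Each undirected component then becomes one or two directed components, and we give each vertex the minimum label over its incident directed edges.

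For the endpoints, we use the fact that in a path component the two directed traversals end at the two degree-one vertices. We record each endpoint at the slot indexed by its final label, resolving the arbitrary concurrent write by keeping both through sorting. Finally, a single sort by component label collects both endpoints of every path component in $O(n\log n)$ work and $O(\log n)$ span.
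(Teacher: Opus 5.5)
Your route differs from the paper's. The paper runs the Shiloach--Vishkin CRCW connected-components algorithm as a black box, using $m_H\le n$ to get $O(n\log n)$ work and $O(\log n)$ span. It then finds the endpoints with two ARBITRARY-CRCW write phases into arrays $L_1,L_2$ indexed by component label. Exploiting the degree-two structure with a self-contained pointer-jumping argument, and collecting endpoints by a sort, is a reasonable alternative. However, the labeling step, as you justify it, is wrong on path components.

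The claim that pointer jumping with minimum propagation ``labels each directed component'' holds only for directed cycles. On a directed path, in round $i$ each element $x$ combines its label with that of $\mathrm{succ}^{2^i}(x)$, or with nothing once this runs off the end. At the end, each element therefore holds the minimum over its own suffix; only the first element of the list sees the whole path.

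Taking the minimum over a vertex's incident directed edges does not repair this for arbitrary initial labels. Take the path $x_1x_2x_3$ with $f_1=(x_1\to x_2)$, $f_2=(x_2\to x_3)$, $b_2=(x_3\to x_2)$, $b_1=(x_2\to x_1)$. If $b_2$ carries the smallest initial label, then $x_2$ and $x_3$ receive it. But $x_1$ does not, since its incident copies $f_1$ and $b_1$ have suffixes $\{f_1,f_2\}$ and $\{b_1\}$. So one component ends up with two labels. The same failure occurs with tail labels $\ell_0(u\to v)=u$ when $x_3$ has the smallest identifier.

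The fix is easy but has to be stated: give both orientations of every edge the same initial label, e.g.\ the smaller endpoint identifier. For a path $x_1\cdots x_k$ with edges $e_j=\{x_j,x_{j+1}\}$, the forward copy of $e_j$ then ends with the minimum over $e_j,\dots,e_{k-1}$. The backward copy ends with the minimum over $e_1,\dots,e_j$. So the two copies of any single edge together give the component minimum. This is the smallest vertex identifier in the component, hence distinct across components and consistent with labeling isolated vertices by themselves.

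With this correction, the rest of your argument goes through within the stated bounds: wrap-around on cycles, $O(\log n)$ rounds of $O(n)$ work and $O(1)$ span, and endpoints as the degree-one vertices grouped by one sort.
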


\begin{lemma}\label{lem:parallel-fans}
Given the designated center and designated missing color of every uncolored edge, maximal fans for all centers can be constructed, each fan can be classified as trivial or nontrivial, and the required pair of every nontrivial fan can be determined in $O(m\log\Delta)$ work and $O(\log\Delta)$ span.
\end{lemma}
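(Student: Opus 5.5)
We build all fans simultaneously. The uncolored edges form a matching and have distinct centers, so each center $u$ owns at most one fan, and the total work is bounded by the sum of degrees of the centers, which is $O(m)$. The computation for each center $u$ is as follows.

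First, $u$ needs a fast test for which colors are missing at it. Using the data representation of Elkin and Khuzman, every vertex $u$ has an array indexed by color $c\in[\Delta+1]$ that stores the neighbor $w$ with $\chi(u,w)=c$, or $\bot$ if there is none. Building or maintaining this array costs $O(\deg(u)+\Delta)$ work per center. That is too much if we sum naively over many low-degree centers, so instead we sort the $(u,\chi(u,w),w)$ triples, at $O(m\log\Delta)$ total work, and answer membership queries by binary search in $O(\log\Delta)$. We then define a successor function on the neighbors of $u$: for each neighbor $w$, set $\mathrm{next}_u(w)$ to be the neighbor $x$ with $\chi(u,x)=\varphi(w)$, or mark $w$ as terminal if $\varphi(w)\in\miss_\chi(u)$. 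Computing this for every $(u,w)$ pair in parallel takes $O(\log\Delta)$ span and $O(\deg(u)\log\Delta)$ work per center, which sums to $O(m\log\Delta)$.

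The map $\mathrm{next}_u$ is a functional graph on the at most $\Delta$ neighbors of $u$. The fan is exactly the walk in this graph starting at $v_1$, the other endpoint of the uncolored edge. It is a path because $\chi(u,\cdot)$ is injective, so distinct leaves whose missing colors are distinct lead to distinct next leaves. The walk stops either at a terminal vertex, which gives a trivial fan, or at the first leaf $v_k$ whose $\varphi(v_k)$ equals $\varphi(v_j)$ for some earlier $j$. The second case shows up as two different leaves mapping to the same successor, or equivalently as a repeated color.

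To extract the walk in logarithmic span, we apply pointer jumping (list ranking) on $\mathrm{next}_u$ restricted to the component of $v_1$. This takes $O(\log\Delta)$ rounds at $O(\deg u)$ work per round. A naive bound on that is $O(m\log\Delta)$ total work, which already matches the target. Pointer jumping gives each vertex its distance from $v_1$, but it only does so cleanly if we handle cycles: the walk $v_1\to v_2\to\cdots$ eventually enters a cycle or reaches a terminal. Since $v_1$ is joined to $u$ by the uncolored edge, no vertex has $v_1$ as its successor (no edge at $u$ has color $\bot$), so the walk from $v_1$ is a "rho" shape. We therefore compute ranks by doubling from $v_1$, with each vertex recording the first time it is reached, for $O(\log\Delta)$ rounds. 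We then truncate at the first vertex whose successor was already visited, or at the first terminal vertex. That truncation point is $v_k$.

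Classifying the fan is then immediate. It is trivial if $\varphi(v_k)\in\miss_\chi(u)$, and otherwise its required pair is $(\varphi(u),\varphi(v_k))$.

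I expect the main obstacle to be the walk extraction with a possible cycle within $O(\log\Delta)$ span and without extra logarithmic work. If rank detection by doubling becomes awkward, I would fall back on the fact that each functional graph has at most $\Delta$ nodes and use standard cycle detection, such as identifying cycle nodes via $\Delta$-fold jumping, in $O(\log\Delta)$ doublings.
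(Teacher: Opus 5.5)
Your proof is correct in outline but takes a genuinely different route from the paper. The paper never reconstructs a fan. It imports Lemma~2 of Elkin and Khuzman~\cite{elkin2026efficient} as a black box: one maximal fan at $u$ in $O(\deg(u)\log\Delta)$ work and $O(\log\Delta)$ span, instantiated with $\varphi(u)$. Its own contribution is bookkeeping:
\begin{itemize}
\item a private array of length $O(\deg(u))$ per center, which totals $O(m)$ because centers are distinct;
\item the remark that overlapping fans can be built simultaneously, since construction only reads shared data and writes to the center's own array;
\item an $O(1)$ triviality test through the maintained $\operatorname{Color2Edge}$ table;
\item recording the unique $j<k$ with $c_j=c_k$, which Lemma~\ref{lem:parallel-vizing-operations} needs later.
\end{itemize}
You instead open the black box. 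You use the successor map $\mathrm{next}_u$ on the neighbors of $u$, observe that $\mathrm{next}_u(w)\neq w$ and that $v_1$ has no predecessor (so the walk is a rho), and apply doubling on a $\deg(u)$-node functional graph. This gives a self-contained argument with the same bounds.

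Two small remarks on your preprocessing. First, $\operatorname{Color2Edge}$ is maintained across rounds rather than rebuilt, so the sort-and-binary-search detour is unnecessary. Second, if you keep it, the sort must be done separately within each adjacency list. A single global sort of all $2m$ triples costs $O(m\log n)$, not $O(m\log\Delta)$.

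The step you should state more carefully is the walk extraction. List ranking by pointer jumping yields distances to the end of a list, not from its head. Also, the component of $v_1$ may contain in-trees of neighbors that are not in the fan. On an ARBITRARY CRCW PRAM, ``each vertex records the first time it is reached'' cannot be realized by a single concurrent write, since that gives an arbitrary occurrence rather than the minimum.

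A clean implementation is as follows:
\begin{itemize}
\item Make terminal vertices absorbing and build the tables $\mathrm{next}_u^{2^i}$ for $i\le\lceil\log_2(\deg(u)+1)\rceil$.
\item Compute $x_t:=\mathrm{next}_u^{t-1}(v_1)$ for all $t\le\deg(u)+1$ in parallel, either by binary decomposition of $t-1$ or via $x_{t+2^i}=\mathrm{next}_u^{2^i}(x_t)$.
\item Sort the pairs $(x_t,t)$ within the center to obtain each vertex's first occurrence.
\item Let $k$ be the least $t$ such that $x_t$ is terminal or $x_{t+1}$ already occurs among $x_1,\dots,x_t$; this is a min-reduction.
\item The fan is trivial iff $x_k$ is terminal. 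Otherwise $j$ is the first-occurrence index of $\mathrm{next}_u(x_k)$ minus one.
\end{itemize}
All of this costs $O(\deg(u)\log\Delta)$ work and $O(\log\Delta)$ span per center, so your claimed bounds go through.
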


As mentioned before, in the nontrivial case, coloring an uncolored edge requires two steps: first flipping the fan's required alternating path and then rotating the fan. After flipping the alternating path of a nontrivial fan, either the fan itself or an appropriate prefix of it becomes trivial. Thus, after all required path flips have been performed (whiteout any conflict), each originally nontrivial fan can be replaced by the resulting trivial fan or trivial prefix, while an originally trivial fan requires no path flip. The following lemma shows that, assuming these operations do not interfere with one another, a collection of pairwise vertex-disjoint trivial fans can be rotated in parallel to color their corresponding uncolored edges.

\begin{lemma}\label{lem:parallel-fan-rotations} Let $\mathcal{F}$ be a collection of pairwise vertex-disjoint trivial fans, each given as an indexed sequence of fan edges. Then all fans in $\mathcal{F}$ can be rotated and the uncolored edge corresponding to every fan is colored in $O(m)$ work and $O(1)$ span. \end{lemma}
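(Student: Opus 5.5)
We prove Lemma~\ref{lem:parallel-fan-rotations} by assigning one processor to each fan edge and having every processor act independently. The argument then has two parts: the processors never touch the same data, and the result is still a proper coloring. Each fan $F=(u,\alpha),(v_1,c_1),\dots,(v_k,c_k)$ in $\mathcal{F}$ is stored as an indexed array of its fan edges, with the leaf colors $c_i=\varphi(v_i)$ available at index $i$. The processor responsible for index $i$ of fan $F$ does the following. If $i<k$, it writes $\chi(u,v_i)\leftarrow c_i$. If $i=k$, it writes $\chi(u,v_k)\leftarrow c_k$, where $c_k\in\miss_\chi(u)$ because $F$ is trivial. Every processor reads only its own index and the single value $c_i$ stored there, so each step costs $O(1)$. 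The total work is the number of fan edges over all fans, which is at most $\sum_F |V(F)| \le n$ by vertex-disjointness, and hence $O(m)$ for the graphs we consider. The span is $O(1)$.

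Next we check that the parallel writes are non-conflicting. Distinct fans are vertex-disjoint, so their fan edge sets are disjoint, and within one fan each edge is written by exactly one processor. No memory cell for an edge color is written twice, so the arbitrary-CRCW rule is never invoked.

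We then show the new coloring $\chi'$ is proper, and this step needs the most care. For a single fan, the sequential rotation argument applies. At the center $u$, the multiset of colors on the fan edges changes from $\{\bot,c_1,\dots,c_{k-1}\}$ to $\{c_1,\dots,c_{k-1},c_k\}$. The only new color at $u$ is $c_k$, which was missing at $u$. At leaf $v_i$, the edge $(u,v_i)$ now has color $c_i=\varphi(v_i)$, which was missing at $v_i$ before the update. The old color $c_{i-1}$ (or $\bot$) on that edge is released. We must also rule out interaction between fans: could a leaf $v_i$ of $F$ be affected by another fan $F'$ that recolors an edge at $v_i$? Every edge recolored by $F'$ is incident to $u'$, and every endpoint of such an edge lies in $V(F')$. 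Since $V(F)\cap V(F')=\varnothing$, no vertex of $F$ is an endpoint of any edge recolored by another fan. So the set of colors at each vertex of $V(F)$ changes only through $F$'s own updates, and vertices outside $\bigcup_F V(F)$ see no change at all. Properness therefore follows vertex by vertex from the single-fan argument.

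Finally, the designated missing colors of fan vertices must be refreshed, since $\varphi(u)$ and $\varphi(v_i)$ may no longer be missing. Recomputing them naively costs $O(\Delta)$ per vertex, which we cannot afford in $O(1)$ span. The plan is to rely on the appendix data representation, taken from Elkin and Khuzman, which maintains an indexed missing-color structure per vertex. Under it, an edge recoloring updates $\miss(\cdot)$ at both endpoints in $O(1)$: swap the freed color in and the used color out. If that representation does not support $O(1)$ updates, we fall back on the observation that each vertex's missing set changes by exactly one swap. The new designated color is then the released color, $c_{i-1}$ at $v_i$ (or the freed color at $v_1$), which can be written in $O(1)$ per vertex. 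Doing this bookkeeping within $O(1)$ span is the part of the proof that depends on the data-structure details in the appendix.
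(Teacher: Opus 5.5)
Your treatment of the recoloring itself is correct and matches the paper's argument. One processor per fan edge sets $\chi(u,v_i)\leftarrow c_i$, and the fan edge sets of different fans are disjoint. Properness holds at $u$ because $c_1,\dots,c_{k-1}$ only move among fan edges and $c_k$ was missing, and at each $v_i$ because it receives $c_i=\varphi(v_i)$. Vertex-disjointness rules out interference between fans.

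The gap is in the data-structure maintenance, which the paper counts as part of this lemma. Lemma~\ref{lem:parallel-vizing-operations} invokes it ``including the corresponding data-structure updates,'' and the appendix insists the structure is updated in place rather than rebuilt each round.

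Your fallback says each fan vertex's missing set changes by one swap, so the released color can become the new designated color. This fails at exactly the two vertices where a replacement is needed:
\begin{itemize}
\item At $v_1$, the previously uncolored edge receives $c_1=\varphi(v_1)$ and no color is released. There is no ``freed color at $v_1$,'' so $v_1$ always loses its designated color.
\item At $u$, the colors $c_1,\dots,c_{k-1}$ merely shift and $c_k$ is newly consumed. When $c_k=\varphi(u)$, the center loses its designated color with nothing in return.
\end{itemize}
Your primary plan does not close this either. The appendix structure ($\operatorname{Color2Edge}$, $\operatorname{Edge2Color}$, and stored missing colors) gives $O(1)$ membership tests and a stored missing color. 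It does not give a per-vertex list of missing colors with $O(1)$ deletion.

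The missing idea is the paper's one modification of that structure: every endpoint of an uncolored edge stores two distinct missing colors. These exist because $U_\chi$ is a matching, so such an endpoint has at most $\Delta-1$ colored edges. Coloring the u-edge uses only one new color at each of $u$ and $v_1$, so one stored color is still missing there and can be retained in $O(1)$. For $i\ge 2$, replacing $c_i$ by the released $c_{i-1}$ at $v_i$ is correct, as you say.

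A degree-bounded search would also work. For $x\in\{u,v_1\}$, test colors $1,\dots,\deg(x)+1$ via $\operatorname{Color2Edge}$ and let an arbitrary concurrent write pick a missing one, in $O(\deg(x))$ work and $O(1)$ span. Either way, some such argument must replace the swap claim.

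Second, your no-conflict argument covers only the cells $\chi(e)$. Suppose $\operatorname{Color2Edge}$ is updated in a single phase. At the center, color $c_i$ moves from $(u,v_{i+1})$ to $(u,v_i)$. So one processor clears $\operatorname{Color2Edge}(u,c_i)$ while another sets it to $(u,v_i)$, and under ARBITRARY CRCW the clear may win. The paper performs all deletions of old entries in one phase and all insertions of new entries in the next. Within each phase the touched entries are distinct (distinct colors at $u$, distinct leaves), so the span remains $O(1)$.
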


Our deterministic algorithm, in each round, uses the following deterministic algorithm in order to find a large independent set of a graph with bounded average degree. 

\begin{lemma}[Large independent set~\cite{goldberg1993efficient}, restated]\label{lem:large-independent-set}
Let $G$ be a graph with $n$ vertices and $m$ edges. There is a deterministic parallel algorithm that finds an independent set of size at least $n^2/(2m+n)$ in $O\left((n+m)\alpha(m,n)\log n\right)$  work and $O(\log^3 n)$ span on CRCW PRAM, where $\alpha$ is the inverse of Ackermann's function.
\end{lemma}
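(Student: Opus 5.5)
This lemma is a restatement of the result of Goldberg and Spencer~\cite{goldberg1993efficient}, so in the paper I would invoke it as a black box. Below is how I would reprove it if needed.

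\emph{Target and potential.} Rewrite the target as $n^2/(2m+n)=n/(\bar d+1)$, where $\bar d=2m/n$ is the average degree. This is exactly the Caro--Wei/Tur\'an bound. I would use the potential
\[
\Phi \;=\; |I| + \sum_{v \in R} \frac{1}{d_R(v)+1},
\]
where $I$ is the independent set built so far, $R$ is the set of remaining vertices (those not in $I$ and not adjacent to $I$), and $d_R$ denotes degree in $G[R]$. Initially $I=\varnothing$ and $R=V$. By convexity of $x\mapsto 1/(x+1)$ and Jensen's inequality, the initial value satisfies $\Phi\ge n/(\bar d+1)$. At termination $R=\varnothing$, so $|I|=\Phi$. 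It therefore suffices to design parallel phases that never decrease $\Phi$ and empty $R$ after $O(\log n)$ phases.

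\emph{One phase.} Sequential min-degree greedy preserves $\Phi$ because adding a vertex $v$ of minimum degree $d$ removes $d+1$ vertices, each contributing at most $1/(d+1)$, while $|I|$ grows by $1$. To parallelize, I would proceed as follows.
\begin{enumerate}
\item Restrict attention to the set $L$ of \emph{low} vertices, those with $d_R(v)$ at most about twice the current average degree. By Markov's inequality, $L$ contains at least half of $R$.
\item Inside $G[L]$, select a set $S$ that is independent in $G$ (not just in $G[L]$) and whose accounting works: for each $v\in S$, the gain of $1$ must dominate the loss $\sum_{w\in N[v]\cap R} 1/(d_R(w)+1)$, summed over $S$, after charging each removed vertex to a single selected neighbor. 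Removing $N[S]$ also lowers the degrees of the surviving vertices, which only increases their terms, so that side effect helps.
\item Choose $S$ deterministically by the method of conditional expectations applied to a pairwise-independent Luby-style sampling. The sample space has size $\mathrm{poly}(n)$, so this costs $O(\log^2 n)$ span. The expected change in $\Phi$ is nonnegative, and a constant fraction of the edge/vertex mass of $L$ is removed.
\end{enumerate}

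\emph{Cost accounting.} Each phase removes a constant fraction of the remaining vertices or edges, so $O(\log n)$ phases suffice, giving $O(\log^3 n)$ total span. For work, each phase is linear in the size of the current graph, up to the $\alpha(m,n)$ overhead of the union--find/compaction primitives Goldberg and Spencer use. Because the remaining graph shrinks geometrically, the total work is $O((n+m)\alpha(m,n)\log n)$.

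\emph{Main obstacle.} The hardest step is designing the per-phase selection rule so that $\Phi$ is \emph{exactly} nondecreasing, rather than merely decreasing by a constant factor. This exactness is what yields the precise bound $n^2/(2m+n)$ instead of $\Omega(n/\bar d)$. The difficulty is that a vertex can be a neighbor of several selected vertices, so the charging must be fractional. The estimator passed to the conditional-expectation derandomization must be a pairwise-computable lower bound on $\Delta\Phi$ that still has nonnegative expectation. My first attempt would be the inclusion--exclusion bound $\Pr[w\text{ removed}]\ge\sum_{v\in N[w]}p_v-\sum_{v<v'}p_vp_{v'}$ with probabilities $p_v\approx 1/(2(d_R(v)+1))$; this bound is exactly what pairwise independence supports.
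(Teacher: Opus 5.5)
Citing Goldberg and Spencer as a black box is exactly what the paper does: it gives no proof of this lemma and simply imports their bound, so your first sentence already matches the paper. The reproof you sketch afterwards is not a proof, though, and should not be offered as a fallback.

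\textbf{The gap in the reproof.} Everything hinges on item~3's claim that the derandomized selection keeps $\mathbb{E}[\Delta\Phi]\ge 0$, and you yourself leave this open as the ``main obstacle.''
\begin{itemize}
\item That step is the entire content of the cited theorem. A uniformly random vertex order already yields, in one parallel round, an independent set: the vertices preceding all their neighbours. Its expected size is exactly $\sum_v 1/(d(v)+1)\ge n^2/(2m+n)$. The difficulty is doing this deterministically with no loss, and the event ``$v$ precedes its $d(v)$ neighbours'' is not determined by pairwise independence.
\item Concretely, take marks $p_v=c/(d_v+1)$ and the union bound on the removal loss. Before conflict resolution is charged, the gain-minus-loss estimate of $\mathbb{E}[\Delta\Phi]$ equals $c\sum_{\{v,w\}\in E}\bigl(\frac{1}{d_v+1}-\frac{1}{d_w+1}\bigr)^2$. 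This vanishes on regular graphs and is tiny when adjacent degrees differ by one.
\item For example, join a $(d-1)$-regular graph to a $d$-regular graph by a perfect matching. The slack is then about $c/d^4$ per matching edge. The conflict-resolution charge that a pairwise (Bonferroni) lower bound on $\Pr[v\in S]$ must make is, on average, about $c^2/(2d^3)$ per vertex of the first part. So your estimator is negative once $cd$ exceeds a small constant.
\item Nonnegativity would therefore have to come from the degree-drop gains of surviving vertices. You mention these only as ``helping'' and never bound them below in a pairwise-computable form.
\end{itemize}

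\textbf{Further problems.}
\begin{itemize}
\item The inclusion--exclusion inequality you propose bounds removal probabilities from below, which is the wrong direction for the loss term.
\item One conditional-expectations pass certifies a single objective, and $S=\varnothing$ already attains $\Delta\Phi=0$. Constant-fraction progress at the same sample point therefore needs its own argument.
\item Removing a constant fraction of the vertices \emph{or} of the edges does not make $n+m$ shrink geometrically, which your work bound assumes.
\end{itemize}
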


To analyze our randomization algorithm, we use the following version of Azuma's inequality.

\begin{lemma}[Concentration Inequality~\cite{kuszmaul2021multiplicative}]\label{lem:conditional-chernoff}
Let $X_1,\ldots,X_n\in[0,c]$ be real-valued random variables with $c>0$. Suppose that $\mathbb{E}[X_i\mid X_1,\ldots,X_{i-1}]\le a_i$ for all $i$. Let $\mu=\sum_{i=1}^n a_i$. Then, for any $\delta>0$,
\[
 \Pr\!\left[ \sum_i X_i\ge(1+\delta)\mu \right]
 \le \exp\!\left( -\frac{\delta^2\mu}{(2+\delta)c} \right).
\]
\end{lemma}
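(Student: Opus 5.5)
Because the last statement is a concentration inequality in the style of Azuma/Chernoff with conditional expectation bounds, I would prove it with the exponential-moment (Bernstein–Chernoff) method combined with a supermartingale argument.

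\emph{Step 1 (normalize).} Rescale by setting $Y_i=X_i/c\in[0,1]$, $b_i=a_i/c$ and $\mu'=\mu/c$. The claim then reduces to $\Pr[\sum_i Y_i\ge(1+\delta)\mu']\le\exp(-\delta^2\mu'/(2+\delta))$, so we may assume $c=1$ from now on.

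\emph{Step 2 (one-step moment bound).} Fix $t>0$. By convexity of $x\mapsto e^{tx}$ on $[0,1]$ we have $e^{tY_i}\le 1+(e^t-1)Y_i$. Taking conditional expectations and using $\mathbb{E}[Y_i\mid Y_1,\dots,Y_{i-1}]\le b_i$ gives
\[
\mathbb{E}[e^{tY_i}\mid Y_{<i}]\le 1+(e^t-1)b_i\le \exp\bigl((e^t-1)b_i\bigr).
\]

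\emph{Step 3 (telescoping).} Define $M_k=\exp\bigl(t\sum_{i\le k}Y_i-(e^t-1)\sum_{i\le k}b_i\bigr)$. Step 2 shows that $M_k$ is a supermartingale with $M_0=1$. Hence $\mathbb{E}[e^{t\sum Y_i}]\le e^{(e^t-1)\mu'}$. This is the one point that needs care: the $a_i$ are deterministic upper bounds, so they can be pulled out of the expectations. If the $a_i$ were allowed to be adaptive, the same argument would still work provided $\sum a_i\le\mu$ holds deterministically, but here it is simply a fixed sum.

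\emph{Step 4 (Markov and optimize).} By Markov's inequality,
\[
\Pr\bigl[\textstyle\sum Y_i\ge(1+\delta)\mu'\bigr]\le \exp\bigl((e^t-1)\mu'-t(1+\delta)\mu'\bigr).
\]
Choosing $t=\ln(1+\delta)$ yields the standard bound $\bigl(e^\delta/(1+\delta)^{1+\delta}\bigr)^{\mu'}$. It remains to check the elementary inequality $(1+\delta)\ln(1+\delta)-\delta\ge \delta^2/(2+\delta)$ for all $\delta>0$. For this, set $f(\delta)$ to be the difference of the two sides. Then $f(0)=0$, and
\[
f'(\delta)=\ln(1+\delta)-\frac{\delta^2+4\delta}{(2+\delta)^2}.
\]
We also have $f'(0)=0$. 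Differentiating again, $f''(\delta)=\frac{1}{1+\delta}-\frac{8}{(2+\delta)^3}$, which is nonnegative because $(2+\delta)^3\ge 8(1+\delta)$. Hence $f'\ge0$ and so $f\ge0$.

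The main obstacle is only this final calculus inequality; everything before it is the standard martingale Chernoff argument. Undoing the scaling then gives the stated bound $\exp(-\delta^2\mu/((2+\delta)c))$.
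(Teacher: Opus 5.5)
Your proof is correct. The rescaling, the chord bound $e^{tY}\le 1+(e^t-1)Y$ on $[0,1]$, the supermartingale $M_k$, Markov's inequality with $t=\ln(1+\delta)$, and the calculus check of $(1+\delta)\ln(1+\delta)-\delta\ge\delta^2/(2+\delta)$ all go through. Your $f'$ and $f''$ are right, and $(2+\delta)^3\ge 8(1+\delta)$ holds for $\delta\ge0$. The one implicit point is $\mu'\ge0$, which you need when you multiply this inequality through by $\mu'$; it holds because $X_i\ge 0$ forces $a_i\ge 0$.

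The paper gives no proof of this lemma and simply imports it from Kuszmaul and Qi. Your argument is the standard exponential-moment route by which such multiplicative Azuma bounds are proved, so it serves as a self-contained proof of the cited result.
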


\section{Building Blocks and  Deterministic Algorithm}\label{sec:BB-deter-alg}

This section describes the structural and parallel lemmas that form the backbone of our algorithms. We first show that bichromatic components corresponding to multiple color-disjoint pairs can be identified, processed, and flipped in parallel, rather than handling one color pair at a time. We then analyze the interactions that arise when many \defn{u-edges} (uncolored edges with fixed center and designated missing colors), whose required pairs come from multiple color-disjoint pairs, are colored simultaneously. We define the three relevant types of conflicts (fan-fan, path-path, and path-fan conflicts) and show that color-disjointness strongly controls the latter two, and that the total conflict graph remains sufficiently sparse. We also show how all of these conflicts can be detected and the conflict graph can be constructed efficiently in parallel. These ingredients already yield our first improvement: a deterministic algorithm that partitions all color pairs into $O(\Delta)$ color-disjoint sets, selects the largest set in each round, extracts a large conflict-free subset of the corresponding u-edges, and color them in parallel.

\subsection{Processing Color-disjoint Pairs in Parallel}
\label{subsec:parallel-pairs}

We begin by showing how multiple color-disjoint pairs can be processed in parallel; we postpone the conflicts that arise between the corresponding operations to Section~\ref{subsec:conflicts}. Recall that two color pairs are color-disjoint if the four colors involved are distinct. Let $\mathcal{P}$ be any set of pairwise color-disjoint pairs. The next two lemmas formalize how bichromatic components associated with all pairs in $\mathcal{P}$ can be identified and flipped simultaneously. To process these components efficiently, we construct an auxiliary graph and apply Lemma~\ref{lem:parallel-paths} to it.

Let $G'=(V',E')$ be an auxiliary graph consisting of one layer for each pair $p\in\mathcal{P}$. For every pair $p$ and every vertex $v$ incident to an edge of $H_\chi(p)$, we introduce a vertex $v_p$ in the layer corresponding to $p$. For every edge $(u,v)\in E_\chi(p)$, we introduce the corresponding edge $(u_p,v_p)$ in that layer. Thus, the layer corresponding to $p$ is a copy of $H_\chi(p)$ with its isolated vertices omitted. Since the pairs in $\mathcal{P}$ are color-disjoint, every colored edge of $G$ belongs to $H_\chi(p)$ for at most one $p\in\mathcal{P}$ and therefore introduces at most one edge in $G'$. Hence $|E(G')|\le m$. Moreover, a vertex $v_p$ is created only when $v$ is incident to an edge of $H_\chi(p)$, so $G'$ has no isolated vertices and consequently $ |V(G')|\le 2|E(G')|\le 2m$.

Thus, introducing one layer per color pair does not introduce a factor depending on $|\mathcal{P}|$ or $\Delta$ in the size of the auxiliary graph and $G'$ has $O(m)$ vertices and edges. The layers corresponding to distinct pairs are vertex-disjoint, and in the layer of $p$, the degree of $v_p$ is exactly the degree of $v$ in $H_\chi(p)$ and is therefore at most two. Hence $G'$ itself has maximum degree at most two, and every connected component of $G'$ corresponds to a bichromatic component of some $H_\chi(p)$. We can therefore apply Lemma~\ref{lem:parallel-paths} once to $G'$ to identify all bichromatic components simultaneously in $O(m\log n)$ work and $O(\log n)$ span. A formal construction of $G'$ is given in the proof of Lemma~\ref{lem:parallel-pairs}.

\begin{lemma}\label{lem:parallel-pairs}
Let $\mathcal{P}$ be a set of pairwise color-disjoint pairs of distinct colors. For every $p\in\mathcal{P}$, consider the components of $H_\chi(p)$ that contain at least one edge. All such components, over all pairs $p\in\mathcal{P}$, can be identified and labeled in parallel in $O(m\log n)$ work and $O(\log n)$ span.
\end{lemma}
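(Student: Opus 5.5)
The plan is to construct the layered auxiliary graph $G'$ from the discussion above explicitly in parallel, and then invoke Lemma~\ref{lem:parallel-paths} once on it. First, I will build a lookup array $\mathrm{pair}[c]$ over the palette, indexed by color. It maps each color $c$ to the unique pair $p\in\mathcal{P}$ containing $c$, or to $\bot$ if no pair contains $c$. Color-disjointness guarantees this map is well defined. The array has $O(\Delta)$ entries and can be filled in $O(\Delta)$ work and $O(1)$ span, one processor per pair writing both of its colors. Next, in parallel over all colored edges $e=(u,v)$, I will set $p=\mathrm{pair}[\chi(e)]$. If $p\neq\bot$, edge $e$ emits the record $(p,u,v)$, together with the two vertex keys $(p,u)$ and $(p,v)$. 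Since each colored edge lies in at most one $E_\chi(p)$, at most $m$ edge records and $2m$ vertex keys are produced. This step costs $O(m)$ work and $O(1)$ span.

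The second step is to turn the vertex keys into compact vertex identifiers for $G'$. I will sort the at most $2m$ keys lexicographically by $(p,v)$, mark the first occurrence of each distinct key, and compute a prefix sum over the marks. This assigns each distinct $v_p$ an index in $[|V(G')|]$ with $|V(G')|\le 2m$. Each edge record then looks up the indices of its endpoints, using the positions recorded during the sort. Charging the sorting cost as the paper does, this takes $O(m\log m)=O(m\log n)$ work and $O(\log n)$ span. From the sorted edge list I will produce adjacency lists in whatever representation Lemma~\ref{lem:parallel-paths} expects, again by sorting. This yields the graph $G'$ with $O(m)$ vertices and edges.

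I then check the degree condition. The degree of $v_p$ in $G'$ equals the number of edges at $v$ colored with one of the two colors of $p$. Since $\chi$ is proper, this is at most two. Layers for distinct pairs share no vertices because the keys differ in $p$, so each component of $G'$ is exactly a copy of a component of some $H_\chi(p)$ that contains an edge. Applying Lemma~\ref{lem:parallel-paths} to $G'$ labels all components and identifies both endpoints of each path in $O(m\log m)$ work and $O(\log m)$ span. Finally, I translate each label back to $G$: every original edge inherits the label of its copy, and each endpoint $v_p$ maps to $(v,p)$. This costs $O(m)$ work and $O(1)$ span.

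The main obstacle is not conceptual but representational. I need the construction of $G'$, in particular deduplicating the vertex copies $v_p$ and assembling adjacency lists, to stay within $O(m\log n)$ work and $O(\log n)$ span. If I instead allocated a copy of every vertex for every pair, the graph would have size $\Theta(n|\mathcal{P}|)$, which is too large. The sort-and-prefix-sum approach avoids that blowup, and it is the step I would verify most carefully against the appendix's data representation.
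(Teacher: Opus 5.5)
Your proposal is correct and follows essentially the same route as the paper: a color-to-pair lookup table, one parallel scan over the colored edges emitting layered edge records and $(v,p)$ endpoint keys, sorting those keys to merge copies of the same auxiliary vertex $v_p$, and a single call to Lemma~\ref{lem:parallel-paths} on the resulting maximum-degree-two graph $G'$ with $O(m)$ vertices and edges. The paper additionally stores a pointer from each original edge to its auxiliary copy, which corresponds to your step of translating the component labels back to $G$.
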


\begin{proof}
For a pair $p=\{a,b\}\in\mathcal{P}$, let $E_\chi(p):=\{e\in E:\chi(e)\in\{a,b\}\}$ be the edge set of $H_\chi(p)$. Since $\chi$ is proper, every vertex has at most one incident edge of color $a$ and at most one incident edge of color $b$, and therefore $H_\chi(p)$ has maximum degree at most two. Every component containing at least one edge is a path or an even cycle whose edge colors alternate between $a$ and $b$. Moreover, since the pairs in $\mathcal{P}$ are pairwise color-disjoint, every colored edge belongs to $E_\chi(p)$ for at most one $p\in\mathcal{P}$. Thus
\[
    \sum_{p\in\mathcal{P}}|E_\chi(p)|\le m.
\]

We build an auxiliary graph $G'=(V',E')$ whose vertices are indexed by color pairs. For every $p\in\mathcal{P}$ and every vertex $v$ incident to an edge of $E_\chi(p)$, let $v_p$ denote the corresponding vertex of $G'$, and define
\[
\begin{aligned}
    V'&:=\bigl\{v_p:p\in\mathcal{P},\ v\text{ is incident to an edge of }E_\chi(p)\bigr\},\\
    E'&:=\bigl\{(u_p,v_p):p\in\mathcal{P},\ (u,v)\in E_\chi(p)\bigr\}.
\end{aligned}
\]
The vertices corresponding to different pairs are distinct, even when they arise from the same vertex of $G$. Thus the layers of $G'$ corresponding to different pairs are vertex-disjoint, and each component of $G'$ corresponds exactly to a component containing at least one edge of some $H_\chi(p)$. Furthermore, $\deg_{G'}(v_p)=\deg_{H_\chi(p)}(v)\le2$. Since $|E'|\le m$ and every vertex of $G'$ is incident to an edge, $|V'|\le2|E'|\le2m$.

We now show how to construct $G'$ in parallel. First, for each color, store the unique pair of $\mathcal{P}$ containing it, if such a pair exists. We then scan the edges of $G$ in parallel. An edge $g=(u,v)$ whose color belongs to a pair $p$ produces an auxiliary edge $g_p=(u_p,v_p)$ together with the two endpoint records $u_p$ and $v_p$, while an edge whose color belongs to no pair is ignored. Each endpoint record $v_p$ is identified by the pair consisting of the identifier of $v$ and the identifier of $p$. We sort all endpoint records by these identifiers. Thus all occurrences corresponding to the same auxiliary vertex $v_p$ become consecutive, while $v_p$ and $v_{p'}$ remain distinct whenever $p\neq p'$. Each maximal block of equal records is assigned a single vertex of $G'$, and this auxiliary vertex stores the identifier of its corresponding original vertex $v$. The assigned auxiliary-vertex identifiers are then propagated back to the endpoint records that generated them, so that every auxiliary edge $g_p$ obtains its two endpoints. At the same time, for every original edge $g$ that produces an auxiliary edge $g_p$, we store a pointer from $g$ to $g_p$.

For instance, two edges $(u_p,v_p)$ and $(v_p,w_p)$ initially produce the endpoint records $u_p,v_p,v_p,w_p$. Two occurrences of $v_p$ belong to the same block and hence correspond to the same vertex of $G'$, so the resulting edges form the path $(u_p-v_p-w_p)$. Since every edge of $G$ produces records for at most one pair, the construction costs $O(m\log n)$ work and $O(\log n)$ span. Applying Lemma~\ref{lem:parallel-paths} once to $G'$ then assigns component identifiers. Since $G'$ has $O(m)$ vertices and edges, this also costs $O(m\log n)$ work and $O(\log n)$ span. Thus all components and the endpoints of every path component are identified and labeled, and every original edge represented in $G'$ has a pointer to its corresponding auxiliary edge.
\end{proof}

It remains to show that the selected components can also be flipped simultaneously. In the two-color setting, once the required components are known, Elkin and Khuzman~\cite{elkin2026efficient} flip them by exchanging the two colors on their edges in parallel. Here $G'$ represents several color pairs at once, but every component belongs to a unique layer $p=\{a,b\}$, so flipping that component still simply exchanges $a$ and $b$ on its corresponding edges in $G$. Since the pairs in $\mathcal{P}$ are color-disjoint, every original edge is affected by at most one such flip. Thus all selected components can still be flipped with only linear work and constant span once their component labels are known; Lemma~\ref{lem:parallel-pair-flips} formalizes this statement.

\begin{lemma}\label{lem:parallel-pair-flips}
Let $\mathcal{P}$ be a set of pairwise color-disjoint pairs of distinct colors, and suppose that the components of $H_\chi(p)$ containing at least one edge have been identified for every $p\in\mathcal{P}$. Any collection of distinct such components can be flipped simultaneously in $O(m)$ work and $O(1)$ span, including the corresponding updates to the data structure. The resulting coloring is a proper partial coloring; its set of uncolored edges is unchanged, and the outcome is independent of the order in which the flips are applied.
\end{lemma}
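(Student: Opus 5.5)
To prove Lemma~\ref{lem:parallel-pair-flips}, I would reuse the auxiliary graph $G'$ and the component labels built in Lemma~\ref{lem:parallel-pairs}. Each original edge $g$ whose color lies in some pair $p\in\mathcal{P}$ stores a pointer to its auxiliary edge $g_p$, and $g_p$ carries the label of its component. The selected components are given as a set of component identifiers. First, mark each selected identifier in an array indexed by component label; this costs $O(|\text{selected}|)\le O(m)$ work and $O(1)$ span. Then, in parallel over all edges $g$ of $G$, the processor for $g$ looks up the pair $p=\{a,b\}$ containing $\chi(g)$, if any, follows the pointer to $g_p$, and reads whether $g_p$'s component is marked. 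If it is, the processor sets $\chi(g)$ to the other color of $p$. Each edge is handled by exactly one processor and written once, so no write conflicts arise. The work is $O(m)$ and the span is $O(1)$.

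\textbf{Properness and uncolored edges.} The uncolored set is unchanged because only edges with colors in $\bigcup\mathcal{P}$ are touched, and they stay colored. For properness, fix a vertex $v$ and a color $c$. If $c$ lies in no pair, no edge of color $c$ changes. If $c\in p=\{a,b\}$, then the edges at $v$ with colors in $\{a,b\}$ all belong to the single component $K_\chi(p,v)$ of $H_\chi(p)$, since at most two such edges exist and they are adjacent in $H_\chi(p)$. Either that component is flipped, which swaps $a$ and $b$ on both edges and keeps them distinct, or it is not, and nothing changes. Could some other flip introduce color $a$ or $b$ at $v$? Flips for another pair $p'$ only write colors of $p'$, which are disjoint from $\{a,b\}$. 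Flips of other components of $H_\chi(p)$ do not touch edges at $v$. So at most one edge of each color remains at $v$, and the coloring is proper.

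\textbf{Order independence.} The flipped components have pairwise disjoint edge sets: different pairs share no colored edges by color-disjointness, and distinct components of the same $H_\chi(p)$ are disjoint. So each edge is changed by at most one flip. A sequential application in any order therefore gives every edge the same final color, namely the swap under its own pair if its component is selected. In particular, flipping one component does not change the component structure relevant to any other selected component. This is the step I expect to need the most care: arguing that sequential flips do not merge or split the other components. I would handle it by noting that flipping an entire maximal component keeps its edge set and its vertex set's $\{a,b\}$-incidences identical. It only swaps the labels, so the components of $H_\chi(p)$ are unchanged as edge sets after any flip.

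\textbf{Data-structure updates.} Missing colors and designated missing colors change only at path endpoints, where the missing color in $\{a,b\}$ is swapped. Internal vertices of a path, and all vertices of a cycle, keep the same set of incident colors. Lemma~\ref{lem:parallel-paths} already labeled the endpoints, so one processor per path endpoint updates the record of $v$: it swaps $a\leftrightarrow b$ in $\miss_\chi(v)$ and in $\varphi(v)$ if relevant. Each vertex $v$ is an endpoint of at most one component per pair, so it may receive updates for several pairs. These touch disjoint color entries, assuming a per-color indexed representation as in the appendix, and so cost $O(1)$ each with total $O(m)$ work and $O(1)$ span. Finally, any per-edge color lists at vertices are updated by the per-edge processors in the same step.
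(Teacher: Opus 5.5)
You follow essentially the same route as the paper. You mark the selected component labels and recolor all edges in one parallel scan, with each edge touched by at most one flip thanks to color-disjointness. You argue properness per vertex and pair, derive order independence from edge-disjointness of the flipped components, and update stored missing colors only at path endpoints. Your extra observation, that flipping a whole component leaves $E_\chi(p)$ and hence the component structure of every $H_\chi(p)$ unchanged, is something the paper leaves implicit, and it makes the order-independence claim cleaner.

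The one step that fails as written is your last sentence, where the per-vertex color tables are ``updated by the per-edge processors in the same step.'' Your no-conflict argument covers only the writes to $\chi(g)$, not the vertex tables. At an internal vertex $v$ of a flipped $\{a,b\}$-path, the processor of the $a$-edge $g_1$ must clear $\operatorname{Color2Edge}(v,a)$ and set $\operatorname{Color2Edge}(v,b)\leftarrow g_1$. At the same time, the processor of the $b$-edge $g_2$ must clear $\operatorname{Color2Edge}(v,b)$ and set $\operatorname{Color2Edge}(v,a)\leftarrow g_2$. If deletions and insertions are issued in one step on an ARBITRARY CRCW PRAM, the null write may win, and $v$ would wrongly appear to be missing $a$ or $b$.

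The paper avoids this with two parallel phases: it first deletes all entries for the old colors, then inserts all entries for the new colors. Alternatively, you could clear old entries only at path endpoints, since internal entries get overwritten anyway. Either fix keeps $O(m)$ work and $O(1)$ span, and with it your argument is complete.
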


\begin{proof}
Let $\mathcal{K}$ be the selected collection of components. For a component $K\in\mathcal{K}$ belonging to $H_\chi(p)$, where $p=\{a,b\}$, flipping $K$ exchanges colors $a$ and $b$ on all of its edges. Since the pairs in $\mathcal{P}$ are color-disjoint, every colored edge belongs to $H_\chi(p)$ for at most one $p\in\mathcal{P}$, and hence every edge is affected by at most one selected flip. Therefore, after marking the selected component identifiers, all edges can be scanned in parallel, and every edge belonging to a selected component can have its color exchanged independently.

We update the data structure within the same bounds. The color of edges and $\operatorname{Edge2Color}$ entries of every flipped edge are updated in parallel. For $\operatorname{Color2Edge}$, we first delete in parallel all entries corresponding to the old colors of flipped edges and then insert in parallel the entries corresponding to their new colors. Separating deletion from insertion avoids a concurrent-write conflict at an internal vertex, where one incident edge changes from $a$ to $b$ while the other changes from $b$ to $a$.  So it remains to update the stored missing colors of the vertices. At an internal vertex of a flipped path (or at a vertex of a flipped cycle), both colors of the corresponding pair are present before and after the flip, so no stored missing color needs to change. At an endpoint of a flipped path with pair $p=\{a,b\}$, exactly one of $a$ and $b$ is missing before the flip, and the other is missing afterwards. Therefore, for every stored missing color entry at that endpoint whose value belongs to $p$, we replace its value with the other color of $p$. Any stored missing color not belonging to $p$ remains missing and is left unchanged. Thus all stored missing color entries can be updated in $O(1)$ work per endpoint. Since the pairs in $\mathcal{P}$ are color-disjoint, each stored entry is affected by at most one flipped pair, so all such updates can be performed in parallel. Hence the entire data structure is updated in $O(m)$ work and $O(1)$ span.

Next, we show that the resulting coloring is proper. Fix a pair $p=\{a,b\}\in\mathcal{P}$ and a vertex $v$. All edges of $H_\chi(p)$ incident to $v$ belong to the same component, so either all of them are flipped, or none of them is. Before the flip, at most one incident edge has color $a$ and at most one has color $b$, and exchanging $a$ and $b$ on the entire component preserves this property. More explicitly, if $v$ is an internal vertex of a path or a vertex of a cycle, then it is incident to one edge of color $a$ and one of color $b$, and the flip simply exchanges these two colors. If $v$ is an endpoint of a path, then it is incident to exactly one edge of the component, which changes from $a$ to $b$ or from $b$ to $a$; thus the flip exchanges which of $a$ and $b$ is missing at $v$. Since different pairs in $\mathcal{P}$ use disjoint colors, the flips associated with different pairs cannot create a conflict in any color. Colors belonging to no pair are never changed. Hence every color still appears on at most one edge incident to each vertex, and the resulting coloring is proper. Moreover, an uncolored edge belongs to no $E_\chi(p)$ and is therefore never modified, while every flip only exchanges two colors on already colored edges. Hence the set of uncolored edges is unchanged.

Let us show that the order of the flips does not matter. Distinct components in the same layer have disjoint underlying edge sets. Components belonging to different layers may share vertices in $G$, but their underlying edge sets are also disjoint since the pairs in $\mathcal{P}$ are color-disjoint. Therefore, every edge of $G$ is affected by at most one selected flip. Let $K$ and $K'$ be two selected components. If an edge belongs to neither component, both flips leave it unchanged; if it belongs to exactly one, only the corresponding flip changes its color; and it cannot belong to both. Hence the two flips commute. Since every pair of selected flips commutes, applying all selected flips in any order produces the same final coloring.

It remains to bound the cost of performing the flips. Once the selected component identifiers are known, all edges can be scanned in parallel, and every edge belonging to a selected component can exchange its two colors independently. All data structure updates also take $O(1)$ work each and can be performed in parallel. Since at most $m$ edges and $O(m)$ endpoints are involved, all selected components can be flipped, including all required data-structure updates, in $O(m)$ work and $O(1)$ span.
\end{proof}

\subsection{Conflicts}\label{subsec:conflicts}

Throughout this subsection, fix one round of the algorithm and let $A$ be the set of active u-edges considered in this round. Let $F_e$ be the fan of a u-edge $e\in A$. If $F_e$ is nontrivial, let $\pi_e=(\alpha,\beta)$ be its required pair, define its underlying unordered color pair by $p_e:=\{\alpha,\beta\}$, and let $P_e:=K_\chi(p_e,u)$, where $u=\cen(e)$. Since $\alpha\in\miss_\chi(u)$ and $\beta\notin\miss_\chi(u)$, the vertex $u$ has degree exactly one in $H_\chi(p_e)$, and hence $P_e$ is the maximal $\{\alpha,\beta\}$-alternating path starting at $u$. If $F_e$ is trivial, we set $P_e:=\varnothing$. In a single round, the underlying color pairs of all active nontrivial fans belong to one color-disjoint set $\mathcal{P}$, where every color belongs to at most one pair. Consequently, for any two active nontrivial u-edges $e$ and $f$, either $p_e=p_f$ or $p_e$ and $p_f$ are color-disjoint.

In order to color two active u-edges simultaneously, their operations should not interfere with one another; in particular, executing the two operations in either order should give the same result. We define conflicts so that, after extracting a conflict-free set of u-edges, no two selected u-edges interact, and hence all of them can be colored in parallel without interference. In general, two active u-edges can interact in three ways: their fans overlap, their paths may intersect, or the alternating path of one modifies the fan of the other. We use a pessimistic notion of conflict: whenever two active operations do not conflict in any of these three senses, they may be executed simultaneously without affecting one another. Two distinct active fans $F_e,F_f$ have a \defn{fan-fan conflict} if $V(F_e)\cap V(F_f)\neq\varnothing$, i.e., they share at least a vertex. Since each fan contains at most $\Delta+1$ vertices, and for each such vertex $x$ there are at most $\Delta+1$ active fans containing $x$, the next observation follows.
\begin{observation}\label{obv:fan-fan-conflicts}
   Every active fan has fan-fan conflicts with $O(\Delta^2)$ other active fans.
\end{observation}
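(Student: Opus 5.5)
The plan is a double-counting argument: charge each conflicting fan to a shared vertex, then bound the vertices of $F_e$ and the active fans through each vertex.

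Fix an active u-edge $e$ with fan $F_e$. If an active fan $F_f$ with $f\neq e$ conflicts with $F_e$, then by definition $V(F_f)\cap V(F_e)\neq\varnothing$. We charge $F_f$ to some vertex $x\in V(F_e)\cap V(F_f)$. The number of conflicting fans is then at most
\[
\sum_{x\in V(F_e)} \bigl|\{f\in A: x\in V(F_f)\}\bigr|.
\]

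The first step bounds the number of summands. The vertices of $F_e$ are its center $u$ and its leaves, which by Definition~\ref{def:vizing-fan} are distinct neighbors of $u$. Hence $|V(F_e)|\le 1+\deg(u)\le \Delta+1$.

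The second step bounds, for a fixed vertex $x$, how many active fans contain $x$. A fan $F_f$ contains $x$ only if $x=\cen(f)$ or $x$ is a leaf of $F_f$, in which case $\cen(f)$ is a neighbor of $x$. So $\cen(f)\in\{x\}\cup N(x)$, a set of at most $\Delta+1$ vertices. The centers of distinct uncolored edges are distinct, as arranged in Section~\ref{sec:prelim} because $U_\chi$ is a matching. Each active fan is determined by its u-edge, since it is built from the designated center and designated missing color. Therefore at most $\Delta+1$ active fans contain $x$.

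Multiplying the two bounds gives at most $(\Delta+1)^2=O(\Delta^2)$ fans conflicting with $F_e$, which proves the observation.

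The only point needing care is the injectivity in the second step: that each center yields a single active fan. This follows from the convention that distinct uncolored edges have distinct centers, so no real obstacle is expected.
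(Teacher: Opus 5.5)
Your proposal is correct and follows essentially the same argument as the paper. Both bound $|V(F_e)|\le\Delta+1$, and both bound the number of active fans through a fixed vertex $x$ by $\Delta+1$, using that such fans have distinct centers in $\{x\}\cup N(x)$. Multiplying gives the $O(\Delta^2)$ bound.
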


Consider two nonempty active paths $P_e$ and $P_f$. There are three possibilities. If $\pi_e$ and $\pi_f$ have the same underlying color pair ${\alpha,\beta}$, then both paths lie in the two-colored subgraph $H_\chi({\alpha,\beta})$; since distinct components of $H_\chi({\alpha,\beta})$ are vertex-disjoint, the two paths can interact only if they are in fact one path. On the other hand, if $p_e$ and $p_f$ are color-disjoint, then $P_e$ and $P_f$ use disjoint sets of colors and therefore have disjoint edge sets. They may share vertices, but flipping them simultaneously is harmless: the two flips exchange disjoint pairs of colors and hence do not interfere with one another. The remaining possibility is that the two pairs share exactly one color, say ${\alpha,\beta}$ and ${\alpha,\gamma}$, which never arises in our algorithms since our active pairs belong to the color-disjoint set $\mathcal{P}$. Thus, two paths with color-disjoint pairs do not conflict. 

Suppose that the underlying color pairs of the active nontrivial u-edges belong to a color-disjoint set $\mathcal{P}$. Then two distinct active u-edges $e,f$ with nonempty paths have a \defn{path-path conflict} if $p_e=p_f$ and $P_e=P_f$, that is, if they use the same alternating path. If $p_e$ and $p_f$ are color-disjoint, their paths do not interact. Therefore, each u-edge can share a path with at most one other u-edge. Note that this type of conflict is in fact harmless, since flipping the shared path $P_e=P_f$ frees the colors needed for both $e$ and $f$. Nevertheless, we pessimistically consider it a conflict so that no two u-edges in a conflict-free set interact with one another, even in a harmless way. Additionally, this type of interaction can be viewed as a conflict arising from the path of one u-edge overlapping with the fan of the other u-edge. We therefore count it as part of the next type of conflict (path-fan conflicts). Note that if $\mathcal{P}$ were not color-disjoint, separate path-path conflicts could in fact arise, since two alternating paths whose color pairs share a color may overlap and their flips may interfere.

For a fixed u-edge, its fan can be modified in two ways: either by overlapping with the fan of another u-edge or via the path of another u-edge. The former is accounted for by the fan-fan conflicts considered earlier, so it remains to consider the latter. A path can modify the fan $ F_e=(u,\alpha),(v_1,c_1),\dots,(v_k,c_k) $ of a u-edge $e$ by changing one of the colors ${\alpha,c_1,\ldots,c_k}$ associated with the fan. In particular, the path $P_f$ of one active u-edge may modify the fan $F_e$ of another when it is flipped in one of two ways. First, $P_f$ may end at a vertex $x\in V(F_e)$ and change whether a color that $F_e$ requires to be missing at $x$ remains missing. Second, $P_f$ may contain a colored fan edge of $F_e$ and recolor it. Note that both types of modification may occur simultaneously, but this does not affect the overall bound on the number of path-fan conflicts. For every $x\in V(F_e)$, recall that the designated missing color $\varphi(x)$ is the color that is required to be missing at $x$ for the fan and $\varphi(u)=\alpha$ is the designated missing color at the center and $\varphi(v_i)=c_i$ at each leaf $v_i$. Therefore, since $\varphi(x)\in\miss_\chi(x)$ for every $x\in V(F_e)$, the validity of the fan relies on this color remaining missing at $x$. A single long alternating path can interact with the fans of many other u-edges, so generally the path-fan conflict relation can have maximum degree as large as $\Omega(|A|)$, but we will show that the average degree is bounded. More formally, let $e,f$ be distinct active u-edges with $P_f\neq\varnothing$. We say $P_f$ has a \defn{path-fan conflict} with $F_e$, from $f$ to $e$, if flipping $P_f$ can modify $F_e$ in one of the following two ways: \defn{endpoint conflict}, $P_f$ has an endpoint at a vertex $x\in V(F_e)$ with $\varphi(x)$ one of the two colors of $\pi_f$; or \defn{fan-edge conflict}, $P_f$ contains at least one of the colored fan edges $(u,v_2),\dots,(u,v_k)$ of $F_e$. 

\begin{lemma}\label{lem:path-fan-conflicts}
Suppose that the underlying color pairs of the active nontrivial u-edges belong to a color-disjoint set $\mathcal{P}$. Then the fan of an active u-edge can conflict with paths of $O(\Delta)$ other u-edges.
\end{lemma}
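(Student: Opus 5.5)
Fix an active u-edge $e$ with fan $F_e=(u,\alpha),(v_1,c_1),\dots,(v_k,c_k)$, so $|V(F_e)|\le\Delta+1$ and $F_e$ has at most $\Delta-1$ colored fan edges. The plan is to bound separately the number of \emph{distinct paths} that can cause an endpoint conflict or a fan-edge conflict with $F_e$. Then we convert the bound on paths into a bound on u-edges. The key tool throughout is color-disjointness of $\mathcal{P}$. For each color $c$ there is at most one pair $p\in\mathcal{P}$ containing $c$. For each vertex $x$ and each $p\in\mathcal{P}$ there is at most one component $K_\chi(p,x)$ of $H_\chi(p)$ containing $x$.

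\emph{Fan-edge conflicts.} A colored fan edge $(u,v_i)$, $i\ge2$, has a single color $c_{i-1}$. By color-disjointness, at most one pair $p\in\mathcal{P}$ contains $c_{i-1}$. In $H_\chi(p)$ the edge lies in exactly one component. Hence each colored fan edge lies on at most one active path, and at most $k-1\le\Delta-1$ distinct paths $P_f$ have a fan-edge conflict with $F_e$.

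\emph{Endpoint conflicts.} Fix $x\in V(F_e)$. An endpoint conflict at $x$ requires $\varphi(x)$ to be a color of $\pi_f$, so $p_f$ is the unique pair of $\mathcal{P}$ containing $\varphi(x)$, if any. The path $P_f$ must also be the component of $H_\chi(p_f)$ having $x$ as an endpoint, which is unique. So each vertex of $F_e$ contributes at most one path, giving at most $\Delta+1$ paths in total.

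\emph{From paths to u-edges.} This step is where I expect the main care to be needed: a single path could a priori be the path $P_f$ of many u-edges $f$. I would argue that if $P_f=P_g$ for distinct $f,g$, then both centers $\cen(f)$ and $\cen(g)$ are endpoints of this path, since each center has degree one in $H_\chi(p)$. A path has only two endpoints, and distinct u-edges have distinct centers. Hence each path is $P_f$ for at most two u-edges $f$, matching the earlier remark that a u-edge shares its path with at most one other.

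Summing, $F_e$ has path-fan conflicts with at most $2\bigl((\Delta-1)+(\Delta+1)\bigr)=O(\Delta)$ other active u-edges. One subtle point to check is whether the path $P_e$ of $e$ itself could be counted. This is harmless, since $e$ is excluded by definition, and it only affects constants.
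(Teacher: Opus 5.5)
Your proof is correct and follows essentially the same route as the paper. Color-disjointness of $\mathcal{P}$ fixes a unique pair, and hence a unique bichromatic component, for each fan vertex and each colored fan edge; that component is the path of at most two u-edges, because their distinct centers must be its endpoints, which gives $O(\Delta)$ conflicts. The only difference is bookkeeping: you count distinct paths first and then multiply by two, whereas the paper directly bounds the number of u-edges per fan vertex or fan edge by $O(1)$.
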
 

\begin{proof}
Fix an active fan $F_e$; we bound the number of active paths whose path-fan conflicts are oriented toward $e$, considering the two cases of path-fan separately.

\emph{Endpoint conflicts.} Fix a vertex $x\in V(F_e)$ and let $\gamma=\varphi(x)$ be the designated missing color of $x$ required by $F_e$. If an active path $P_f$ creates an endpoint conflict at $x$, then $x$ is an endpoint of $P_f$ and $\gamma\in p_f$. Since the underlying pairs of the active nontrivial u-edges belong to the color-disjoint set $\mathcal{P}$, there is at most one pair $p\in\mathcal{P}$ containing $\gamma$. For this fixed pair $p$, there is a unique component $K$ of $H_\chi(p)$ containing $x$. Any active path creating an endpoint conflict at $x$ must equal $K$. Since $x$ is an endpoint of that component, $K$ is a path component, and a path $K$ in general can be used by at most two active u-edges. Therefore, each vertex $x\in V(F_e)$ receives $O(1)$ endpoint conflicts. Since $|V(F_e)|\le\Delta+1$, the total number of endpoint conflicts oriented toward $e$ is $O(\Delta)$. Note that this case also includes the shared-path case from the path-path conflict: if $P_e=P_f$, then taking $x=\cen(e)$ gives an endpoint conflict from $f$ to $e$, since $\varphi(x)$ is the first color of $\pi_e$ and hence belongs to $p_e=p_f$.

\emph{Fan-edge conflicts.} Fix a colored fan edge $g\in\{(u,v_2),\dots,(u,v_k)\}$ of $F_e$ and let $\gamma=\chi(g)$. If an active path $P_f$ contains $g$, then $\gamma\in p_f$. Again, color-disjointness of $\mathcal{P}$ implies that there is at most one pair $p\in\mathcal{P}$ containing $\gamma$. For this pair, $g$ lies in a unique component $K$ of $H_\chi(p)$, and every active path containing $g$ must equal $K$. If such an active path exists, then $K$ is used by at most two active u-edges. Thus each colored fan edge receives $O(1)$ fan-edge conflicts. Since $F_e$ has at most $\Delta$ colored fan edges, the total number of fan-edge conflicts oriented toward $e$ is $O(\Delta)$.

Combining the two cases, the path-fan conflict in-degree of $e$ is $O(\Delta)$. The color-disjointness of $\mathcal{P}$ is essential here: for each relevant color $\gamma$, it ensures that at most one active color pair contains $\gamma$; without this property, up to $O(\Delta)$ different pairs containing $\gamma$ could contribute conflicts at a single fan vertex or fan edge, increasing the in-degree up to $O(\Delta^2)$ instead of $O(\Delta)$.
\end{proof}

\subsection{Conflict Graph}\label{subsec:conflict-graph}
So far we have established structural bounds on the number of fan-fan, path-path, and path-fan conflicts. To use these bounds algorithmically, we must also detect the conflicts efficiently and construct a conflict graph, in particular for the deterministic algorithm, from which a conflict-free set of u-edges can be extracted. Rather than building a separate graph for each conflict type, we combine all three into a single conflict graph. For a set of active u-edges $A$, let $G_{\mathrm{conf}}(A)$ be the \defn{conflict graph} with vertex set $A$, where two distinct u-edges $e,f\in A$ are adjacent whenever they have a conflict. We call a subset $I\subseteq A$ \defn{conflict-free} if it is an independent set of $G_{\mathrm{conf}}(A)$. By the definitions of the three conflict types, all u-edges in a conflict-free set can be colored simultaneously without interfering with one another. 

\begin{lemma}\label{lem:conflict-graph-edges}
The conflict graph $G_{\mathrm{conf}}(A)$ has $O(\Delta^2|A|)$ edges.
\end{lemma}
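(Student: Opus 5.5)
The bound follows by charging each edge of $G_{\mathrm{conf}}(A)$ to one endpoint and summing per-vertex counts. The graph is undirected and built from the three conflict types defined in Section~\ref{subsec:conflicts}. An edge $\{e,f\}$ is present iff one of three things holds: $F_e$ and $F_f$ have a fan-fan conflict; $e$ and $f$ have a path-path conflict; or there is a path-fan conflict oriented from $f$ to $e$ (or from $e$ to $f$). The number of edges is therefore at most the sum, over $e\in A$, of the number of fan-fan neighbours of $e$, plus the number of path-fan conflicts oriented toward $e$. Path-path conflicts need no separate count, since they are absorbed as endpoint conflicts.

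\emph{Fan-fan term.} By Observation~\ref{obv:fan-fan-conflicts}, each active fan shares a vertex with $O(\Delta^2)$ other active fans. Summing over $e\in A$ counts each fan-fan edge twice, giving $O(\Delta^2|A|)$ such edges.

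\emph{Path-fan and path-path terms.} We work in the setting of a single round, where the underlying pairs of the active nontrivial u-edges lie in a color-disjoint $\mathcal{P}$. Lemma~\ref{lem:path-fan-conflicts} then bounds the number of paths $P_f$ whose conflict is oriented toward a fixed $e$ by $O(\Delta)$. Every path-fan conflict is oriented toward some $e$, so charging it to its target gives $O(\Delta|A|)$ such edges. As shown in the proof of Lemma~\ref{lem:path-fan-conflicts}, a path-path conflict with $P_e=P_f$ is an endpoint conflict at $x=\cen(e)$, so it is already counted. Even if we counted it separately, each u-edge shares its path with at most one other u-edge, which adds only $O(|A|)$ edges.

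Adding the terms gives $O(\Delta^2|A|)+O(\Delta|A|)+O(|A|)=O(\Delta^2|A|)$. The one delicate point is that path-fan conflicts do not have bounded degree: a single long path can hit $\Omega(|A|)$ fans. We therefore must charge such conflicts to the fan side, i.e.\ bound in-degrees, and never bound out-degrees of paths. Lemma~\ref{lem:path-fan-conflicts} supplies exactly this in-degree bound. Since every edge has at least one endpoint to which it can be charged with bounded per-vertex load, the total edge count stays linear in $|A|$ up to the $\Delta^2$ factor.
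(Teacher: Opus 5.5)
Your proposal is correct and takes essentially the paper's route: you sum Observation~\ref{obv:fan-fan-conflicts} over $A$ for the fan-fan edges, and you sum the $O(\Delta)$ in-degree bound of Lemma~\ref{lem:path-fan-conflicts} for the path-fan edges, with path-path conflicts absorbed as endpoint conflicts at $\cen(e)$. Your remark that path-fan conflicts must be charged to the target fan rather than to the path is exactly why the paper states that lemma as an in-degree bound.
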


\begin{proof}
We bound the number of edges added by each conflict type. By Observation~\ref{obv:fan-fan-conflicts}, every active fan has fan-fan conflicts with $O(\Delta^2)$ other active fans. Summing over the $|A|$ active fans gives $O(\Delta^2|A|)$ fan-fan edges. By Lemma~\ref{lem:path-fan-conflicts}, every active u-edge has path-fan conflict in-degree $O(\Delta)$. Summing the in-degrees over all $|A|$ vertices gives $O(\Delta|A|)$ oriented path-fan conflicts, and hence at most $O(\Delta|A|)$ undirected path-fan edges in $G_{\mathrm{conf}}(A)$. Taking the union of all conflict types, we conclude that $G_{\mathrm{conf}}(A)$ has $O(\Delta^2|A|)$ edges.
\end{proof}
%%%
%%%
%%%
%%%
%%%.
We next show that the structural conflicts defined above can also be detected efficiently in parallel. The high-level idea is to represent potential conflicts by records and use parallel sorting and grouping to bring together records that witness the same interaction. We use the component labels of the auxiliary graph $G'$ to associate each alternating component with the active u-edges whose paths use it; we call this the \defn{ownership relation}. Fan-fan conflicts are detected by grouping fans that contain the same vertex. For path-fan conflicts, endpoint conflicts are detected by matching the missing colors required by fan vertices with the colors whose missing status is changed at path endpoints, while fan-edge conflicts are detected using the labeled components of $G'$ together with the ownership relation. These procedures may generate the same conflicting pair $\{e,f\}$ more than once; we call each generated pair a \defn{conflict record} and then remove duplicates at the end by sorting.

\begin{lemma}\label{lem:build-conf-graph}
Let $A$ be a set of uncolored edges such that the underlying required pairs of its nontrivial edges belong to a color-disjoint set. Then all conflicts among the edges of $A$ can be detected and the conflict graph $G_{\mathrm{conf}}(A)$ can be constructed in $O(m\log n+\Delta^2|A|\log n)$ work and $O(\log n)$ span.
\end{lemma}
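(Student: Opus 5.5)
The plan is to generate, for each conflict type, a multiset of \emph{conflict records} (unordered pairs $\{e,f\}$) using only sorting, prefix sums and grouping. We then sort all records once to remove duplicates and read off $G_{\mathrm{conf}}(A)$ in adjacency form. We first compute all fans and required pairs by Lemma~\ref{lem:parallel-fans}, in $O(m\log\Delta)$ work. We then apply Lemma~\ref{lem:parallel-pairs} to the color-disjoint set $\mathcal{P}$ of underlying pairs, in $O(m\log n)$ work and $O(\log n)$ span. This labels every component of every $H_\chi(p)$, gives every colored edge a pointer to its auxiliary edge in $G'$, and identifies the path endpoints.

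\emph{Ownership relation.} For each nontrivial $e\in A$ with center $u$ and pair $p_e$, the vertex record $u_{p_e}$ exists in $G'$ because $\beta\notin\miss_\chi(u)$. We look up its component label, which identifies $P_e$. Sorting the $|A|$ pairs (label, $e$) groups the u-edges by path. Each group has size at most two, since a path has two endpoints and centers are distinct. Each group therefore emits $O(1)$ path-path records, and each component label stores its $O(1)$ owners.

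\emph{Fan-fan conflicts.} For every $e\in A$ and every $x\in V(F_e)$ we emit the record $(x,e)$; this is $O(\Delta|A|)$ records in total. We sort them by $x$. Each vertex $x$ lies in at most $\Delta+1$ fans, so each group has size $O(\Delta)$. Within a group we emit all pairs by prefix-sum allocation, giving $\sum_x O(\deg_{\text{fan}}(x)^2)=O(\Delta^2|A|)$ records overall. This step costs $O(\Delta^2|A|\log n)$ work and $O(\log n)$ span.

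\emph{Path-fan conflicts.} For endpoint conflicts, each fan vertex $x$ with $\gamma=\varphi(x)$ looks up the unique pair $p\in\mathcal{P}$ containing $\gamma$, if one exists. It then checks whether $x_p$ exists in $G'$ and is an endpoint of its component. If so, it emits $(e,f)$ for each of the $O(1)$ owners $f$ of that component. For fan-edge conflicts, each colored fan edge $g$ of $F_e$ whose color lies in a pair of $\mathcal{P}$ follows its pointer to $g_p$, reads the component label, and emits records with that component's owners. Both kinds of record use $O(1)$ work per fan vertex or fan edge, and every lookup is a direct array access from the color-to-pair table or an auxiliary-vertex lookup prepared by sorting. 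This gives $O(\Delta|A|)$ records, matching Lemma~\ref{lem:path-fan-conflicts}.

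Finally, we normalize each record to $(\min,\max)$ and sort all $O(\Delta^2|A|)$ records together. We then remove duplicates and self-pairs by compaction, and group the surviving records by endpoint to form adjacency lists. Summing the costs gives $O(m\log n+\Delta^2|A|\log n)$ work and $O(\log n)$ span. The main obstacle is the lookup step in the path-fan detection. Given a fan vertex $x$ and a pair $p$, we must locate the auxiliary vertex $x_p$ without scanning the up to $\Delta$ layers that contain copies of $x$. To handle this, we sort the $(x,p)$ query records together with the $G'$ vertex records keyed by (vertex id, pair id), so that each query lands next to its matching record. This costs $O((m+\Delta|A|)\log n)$ work and keeps the total within budget.
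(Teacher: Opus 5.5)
Your proposal is correct and follows essentially the same route as the paper: fans and component labels via Lemmas~\ref{lem:parallel-fans} and~\ref{lem:parallel-pairs}, an ownership table with at most two u-edges per path, all-pairs records obtained by sorting $(x,e)$ by fan vertex, fan-edge conflicts found through the $g\to g_p$ pointers, and a final canonicalize-sort-deduplicate step. The only deviation is in detecting endpoint conflicts. You locate $x_p$ by a joint sort with the $G'$ vertex records and then query the ownership table, whereas the paper sorts demand records $(x,\varphi(x),e)$ together with supply records placed at the two endpoints of each active path. Both stay within the stated bounds; alternatively, $\operatorname{Color2Edge}$ on the partner color of $\varphi(x)$ finds $x_p$ in $O(1)$ with no sort, which is how the paper builds the ownership relation.
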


\begin{proof}
By Lemma~\ref{lem:parallel-fans}, the fans of the edges in $A$ can be constructed and classified, and the required pair of every nontrivial fan can be determined in $O(m\log\Delta)$ work and $O(\log\Delta)$ span. Since the underlying required pairs of the nontrivial edges in $A$ belong to a color-disjoint set, Lemma~\ref{lem:parallel-pairs} constructs the corresponding pair-indexed graph $G'$ and identifies and labels all required bichromatic paths in $O(m\log n)$ work and $O(\log n)$ span. By the construction in Lemma~\ref{lem:parallel-pairs}, every original edge represented in $G'$ has a pointer to its corresponding auxiliary edge.

It remains to detect the conflicts and construct $G_{\mathrm{conf}}(A)$. We explain how to detect each type of conflict separately, generating a conflict record $\{e,f\}$ whenever a conflict between two edges $e,f\in A$ is found. The same conflict may be generated more than once; these duplicates are removed at the end. We first identify, for each alternating path, the active u-edges that use it; this will be used to find path-fan conflicts.

During the construction of $G'$, for every original edge $g\in E_\chi(p)$ represented by an auxiliary edge $g_p$ in the layer of $p$, we store a pointer from $g$ to $g_p$. After the components of $G'$ are labeled, this allows us to determine in $O(1)$ work the component containing $g$ by reading the component label of an endpoint of $g_p$. We also store the two endpoints of every path component identified by Lemma~\ref{lem:parallel-paths}.

\smallskip
\emph{Ownership relation:}
For every $e\in A$ with nontrivial fan, let $u=\cen(e)$, $\pi_e=(\alpha,\beta)$, and $p=\{\alpha,\beta\}$. Since $\alpha\in\miss_\chi(u)$ and $\beta\notin\miss_\chi(u)$, the vertex $u$ has no incident $\alpha$-colored edge and exactly one incident $\beta$-colored edge. Hence $\deg_{H_\chi(p)}(u)=1$, so $u$ is an endpoint of the alternating component $P_e=K_\chi(p,u)$. Using $\operatorname{Color2Edge}(u,\beta)$, we obtain the unique $\beta$-colored edge $g$ incident to $u$ in $O(1)$ work and $O(1)$ span. The pointer stored during the construction of $G'$ gives the corresponding auxiliary edge $g_p$. Since $g_p$ belongs to the component corresponding to $P_e$, the identifier of $P_e$ can be obtained in $O(1)$ work and $O(1)$ span by reading the component label of either endpoint of $g_p$.

For every such $e$, generate the record $(\operatorname{id}(P_e),e)$ and sort these, at most $|A|$ records, by component identifier. Every component appearing in this sorting is a path, and the center of each active u-edge associated with it is one of its two endpoints. Since the centers in $A$ are distinct, each endpoint is the center of at most one active u-edge, and hence every component is associated with at most two active u-edges. For each component identifier, we store these at most two u-edges in an ownership table. We call this association the component-to-u-edge ownership relation, or simply the \defn{ownership relation}, which can be constructed in $O(|A|\log n)$ work and $O(\log n)$ span.

\smallskip
\emph{Fan-fan conflicts:}
For every $e\in A$ (trivial or not) and every $x\in V(F_e)$, generate the record $(x,e)$ and sort all such records by $x$. The group with key $x$ consists exactly of the active fans containing $x$. Let its size be $s_x$. Any active fan containing $x$ is centered either at $x$ or at a neighbor of $x$, and the active centers are distinct, so $s_x\le \deg(x)+1\le \Delta+1$. We generate one conflict record $\{e,f\}$ for every pair of distinct active fans in the same group. Thus a group of size $s_x$ generates  $\binom{s_x}{2}$ fan-fan conflict records. Every fan-fan conflict is found in this way, since two fans have a fan-fan conflict exactly when they share a vertex. It remains to bound the total number of conflict records generated over all groups. Since $s_x\le\Delta+1$,
\[
\begin{aligned}
\sum_x \binom{s_x}{2}
&\le \frac{\Delta+1}{2}\sum_x s_x \\
&= \frac{\Delta+1}{2}\sum_{e\in A}|V(F_e)| \\
&\le \frac{(\Delta+1)^2}{2}|A| \\
&= O(\Delta^2|A|).
\end{aligned}
\]
Since we generate one record $(x,e)$ for each vertex $x\in V(F_e)$, the total number of such records is $\sum_{e\in A}|V(F_e)|\le(\Delta+1)|A|=O(\Delta|A|)$. Thus sorting them by $x$ requires $O(\Delta|A|\log n)$ work and $O(\log n)$ span. Together with the $O(\Delta^2|A|)$ work needed to generate the conflict records, this part costs $O(\Delta^2|A|\log n)$ work and $O(\log n)$ span.

\smallskip
\emph{Path-fan conflicts:}
We find endpoint conflicts and fan-edge conflicts separately. First consider endpoint conflicts. For every $e\in A$ and every $x\in V(F_e)$, the fan $F_e$ requires the designated color $\varphi(x)$ to be missing at $x$. We therefore generate the \defn{demand} record $(x,\varphi(x),e)$, representing that $F_e$ depends on $\varphi(x)$ remaining missing at $x$. There are $\sum_{e\in A}|V(F_e)|=O(\Delta|A|)$ demand records. Now let $f\in A$ be a nontrivial u-edge with $\pi_f=(\alpha,\beta)$. At either endpoint $y$ of $P_f$, exactly one of $\alpha$ and $\beta$ is missing before the flip, and flipping $P_f$ exchanges their colors; the flip changes the missing color at $y$. We therefore generate the two \defn{supply} records $(y,\alpha,f)$ and $(y,\beta,f)$ for each endpoint $y$ of $P_f$. Since $P_f$ has two endpoints, every nontrivial active u-edge generates four supply records, for a total of $O(|A|)$ supply records. Sort all demand and supply records lexicographically by their vertex and color fields. If a demand $(x,\gamma,e)$ and a supply $(x,\gamma,f)$ with $e\neq f$ occur in the same group, then $F_e$ requires $\gamma$ to be missing at $x$, while flipping $P_f$ changes whether $\gamma$ is missing at $x$. Hence $e$ and $f$ have an endpoint conflict, and we generate the conflict record $\{e,f\}$. Conversely, every endpoint conflict gives such a matching demand and supply record, so all endpoint conflicts are found.

To bound the number of records generated, fix a key $(x,\gamma)$. Any supply record with this key comes from an active path whose underlying color pair contains $\gamma$. Since the active color pairs belong to a color-disjoint set, at most one pair $p$ contains $\gamma$. For this pair, $x$ belongs to a unique component of $H_\chi(p)$, and by the ownership relation this component is associated with at most two active u-edges. Hence at most two supply records occur with key $(x,\gamma)$. Each demand record therefore generates at most two endpoint-conflict records. Since there are $O(\Delta|A|)$ demand records, the total number of endpoint-conflict records is $O(\Delta|A|)$. The sorting and grouping costs $O(\Delta|A|\log n)$ work and $O(\log n)$ span.

Now consider fan-edge conflicts. Let $g$ be a colored fan edge of $F_e$ and let $\gamma=\chi(g)$. An active alternating path can contain $g$ only if its underlying color pair contains $\gamma$. Since the active color pairs belong to a color-disjoint set, there is at most one such pair $p$. If no such pair exists, then $g$ creates no fan-edge conflict. Otherwise, $g\in E_\chi(p)$ and is represented by an auxiliary edge $g_p$ in the layer $p$ of $G'$. Using the pointer retained when $G'$ was constructed, we read the identifier of the component containing $g_p$ in $O(1)$ work. The ownership table then gives the at most two active u-edges whose alternating path is this component. For every such $f\neq e$, we generate the conflict record $\{e,f\}$. Hence every fan-edge conflict is found. Each fan has $O(\Delta)$ colored fan edges, so at most $O(\Delta|A|)$ such edges are examined, and each generates at most two conflict records. Hence the total number of fan-edge conflict records is $O(\Delta|A|)$. Once the component labels and ownership table are available, all required lookups and record generation take $O(\Delta|A|)$ work and $O(1)$ additional span.

\smallskip
\emph{Building the conflict graph:}
The preceding procedures generate $O(\Delta^2|A|)$ conflict records in total: $O(\Delta^2|A|)$ from fan-fan conflicts and $O(\Delta|A|)$ from path-fan conflicts (including the path-path conflicts). A conflicting pair may be generated several times, for example when two fans share more than one vertex. We therefore represent every conflict record $\{e,f\}$ canonically by $\bigl( \min\{\operatorname{id}(e),\operatorname{id}(f)\}, \max\{\operatorname{id}(e),\operatorname{id}(f)\} \bigr)$ and sort all conflict records by this pair, and keep one representative from each group of equal records. After removing duplicates, there is only one record for every pair of conflicting active u-edges, and these records form the edge set of $G_{\mathrm{conf}}(A)$. Moreover, the bounds obtained above for fan-fan and path-fan conflicts match the corresponding structural bounds of Observation~\ref{obv:fan-fan-conflicts} and Lemma~\ref{lem:path-fan-conflicts}.

Since there are $O(\Delta^2|A|)$ conflict records, sorting them costs $O(\Delta^2|A|\log n)$ work and $O(\log n)$ span. All preceding steps are bounded by the same work and span, and only a constant number of sorting, grouping, and lookup phases are performed. Therefore, all conflicts can be detected and the conflict graph $G_{\mathrm{conf}}(A)$ can be constructed in $O(m\log n+\Delta^2|A|\log n)$ work and $O(\log n)$ span.
\end{proof}

%%%
%%%
\subsection{The Deterministic Algorithm}
\label{subsec:deterministic}

In Section~\ref{subsec:parallel-pairs} we showed that components belonging to a color-disjoint set of pairs can be processed and flipped simultaneously. While previous parallel algorithms handle one color pair at a time out of $\Theta(\Delta^2)$ pairs of the palette, our deterministic algorithm takes advantage of the idea that conflict-free u-edges whose required pairs belong to a set of color-disjoint pairs can be colored in parallel. We partition the color pairs of the palette into $\ell=O(\Delta)$ sets of pairwise color-disjoint pairs, each of size $\Theta(\Delta)$. In each round, either at least half of the current fans are trivial, in which case we activate the trivial fans, or at least half are nontrivial. In the latter case, since the nontrivial required pairs are partitioned among $\ell=O(\Delta)$ sets, there exists a set that contains the required pairs of $\Omega(\lambda/\Delta)$ currently uncolored edges, where $\lambda$ is the number of uncolored edges. In each round of the algorithm, it first selects the set with the highest number of uncolored edges, then builds the conflict graph of the active u-edges, extracts a large conflict-free subset, and performs all of their Vizing operations in parallel. A conflict-free subset is guaranteed to contain an $\Omega(1/\Delta^2)$ fraction of the active set, so every round colors an $\Omega(1/\Delta^3)$ fraction of the remaining uncolored edges and therefore $O(\Delta^3\log n)$ rounds suffice to color all uncolored edges.

We first show how to extract such a conflict-free set; the sparsity bound for the conflict graph from Lemma~\ref{lem:conflict-graph-edges} is sufficient to guarantee a large independent set.

\begin{lemma}\label{lem:large-is}
Given the conflict graph $G_{\mathrm{conf}}(A)$, a conflict-free subset $I\subseteq A$ of size $\Omega(|A|/\Delta^2)$ can be computed deterministically in $O(\Delta^2|A|\,\alpha(n)\log n)$ work and $O(\log^3 n)$ span, where $\alpha$ denotes the inverse Ackermann function.
\end{lemma}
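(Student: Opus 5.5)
The plan is to apply Lemma~\ref{lem:large-independent-set} directly to the conflict graph $G_{\mathrm{conf}}(A)$, whose vertex set is $A$. Write $N:=|A|$ for its number of vertices and $M$ for its number of edges. Lemma~\ref{lem:conflict-graph-edges} gives $M\le C\Delta^2 N$ for some absolute constant $C$. Since $I$ is an independent set of $G_{\mathrm{conf}}(A)$, it is by definition conflict-free, so only the size bound and the cost remain to be checked.

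For the size, Lemma~\ref{lem:large-independent-set} returns an independent set of size at least
\[
\frac{N^2}{2M+N}\;\ge\;\frac{N^2}{2C\Delta^2N+N}\;=\;\frac{N}{2C\Delta^2+1}\;=\;\Omega\!\left(\frac{|A|}{\Delta^2}\right),
\]
where we use $\Delta\ge1$ so that $2C\Delta^2+1=O(\Delta^2)$. If $A=\varnothing$ the statement is trivial, so we may assume $N\ge 1$ and the bound is meaningful.

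For the cost, the work of Lemma~\ref{lem:large-independent-set} is $O((N+M)\,\alpha(M,N)\log N)$. Since $N+M=O(\Delta^2|A|)$, this is $O(\Delta^2|A|\,\alpha\log N)$. Because $A$ is a matching of uncolored edges, $N\le m\le n^2$, so $\log N=O(\log n)$. Bounding $\alpha(M,N)$ by $O(\alpha(n))$ requires a short justification. The function $\alpha(M,N)$ is nonincreasing in $M/N$, and at worst $\alpha(M,N)\le\alpha(N,N)\le\alpha(n^2,n^2)=O(\alpha(n))$, because the inverse Ackermann function changes by at most an additive constant under polynomial changes of its argument. The span is $O(\log^3 N)=O(\log^3 n)$.

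The main obstacle is bookkeeping rather than mathematics. First, the lemma's input must be in the representation that Lemma~\ref{lem:large-independent-set} expects. The deduplicated sorted conflict records from Lemma~\ref{lem:build-conf-graph} can be converted into adjacency arrays by one additional sort, which costs $O(\Delta^2|A|\log n)$ work and $O(\log n)$ span and is therefore within budget. Second, the $\alpha$ and $\log$ arguments must be reconciled as described above.
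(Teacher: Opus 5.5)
Your proposal is correct and follows the paper's proof. You apply Lemma~\ref{lem:large-independent-set} to $G_{\mathrm{conf}}(A)$, use $M=O(\Delta^2|A|)$ from Lemma~\ref{lem:conflict-graph-edges} to get $N^2/(2M+N)=\Omega(|A|/\Delta^2)$, and simplify the work and span via $|A|\le m<n^2$. Your handling of the $\alpha(M,N)$ factor and of the adjacency-array conversion is more careful than the paper's one-line substitution. One small point: the step $\alpha(M,N)\le\alpha(N,N)$ presumes $M\ge N$, so in the degenerate case $M<N$ read the Goldberg--Spencer bound with $\max\{M,N\}$ in place of $M$, which changes nothing.
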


\begin{proof}
Let $N:=|A|$ and $M:=|E(G_{\mathrm{conf}}(A))|$. By Lemma~\ref{lem:conflict-graph-edges}, $M=O(\Delta^2N)$. Applying Lemma~\ref{lem:large-independent-set} to $G_{\mathrm{conf}}(A)$ yields an independent set of size at least
\[
\frac{N^2}{2M+N}
=\Omega\left(\frac{N}{\Delta^2}\right)
=\Omega\left(\frac{|A|}{\Delta^2}\right).
\]
Since two vertices are adjacent in $G_{\mathrm{conf}}(A)$ whenever the corresponding active u-edges conflict, this independent set is conflict-free. The span is $O(\log^3 N)=O(\log^3 n)$ and the work is
\[
\begin{aligned}
O\left((N+M) \alpha(M,N)\log N\right)
&=O\left( \Delta^2|A|\, \alpha(\Delta^2|A|,|A|) \log n\right)\\
&=O\left( \Delta^2|A|\, \alpha(n^3) \log n\right)\\
&=O\left( \Delta^2|A|\, \alpha(n) \log n\right).
\end{aligned}
\]
\end{proof}

\paragraph{Partitioning the palette into color-disjoint sets.}
Before starting the first round of the algorithm, we partition the unordered pairs of colors into color-disjoint sets. This can be viewed as an edge-coloring problem on the complete graph whose $\Delta+1$ vertices correspond to the colors in the palette and whose edges correspond to unordered pairs of distinct colors. A proper edge coloring of this complete graph partitions its edges into matchings, and hence partitions the unordered color pairs into color-disjoint sets.

\begin{lemma}\label{lem:complete-graph-edge-coloring}
The complete graph $K_{\Delta+1}$ admits a proper $\ell$-edge coloring with
\[
    \ell:=
    \begin{cases}
        \Delta, & \text{if $\Delta+1$ is even},\\
        \Delta+1, & \text{if $\Delta+1$ is odd}.
    \end{cases}
\]
If $\Delta+1$ is even, every color class is a perfect matching containing $(\Delta+1)/2$ edges. If $\Delta+1$ is odd, every color class is a matching containing $\Delta/2$ edges. Moreover, such an edge coloring can be obtained in $O(\Delta^2)$ work and $O(1)$ span.
\end{lemma}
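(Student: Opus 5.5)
We use the classical round-robin (circle) construction, which gives an explicit closed-form color for every pair. This makes the parallel bound immediate.

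\emph{Odd number of vertices.} Let $N:=\Delta+1$ be odd and identify the colors of the palette with $\mathbb{Z}_N=\{0,\dots,N-1\}$. Assign the pair $\{i,j\}$, $i\neq j$, the color $\chi(i,j):=(i+j)\bmod N$; this uses $N=\Delta+1$ classes. The coloring is proper because, for a fixed $i$, the map $j\mapsto i+j$ is injective, so two distinct edges at $i$ never share a color. Consider a color class $c$. Its edges are the pairs $\{i,c-i\}$ with $i\neq c-i$. Since $N$ is odd, $2$ is invertible mod $N$, so exactly one vertex, $i=c/2$, is unmatched in class $c$. The class therefore has $(N-1)/2=\Delta/2$ edges.

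\emph{Even number of vertices.} Let $N=\Delta+1$ be even. Set aside one special vertex $\infty$ and apply the odd construction to the remaining $N-1=\Delta$ vertices, identified with $\mathbb{Z}_{\Delta}$. This gives $\Delta$ color classes, and in class $c$ exactly one vertex, $c/2 \bmod \Delta$, is unmatched. We add the edge $\{\infty,c/2\}$ to class $c$. Each class then becomes a perfect matching of size $(\Delta+1)/2$. All edges at $\infty$ receive distinct colors, because $c\mapsto c/2$ is a bijection on $\mathbb{Z}_\Delta$. Every pair of vertices is covered exactly once: pairs within $\mathbb{Z}_\Delta$ are covered by the odd construction, and pairs $\{\infty,i\}$ are covered by class $2i$. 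Hence $\ell=\Delta$.

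\emph{Parallel cost.} Each of the $\binom{\Delta+1}{2}$ pairs computes its color independently in $O(1)$ arithmetic operations: a sum mod $N$, and in the even case a multiplication by the precomputed inverse of $2$ mod $\Delta$ to identify the edges at $\infty$. To form each class as an explicit list, we do not sort. Instead, for color $c$ we enumerate its members directly, indexing by $i$: the pairs $\{i,c-i\}$ with $i<c-i$, plus the edge to $\infty$ when $N$ is even. This places each pair at a computed index with $O(1)$ work and span. The total is $O(\Delta^2)$ work and $O(1)$ span.

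The only nontrivial step is the counting argument that each class has the stated size and that every pair is covered exactly once. This rests on $2$ being invertible modulo an odd number, and it is what forces the parity split.
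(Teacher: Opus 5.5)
Your proof is correct and uses the same underlying construction as the paper, the classical round-robin 1-factorization, but you run the parity reduction in the opposite direction.

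The paper cites Harary's explicit 1-factorization of $K_{2n}$ (Theorem~9.1 of~\cite{harary1969}) for the even case. It also imports from that citation the fact that each edge's factor is computable in $O(1)$. It then derives the odd case by adding a dummy vertex to get $K_{\Delta+2}$ and deleting it, observing that each of the $\Delta+1$ perfect matchings loses exactly one edge.

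You instead prove the odd case from scratch with the closed form $(i+j)\bmod N$. Invertibility of $2$ modulo odd $N$ shows that each class misses exactly one vertex, and you obtain the even case by attaching $\infty$ to that missed vertex. The paper's dummy vertex plays the role of your $\infty$, and Harary's factors are exactly your classes after relabeling colors. So nothing changes asymptotically. What your version buys is self-containment, and an explicit formula that makes the $O(1)$-per-edge claim evident rather than imported.

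Writing out the class lists without sorting goes beyond what this lemma needs. Indexing by $i$ with the filter $i<(c-i)\bmod N$ leaves gaps in a length-$N$ array. Indexing by $t\in\{1,\dots,(N-1)/2\}$ with the pair $\{c/2+t,\,c/2-t\}$ (arithmetic mod $N$) gives a compact list. Either way, Lemma~\ref{lem:color-pair-partition} could then store the sets in $O(\Delta^2)$ work and $O(1)$ span instead of its sorting-based $O(\Delta^2\log\Delta)$ work and $O(\log\Delta)$ span. This does not affect any final bound.
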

\begin{proof}
Every complete graph with an even number of vertices admits an explicit 1-factorization in which the factor containing any given edge can be determined directly from its endpoints~\cite{harary1969}~(Theorem~9.1). Suppose first that $\Delta+1$ is even. Applying the construction in Theorem~9.1 of~\cite{harary1969} to $K_{\Delta+1}$ partitions its edges into $\Delta$ perfect matchings, each containing $(\Delta+1)/2$ edges. Taking these matchings as the color classes gives the desired proper edge coloring with $\ell=\Delta$. Suppose now that $\Delta+1$ is odd. Add one dummy vertex together with all of its incident edges, obtaining $K_{\Delta+2}$. Since $\Delta+2$ is even, applying the same 1-factorization gives $\Delta+1$ perfect matchings. Delete the dummy vertex and all edges incident to it. Each perfect matching loses exactly one edge, and hence each remaining matching contains $\Delta /2$ edges. These $\Delta+1$ matchings partition $E(K_{\Delta+1})$, giving the desired proper edge coloring with $\ell=\Delta+1$.

It remains to bound the work and span. By the explicit construction in Theorem~9.1 of~\cite{harary1969}, the factor containing each edge can be determined in $O(1)$ work and span. In the odd case, adding the dummy vertex and its incident edges, and subsequently discarding them, also requires $O(1)$ span and $O(\Delta)$ work. Thus all $O(\Delta^2)$ edges can be assigned to their color classes independently in parallel, costing $O(\Delta^2)$ total work and $O(1)$ span.
\end{proof}

\begin{lemma}\label{lem:color-pair-partition}
The unordered pairs of distinct colors in $[\Delta+1]$ can be represented as sets $\mathcal{P}_1,\ldots,\mathcal{P}_\ell$ such that every $\mathcal{P}_j$ consists of pairwise color-disjoint pairs, where $\ell=\Delta$ if $\Delta+1$ is even and $\ell=\Delta+1$ otherwise. The sets and a lookup table that, for every unordered pair $\{\alpha,\beta\}$, returns the unique set containing it can be constructed in $O(\Delta^2\log\Delta)$ work and $O(\log\Delta)$ span. Afterwards, the set containing any given unordered pair can be determined in $O(1)$ work and span.
\end{lemma}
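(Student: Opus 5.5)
The plan is to obtain the sets directly from Lemma~\ref{lem:complete-graph-edge-coloring}. Identify the colors of $[\Delta+1]$ with the vertices of $K_{\Delta+1}$ and the unordered pairs $\{\alpha,\beta\}$ of distinct colors with its edges. Apply that lemma to get a proper $\ell$-edge coloring $\psi$ of $K_{\Delta+1}$, with $\ell=\Delta$ or $\Delta+1$ according to the parity of $\Delta+1$. Then set $\mathcal{P}_j:=\psi^{-1}(j)$ for $j\in[\ell]$. Each color class of a proper edge coloring is a matching, so the pairs in each $\mathcal{P}_j$ are pairwise color-disjoint. Since $\psi$ is defined on every edge, the sets $\mathcal{P}_1,\ldots,\mathcal{P}_\ell$ together cover all unordered pairs, and each pair lies in exactly one of them. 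Computing $\psi$ costs $O(\Delta^2)$ work and $O(1)$ span.

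For the lookup table, allocate a $(\Delta+1)\times(\Delta+1)$ array $T$. In parallel over all $\binom{\Delta+1}{2}$ pairs, write $T[\min\{\alpha,\beta\}][\max\{\alpha,\beta\}]:=\psi(\{\alpha,\beta\})$; one could also fill both symmetric entries. Each entry is written by exactly one processor, so there are no write conflicts. This takes $O(\Delta^2)$ work and $O(1)$ span. A query then normalizes the pair and reads one entry, which costs $O(1)$ work and span.

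To represent each $\mathcal{P}_j$ explicitly as a contiguous list, generate the records $(\psi(\{\alpha,\beta\}),\alpha,\beta)$ and sort them by their first field. Each set then occupies a consecutive block, and its boundaries are found by comparing adjacent entries. This is the costly step. Under the paper's convention of charging sorting $O(N\log N)$ work and $O(\log N)$ span, with $N=O(\Delta^2)$, it gives $O(\Delta^2\log\Delta)$ work and $O(\log\Delta)$ span. Adding the earlier steps yields the stated bounds.

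There is no real obstacle. The only points to check are the following. The array indexing and the sort must fit the claimed bounds; they do, since $\log(\Delta^2)=O(\log\Delta)$. The odd case must drop the dummy-vertex edges cleanly before building the table, and the proof of Lemma~\ref{lem:complete-graph-edge-coloring} already handles this. Finally, the per-edge $O(1)$ class computation from the explicit 1-factorization can be done independently for each edge, as Lemma~\ref{lem:complete-graph-edge-coloring} states.
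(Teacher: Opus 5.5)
Your proposal is correct and follows the paper's proof essentially step for step: you take the color classes of the edge coloring of $K_{\Delta+1}$ from Lemma~\ref{lem:complete-graph-edge-coloring} as the sets $\mathcal{P}_j$, store each pair's class index in a table for $O(1)$ lookup, and sort the $\Theta(\Delta^2)$ records by class index to materialize the sets within $O(\Delta^2\log\Delta)$ work and $O(\log\Delta)$ span. Your $(\Delta+1)\times(\Delta+1)$ array with normalized indices is simply a concrete realization of the paper's table indexed by $\{\alpha,\beta\}$.
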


\begin{proof}
Apply Lemma~\ref{lem:complete-graph-edge-coloring} and let the sets $\mathcal{P}_1,\ldots,\mathcal{P}_\ell$ be the edge-color classes of the resulting proper edge coloring of $K_{\Delta+1}$. Since every edge-color class is a matching, every $\mathcal{P}_j$ consists of pairwise color-disjoint pairs, and every unordered pair of distinct palette colors belongs to exactly one set. For every unordered pair $\{\alpha,\beta\}$, create a record containing the pair and the color $j$ assigned to the corresponding edge of $K_{\Delta+1}$, and store $j$ in a table entry indexed by $\{\alpha,\beta\}$. The table therefore answers the set query in $O(1)$ work and span. To explicitly store the sets, sort the $\Theta(\Delta^2)$ records by $j$ and group equal keys, which takes $O(\Delta^2\log\Delta)$ work and $O(\log\Delta)$ span. Since the construction of the complete graph and its edge coloring is dominated by these bounds, in total it costs $O(\Delta^2\log\Delta)$ work and $O(\log\Delta)$ span.
\end{proof}

Partitioning the pairs is a one-time preprocessing step (before the start of the first round) since the partition depends only on the fixed palette $[\Delta+1]$ and remains unchanged throughout the algorithm. In each round, once a nontrivial u-edge $e$ obtains its required pair $\pi_e=(\alpha,\beta)$, we can determine in $O(1)$ work the unique set $\mathcal{P}_j$ containing $\{\alpha,\beta\}$.

\paragraph{The algorithm.}
Recall that the goal is to color every currently uncolored edge, and that the uncolored edges form a matching of size $O(m/\Delta)$. Algorithm~\ref{alg:det-color-matching} presents the full procedure. In line~3, before the first round, the color-pair partition $\mathcal{P}_1,\ldots,\mathcal{P}_\ell$ is constructed using Lemma~\ref{lem:color-pair-partition}; this is done only once since the partition depends only on the palette. The algorithm then repeats the following until all uncolored edges are colored (lines~4--26). First, for every currently uncolored edge $e$, its fan $F_e$ is constructed in parallel (line~5), the fan is classified as trivial or nontrivial (line~6), and, if it is nontrivial, its required pair $\pi_e=(\alpha,\beta)$ is determined (line~7). Using this required pair, the unique set $\mathcal{P}_j$ containing the unordered pair $\{\alpha,\beta\}$ is found. After all fans have been constructed, the number of nontrivial u-edges belonging to each set is counted, and the set containing the largest number of them is found (lines~9--13). More concretely, let $T$ be the set of u-edges whose fans are trivial, and for each $j\in[\ell]$ let $A_j$ be the set of nontrivial u-edges $e$ whose required pair $\pi_e=(\alpha,\beta)$ satisfies $\{\alpha,\beta\}\in\mathcal{P}_j$, so let $j^\star\in[\ell]$ maximize $|A_j|$. Then, in lines~14--17, the larger set between the set of trivial u-edges and the largest set of nontrivial u-edges is activated: if $|T|\ge |A_{j^\star}|$, activate the trivial u-edges by setting $A:=T$; otherwise, activate the nontrivial u-edges of the largest set by setting $A:=A_{j^\star}$. In the latter case, all distinct unordered required pairs used by $A$ belong to the same color-disjoint set $\mathcal{P}_{j^\star}$. For these nontrivial active edges, the pair-indexed graph $G'$ is constructed, and the corresponding bichromatic components are labeled using Lemma~\ref{lem:parallel-pairs} (line~18); trivial active edges require no bichromatic components. Then, in line~20, the conflict graph $G_{\mathrm{conf}}(A)$ is constructed using Lemma~\ref{lem:build-conf-graph}, and in line~21, Lemma~\ref{lem:large-is} is applied to obtain a conflict-free subset $I\subseteq A$. Finally, the Vizing operations corresponding to all edges of $I$ are performed in parallel (lines~22--24). Since $I$ is conflict-free, these operations do not interfere; the resulting partial coloring remains proper, and every edge of $I$ becomes colored. The next round is applied to the remaining uncolored edges, with the designated missing colors, fans, required pairs, set sizes, bichromatic components, and conflicts recomputed again from the current coloring. 

\begin{lemma}\label{lem:parallel-vizing-operations}
Let $I$ be a conflict-free set of u-edges such that the underlying color pairs of its nontrivial u-edges belong to a color-disjoint set $\mathcal{P}$, and suppose that all required bichromatic components, including their endpoints, have already been identified. Then all Vizing operations corresponding to $I$ can be performed in parallel in $O(m)$ work and $O(1)$ span, coloring every edge of $I$ while preserving properness.
\end{lemma}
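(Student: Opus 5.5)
The plan is to perform the operations in two parallel phases. In the first phase all required path flips for the nontrivial u-edges of $I$ are applied at once. In the second phase each fan, or the appropriate trivial prefix of it, is rotated, and its u-edge is colored. Correctness will come from simulating a sequential Vizing execution on the edges of $I$ in an arbitrary order and showing that the parallel outcome coincides with it.

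\emph{Phase 1.} Let $\mathcal{K}$ be the set of distinct components $P_e$ over the nontrivial $e\in I$. If $P_e=P_f$ for $e\neq f$, then $e,f$ would have a path-path (endpoint) conflict, so the components are in fact distinct. Mark them via the ownership labels and invoke Lemma~\ref{lem:parallel-pair-flips}. This costs $O(m)$ work and $O(1)$ span, yields a proper partial coloring with the same uncolored edges, and the result is independent of the flip order.

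\emph{Phase 1 only changes each fan in the intended way.} Fix $e\in I$ with $\pi_e=(\alpha,\beta)$ and center $u$, and take any $f\in I$ with $f\neq e$. Because $e,f$ have no fan-edge conflict, $P_f$ contains no colored fan edge of $F_e$. Because they have no endpoint conflict, $P_f$ does not change the missing status of $\varphi(x)$ at any $x\in V(F_e)$. Interior vertices of $P_f$ keep their missing sets, and endpoints only swap the two colors of $p_f$; so $\varphi(x)$ is unaffected either way. Hence, after Phase~1, $F_e$ is exactly what it would be had only $P_e$ been flipped. Vizing's argument (the discussion after Theorem~\ref{thm:vizing}) then shows that $\beta$ is now missing at $u$ and at some leaf $v_j$ with $c_j=\beta$, where $j$ is one of the two indices carrying $\beta$ and is chosen according to which endpoint of $P_e$ lies in the fan. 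Consequently the prefix $(u,\beta),(v_1,c_1),\dots,(v_j,c_j)$ is a trivial fan.

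To choose $j$ in $O(1)$ per fan, I compare the stored far endpoint of $P_e$ with the leaves $v_i,v_k$, where $c_i=c_k=\beta$ and $i<k$. The indices $i$ and $k$ are recorded when the fan is built (Lemma~\ref{lem:parallel-fans}). For a trivial $e\in I$ I take the whole fan, which Phase~1 leaves intact by the same no-conflict argument.

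\emph{Phase 2.} The fans of $I$ have no fan-fan conflicts and are therefore pairwise vertex-disjoint, so their prefixes are too. Lemma~\ref{lem:parallel-fan-rotations} rotates all of them and colors every edge of $I$ in $O(m)$ work and $O(1)$ span. Each rotation changes only fan edges and the missing colors at vertices of its own fan. Vertex-disjointness ensures that no edge or vertex is touched by two rotations, so properness is preserved, because each rotation is individually valid on the post-Phase-1 coloring.

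The main obstacle is the bookkeeping that shows a single flip $P_f$ cannot invalidate $F_e$ through a path the conflict definitions might miss. One candidate is $P_f$ passing through the center $u$ with colors in $p_f$ that happen to equal some $c_i$. That case is exactly covered by the fan-edge condition: any edge at $u$ recolored by $P_f$ that matters is a colored fan edge. The uncolored edge $(u,v_1)$ is never in any path. I will check these cases explicitly.
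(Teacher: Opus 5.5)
\textbf{Gap in choosing the trivial fan to rotate.} Your two-phase plan matches the paper's: flip all required paths at once by Lemma~\ref{lem:parallel-pair-flips}, then rotate vertex-disjoint trivial fans by Lemma~\ref{lem:parallel-fan-rotations}. However, the step ``consequently the prefix $(u,\beta),(v_1,c_1),\dots,(v_j,c_j)$ is a trivial fan'' is false in one case.

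Write $c_i=c_k=\beta$ with $i<k$. Since $\chi(u,v_{i+1})=c_i=\beta$, the first edge of $P_e$ is the fan edge $(u,v_{i+1})$, and the flip always recolors it to $\alpha$. Suppose the far endpoint of $P_e$ is $v_i$. You then take $j=k$ and keep all original entries. But now $\chi(u,v_{i+1})=\alpha\neq c_i$, and $v_i$ misses $\alpha$ rather than $\beta$. So conditions (1) and (3) of Definition~\ref{def:vizing-fan} fail at index $i$. Lemma~\ref{lem:parallel-fan-rotations} sets $\chi(u,v_l)\leftarrow c_l$, so it would color both $(u,v_i)$ and $(u,v_k)$ with $\beta$, and $\beta$ is already present at $v_i$. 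Knowing that $\beta$ is missing at $u$ and at the last leaf is not enough; every stored entry must agree with the post-flip coloring.

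The paper handles this case by overwriting the stored $c_i$ with $\alpha$ and rotating the full fan $(u,\beta),\dots,(v_i,\alpha),\dots,(v_k,c_k)$. This is valid because $v_i$ now misses $\alpha$, and $v_k$ still misses $\beta$ (a vertex missing $\beta$ can lie on $P_e$ only as an endpoint). All other entries and fan-edge colors avoid $\{\alpha,\beta\}$ and are untouched. Conversely, whenever the far endpoint is not $v_i$ you must take $j=i$. Any fan that keeps $(u,v_{i+1})$ would need the entry $(v_i,\alpha)$, which is valid only when $P_e$ ends at $v_i$.

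\textbf{Gap for trivial fans.} Your claim that a trivial $F_e$ survives Phase~1 ``by the same no-conflict argument'' does not follow from the conflict definitions. Those definitions protect only colored fan edges and the designated colors $\varphi(x)$. A trivial fan only needs $c_k\in\miss_\chi(u)$, and $c_k$ may differ from $\varphi(u)$. Take a nontrivial $f$ with the following properties:
\begin{itemize}
\item $F_f$ is vertex-disjoint from $F_e$;
\item $p_f=\{\gamma,c_k\}$, where $\gamma$ is present at $u$ but $\gamma\notin\{c_1,\dots,c_{k-1}\}$;
\item $P_f$ ends at $u$.
\end{itemize}
There is no endpoint conflict, since $\varphi(u)\notin p_f$, and no fan-edge conflict. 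Yet flipping $P_f$ makes $c_k$ present at $u$, and rotating $F_e$ then creates two $c_k$-edges at $u$.

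The paper's proof passes over the same point (``does not modify the fan information needed for $F_e$''). This is harmless in Algorithm~\ref{alg:det-color-matching}, where the active set is entirely trivial or entirely nontrivial. It matters when both kinds are processed together, as in Algorithm~\ref{alg:rand-color-match}. Closing it requires an extra demand record $(u,c_k,e)$ for every trivial fan. By the usual color-disjointness argument this adds only $O(1)$ conflicts per fan.
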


\begin{proof}
For every nontrivial u-edge $e\in I$, let $ F_e=(u,\alpha),(v_1,c_1), \ldots,(v_k,c_k)$, $\beta:=c_k$, and $P_e$ be its required $\{\alpha,\beta\}$-alternating path. If $F_e$ is nontrivial, there is an index $j<k$ with $c_j=\beta$, which is stored with the indexed representation of $F_e$.  Since $I$ is conflict-free, two u-edges with the same underlying color pair cannot use the same bichromatic path. Thus their required paths are distinct, while paths corresponding to different underlying pairs belong to color-disjoint pairs. Hence, by Lemma~\ref{lem:parallel-pair-flips}, all required paths can be flipped simultaneously, including the corresponding data-structure updates, in $O(m)$ work and $O(1)$ span. The resulting coloring remains proper, and the set of uncolored edges is unchanged.

We next determine the trivial fan to be rotated for each $e\in I$. If $F_e$ was already trivial, no path was flipped and we keep $F'_e :=F_e$ as is. Now consider a nontrivial fan $F_e$. Before the flip, $\alpha$ is missing at $u$ and $\beta$ is not, so $P_e$ starts at $u$ with the unique incident $\beta$-colored edge. After flipping $P_e$, the color $\beta$ becomes missing at $u$.

Let $z_e$ be the other endpoint of $P_e$. If $z_e\neq v_j$ (and even if $z_e = v_k$), then $\beta$ remains missing at $v_j$, and none of the fan edges $(u,v_1),\ldots,(u,v_j)$ is changed by the flip. Hence $ F'_e=(u,\beta),(v_1,c_1), \ldots,(v_j,c_j) $ is a trivial fan. If instead $z_e=v_j$, then flipping $P_e$ changes the missing color at $v_j$ from $\beta$ to $\alpha$, while $\beta$ remains missing at $v_k$. Therefore, after replacing the stored color $c_j$ by $\alpha$, the full sequence $ F'_e=(u,\beta),(v_1,c_1),\ldots,(v_j,\alpha),\ldots,(v_k,c_k) $ is a trivial fan. Thus, in either case, the path flip produces an explicitly represented trivial fan $F'_e$ whose rotation colors the original u-edge.

Since the endpoint $z_e$, the index $j$, and all fan entries are already stored, deciding which of the two cases applies and setting the resulting fan length and, when necessary, updating the entry at index $j$, takes for all $e\in I$, $O(|I|)=O(m)$ work and $O(1)$ span.

It remains to rotate these trivial fans. Since $I$ has no path-fan conflicts, flipping $P_f$ for any distinct $e,f\in I$ does not modify the fan information needed for $F_e$. Furthermore, since $I$ has no fan-fan conflicts, the original fans $\{F_e:e\in I\}$ are pairwise vertex-disjoint. Each $F'_e$ is either $F_e$ itself or a prefix of $F_e$, so the resulting trivial fans $\{F'_e:e\in I\}$ are also pairwise vertex-disjoint. Therefore, by Lemma~\ref{lem:parallel-fan-rotations}, all of them can be rotated simultaneously, including the corresponding data-structure updates, in $O(m)$ work and $O(1)$ span, while preserving properness and coloring every u-edge in $I$. Thus the claim follows.
\end{proof}

We are now ready to state the final theorem for the deterministic algorithm. Here first we give the intuition behind the number of rounds. Once the number of rounds is bounded, the work and span bounds follow from the per-round costs of the subroutines, together with an amortization over the sizes of the active sets throughout the execution. To bound the number of rounds, consider the worst case in which most of the remaining fans are nontrivial. Their required pairs are partitioned among $O(\Delta)$ color-disjoint sets, so the largest set contains the required pairs of an $\Omega(1/\Delta)$ fraction of the currently uncolored edges. We activate the edges associated with this set and construct their conflict graph. By the sparsity of this graph, we can extract a conflict-free subset containing an $\Omega(1/\Delta^2)$ fraction of the active edges. Thus, each round colors an $\Omega(1/\Delta^3)$ fraction of the remaining uncolored edges, implying that $\tilde{O}(\Delta^3)$ rounds suffice.

\begin{algorithm}[H]
\caption{Deterministic color-matching algorithm}\label{alg:det-color-matching}
\begin{algorithmic}[1]
\STATE \textbf{Input:} a partial $(\Delta+1)$-edge coloring $\chi$ with uncolored edges $U_\chi$ that form a matching of size $O(m/\Delta)$
\STATE \textbf{Output:} a partial $(\Delta+1)$-edge coloring $\chi$ with the edges of $U_\chi$ colored
\STATE construct the color-pair partition $\mathcal{P}_1,\ldots,\mathcal{P}_\ell$ (Lemma~\ref{lem:color-pair-partition})
\WHILE{$U_\chi\neq\varnothing$}
    \STATE construct the fan $F_e$ for every $e\in U_\chi$ in parallel (Lemma~\ref{lem:parallel-fans})
    \STATE classify every $F_e$ as trivial or nontrivial in parallel 
    \STATE compute the required pair $\pi_e$ of every nontrivial fan in parallel
    \STATE let $T:=\{e\in U_\chi:F_e\text{ is trivial}\}$
    \FORALL{nontrivial $e\in U_\chi$ \textbf{in parallel}}
        \STATE find the unique $j(e)\in[\ell]$ such that the unordered pair underlying $\pi_e$ belongs to $\mathcal{P}_{j(e)}$
    \ENDFOR
    \STATE let $A_j:=\{e\in U_\chi:F_e\text{ is nontrivial and }j(e)=j\}$ for every $j\in[\ell]$
    \STATE find $j^\star\in\arg\max_{j\in[\ell]}|A_j|$
    \IF{$|T|\ge |A_{j^\star}|$}
        \STATE $A:=T$
    \ELSE
        \STATE $A:=A_{j^\star}$
        \STATE construct the pair-indexed graph $G'$ for the distinct required pairs used by $A$ and label their bichromatic components (Lemma~\ref{lem:parallel-pairs})
    \ENDIF
    \STATE construct the conflict graph $G_{\mathrm{conf}}(A)$ (Lemma~\ref{lem:build-conf-graph})
    \STATE compute a large independent set $I$ of $G_{\mathrm{conf}}(A)$ (Lemma~\ref{lem:large-is})
    \FORALL{$e\in I$ \textbf{in parallel}}
        \STATE perform the Vizing operation of $F_e$ and color $e$
    \ENDFOR
    \STATE update $U_\chi$ to the remaining uncolored edges
\ENDWHILE
\STATE \textbf{return} $\chi$
\end{algorithmic}
\end{algorithm}

\begin{restatedDeterTheoremP}
Given a partial $(\Delta+1)$-edge coloring whose uncolored edges form a matching of size $O(m/\Delta)$, Algorithm~\ref{alg:det-color-matching} colors all of them deterministically in $O(m\Delta^3\log^2 n)$ work and $O(\Delta^3\log^4 n)$ span.
\end{restatedDeterTheoremP}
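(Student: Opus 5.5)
The proof has three parts: correctness, a bound on the number of rounds, and a bound on the total work and span obtained by charging each round's cost to the size of its active set.

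\emph{Correctness.} In every round either $A=T$ (all trivial) or $A=A_{j^\star}$, whose underlying pairs lie in the single color-disjoint set $\mathcal{P}_{j^\star}$ of Lemma~\ref{lem:color-pair-partition}. In both cases Lemma~\ref{lem:build-conf-graph} applies, and Lemma~\ref{lem:large-is} returns a conflict-free $I$. Lemma~\ref{lem:parallel-vizing-operations} then colors every edge of $I$, keeps the coloring proper, and introduces no new uncolored edges. Hence $U_\chi$ remains a matching and shrinks monotonically.

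\emph{Round count.} Let $\lambda$ be the number of uncolored edges at the start of a round. Since $|T|+\sum_{j\le\ell}|A_j|=\lambda$ and $\ell\le\Delta+1$, we have $|A|=\max(|T|,|A_{j^\star}|)\ge\lambda/(\ell+1)=\Omega(\lambda/\Delta)$. Lemma~\ref{lem:large-is} gives $|I|=\Omega(|A|/\Delta^2)=\Omega(\lambda/\Delta^3)$, so each round multiplies $\lambda$ by at most $1-c/\Delta^3$. Starting from $\lambda_0=O(m/\Delta)\le n^2$, after $O(\Delta^3\log n)$ rounds $\lambda<1$, i.e.\ $\lambda=0$. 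One detail needs care: $|I|\ge1$ whenever $\lambda\ge1$. This holds because $N^2/(2M+N)>0$ for $N\ge1$, so the process terminates exactly.

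\emph{Span.} Each round performs a fan construction ($O(\log\Delta)$ span by Lemma~\ref{lem:parallel-fans}), a counting and argmax over $\ell$ classes by sorting ($O(\log n)$), the calls to Lemmas~\ref{lem:parallel-pairs} and~\ref{lem:build-conf-graph} ($O(\log n)$), the independent set computation ($O(\log^3 n)$), and the Vizing operations ($O(1)$). So the per-round span is $O(\log^3 n)$. Over $O(\Delta^3\log n)$ rounds this gives $O(\Delta^3\log^4 n)$, and the one-time $O(\log\Delta)$ preprocessing is negligible.

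\emph{Work.} This is the main obstacle, since a naive bound multiplies the $O(m\log n)$ per-round terms by $\Delta^3\log n$ rounds and gives $O(m\Delta^3\log^2 n)$. That happens to match the target, but the $\Delta^2|A|$ terms also have to fit. I would bound the round-dependent terms by their worst case: $|A|\le\lambda\le\lambda_0=O(m/\Delta)$, so
\[
\Delta^2|A|\log n = O(m\Delta\log n)
\]
per round, and the independent-set work is $O(\Delta^2|A|\alpha(n)\log n)=O(m\Delta\,\alpha(n)\log n)$.

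The dominant terms are then the $O(m\log n)$ costs per round: fan construction, building $G'$ and labeling components, and the sort-based counting of the $|A_j|$ over $\lambda$ records. Multiplied by $O(\Delta^3\log n)$ rounds these give $O(m\Delta^3\log^2 n)$. The remaining terms give
\[
O(\Delta^3\log n)\cdot O(m\Delta\,\alpha(n)\log n),
\]
which is too large. So I would instead amortize these terms with a geometric sum. Since $|A|\le\lambda_t$ and $\lambda_t\le\lambda_0(1-c/\Delta^3)^t$, we get $\sum_t\lambda_t=O(\lambda_0\Delta^3)=O(m\Delta^2)$. This makes
\[
\sum_t\Delta^2|A_t|\,\alpha(n)\log n=O\bigl(m\Delta^4\alpha(n)\log n\bigr),
\]
which is still too much. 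The amortization must therefore use $|I_t|=\Omega(|A_t|/\Delta^2)$ directly: $\sum_t|A_t|=O(\Delta^2\sum_t|I_t|)=O(\Delta^2\lambda_0)=O(m\Delta)$, since each edge is colored once. Hence $\sum_t\Delta^2|A_t|\log n=O(m\Delta^3\log n)$, and similarly $O(m\Delta^3\alpha(n)\log n)$ for the independent sets, with $\alpha(n)\le\log n$.

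Adding the $O(m\log n)$-per-round terms, the total work is $O(m\Delta^3\log^2 n)$. Finally, Problem~\ref{prob:color-reduction} reduces to this color-matching instance by uncoloring the least-used color. That step costs $O(m\log n)$ work and $O(\log n)$ span, which is within the bounds of Theorem~\ref{thm:deterministic}.
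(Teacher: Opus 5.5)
Your proposal is correct and follows the paper's proof. It uses the same $\Omega(\lambda_t/\Delta)\cdot\Omega(1/\Delta^2)$ progress per round to get $O(\Delta^3\log n)$ rounds, the same $O(\log^3 n)$ per-round span, and the same amortization $\sum_t|A_t|=O(\Delta^2\sum_t|I_t|)=O(m\Delta)$ for the active-set-dependent work. The only difference is cosmetic: you average over the $\ell+1$ classes $T,A_1,\dots,A_\ell$ rather than the paper's case split on $|T_t|\ge\lambda_t/2$, and both give $|A_t|=\Omega(\lambda_t/\Delta)$.
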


\begin{proof}
Let $\lambda_t$ denote the number of uncolored edges at the beginning of round $t$, let $A_t$ denote the active set selected in that round, and let $I_t\subseteq A_t$ denote the conflict-free set to be colored. We first bound the progress made in one round and hence the total number of rounds.

Let $T_t$ be the set of u-edges whose fans are trivial in round $t$, and for each $j\in[\ell]$ let $A_{j,t}$ be the set of nontrivial u-edges whose unordered required pair belongs to $\mathcal{P}_j$ (in round $t$). By Lemma~\ref{lem:color-pair-partition}, every unordered pair of distinct colors belongs to exactly one set, so the sets $A_{1,t},\ldots,A_{\ell,t}$ partition the nontrivial u-edges. Therefore
\[
\sum_{j=1}^{\ell}|A_{j,t}|=\lambda_t-|T_t|.
\]
Let $j^\star$ maximize $|A_{j,t}|$. Then
\[
|A_{j^\star,t}|\ge\frac{\lambda_t-|T_t|}{\ell}.
\]
Recall from Lemma~\ref{lem:color-pair-partition} that $\ell=O(\Delta)$. Since the algorithm chooses the larger of $T_t$ and $A_{j^\star,t}$, if $|T_t|\ge\lambda_t/2$ then $|A_t| = |T_t|\ge\lambda_t/2$. Otherwise, (in the worst case) $\lambda_t-|T_t|>\lambda_t/2$, and hence
\[
|A_t|
\ge |A_{j^\star,t}|
\ge \frac{\lambda_t-|T_t|}{\ell}
> \frac{\lambda_t}{2\ell}
=\Omega\!\left(\frac{\lambda_t}{\Delta}\right).
\]
Thus either way $|A_t|=\Omega(\lambda_t/\Delta)$. By Lemma~\ref{lem:large-is}, we compute a conflict-free set of size
\[
|I_t|
=\Omega\!\left(\frac{|A_t|}{\Delta^2}\right)
=\Omega\!\left(\frac{\lambda_t}{\Delta^3}\right).
\]
Hence there is an absolute constant $c>0$ such that $|I_t|\ge c\lambda_t/\Delta^3$. By Lemma~\ref{lem:parallel-vizing-operations}, all edges of $I_t$ are colored simultaneously, and no new uncolored edge is created. Consequently,
\[
\begin{aligned}
\lambda_{t+1}
&=\lambda_t-|I_t|\\
&\le\left(1-\frac{c}{\Delta^3}\right)\lambda_t.
\end{aligned}
\]
Iterating gives
\[
\begin{aligned}
\lambda_t
&\le \lambda_0\left(1-\frac{c}{\Delta^3}\right)^t\\
&\le \lambda_0\exp\!\left(-\frac{ct}{\Delta^3}\right)\\
&\le m\exp\!\left(-\frac{ct}{\Delta^3}\right),
\end{aligned}
\]
where we used $1-x\le e^{-x}$ and $\lambda_0 = O(m/\Delta)$ which certainly is at most $m$. Since $G$ is simple, $m < n^2$, so for $t=c'\Delta^3\log n$ with a sufficiently large absolute constant $c'$, the right-hand side is less than one. Since $\lambda_t$ is a nonnegative integer, this implies $\lambda_t=0$. Therefore, the algorithm terminates after $O(\Delta^3\log n)$ rounds.

We next analyze the cost of one round. The color-pair partition in line~3 is constructed only once. By Lemma~\ref{lem:color-pair-partition}, this takes $O(\Delta^2\log\Delta)$ work and $O(\log\Delta)$ span. Consider a round $t$. By Lemma~\ref{lem:parallel-fans}, constructing all fans, classifying them as trivial or nontrivial, and determining the required pair of every nontrivial fan takes $O(m\log\Delta)$ work and $O(\log\Delta)$ span.

For every nontrivial u-edge $e$, Lemma~\ref{lem:color-pair-partition} determines in $O(1)$ work and span the unique set index $j(e)$ containing its unordered required pair. We sort the nontrivial u-edges by $j(e)$ and group equal set indices, thereby forming all sets $A_{j,t}$ and determining their sizes. An index $j^\star$ maximizing $|A_{j,t}|$ is then found by a parallel reduction. Lines~8--13 require $O(\lambda_t\log\lambda_t)$ work and $O(\log n)$ span, where $\lambda_t \le \lambda_0 = O(m/\Delta)$. Choosing between $T_t$ and $A_{j^\star,t}$ then requires at most linear additional work and $O(1)$ additional span.

If $A_t=A_{j^\star,t}$ consists of nontrivial u-edges, then all distinct unordered required pairs used by $A_t$ belong to the color-disjoint set $\mathcal{P}_{j^\star}$. Therefore, by Lemma~\ref{lem:parallel-pairs}, the pair-indexed graph $G'$ can be constructed and all required bichromatic components can be identified and labeled in $O(m\log n)$ work and $O(\log n)$ span. If $A_t=T_t$, all active fans are trivial and this step is unnecessary. Lemma~\ref{lem:build-conf-graph} then constructs $G_{\mathrm{conf}}(A_t)$ in $O(m \log n+ \Delta^2|A_t|\log n)$ work and $O(\log n)$ span, and Lemma~\ref{lem:large-is} computes $I_t$ in $O(\Delta^2|A_t|\,\alpha(n)\log n)$ work and $O(\log^3 n)$ span.

Finally, since $I_t$ is conflict-free, and in the nontrivial case its underlying color pairs belong to the same color-disjoint set and its required bichromatic components have already been identified, Lemma~\ref{lem:parallel-vizing-operations} performs all Vizing operations corresponding to $I_t$ in $O(m)$ work and $O(1)$ span. Updating $U_\chi$ to the remaining uncolored edges requires $O(\lambda_t)$ work and $O(\log n)$ span. Thus every round has $O(\log^3 n)$ span. Since there are $O(\Delta^3\log n)$ rounds, the total span is $O(\Delta^3\log^4 n)$.

We now bound the total work. The $O(\Delta^2|A_t|\log n)$ part from the work for constructing the conflict graph is dominated by the $O(\Delta^2|A_t|\,\alpha(n)\log n)$ work for computing the independent set. All remaining work in a round (including the $O(m\log n)$ term of the conflict graph) in total costs $O(m\log n)$ work. Hence the work of round $t$ is $O(m\log n+\Delta^2|A_t|\,\alpha(n)\log n)$. The first term costs in total $O(m\Delta^3\log^2 n)$ over all rounds.

For the active-set-dependent term, Lemma~\ref{lem:large-is} gives an absolute constant $c''>0$ such that $|I_t|\ge (c''|A_t|)/\Delta^2$, and therefore $|A_t|=O(\Delta^2|I_t|)$. Every edge of $I_t$ is colored in round $t$ and is never uncolored again, so the sets $I_t$ are pairwise disjoint and $\sum_t|I_t|=\lambda_0 = O(m/\Delta)$. Consequently,
\[
\begin{aligned}
\sum_t|A_t|
&=O(\Delta^2)\sum_t|I_t|\\
&=O(m\Delta).
\end{aligned}
\]
It follows that the total active set dependent work is
\[
\begin{aligned}
\sum_t O\!\left(\Delta^2|A_t|\,\alpha(n)\log n\right)
&=O\!\left(\Delta^2\alpha(n)\log n\sum_t|A_t|\right)\\
&=O\!\left(m\Delta^3\alpha(n)\log n\right).
\end{aligned}
\]
Adding the two contributions, and observing that the one-time $O(\Delta^2\log\Delta)$ preprocessing work is dominated, the total work is $O(m\Delta^3\log^2 n+m\Delta^3\alpha(n)\log n)$, since $\alpha(n)=O(\log n)$, the claim follows.

Finally, we verify correctness. If $A_t$ consists of nontrivial u-edges, all of its underlying color pairs belong to the single color-disjoint set $\mathcal{P}_{j^\star}$, and all required bichromatic components are identified before the conflict graph is constructed. If $A_t$ consists of trivial u-edges, no bichromatic components are required. In either case, $I_t$ is an independent set of $G_{\mathrm{conf}}(A_t)$ and hence contains no conflict. Therefore, the requirements of Lemma~\ref{lem:parallel-vizing-operations} are satisfied, so all Vizing operations corresponding to $I_t$ can be performed simultaneously, preserving properness, coloring every edge of $I_t$, and creating no new uncolored edge. Hence the remaining uncolored edges form a subset of the previous uncolored matching and therefore remain a matching.
\end{proof}

\section{The Randomized Algorithm}\label{sec:randomized}
We now describe our main randomized algorithm, which achieves $\tilde{O}(\Delta^2)$ span. We first present the high-level ideas of one round and then describe each step in detail; the complete procedure is given in Algorithm~\ref{alg:rand-color-match} in Section~\ref{subsec:complete-rand}. Recall that $\chi$ is a partial $(\Delta+1)$-edge coloring whose uncolored edges form a matching, and let $W_0$ denote the set of their distinct centers, so $|W_0|=\lambda$.~\footnote{We note that, unlike the deterministic algorithm, the randomized algorithm and its bounds do not require the number of uncolored edges to be $O(m/\Delta)$. In fact, the bounds in the randomized algorithm remain valid for any partial coloring whose uncolored edges form a matching, even when the number of uncolored edges is $O(m)$. We therefore state and analyze the randomized algorithm in this more general setting.} Each round consists of four main steps.

\emph{\textbf{Step 1:} randomize the missing colors.} The algorithm first selects a power-of-two subpalette $\Gamma\subseteq[\Delta+1]$ and a subset $W\subseteq W_0$ containing more than half of the current centers on that subpalette, and then randomizes the designated missing color of every center in $W$ so that it is marginally uniform over $\Gamma$ (Section~\ref{subsec:rand-miss-color-par}). The crucial consequence is that, for any fixed center and any set of at most $\Delta^2$ other centers, the expected number of u-edges with the same designated missing color is only $O(\Delta)$, instead of the worst-case $\Theta(\Delta^2)$ (Theorem~\ref{thm:expectation-same-color}). In particular, this applies to the centers in its two-hop neighborhood. In fact, the corresponding number is $O(\Delta+\log n)$ with high probability (Theorem~\ref{thm:conc}). 

\emph{\textbf{Step 2:} build and classify the fans.} The algorithm next constructs the fans using the randomized designated missing colors, classifies each fan as trivial or nontrivial, and determines the required pair of every nontrivial fan. This step must be performed after Step~(1), since the previous step may change the corresponding fans, their classification, and their required pairs.

\emph{\textbf{Step 3:} randomly match the palette and activate fans.} The algorithm samples a uniformly random matching of the full palette $[\Delta+1]$ and independently orders each of its pairs (Section~\ref{subsec:random-matching}). A nontrivial fan is \defn{active} if its ordered required pair belongs to the resulting ordered matching, while each trivial fan is activated independently with probability $1/(2\Delta)$. Thus every center is activated with probability $\Theta(1/\Delta)$, and conditioned on a fixed center being active, with constant probability it has only $O(\Delta)$ conflicts that prevent it from being colored in parallel. (Lemma~\ref{lem:active-conflict-degree}).

\emph{\textbf{Step 4:} random sampling and conflict removal.} Finally, the algorithm independently samples each active center with probability $\Theta(1/\Delta)$ and identifies the bichromatic components required by the sampled nontrivial centers. It then detects all conflicts among the sampled centers, orients every conflict toward one of its endpoints, and removes every sampled center toward which some conflict is oriented (Section~\ref{subsec:random-sampling}). The surviving set $\mathcal{S}$ is therefore conflict-free. They can be colored simultaneously by Lemma~\ref{lem:parallel-vizing-operations}. Moreover, every fixed center $u\in W$ belongs to $\mathcal{S}$ with probability $\Omega(1/\Delta^2)$ (Theorem~\ref{thm:conflict-free-set}), and hence a round colors $\Omega(|W|/\Delta^2)=\Omega(\lambda/\Delta^2)$ edges in expectation.
Next, we describe the procedures for Steps 1, 3, and 4 in detail.

\subsection{Randomizing Missing Colors in Parallel}\label{subsec:rand-miss-color-par}
Let $W_0\subseteq V$ be the set of centers, let $\varphi_0(u)\in\miss_{\chi_0}(u)$ denote the initial designated missing color of $u\in W_0$, and $\chi_0:=\chi$. Our randomizing missing colors procedure is defined on a palette whose size is a power of two. We therefore first restrict to a power-of-two subpalette, chosen so that it contains the designated missing colors of at least a constant fraction of the centers, and then randomize the missing colors of those centers. 

Let $b:= \left\lfloor\log_2(\Delta+1)\right\rfloor$ and $k:=2^b$. Then $(\Delta+1)/2 < k \le \Delta+1$. For each color $c\in[\Delta+1]$, let $ W_c:=\{u\in W_0:\varphi_0(u)=c\}$. Choose a set $\Gamma\subseteq[\Delta+1]$ consisting of the $k$ colors with the largest values of $|W_c|$, breaking ties deterministically and treating colors that do not occur as designated colors as having $|W_c|=0$. Since $\Gamma$ contains the $k$ largest color classes,
\[
    |W|= \left| \{ u \in W_0: \varphi_0(u) \in \Gamma\} \right|
    = \sum_{c\in \Gamma}|W_c|
    \ge \frac{k}{\Delta+1}|W_0|
    > \frac{|W_0|}{2}.
\]
Therefore, after discarding only a constant fraction of the centers, we continue the current invocation of the randomizing missing color procedure on those centers whose designated missing color lies in the power-of-two subpalette $\Gamma$. From now on, we work on this subset of centers denoted by $W$, so that $\varphi_0(u)\in \Gamma$ for every $u\in W$. The discarded centers are not used in the current round of the algorithm. 

The subpalette $\Gamma$ can be constructed in parallel in $O((|W_0|+\Delta)\log n)$ work and $O(\log n)$ span: we group the centers by their designated colors, compute the sizes of the resulting color classes, and select the $k$ largest classes.

\begin{lemma}\label{lem:power-two-subpalette}
Let $W_0$ be a set of centers with designated colors $\varphi_0(u)\in[\Delta+1]$, and let $k:=2^{\lfloor\log_2(\Delta+1)\rfloor}$. In $O((|W_0|+\Delta)\log n)$ work and $O(\log n)$ span one can compute a set $\Gamma\subseteq[\Delta+1]$ of $k$ colors such that $W := \{u\in W_0:\varphi_0(u)\in \Gamma\}$ and $|W| >|W_0|/2$.
\end{lemma}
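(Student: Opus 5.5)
The proof has two parts: a combinatorial averaging bound showing that the top $k$ color classes capture more than half of the centers, and a routine parallel implementation using the sorting primitives from the Preliminaries.

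\emph{Averaging.} Since $\chi$ is a partial $(\Delta+1)$-coloring, every center has $\varphi_0(u)\in[\Delta+1]$. The classes $\{W_c\}_{c\in[\Delta+1]}$ therefore partition $W_0$, and $\sum_{c}|W_c|=|W_0|$. Let $\Gamma$ consist of the $k$ colors with the largest $|W_c|$, with ties broken by color index. I would argue by contradiction. Suppose $\sum_{c\in\Gamma}|W_c|<\frac{k}{\Delta+1}|W_0|$. Then the average class size inside $\Gamma$ is below $|W_0|/(\Delta+1)$. Every class outside $\Gamma$ has size at most the minimum over $\Gamma$, which is at most that average. So the total would be less than $|W_0|$, a contradiction. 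Hence $|W|=\sum_{c\in\Gamma}|W_c|\ge\frac{k}{\Delta+1}|W_0|$.

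With $b=\lfloor\log_2(\Delta+1)\rfloor$ we have $2^{b+1}>\Delta+1$, so $k>(\Delta+1)/2$. This gives $|W|>|W_0|/2$ whenever $W_0\neq\varnothing$. The empty case is trivial, and I would note it or assume $W_0\neq\varnothing$ as in a live round.

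\emph{Parallel implementation.} First compute $b$ and $k$ in $O(\log\Delta)$ span by a reduction, or in $O(1)$ with standard word operations. Next create one record $(\varphi_0(u),u)$ for each $u\in W_0$, together with one dummy record $(c,\bot)$ for each $c\in[\Delta+1]$, so that colors with $|W_c|=0$ also appear. Sorting these $O(|W_0|+\Delta)$ records by color and using prefix sums over the block boundaries gives each count $|W_c|$ (block size minus one).

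Then sort the $\Delta+1$ pairs $(|W_c|,c)$ in decreasing order of count, breaking ties by $c$. The first $k$ entries form $\Gamma$, and we mark them in a table indexed by color. Finally, each $u\in W_0$ looks up whether $\varphi_0(u)\in\Gamma$ in $O(1)$, and compaction extracts $W$.

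Every step is a sort, prefix sum, or compaction on $O(|W_0|+\Delta)$ items. Under the paper's accounting this costs $O((|W_0|+\Delta)\log n)$ work and $O(\log n)$ span, using $\log(|W_0|+\Delta)=O(\log n)$.

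There is no real obstacle. The only points needing care are (i) including zero-count colors via the dummy records, so that $\Gamma$ has exactly $k$ colors even when few colors occur, and (ii) using the strict inequality $k>(\Delta+1)/2$ to get a strict bound.
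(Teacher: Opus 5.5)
Your proposal is correct and follows the paper's proof essentially step for step. Like the paper, you count the class sizes by sorting centers by designated color, assign zero to absent colors, sort the $\Delta+1$ colors by decreasing count with deterministic tie-breaking, take the top $k$, and combine the fact that the top $k$ classes average at least the overall average with $k>(\Delta+1)/2$; your contradiction phrasing of the averaging step is equivalent to the paper's one-line argument. Your remark that the strict bound $|W|>|W_0|/2$ requires $W_0\neq\varnothing$ is a correct refinement that the paper leaves implicit; it holds in every round anyway, since the loop runs only while $U_\chi\neq\varnothing$.
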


\begin{proof}
For each color $c\in[\Delta+1]$ let $W_c:=\{u\in W_0:\varphi_0(u)=c\}$. Compute all sizes $|W_c|$ by sorting the centers by $\varphi_0(u)$ and grouping equal colors, assigning zero to colors that do not occur. Then sort the $\Delta+1$ colors by decreasing $|W_c|$, breaking ties by color identifier, and let $\Gamma$ consist of the first $k$ of them. As the $k$ largest classes, those in $\Gamma$ hold at least a $k/(\Delta+1)$ fraction of the total, since their average size is at least the average over all $\Delta+1$ classes; hence
\[
    \sum_{c\in \Gamma}|W_c|
    \ge \frac{k}{\Delta+1}\sum_{c\in[\Delta+1]}|W_c|
    = \frac{k}{\Delta+1}\,|W_0|
    > \frac{|W_0|}{2},
\]
where the last inequality uses $k>(\Delta+1)/2$. The first sort is on $|W_0|$ records and the second handles $\Delta+1$ records, so the total cost is $O\bigl((|W_0|+\Delta)\log n\bigr)$ work and $O(\log n)$ span.
\end{proof}

This preprocessing is performed once, before the randomization phases. Since $|W_0|+\Delta=O(m)$, its cost is $O(m\log n)$ work and $O(\log n)$ span, dominated by the total cost of the randomization phases. Note that since more than half of the original centers are kept, any later step that processes a constant fraction of these remaining centers still processes a constant fraction of the original centers. Thus the restriction to $\Gamma$ does not change the asymptotic number of rounds, work, or span of the algorithm.

Now since $|\Gamma|=k=2^b$, we identify the colors of $\Gamma$ with the $b$-bit strings in $\{0,1\}^b$, one per color. The goal of the procedure below is to randomize the designated missing colors of the remaining centers in $b=O(\log\Delta)$ phases, while keeping every designated color inside $\Gamma$. After the procedure, every center has, in expectation, only $O(\Delta)$ nearby centers with the same final designated color. Throughout the procedure, the coloring remains proper, and the set of uncolored edges does not change. For each color $c\in \Gamma$ and each $i\in[b]$, let $c^{(i)}$ denote the $i$-th bit of $c$, and let $\operatorname{flip}_i(c)$ denote the color of $\Gamma$ obtained by flipping the $i$-th bit of $c$. For each $i\in[b]$, pair the colors of $\Gamma$ according to their $i$-th bit:
\[
    \mathcal{M}_i :=
    \bigl \{ \{c,\operatorname{flip}_i(c)\} : c \in \Gamma \bigr\},
\]
where each unordered pair is included only once.

\begin{observation}\label{obs:matching}
For every $i\in[b]$, $\mathcal{M}_i$ is a perfect matching of  $\Gamma$: it consists of $2^{b-1}$ pairwise color-disjoint pairs, and every color $c\in \Gamma$ lies in exactly one pair of $\mathcal{M}_i$, namely $\{c,\operatorname{flip}_i(c)\}$.
\end{observation}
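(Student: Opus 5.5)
The plan is to read Observation~\ref{obs:matching} directly off the identification of $\Gamma$ with $\{0,1\}^b$. The map $\operatorname{flip}_i:\Gamma\to\Gamma$ complements the $i$-th coordinate. It is therefore well defined, since every $b$-bit string is the label of a color in $\Gamma$ because $|\Gamma|=2^b$. It is also an involution with no fixed points: applying it twice restores the bit, and $\operatorname{flip}_i(c)$ differs from $c$ in coordinate $i$. Thus every pair $\{c,\operatorname{flip}_i(c)\}$ consists of two distinct colors, and $\{c,\operatorname{flip}_i(c)\}=\{\operatorname{flip}_i(c),c\}$, so each unordered pair is generated by exactly its two members.

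Next I would verify the three claims in order. \emph{Coverage:} each $c\in\Gamma$ lies in the pair $\{c,\operatorname{flip}_i(c)\}\in\mathcal{M}_i$. \emph{Uniqueness and disjointness:} suppose $c\in\{d,\operatorname{flip}_i(d)\}$ for some $d$. Then either $d=c$, or $d=\operatorname{flip}_i(c)$ because $\operatorname{flip}_i$ is an involution. In both cases the pair equals $\{c,\operatorname{flip}_i(c)\}$. Hence two pairs of $\mathcal{M}_i$ that share a color coincide, so the pairs are pairwise color-disjoint. \emph{Counting:} the pairs partition the $2^b$ colors into blocks of size two, so $|\mathcal{M}_i|=2^{b-1}$.

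There is no real obstacle here. The only point needing care is that $\operatorname{flip}_i(c)$ must stay inside $\Gamma$, and this uses exactly that $\Gamma$ is in bijection with all of $\{0,1\}^b$. This is the reason for the power-of-two restriction in Lemma~\ref{lem:power-two-subpalette}; without it, a color's flipped partner might not be a valid color in $\Gamma$.
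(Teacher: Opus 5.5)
Your proposal is correct and follows the paper's (implicit) reasoning. The paper states this as an immediate consequence of identifying $\Gamma$ with $\{0,1\}^b$ and gives no separate proof. Your argument is exactly the justification left unstated there: $\operatorname{flip}_i$ is a well-defined, fixed-point-free involution on $\Gamma$, so its orbits are $2^{b-1}$ disjoint pairs covering $\Gamma$.
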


This is what makes each phase parallelizable. In phase $i$, any two centers use either the same color pair or two disjoint color pairs, since the pairs of $\mathcal{M}_i$ are pairwise color-disjoint. Each component $K_\chi(p,u)$ of a two-colored subgraph $H_\chi(p)$ is a path or an even cycle, with the two colors alternating. For a fixed phase $i$, every colored edge whose color belongs to $\Gamma$ belongs to exactly one subgraph $H_\chi(p)$ with $p\in\mathcal{M}_i$. Thus $\mathcal{M}_i$ satisfies the color-disjointness hypothesis of Lemma~\ref{lem:parallel-pairs}, and all phase-$i$ flips can be carried out simultaneously. The underlying coloring continues to use the full palette $[\Delta+1]$; the randomization procedure only flips bichromatic components whose two colors belong to $\Gamma$, and hence never recolors an edge whose color lies outside $\Gamma$.

\begin{observation}\label{obs:endpoint}
Let $c \in \miss_\chi(u)$ and let $p=\{c,c'\}$. Then $\deg_{H_\chi(p)}(u) \le 1$, so $u$ is an endpoint of a path component $K_\chi(p,u)$ of $H_\chi(p)$ (or an isolated vertex). Moreover, $u$ shares the component $K_\chi(p,u)$ with at most one other center $v\neq u$ that has a color of $p$ missing, namely the opposite endpoint of the path.
\end{observation}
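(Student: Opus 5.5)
The plan is to argue directly from the structure of $H_\chi(p)$ established in the preliminaries: every vertex has degree at most two in $H_\chi(p)$, so its components are paths, even cycles, or isolated vertices. The argument has two parts, one for each claim of the observation.

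\emph{Degree bound.} Write $p=\{c,c'\}$. Since $\chi$ is proper, $u$ has at most one incident edge of color $c$ and at most one of color $c'$. Because $c\in\miss_\chi(u)$, it has no incident $c$-edge, so $\deg_{H_\chi(p)}(u)\le 1$. If the degree is zero, $u$ is isolated in $H_\chi(p)$. If it is one, $K_\chi(p,u)$ cannot be a cycle, since every vertex of a cycle has degree two. Hence $K_\chi(p,u)$ is a path component with $u$ as an endpoint.

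\emph{Uniqueness of the other center.} Suppose $v\neq u$ is a center in $K_\chi(p,u)$ at which a color of $p$ is missing. Then $v$ has degree at most one in $H_\chi(p)$, by the same argument applied to $v$ and its missing color. In a path component, the degree-one vertices are exactly the two endpoints. Internal vertices have degree two, so both colors of $p$ are present there and neither is missing. Therefore $v$ must be the endpoint of the path opposite to $u$. Since a path has exactly two endpoints, and $u$ is one of them, there is at most one such $v$. In the isolated case there is no other vertex at all.

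\emph{Remaining check.} The only point needing a moment's care is this: could a vertex of the component be missing a color of $p$ without being a path endpoint? This is ruled out because $H_\chi(p)$ contains every edge colored $c$ or $c'$. So any missing color of $p$ at $v$ directly lowers $\deg_{H_\chi(p)}(v)$ below two.
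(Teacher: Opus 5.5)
Your proof is correct and follows the same reasoning the paper relies on: the paper gives no separate proof and uses its preliminaries remark that $H_\chi(p)$ has maximum degree at most two. From that remark, a vertex missing a color of $p$ has degree at most one and is therefore an endpoint of its path component, so only the opposite endpoint can be another such center. Your degree-counting argument spells out exactly these steps.
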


\paragraph{Procedure Randomize Missing Colors.}
For $i=1,\ldots,b$, perform the following three steps. For every $u\in W$, let $p_i(u) := \{ \varphi_{i-1}(u), \operatorname{flip}_i( \varphi_{i-1}(u))\} \in \mathcal{M}_i$ be its phase-$i$ color pair, and let $K_i(u) := K_{\chi_{i-1}}(p_i(u),u)$ be the component of $H_{\chi_{i-1}}(p_i(u))$ containing $u$. Since $\varphi_{i-1}(u)$ is missing at $u$, by Observation~\ref{obs:endpoint} the component $K_i(u)$ is a path with $u$ as an endpoint or the single vertex $u$.
\begin{enumerate}[label=(\arabic*)]
    \item Toss an independent fair coin $\xi_{i,p,K}\in\{0,1\}$ for each distinct component $K=K_i(u)$, $u\in W$. If two centers are the two endpoints of the same component, they read the same coin.
    \item Simultaneously flip every such component that contains at least one edge and whose coin is $\xi_{i,p,K}=1$. Let $\chi_i$ denote the resulting coloring.
    \item For every $u\in W$, set
    \[
        \varphi_i(u) =
        \begin{cases}
        \operatorname{flip}_i(\varphi_{i-1}(u)), & \text{if }\xi_{i,p_i(u),K_i(u)}=1,\\[2pt]
        \varphi_{i-1}(u), & \text{otherwise.}
        \end{cases}
    \]
    If $K_i(u)$ is the single vertex $u$, then both colors of $p_i(u)$ are missing at $u$; no edge is flipped, and the coin only determines which of them becomes $\varphi_i(u)$.
\end{enumerate}

For a center considered on its own, phase $i$ randomizes bit $i$ of its designated color using a fair coin. Hence, after all $b$ phases, its designated color is uniform on the subpalette $\Gamma$. The designated colors of different centers are not necessarily independent, because two centers that are endpoints of the same alternating path use the same coin in a phase. We will show that this is the only source of dependence and that it is sufficiently limited for our purposes. 

\begin{lemma}\label{lem:rand-parallel-bit}
All flips of phase $i$ may be performed in parallel, in $O(m\log n)$ work and $O(\log n)$ span, and the result is independent of their order. Moreover, $\chi_i$ is a proper partial coloring and $U_{\chi_i}=U_{\chi_{i-1}}$.
\end{lemma}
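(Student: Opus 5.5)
The plan is to reduce phase $i$ to the machinery of Section~\ref{subsec:parallel-pairs}, using $\mathcal{P}:=\mathcal{M}_i$. By Observation~\ref{obs:matching}, $\mathcal{M}_i$ is a set of pairwise color-disjoint pairs of distinct colors. Lemma~\ref{lem:parallel-pairs} then builds the pair-indexed graph $G'$ for all of $\mathcal{M}_i$ at once and labels every component of every $H_{\chi_{i-1}}(p)$, $p\in\mathcal{M}_i$, together with the endpoints of path components. This costs $O(m\log n)$ work and $O(\log n)$ span.

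Next, for each $u\in W$, I compute $p_i(u)$ in $O(1)$ by flipping bit $i$ of $\varphi_{i-1}(u)$. I then locate $K_i(u)$. If $u$ has an incident edge colored $\operatorname{flip}_i(\varphi_{i-1}(u))$, I read it via $\operatorname{Color2Edge}$ and follow its pointer to the auxiliary edge and its component label. Otherwise $K_i(u)$ is the singleton $\{u\}$, which gets a private label $(u,i)$. By Observation~\ref{obs:endpoint}, $u$ is an endpoint of its component, and at most two centers share any component.

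To make both endpoints read the same coin, each center writes a record (component id, $u$). Sorting these records in $O(|W|\log n)$ work and $O(\log n)$ span groups them. The first record of each group draws one random word, and its low bit is the coin $\xi$, which is then broadcast to the at most two members. Alternatively, one could draw a coin per component label in $G'$ directly, in $O(m)$ work.

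The flips themselves come from Lemma~\ref{lem:parallel-pair-flips}. The components with $\xi=1$ that contain an edge are distinct components of the layers $H_{\chi_{i-1}}(p)$, $p\in\mathcal{M}_i$, which have already been identified. That lemma flips them all simultaneously in $O(m)$ work and $O(1)$ span, including data-structure updates. It also gives that $\chi_i$ is a proper partial coloring, that $U_{\chi_i}=U_{\chi_{i-1}}$, and that the outcome does not depend on the order of the flips. Singleton components involve no edges, so they are unaffected.

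Step (3) is an $O(1)$ update per center. I should check that it is consistent: $\varphi_i(u)$ is missing in $\chi_i$, since flipping a path with endpoint $u$ swaps which color of $p_i(u)$ is missing at $u$. In the singleton case both colors were already missing. A center that is an interior vertex of another center's flipped path keeps both colors present, so it is unaffected; but such a center uses that pair only if its own missing color lies in it, in which case it would be an endpoint, not interior.

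I do not expect a real obstacle, since color-disjointness does all the work. The only point needing care is ensuring the shared coin is read consistently by both endpoints, and the sort-based grouping handles that. Summing the bounds above gives $O(m\log n)$ work and $O(\log n)$ span per phase.
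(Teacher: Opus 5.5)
Your proposal is correct and follows the paper's proof. You instantiate Lemma~\ref{lem:parallel-pairs} with $\mathcal{P}=\mathcal{M}_i$, which is color-disjoint by Observation~\ref{obs:matching}, and then invoke Lemma~\ref{lem:parallel-pair-flips} for the simultaneous flips, order-independence, properness, and $U_{\chi_i}=U_{\chi_{i-1}}$. Your extra material on shared coins and on $\varphi_i(u)$ remaining missing is handled in the paper by Lemmas~\ref{lem:randomize-missing-colors-runtime} and~\ref{lem:invariant-miss-color} rather than in this proof, and it is consistent with them.
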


\begin{proof}
The pairs of $\mathcal{M}_i$ are pairwise color-disjoint, and the flipped components are distinct components of the subgraphs $H_{\chi_{i-1}}(p)$ for $p\in\mathcal{M}_i$. Applying Lemma~\ref{lem:parallel-pairs} to $\chi_{i-1}$ and the set $\mathcal{M}_i$, all these components can be identified and any chosen collection of them can be flipped simultaneously in $O(m\log n)$ work and $O(\log n)$ span. The outcome is independent of the order of the flips, the resulting coloring $\chi_i$ is proper, and the set of uncolored edges is unchanged.
\end{proof}

\begin{lemma}\label{lem:invariant-miss-color}
For every $i\in\{0,\ldots,b\}$ and every $u\in W$, we have $\varphi_i(u)\in\miss_{\chi_i}(u)$.
\end{lemma}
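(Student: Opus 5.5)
We argue by induction on $i$. The base case $i=0$ holds by definition, since $\varphi_0(u)\in\miss_{\chi_0}(u)$ is the initial designated missing color. For the inductive step, fix $u\in W$ and assume $\varphi_{i-1}(u)\in\miss_{\chi_{i-1}}(u)$. Write $c:=\varphi_{i-1}(u)$, $c':=\operatorname{flip}_i(c)$, and $p:=p_i(u)=\{c,c'\}$. By Observation~\ref{obs:endpoint}, $K:=K_i(u)$ is either the single vertex $u$ or a path component of $H_{\chi_{i-1}}(p)$ with $u$ as an endpoint. We then split into cases according to the shape of $K$ and the value of its coin.

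\emph{Case 1: $K$ is the single vertex $u$.} Then $u$ has no incident edge colored $c$ or $c'$, so both colors are missing at $u$. We must check that no other phase-$i$ flip changes this. By Lemma~\ref{lem:rand-parallel-bit} and the color-disjointness of $\mathcal{M}_i$ (Observation~\ref{obs:matching}), any flipped component recolors only edges whose colors lie in its own pair of $\mathcal{M}_i$. A flip of a component of some $H(p')$ with $p'\neq p$ therefore never creates or removes an edge of color $c$ or $c'$. A flip of a component of $H(p)$ other than $K$ only touches edges of that component, none of which are incident to $u$, because $u$ is isolated in $H(p)$. Hence both $c$ and $c'$ remain missing at $u$ in $\chi_i$, and $\varphi_i(u)\in\{c,c'\}$ is missing, whatever the coin.

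\emph{Case 2: $K$ is a path and $\xi=0$.} Here $K$ is not flipped, and by the same disjointness argument no other flipped component contains an edge of color $c$ at $u$. The edges at $u$ with colors in $p$ all lie in $K$, and edges with colors outside $p$ are recolored only within other pairs, so they never take color $c$. Thus $c$ remains missing at $u$, and $\varphi_i(u)=c$.

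\emph{Case 3: $K$ is a path and $\xi=1$.} Now $K$ is flipped. Since $u$ is an endpoint with $c$ missing, its unique incident $p$-edge has color $c'$, and the flip changes it to $c$. So $c'$ becomes missing at $u$, while no other flip introduces color $c'$ at $u$, as argued in Case 2. Hence $\varphi_i(u)=c'\in\miss_{\chi_i}(u)$. Here we rely on Lemma~\ref{lem:parallel-pair-flips}: the flips commute and each edge is affected by at most one flip, so we may reason as if $K$ alone were flipped with respect to the colors of $p$.

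The main subtlety is a possible conflict between two centers. Another center $v$ might share $K$ as the opposite endpoint, or might have its own $p_i(v)=p$ with a different component. Both situations are handled because $u$ and $v$ read the same coin for a shared component, so $K$ is flipped consistently. Distinct components of $H(p)$ are vertex-disjoint, so a different component of $H(p)$ cannot touch the $p$-edges at $u$. This is the point to state carefully; everything else is bookkeeping.
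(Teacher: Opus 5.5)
Your proof is correct and follows essentially the same route as the paper. Both argue by induction on $i$, use Observation~\ref{obs:endpoint} to see that $u$ has at most one incident $p_i(u)$-colored edge, invoke color-disjointness of $\mathcal{M}_i$ to rule out interference from flips of other pairs, and split into the isolated, unflipped-path and flipped-path cases. Your explicit remark that distinct components of $H_{\chi_{i-1}}(p)$ are vertex-disjoint spells out a step the paper compresses into the sentence that only $K_i(u)$ can affect which color of $p$ is missing at $u$.
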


\begin{proof}
We prove the claim by induction on $i$. For $i=0$ it holds by definition. Suppose that $\varphi_{i-1}(u)\in\miss_{\chi_{i-1}}(u)$, and let $p:=p_i(u) = \{\varphi_{i-1}(u),\operatorname{flip}_i(\varphi_{i-1}(u))\}$.

Since $\varphi_{i-1}(u)$ is missing at $u$, by Observation~\ref{obs:endpoint} the vertex $u$ is an endpoint of the component $K_i(u)$ or an isolated vertex of $H_{\chi_{i-1}}(p)$. Thus at most one edge incident to $u$ has a color in $p$, and if such an edge exists its color is $\operatorname{flip}_i(\varphi_{i-1}(u))$.

A flip belonging to a different pair $p'\in\mathcal{M}_i$ uses two colors disjoint from $p$ and therefore cannot change the presence of either color of $p$ at $u$. Hence only the component $K_i(u)$ can affect which color of $p$ is missing at $u$. If $\xi_{i,p,K_i(u)}=0$, then $K_i(u)$ is not flipped and $\varphi_i(u)=\varphi_{i-1}(u)$ remains missing. If $\xi_{i,p,K_i(u)}=1$ and $K_i(u)$ contains an edge, flipping it exchanges the two colors at the endpoint $u$, so $\operatorname{flip}_i(\varphi_{i-1}(u))=\varphi_i(u)$ becomes missing. If $K_i(u)$ is an isolated vertex, both colors of $p$ are missing and remain missing. Thus $\varphi_i(u)\in\miss_{\chi_i}(u)$ in every case.
\end{proof}

For $0\le i\le b$, let $\mathcal{F}_i:=\sigma(\xi_{1,\cdot,\cdot},\ldots,\xi_{i,\cdot,\cdot})$ denote the $\sigma$-field generated by the coins of the first $i$ phases. The subpalette $\Gamma$, its identification with $\{0,1\}^b$, and the initial coloring and designated colors are fixed before these coins are sampled.

\begin{lemma}\label{lem:agree-and-measurable}
For every $u\in W$, the colors $\varphi_i(u)$ and $\varphi_{i-1}(u)$ agree on every bit except possibly bit $i$. Hence, for all $j<i$, $\bigl(\varphi_i(u)\bigr)^{(j)}=\bigl(\varphi_j(u)\bigr)^{(j)}$. In particular, bit $i$ of $\varphi_b(u)$ is decided in phase $i$ and never changes again. Moreover, $\chi_i$ and $\varphi_i$ are determined by the coins of the first $i$ phases, and hence are $\mathcal{F}_i$-measurable.
\end{lemma}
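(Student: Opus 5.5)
The lemma follows directly from the update rule in step~(3) of Procedure Randomize Missing Colors, together with an induction on $i$ for the measurability claim. I would handle the three assertions in order.

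\emph{Agreement of bits.} By step~(3), $\varphi_i(u)$ equals either $\varphi_{i-1}(u)$ or $\operatorname{flip}_i(\varphi_{i-1}(u))$. By definition, $\operatorname{flip}_i$ changes only bit $i$, so in both cases $\varphi_i(u)$ and $\varphi_{i-1}(u)$ agree on every bit $j\neq i$. The only thing to check is that $\varphi_{i-1}(u)\in\Gamma$, so that $\operatorname{flip}_i$ is defined. This follows by a trivial induction: $\varphi_0(u)\in\Gamma$ for $u\in W$, and $\operatorname{flip}_i$ maps $\Gamma$ to $\Gamma$.

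\emph{Bits are frozen after their phase.} Fix $j$. Applying the agreement claim for each phase $j+1,\dots,i$ in turn shows that bit $j$ is unchanged from $\varphi_j(u)$ through $\varphi_i(u)$. Hence $(\varphi_i(u))^{(j)}=(\varphi_j(u))^{(j)}$ for all $i\ge j$. Taking $i=b$ shows that bit $j$ of $\varphi_b(u)$ is fixed once phase $j$ ends, which is the ``in particular'' statement.

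\emph{Measurability.} I would induct on $i$. The base case holds because $\chi_0$, $\varphi_0$, $\Gamma$ and its bit identification are deterministic, hence $\mathcal{F}_0$-measurable (the trivial $\sigma$-field). For the inductive step, assume $\chi_{i-1}$ and $\varphi_{i-1}$ are $\mathcal{F}_{i-1}$-measurable. Then the pairs $p_i(u)$ and the components $K_i(u)=K_{\chi_{i-1}}(p_i(u),u)$ are deterministic functions of $(\chi_{i-1},\varphi_{i-1})$, so they are $\mathcal{F}_{i-1}$-measurable. Phase $i$ reads the coins $\xi_{i,p,K}$ at these indices, and both $\chi_i$ (obtained by flipping the components whose coin equals $1$) and $\varphi_i$ (given by step~(3)) are deterministic functions of $\chi_{i-1}$, $\varphi_{i-1}$, and the phase-$i$ coins. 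By Lemma~\ref{lem:rand-parallel-bit} the outcome of the flips does not depend on their order, so $\chi_i$ is well defined. Therefore $\chi_i$ and $\varphi_i$ are $\mathcal{F}_i$-measurable.

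The main subtlety is that the coin indices $(p,K)$ are themselves random, since they depend on earlier phases. To keep the statement rigorous, I would formalize the coins as an i.i.d.\ family indexed by all possible $(i,p,K)$, or equivalently by a canonical label such as the smaller endpoint of $K$. Phase $i$ then reads an $\mathcal{F}_{i-1}$-measurable selection from this family, and $\mathcal{F}_i$ is understood as generated by all phase-$\le i$ coins. Under this convention, the composition argument above goes through.
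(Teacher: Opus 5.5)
Your proposal is correct and follows the paper's proof: bit agreement comes directly from step~(3), the freezing of bit $j$ follows by iterating over the later phases, and measurability is proved by induction on $i$, using that $p_i(u)$ and $K_i(u)$ are determined by $(\chi_{i-1},\varphi_{i-1})$. Your extra points are refinements the paper leaves implicit: that $\Gamma$ is closed under $\operatorname{flip}_i$, and the formalization of phase-$i$ coins whose indices are themselves $\mathcal{F}_{i-1}$-measurable.
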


\begin{proof}
By step~(3), $\varphi_i(u)$ equals to either $\varphi_{i-1}(u)$ or $\operatorname{flip}_i(\varphi_{i-1}(u))$, and therefore differs from $\varphi_{i-1}(u)$ only possibly in bit $i$. Since a later phase $\ell$ changes only bit $\ell$, bit $j$ is never changed after phase $j$.

For measurability, proceed by induction on $i$. The initial coloring $\chi_0$ and designated colors $\varphi_0$ are fixed. Assuming that $\chi_{i-1}$ and $\varphi_{i-1}$ are $\mathcal{F}_{i-1}$-measurable, the pairs $p_i(u)$ and components $K_i(u)$ are determined before the phase-$i$ coins are sampled. The coloring $\chi_i$ is obtained by flipping exactly those components whose phase-$i$ coins equal one, while $\varphi_i(u)$ is determined from $\varphi_{i-1}(u)$ and the coin $\xi_{i,p_i(u),K_i(u)}$. Hence both $\chi_i$ and $\varphi_i$ are $\mathcal{F}_i$-measurable.
\end{proof}

\begin{theorem}\label{thm:uniform}
For every $u\in W$, the final designated color $\varphi_b(u)$ is uniform on $\Gamma$.
\end{theorem}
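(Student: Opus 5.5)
We will show that the $b$ bits of $\varphi_b(u)$ are independent fair bits. Uniformity on $\Gamma\cong\{0,1\}^b$ then follows immediately. The plan is to prove by induction on $i$ that, for every fixed bit string $s\in\{0,1\}^i$,
\[
\Pr\bigl[(\varphi_i(u))^{(1)},\ldots,(\varphi_i(u))^{(i)}=s\bigr]=2^{-i}.
\]
By Lemma~\ref{lem:agree-and-measurable}, the first $i$ bits of $\varphi_b(u)$ coincide with the first $i$ bits of $\varphi_i(u)$, so the case $i=b$ gives the theorem.

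The key step is a conditional statement: conditioned on $\mathcal{F}_{i-1}$, bit $i$ of $\varphi_i(u)$ is a fair coin. Condition on $\mathcal{F}_{i-1}$. By Lemma~\ref{lem:agree-and-measurable}, $\chi_{i-1}$ and $\varphi_{i-1}(u)$ are then fixed. Consequently the pair $p_i(u)$ and the component $K_i(u)$ are also determined, and so is $\varphi_{i-1}(u)^{(i)}$. By step~(3), bit $i$ of $\varphi_i(u)$ equals
\[
\varphi_{i-1}(u)^{(i)}\oplus \xi_{i,p_i(u),K_i(u)}.
\]
The phase-$i$ coins are fresh fair coins, independent of $\mathcal{F}_{i-1}$. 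There is one subtlety in the indexing: which coin $u$ reads depends on $(p_i(u),K_i(u))$, but that index is $\mathcal{F}_{i-1}$-measurable. So, conditioned on $\mathcal{F}_{i-1}$, the coin $u$ reads is a single fair coin independent of the past. It follows that bit $i$ is uniform given $\mathcal{F}_{i-1}$, whatever the value of $\varphi_{i-1}(u)^{(i)}$.

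For the induction, the first $i-1$ bits of $\varphi_{i-1}(u)$ are $\mathcal{F}_{i-1}$-measurable, and they equal the first $i-1$ bits of $\varphi_i(u)$ by Lemma~\ref{lem:agree-and-measurable}. Write $s=(s',s_i)$. Then
\[
\Pr[\text{first } i \text{ bits}=s]=\mathbb{E}\bigl[\mathbf{1}\{\text{first } i-1 \text{ bits}=s'\}\cdot\Pr[\text{bit } i=s_i\mid\mathcal{F}_{i-1}]\bigr]=\tfrac12\cdot 2^{-(i-1)}.
\]

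The only point needing care, and the place I expect the main obstacle, is justifying the conditional fairness rigorously. One has to argue that selecting the coin via an $\mathcal{F}_{i-1}$-measurable index still yields a fair bit independent of $\mathcal{F}_{i-1}$. I will handle this by summing over the finitely many possible values of the index. The shared-coin dependence between the two endpoints of a path is irrelevant here, since the statement concerns a single center $u$.
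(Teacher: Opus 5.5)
Your proof is correct and follows the paper's argument. You condition on $\mathcal{F}_{i-1}$, note that $p_i(u)$, $K_i(u)$ and $(\varphi_{i-1}(u))^{(i)}$ are then fixed while $u$'s phase-$i$ coin is fresh and fair, conclude that bit $i$ is conditionally uniform, and chain these conditional probabilities over the $b$ bits. Your induction on prefixes via the tower property is a slightly more careful version of the paper's product formula, since it avoids conditioning on possibly null events. Summing over the finitely many values of the $\mathcal{F}_{i-1}$-measurable coin index is the right way to justify the conditional fairness.
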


\begin{proof}
Let $B^{(j)}$ denote bit $j$ of $\varphi_b(u)$, so that $\varphi_b(u)=(B^{(1)},\ldots, B^{(b)})$. Fix $i\in[b]$ and condition on $\mathcal{F}_{i-1}$. By Lemma~\ref{lem:agree-and-measurable}, the values $\varphi_{i-1}(u)$, $p_i(u)$, and $K_i(u)$ are then determined, whereas $\xi_{i,p_i(u),K_i(u)}$ is a fresh fair coin independent of $\mathcal{F}_{i-1}$. Therefore $ B^{(i)} = \bigl(\varphi_{i-1}(u)\bigr)^{(i)} \oplus \xi_{i,p_i(u),K_i(u)}$ is uniform on $\{0,1\}$ conditioned on $\mathcal{F}_{i-1}$. Since $B^{(1)},\ldots, B^{(i-1)}$ are already determined by $\mathcal{F}_{i-1}$, for every $x^{(1)},\ldots,x^{(i)}\in\{0,1\}$, $\Pr\left[ B^{(i)}=x^{(i)} \,\middle|\, B^{(1)}=x^{(1)},\ldots, B^{(i-1)}=x^{(i-1)} \right] = 1/2$. Consequently, for every color $x \in \{0,1\}^b$,

\[ 
\begin{aligned}
    \Pr[\varphi_b(u)=x] 
    &= \prod_{i=1}^b \Pr\left[ B^{(i)}=x^{(i)} \,\middle|\, B^{(1)}=x^{(1)},\ldots, B^{(i-1)}=x^{(i-1)} \right]  \\
    &= 2^{-b} = \frac{1}{k}
\end{aligned}
\]

Since $|\Gamma|=k$, $\varphi_b(u)$ is uniform on $\Gamma$.
\end{proof}

As noted before, the designated missing colors of different centers are not necessarily independent after the procedure, because two centers may share the same component coin in some phase. Therefore, to control this dependence, we fix an arbitrary center $u$ and track the centers whose designated colors still agree with that of $u$ on all bits exposed so far.

\begin{definition}[Tied]
Two distinct centers $u,w\in W$ are \defn{tied} in phase $i$ if $p_i(u)=p_i(w)$ and $K_i(u)=K_i(w)$; equivalently, they read the same coin $\xi_{i,p_i(u),K_i(u)}$ in phase $i$.
\end{definition}

In every phase, a center is tied to at most one other center. If $w$ is tied to $u$, then both are endpoints of the same path component $K_i(u)$, and by Observation~\ref{obs:endpoint} a path has at most one endpoint other than $u$. Thus ``tied in phase $i$'' defines a partial matching on $W$. Consequently, among any set of centers currently tracked with $u$, at most one uses the same phase-$i$ coin as $u$, while every other center uses a distinct fair coin. This limited dependence is enough to obtain the following expectation bound.

\begin{theorem}\label{thm:expectation-same-color}
Let $u\in W$ and let $C\subseteq W\setminus\{u\}$ with $|C|\le\Delta^2$. Then $\mathbb{E}\,\left[ \left|\{w\in C:\varphi_b(w)=\varphi_b(u)\}\right| \right] < 2 \Delta$. In other words, among any set of at most $\Delta^2$ other centers, the expected number of centers whose final designated missing color is the same as that of $u$ is $O(\Delta)$.
\end{theorem}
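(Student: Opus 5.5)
The plan is to track, phase by phase, the set of centers in $C$ whose designated color still agrees with that of $u$ on all bits exposed so far. For $0\le i\le b$ let $S_i:=\{w\in C: (\varphi_i(w))^{(j)}=(\varphi_i(u))^{(j)} \text{ for all } j\le i\}$, so $S_0=C$. By Lemma~\ref{lem:agree-and-measurable}, bit $j$ never changes after phase $j$. Hence $S_b$ is exactly the set of $w\in C$ with $\varphi_b(w)=\varphi_b(u)$, and $S_i$ is $\mathcal{F}_i$-measurable. The goal is a recursion of the form $\mathbb{E}[|S_i|\mid\mathcal{F}_{i-1}]\le |S_{i-1}|/2+\tfrac12$, which iterated over $b$ phases gives $\mathbb{E}|S_b|\le |C|2^{-b}+\sum_{i\ge 1}2^{-i}<\Delta^2/k+1$.

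For the one-phase step I will condition on $\mathcal{F}_{i-1}$, which fixes $S_{i-1}$, all pairs $p_i(\cdot)$, components $K_i(\cdot)$, and the current bit-$i$ values. Then $w\in S_{i-1}$ survives into $S_i$ iff $(\varphi_{i-1}(w))^{(i)}\oplus\xi_w=(\varphi_{i-1}(u))^{(i)}\oplus\xi_u$, where $\xi_w$ denotes the phase-$i$ coin that $w$ reads. There are two cases.
\begin{itemize}
\item If $w$ is not tied to $u$ in phase $i$, then $\xi_w$ and $\xi_u$ are distinct independent fair coins (distinct components or distinct pairs index distinct coins). So $w$ survives with probability exactly $1/2$.
\item If $w$ is tied to $u$, then $w$ survives with probability at most $1$.
\end{itemize}
Since tying is a partial matching, at most one $w\in S_{i-1}$ is tied to $u$. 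Linearity of expectation then gives $\mathbb{E}[|S_i|\mid\mathcal{F}_{i-1}]\le \tfrac12(|S_{i-1}|-1)+1=\tfrac12|S_{i-1}|+\tfrac12$ when a tied center exists, and $\tfrac12|S_{i-1}|$ otherwise. Taking expectations and unrolling yields $\mathbb{E}|S_b|\le 2^{-b}|C|+1$.

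To finish, $k=2^b>(\Delta+1)/2$ gives $2^{-b}|C|<2\Delta^2/(\Delta+1)<2\Delta$. Adding the $+1$ slack leaves $2\Delta-\frac{2\Delta}{\Delta+1}+1$, which is at most $2\Delta$ for $\Delta\ge1$: indeed $2\Delta^2/(\Delta+1)+1=2\Delta-2\Delta/(\Delta+1)+1\le 2\Delta$ iff $2\Delta\ge\Delta+1$. This yields the bound $<2\Delta$, or at worst $\le 2\Delta$ with equality only in the degenerate case. I would handle $\Delta=1$ separately if needed, or use the strict inequality from $|C|\le\Delta^2<$ the bound. The loose constant $O(\Delta)$ is robust regardless.

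The main obstacle is justifying rigorously that a non-tied $w$ reads a coin independent of $u$'s coin, with both fresh relative to $\mathcal{F}_{i-1}$. Two facts are needed here. First, coins are indexed by $(i,p,K)$, and distinct triples are independent. Second, the identity of $K_i(\cdot)$ is $\mathcal{F}_{i-1}$-measurable, so the conditioning does not bias the coins. A further subtlety is that the tied partner may change from phase to phase. The recursion only uses at most one tied partner per phase, so this causes no problem, and the additive $+1/2$ per phase telescopes geometrically rather than accumulating to $b/2$.
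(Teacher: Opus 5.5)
Your proposal is correct and follows the paper's proof essentially step for step: it uses the same agreement sets $S_i$ and the same conditioning on $\mathcal{F}_{i-1}$, charges the at-most-one tied center $1$, and obtains the same recursion $\mathbb{E}[m_i\mid\mathcal{F}_{i-1}]\le m_{i-1}/2+1/2$. Your hedging about strictness at the end is unnecessary, because the unrolled additive term is $1-2^{-b}<1$. The paper keeps this term and gets $\mathbb{E}[m_b]\le(\Delta^2-1)/k+1\le 2\Delta-1$, so the strict bound $<2\Delta$ holds for every $\Delta\ge 1$.
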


\begin{proof}
Fix $u$. For $0\le i\le b$, let $S_i:=\{w\in C:\varphi_i(w)\text{ and }\varphi_i(u) \text{ agree on bits }1,\ldots,i\}$, and  $m_i:=|S_i|$. Then $S_0=C$, and by Lemma~\ref{lem:agree-and-measurable}, $S_b=\{w\in C:\varphi_b(w)=\varphi_b(u)\}$. Fix $i\in[b]$ and condition on $\mathcal{F}_{i-1}$. By Lemma~\ref{lem:agree-and-measurable}, the set $S_{i-1}$ and its size $m_{i-1}$ are then determined. Since bits $1,\ldots,i-1$ never change again, $S_i\subseteq S_{i-1}$. For $w\in S_{i-1}$, we have $w\in S_i$ if and only if
\[
    \bigl(\varphi_{i-1}(w)\bigr)^{(i)} \oplus \xi_{i,p_i(w),K_i(w)}
 = \bigl(\varphi_{i-1}(u)\bigr)^{(i)} \oplus \xi_{i,p_i(u),K_i(u)}.
\]
At most one center of $S_{i-1}$ is tied to $u$ in phase $i$. If such a center exists, its contribution to $m_i$ is at most one. Now consider a center $w\in S_{i-1}$ that is not tied to $u$. Then $w$ and $u$ use distinct phase-$i$ coins, and $w$'s coin is fair and independent of both $\mathcal{F}_{i-1}$ and $u$'s phase-$i$ coin. Condition further on the value of $u$'s phase-$i$ coin. Once this value is fixed, exactly one of the two possible values of $w$'s coin makes the $i$-th bits of $\varphi_i(w)$ and $\varphi_i(u)$ agree. Therefore $\Pr[w\in S_i\mid\mathcal{F}_{i-1}] = 1/2$. If a tied center exists, we pessimistically assume that it always remains in $S_i$. By linearity of expectation, the remaining $m_{i-1}-1$ centers contribute an expected $(m_{i-1}-1)/2$, and hence
\[
    \mathbb{E}[m_i\mid\mathcal{F}_{i-1}]
    \le 1+\frac{1}{2}(m_{i-1}-1)
    = \frac{1}{2}m_{i-1}+\frac{1}{2}.
\]
If no tied center exists, then every center in $S_{i-1}$ remains in $S_i$ with probability $1/2$, and therefore  $\mathbb{E}[m_i\mid\mathcal{F}_{i-1}] = {(1/2)}m_{i-1}.$ Thus, in either case, $\mathbb{E}[m_i\mid\mathcal{F}_{i-1}] \le (1/2)m_{i-1}+(1/2)$. Taking expectations of both sides and using the tower property and linearity of expectation,
\[
    \mathbb{E}[m_i] \le \frac{1}{2}\mathbb{E}[m_{i-1}]+\frac{1}{2}.
\]
Unrolling this recurrence and using the fact that $m_0=|C|$, we obtain
\[
\begin{aligned}
    \mathbb{E}[m_b]
    &\le 2^{-b}|C|+1-2^{-b}\\
    &=\frac{|C|}{k}+1-\frac{1}{k}.
\end{aligned}
\]
Since $|C|\le\Delta^2$ and $k>(\Delta+1)/2$,
\[
\begin{aligned}
    \mathbb{E}[m_b]
    &\le \frac{\Delta^2-1}{k}+1\\
    &\le \frac{2(\Delta^2-1)}{\Delta+1}+1\\
    & =2\Delta-1\\
    & <2\Delta.
\end{aligned}
\]
\end{proof}

For our algorithm, a constant-probability guarantee for this step is sufficient, and hence we do not use the stronger high-probability bound in the remainder of the paper. Nevertheless, the high-probability statement below may be of independent interest and follows from a more refined analysis of the same randomization procedure. We therefore state it here and defer its proof to Appendix~\ref{sec:appendix}.

\begin{theorem}\label{thm:conc}
Let $u\in W$, let $C\subseteq W\setminus\{u\}$ with $|C|\le\Delta^2$, and let $D := |\{w\in C:\varphi_b(w)=\varphi_b(u)\}|$. There exists a constant $\kappa>0$ such that, for every $\lambda\ge1$, $\Pr[D\ge\kappa(\Delta+\lambda)] \le b e^{-\lambda}$. In particular, for $\lambda=\Theta(\log n)$ we have $D=O(\Delta+\log n)$ with probability $1-1/\operatorname{poly}(n)$, and $D=O(\Delta)$ whenever $\Delta=\Omega(\log n)$.
\end{theorem}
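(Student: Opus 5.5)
We will strengthen the supermartingale argument from the proof of Theorem~\ref{thm:expectation-same-color} into a phase-by-phase concentration bound, using Lemma~\ref{lem:conditional-chernoff}. As there, set $S_i$ to be the centers of $C$ that agree with $u$ on bits $1,\ldots,i$, and $m_i=|S_i|$, so $D=m_b$ and $m_0=|C|\le\Delta^2$. The target is to show that each phase roughly halves $m_{i-1}$, up to an additive error of order $\lambda$, except with probability $e^{-\lambda}$. A union bound over the $b$ phases then gives the factor $b$ in $be^{-\lambda}$.

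Fix phase $i$ and condition on $\mathcal{F}_{i-1}$, and also on $u$'s phase-$i$ coin. By the tied-partial-matching property, at most one center of $S_{i-1}$ is tied to $u$; we pessimistically count it as surviving. The remaining centers of $S_{i-1}$ are partitioned by the component they read, and each component coin is shared by at most two centers, namely the two endpoints. Each such coin is a fresh fair coin independent of $\mathcal{F}_{i-1}$ and of $u$'s coin. Whether a center survives is a deterministic function of its coin, with survival probability exactly $1/2$.

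Group the surviving-indicator contributions by coin, giving variables $X_K\in[0,2]$. These are independent given the conditioning, and $\sum_K\mathbb{E}[X_K]\le m_{i-1}/2$. Apply Lemma~\ref{lem:conditional-chernoff} with $c=2$ and $\mu=\max(m_{i-1}/2,\lambda)$, padding the $a_i$ if needed. With $\delta=1$ this yields
\[
m_i\le 1+2\max(m_{i-1}/2,\lambda)\le m_{i-1}+2\lambda+1
\]
except with probability $e^{-\mu/6}\le e^{-\lambda/6}$. This bound is too crude, however: it does not contract. We instead take a small constant $\delta$ when $m_{i-1}$ is large, to get
\[
m_i\le (1+\delta)m_{i-1}/2+O(\lambda).
\]
Then $(1+\delta)/2\le 3/4$, and unrolling gives
\[
m_b\le (3/4)^b\Delta^2+O(\lambda)\sum_j (3/4)^j.
\]
This is the main obstacle. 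The geometric decay $(3/4)^b\Delta^2$ is not $O(\Delta)$, because we need contraction by a factor of $1/2$ per phase, not $3/4$.

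To fix this, let $\delta_i$ vary by phase. While $m_{i-1}$ is large, say $m_{i-1}\ge\lambda\cdot 4^{j}$, take $\delta_i$ about $\sqrt{\lambda/m_{i-1}}$. This gives $\delta_i^2\mu=\Theta(\lambda)$, so each phase fails with probability $e^{-\Theta(\lambda)}$. On success, $m_i\le m_{i-1}/2+O(\sqrt{\lambda m_{i-1}})+1$. Because $m_{i-1}$ decays geometrically, the product $\prod(1+\delta_i)$ stays bounded: the $\delta_i$ form a geometric series once $m$ is near $2^{-i}\Delta^2$. So while $m_{i-1}\gg\lambda$ we track $m_i\le\kappa'2^{-i}\Delta^2+O(\lambda)$. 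Once $m_{i-1}=O(\lambda)$, we switch to $\delta$ of constant size times $\lambda/\mu$, which keeps $m_i=O(\lambda)$ with failure probability $e^{-\lambda}$.

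At the end, $m_b\le\kappa'\Delta^2/k+O(\lambda)=O(\Delta+\lambda)$, since $k>(\Delta+1)/2$. The constant $\kappa$ absorbs $\kappa'$ and the constants in the exponent, rescaling $\lambda$ by a constant. A union bound over the $b$ phases completes the bound $\Pr[D\ge\kappa(\Delta+\lambda)]\le be^{-\lambda}$.
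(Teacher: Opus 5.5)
Your proposal is correct and follows essentially the paper's route: condition on $\mathcal{F}_{i-1}$ together with $u$'s phase-$i$ coin, count the (at most one) tied center as surviving, group the remaining centers into independent per-coin contributions in $[0,2]$, derive $m_i\le m_{i-1}/2+O(\sqrt{\lambda m_{i-1}})+O(\lambda)$ with per-phase failure probability $e^{-\lambda}$, and union bound over the $b$ phases. The paper uses Hoeffding's additive bound with $t=\sqrt{2\lambda m_{i-1}}$ instead of the multiplicative bound with adaptive $\delta_i$, which removes the need for your large/small regime split. The unrolling also differs: your bounded-product argument needs the fact, asserted but not justified, that in the large regime $m$ shrinks by a constant factor per phase; this follows from your recursion once the threshold constant is large, and it makes the $\delta_i$ dominated by a geometric series. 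The paper avoids this by noting $(\sqrt{m_{i-1}/2}+\sqrt{\lambda})^2\ge m_{i-1}/2+\sqrt{2\lambda m_{i-1}}+1$ for $\lambda\ge1$, so $\sqrt{m_i}\le\sqrt{m_{i-1}}/\sqrt{2}+\sqrt{\lambda}$, which unrolls in one line to $\sqrt{m_b}\le 2^{-b/2}\sqrt{|C|}+(2+\sqrt{2})\sqrt{\lambda}$ and hence $D<16(\Delta+\lambda)$.
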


For a center $u\in W$, let $N_u$ denote the number of centers whose final designated color equals that of $u$ and whose fan conflicts with $F_u$:
\[
    N_u := \Bigl| \bigl\{ w\in W\setminus\{u\}: \varphi_b(w)=\varphi_b(u) \text{ and }  V(F_u)\cap V(F_w)\neq\emptyset \bigr\} \Bigr|.
\]

\begin{lemma}\label{lem:const-prob-same}
For every fixed center $u\in W$ and every $a\ge1$, $ \Pr[N_u\ge a\Delta] < 2/a$. In particular, $\Pr[N_u<20\Delta] > 9/10$.
\end{lemma}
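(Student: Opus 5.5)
The plan is a direct application of Markov's inequality to Theorem~\ref{thm:expectation-same-color}. Since that theorem needs a fixed set of at most $\Delta^2$ centers, the first step is to dominate $N_u$ by a count over a deterministic set $C$.

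The main obstacle is a subtlety: the fans $F_u$ and $F_w$ are built in Step~2 from the \emph{final} designated colors $\varphi_b$ and the coloring $\chi_b$, so the set of centers whose fan meets $F_u$ is itself random. I would avoid this by using only the underlying graph. By Definition~\ref{def:vizing-fan}, every fan with center $x$ satisfies $V(F_x)\subseteq N[x]$, the closed neighborhood of $x$ in $G$. The randomization procedure only recolors edges (Lemma~\ref{lem:rand-parallel-bit}) and never changes $G$, so these neighborhoods are fixed in advance. Hence $V(F_u)\cap V(F_w)\neq\emptyset$ implies that $u$ and $w$ are at distance at most $2$ in $G$.

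Accordingly, I define $C$ as the set of centers $w\in W\setminus\{u\}$ at distance at most $2$ from $u$. This set is determined before any coin is tossed. Its size satisfies $|C|\le \Delta+\Delta(\Delta-1)=\Delta^2$: there are at most $\Delta$ neighbors of $u$, and each neighbor has at most $\Delta-1$ further neighbors other than $u$. Then pointwise
\[
  N_u \le D_C := \bigl|\{w\in C:\varphi_b(w)=\varphi_b(u)\}\bigr|,
\]
and Theorem~\ref{thm:expectation-same-color} gives $\mathbb{E}[D_C]<2\Delta$.

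Markov's inequality then yields
\[
  \Pr[N_u\ge a\Delta]\le \Pr[D_C\ge a\Delta]\le \frac{\mathbb{E}[D_C]}{a\Delta}<\frac{2}{a}.
\]
The inequality is strict because the expectation bound is strict. Setting $a=20$ gives $\Pr[N_u\ge 20\Delta]<1/10$, which is the same as $\Pr[N_u<20\Delta]>9/10$.
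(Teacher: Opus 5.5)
Your proposal is correct and follows the paper's proof essentially verbatim: you bound $N_u$ by the same-color count over the fixed set of centers within distance two of $u$ (at most $\Delta^2$ of them), invoke Theorem~\ref{thm:expectation-same-color}, and apply Markov's inequality. Your explicit remark that this set depends only on $G$ and $W$, and not on the random fans built from $\varphi_b$, is a useful clarification that the paper leaves implicit.
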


\begin{proof}
Let $C(u) := \{w\in W\setminus\{u\}:\dist(u,w)\le2\}$. The center of every fan conflicting with $F_u$ belongs to $C(u)$, and $|C(u)|\le\Delta^2$. Hence $N_u \le \left| \{w\in C(u):\varphi_b(w)=\varphi_b(u)\} \right|$. By Theorem~\ref{thm:expectation-same-color}, $\mathbb{E}[N_u]<2\Delta$. Markov's inequality therefore gives, for every $a\ge1$,
\[
    \Pr[N_u\ge a\Delta]
    \le \frac{\mathbb{E}[N_u]}{a\Delta}
    < \frac{2\Delta}{a\Delta}  =  \frac{2}{a}.
\]
Setting $a=20$ yields $\Pr[N_u\ge20\Delta] < 1/10$, and consequently $\Pr[N_u<20\Delta] > 9/10$.
\end{proof}

\begin{lemma}\label{lem:randomize-missing-colors-runtime}
The Randomize Missing Colors procedure, including the preprocessing that constructs the subpalette $\Gamma$ and the subset of centers $W$, can be implemented in $O(m\log n\log\Delta)$ work and $O(\log n\log\Delta)$ span.
\end{lemma}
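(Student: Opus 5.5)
The procedure splits into the one-time preprocessing and $b=O(\log\Delta)$ phases that run one after another. I will show that each phase costs $O(m\log n)$ work and $O(\log n)$ span, and that the preprocessing costs no more than a single phase. Multiplying the per-phase cost by $b$ then gives the claimed totals.

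\emph{Preprocessing.} By Lemma~\ref{lem:power-two-subpalette}, computing $\Gamma$ and $W$ takes $O((|W_0|+\Delta)\log n)=O(m\log n)$ work and $O(\log n)$ span. Identifying $\Gamma$ with $\{0,1\}^b$ is done in the same sort: assign each color its rank in $\Gamma$. This gives $\operatorname{flip}_i(c)$ in $O(1)$ work through a table of size $k$. For each $i$, the matching $\mathcal{M}_i$ of Observation~\ref{obs:matching} is then available implicitly, with $O(1)$ lookup per color, at $O(\Delta)$ extra work per phase.

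\emph{One phase $i$.} First, apply Lemma~\ref{lem:parallel-pairs} to $\chi_{i-1}$ and the color-disjoint set $\mathcal{M}_i$. This builds $G'$, labels every component containing an edge, and stores for each edge a pointer to its auxiliary copy. It costs $O(m\log n)$ work and $O(\log n)$ span.

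Second, for each $u\in W$ in parallel, compute $p_i(u)$ in $O(1)$. To find $K_i(u)$, look up via $\operatorname{Color2Edge}(u,\operatorname{flip}_i(\varphi_{i-1}(u)))$ the unique incident edge with the other color of $p_i(u)$. If no such edge exists, $K_i(u)$ is the isolated vertex $u$, which receives its own private coin. Otherwise, follow the pointer to $g_p$ and read its component label.

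Third, the coins. Every component of $G'$ receives one random word, from which the coin is taken, generated in $O(1)$ work per component. The $O(m)$ components give $O(m)$ work and $O(1)$ span. The two endpoints of a shared path automatically read the same coin, as step~(1) requires.

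Fourth, flip all components whose coin is $1$. By Lemma~\ref{lem:parallel-pair-flips} this costs $O(m)$ work and $O(1)$ span and updates the data structure. Updating $\varphi_i(u)$ by step~(3) costs $O(|W|)$ work and $O(1)$ span. Correctness of the phase is already covered by Lemmas~\ref{lem:rand-parallel-bit} and~\ref{lem:invariant-miss-color}.

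\emph{Main obstacle.} The step needing the most care is keeping the data structure consistent across phases, so that $\operatorname{Color2Edge}$ lookups remain valid for the next phase. In particular, the stored missing-color information at path endpoints must be correct after the flips. Lemma~\ref{lem:parallel-pair-flips} handles exactly this, including endpoint missing-color updates, so it suffices to invoke it.

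Summing over the phases: each phase is dominated by the labeling step, $O(m\log n)$ work and $O(\log n)$ span. Over $b=O(\log\Delta)$ phases this gives $O(m\log n\log\Delta)$ work and $O(\log n\log\Delta)$ span, and the preprocessing is absorbed into these bounds.
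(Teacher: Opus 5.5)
Your proposal follows the paper's proof essentially step for step. It uses the preprocessing of Lemma~\ref{lem:power-two-subpalette}, and then, in each of the $b=O(\log\Delta)$ sequential phases, the labeling of Lemma~\ref{lem:parallel-pairs} in $O(m\log n)$ work and $O(\log n)$ span, one fair coin per relevant component (with private coins for isolated centers), simultaneous flips by Lemma~\ref{lem:parallel-pair-flips}, and an $O(|W|)$ update of $\varphi_i$.

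The one thing to tighten is your fourth step. As written, you give a coin to \emph{every} component of $G'$ and flip all those whose coin is $1$. That is a different random process from the procedure, which flips only the components $K_i(u)$ with $u\in W$; the paper draws coins only for these, after sorting the centers by component identifier. Have each center mark its component in $O(1)$ work and flip only marked components. This adds $O(|W|)$ work and $O(1)$ span and leaves your bounds intact.
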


\begin{proof}
By Lemma~\ref{lem:power-two-subpalette}, the preprocessing constructs $\Gamma$ and the subset $W$ in $O((|W_0|+\Delta)\log n)=O(m\log n)$ work and $O(\log n)$ span.

It remains to bound the $b$ randomization phases. Fix a phase $i\in[b]$. By Observation~\ref{obs:matching}, the pairs of $\mathcal{M}_i$ are pairwise color-disjoint. Hence, by Lemma~\ref{lem:parallel-pairs}, all components of the corresponding bichromatic subgraphs can be identified and labeled in $O(m\log n)$ work and $O(\log n)$ span. For each center $u\in W$, we then determine whether $K_i(u)$ is isolated in $O(1)$ work; if it is, we assign it a unique singleton component identifier, and otherwise we use the component identifier returned by Lemma~\ref{lem:parallel-pairs}. Thus every center obtains an identifier for $K_i(u)$. We sort the centers by their component identifiers and assign one independent fair coin to each distinct component. This step takes $O(m\log n)$ work and $O(\log n)$ span. The nonisolated components whose coin is $1$ are then flipped simultaneously using Lemma~\ref{lem:parallel-pair-flips}, in $O(m)$ work and $O(1)$ span, including the required data-structure updates. Isolated components require no flip. Finally, updating $\varphi_i(u)$ for every $u\in W$ takes $O(|W|)$ work and $O(1)$ span. Hence each phase costs $O(m\log n)$ work and $O(\log n)$ span. Since there are $b=O(\log\Delta)$ phases and they run sequentially, all phases together require $O(m\log n\log\Delta)$ work and $O(\log n\log\Delta)$ span. The preprocessing is dominated by these bounds, thus the claim follows.
\end{proof}

\subsection{Random Matching of the Palette}\label{subsec:random-matching}
The Randomize Missing Colors procedure, in a single round of the algorithm, randomizes for each center $u\in W$ a designated missing color $\varphi_b(u)$ over the subpalette $\Gamma$. By Theorem~\ref{thm:expectation-same-color}, within any two-hop neighborhood only $O(\Delta)$ centers in expectation share a fixed center's designated color. Next, in Step~(2), the algorithm builds the fans under these new randomized missing colors (Lemma~\ref{lem:parallel-fans}) and classifies them as trivial or nontrivial, in $O(m\log\Delta)$ work and $O(\log\Delta)$ span; this also determines, for each nontrivial fan, its ordered required pair $\pi_{e_u}$.

We now describe Step~(3), in which the algorithm samples a random matching of the palette and keeps only the centers whose required pair belongs to it. Although $\varphi_b(u)\in\Gamma$ for every center $u$, the second color $\psi(u)$ of the required pair $\pi_{e_u}=(\varphi_b(u),\psi(u))$ is determined by the fan construction and might not belong to $\Gamma$. Therefore, in this step we return to the full palette $[\Delta+1]$ and sample the random matching from all $\Delta+1$ colors. The required pairs of the resulting active centers are color-disjoint, so their alternating paths can be flipped in parallel (Lemma~\ref{lem:parallel-pairs}). We first show that each center is activated with probability $\Theta(1/\Delta)$; we then show how the random matching bounds the number of \emph{different-color} conflicting active fans. Combined with the \emph{same-color} bound $O(\Delta)$ from step~(1) via Theorem~\ref{thm:expectation-same-color}, this gives total conflicts of $O(\Delta)$ with constant probability for every active center $u$.

\paragraph{Procedure Random Matching.}
\begin{enumerate}[label=(\arabic*)]
    \item \emph{Color matching.} Let $c_1,\ldots,c_{\Delta+1}$ be a uniformly random permutation of the palette $[\Delta+1]$. For each $j=1,\ldots,\lfloor(\Delta+1)/2\rfloor$, include the unordered pair $\{c_{2j-1},c_{2j}\}$ in $\mathcal{M}$. If $\Delta+1$ is even, $\mathcal{M}$ is a perfect matching of the palette; if $\Delta+1$ is odd, the last color $c_{\Delta+1}$ is left unmatched. In either case $\mathcal{M}$ consists of pairwise color-disjoint pairs.

    \item \emph{Ordering of pairs.} Independently for every pair $\{c,c'\}\in\mathcal{M}$, order the pair uniformly at random, choosing $(c,c')$ or $(c',c)$ with probability $1/2$ each. Let $M$ denote the resulting set of ordered pairs.
\end{enumerate}

For every center $u\in W$, the ordered required pair is $\pi_{e_u}=(\varphi_b(u),\psi(u))$, where $\varphi_b(u)$ is the designated missing color at $u$ and $\psi(u)$ is the color obtained by the fan. If $\varphi_b(u)=\psi(u)$ the fan is trivial and is handled separately; otherwise $\pi_{e_u}$ is an ordered pair of two distinct colors. We call a center $u$ with a nontrivial fan \defn{$M$-active} if $\pi_{e_u}\in M$. 

\begin{lemma}\label{lem:random-matching-runtime}
Given the ordered required pair $\pi_{e_u}$ of every nontrivial center $u\in W$, the Random Matching procedure and the identification of all $M$-active centers can be implemented in $O(m\log n)$ work and $O(\log n)$ span.
\end{lemma}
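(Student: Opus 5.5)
The procedure splits into three parts: generating the random ordered matching, building a lookup table for it, and testing each nontrivial center against that table. Each part reduces to standard primitives (random numbers, sorting, table lookups). All costs are charged at the sorting rate fixed in the preliminaries: $O(N\log N)$ work and $O(\log N)$ span on $N$ items.

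\emph{Step (1): the matching.} Assign each color $c\in[\Delta+1]$ an independent random $O(\log n)$-bit word, in $O(\Delta)$ work and $O(1)$ span. Sort the colors by these words, breaking ties by color identifier, to obtain the permutation $c_1,\ldots,c_{\Delta+1}$.

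This is the one subtle point. With $\Theta(\log n)$-bit keys, all keys are distinct with high probability, and then the permutation is uniform. To get an exactly uniform permutation one can resample on collisions, or use enough bits that the bias is $1/\operatorname{poly}(n)$. Later steps use only the marginal probabilities of $\mathcal{M}$, so a $1/\operatorname{poly}(n)$ deviation can be absorbed. The paper itself treats ``a random permutation by sorting random numbers'' as given, so I will simply invoke that. The sort costs $O(\Delta\log\Delta)$ work and $O(\log\Delta)$ span. Then, in parallel over $j\le\lfloor(\Delta+1)/2\rfloor$, form the pair $\{c_{2j-1},c_{2j}\}$ and flip one independent fair coin to orient it. This gives $M$ in $O(\Delta)$ work and $O(1)$ span.

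\emph{Step (2): the lookup table.} Build an array $\operatorname{partner}[\cdot]$ indexed by colors, of size $\Delta+1$. For each ordered pair $(a,b)\in M$, set $\operatorname{partner}[a]:=b$, mark $a$ as ``first'', and mark $b$ as ``second''. An unmatched color gets a null entry. Each entry is written exactly once, since the pairs are color-disjoint, so there are no write conflicts. This costs $O(\Delta)$ work and $O(1)$ span.

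\emph{Step (3): activation test.} For every nontrivial center $u\in W$, with $\pi_{e_u}=(\alpha,\beta)$, declare $u$ $M$-active iff $\operatorname{partner}[\alpha]=\beta$ and $\alpha$ is marked ``first''. This takes $O(1)$ work per center, so $O(|W|)=O(m)$ work and $O(1)$ span. Finally, collect the $M$-active centers by compaction. Charged at sorting cost, this is $O(m\log n)$ work and $O(\log n)$ span.

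Summing the three steps, and using $\Delta=O(m)$, the whole procedure runs in $O(m\log n)$ work and $O(\log n)$ span, as the statement requires.
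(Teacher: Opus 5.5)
Your proposal is correct and follows essentially the same route as the paper: generate a uniform permutation of the palette by sorting random keys, pair consecutive colors and orient each pair with a fair coin, store each first color's outgoing partner in a color-indexed array, and test every nontrivial center in $O(1)$ work, for $O(m\log n)$ work and $O(\log n)$ span overall. Your extra ``first''/``second'' marks and the final compaction are harmless variations; the paper simply stores a null value for every color that is not the first color of an ordered pair and marks the $M$-active centers in place.
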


\begin{proof}
A uniformly random permutation of the $\Delta+1$ colors can be generated in $O(\Delta\log n)$ work and $O(\log n)$ span. Once the permutation is available, pairing consecutive colors and independently ordering every pair takes $O(\Delta)$ work and $O(1)$ span.

To identify the $M$-active centers, construct an array indexed by the colors in $[\Delta+1]$. For every ordered pair $(c,c')\in M$, store $c'$ as the outgoing partner of $c$; every color that is not the first color of an ordered pair stores a null value. Since every color belongs to at most one pair of $\mathcal{M}$, these entries can be written independently in $O(\Delta)$ work and $O(1)$ span. For every nontrivial center $u\in W$, test whether the outgoing partner of $\varphi_b(u)$ is $\psi(u)$. By definition, this holds if and only if $\pi_{e_u}=(\varphi_b(u),\psi(u))\in M$. Thus all $M$-active centers can be marked independently in $O(|W|)$ work and $O(1)$ span.  Since $|W|\le m$ and $\Delta\le m$, the total cost is $O(m\log n)$ work and $O(\log n)$ span.
\end{proof}

For an $M$-active center $u$, let $d_u$ denote the number of $M$-active centers $w\neq u$ such that $F_w$ conflicts with $F_u$ and $\varphi_b(w)=\varphi_b(u)$, and let $\bar{d}_u$ denote the number of such centers with $\varphi_b(w)\neq\varphi_b(u)$. Let $D_u:=d_u+\bar{d}_u$ be the total number of $M$-active centers whose fans conflict with $F_u$. The two terms are controlled by different sources of randomness: the missing-color randomization of the previous subsection bounds $d_u$, while the random matching of the palette bounds $\bar{d}_u$. We will show that, with constant probability, $D_u=O(\Delta)$.

\begin{lemma}\label{lem:M-active-prob}
For every fixed center $u\in W$ with a nontrivial fan,
\[
    \frac{1}{2(\Delta+1)}
    \le \Pr[u\text{ is $M$-active}]
    \le \frac{1}{2\Delta},
\]
so $\Pr[u\text{ is $M$-active}]=\Theta(1/\Delta)$.
\end{lemma}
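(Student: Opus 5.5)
The plan is to condition on everything that is determined before the Random Matching procedure is run. By construction, the permutation and the pair orientations are sampled independently of the coins of Randomize Missing Colors and of the fan construction. So we fix $\chi_b$, the designated colors $\varphi_b$, and the fans, and hence the ordered required pair $\pi_{e_u}=(\alpha,\beta)$ with $\alpha:=\varphi_b(u)\neq\beta:=\psi(u)$, both in $[\Delta+1]$. It then suffices to show that, for any fixed ordered pair of distinct colors $(\alpha,\beta)$,
\[
\frac{1}{2(\Delta+1)}\le\Pr[(\alpha,\beta)\in M]\le\frac{1}{2\Delta}.
\]
Averaging over the conditioning then gives the lemma.

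Event $(\alpha,\beta)\in M$ holds exactly when $\{\alpha,\beta\}\in\mathcal{M}$ and the independent orientation coin of that pair selects $(\alpha,\beta)$. Hence
\[
\Pr[(\alpha,\beta)\in M]=\tfrac12\Pr[\{\alpha,\beta\}\in\mathcal{M}].
\]
To compute $\Pr[\{\alpha,\beta\}\in\mathcal{M}]$, we count positions in the uniform permutation. The pair is matched iff $\alpha$ and $\beta$ occupy positions $\{2j-1,2j\}$ for some $j\le\lfloor(\Delta+1)/2\rfloor$. The ordered position pair of $(\alpha,\beta)$ is uniform over the $(\Delta+1)\Delta$ ordered pairs of distinct positions. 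Of these, $2\lfloor(\Delta+1)/2\rfloor$ are favorable. Therefore
\[
\Pr[\{\alpha,\beta\}\in\mathcal{M}]=\frac{2\lfloor(\Delta+1)/2\rfloor}{(\Delta+1)\Delta}.
\]

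We then split by parity. If $\Delta+1$ is even, this equals $1/\Delta$, so $\Pr[(\alpha,\beta)\in M]=1/(2\Delta)$. If $\Delta+1$ is odd, it equals $\Delta/((\Delta+1)\Delta)=1/(\Delta+1)$, so $\Pr[(\alpha,\beta)\in M]=1/(2(\Delta+1))$. Both values lie in $[1/(2(\Delta+1)),\,1/(2\Delta)]$, which proves the bounds.

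The main care point is the independence justification: the activation event must depend only on the matching randomness once $\pi_{e_u}$ is fixed. No other step is delicate.
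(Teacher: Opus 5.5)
Your proof is correct and follows essentially the same route as the paper: write $\Pr[(\alpha,\beta)\in M]=\tfrac12\Pr[\{\alpha,\beta\}\in\mathcal{M}]$, then split on the parity of $\Delta+1$ to get exactly $1/(2\Delta)$ or $1/(2(\Delta+1))$. The differences are cosmetic. You count favorable position pairs in the permutation, while the paper argues that a fixed color's partner is uniform among the other $\Delta$ colors. You also make explicit the conditioning on the pre-matching randomness, which the paper leaves implicit here and only invokes later in Theorem~\ref{thm:active-conf-bound}.
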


\begin{proof}
First consider whether an unordered pair $\{\alpha,\beta\}$ is chosen by $\mathcal{M}$. When $\Delta+1$ is even, $\mathcal{M}$ is a perfect matching, so the match of any fixed color is uniform among the other $\Delta$ colors. Thus $ \Pr[\{\alpha,\beta\}\in\mathcal{M}] = 1/\Delta$.

When $\Delta+1$ is odd, $\mathcal{M}$ leaves one color unmatched. The color $\alpha$ is matched with probability $\Delta/(\Delta+1)$, and conditioned on this its match is uniform among the other $\Delta$ colors, so
\[
    \Pr[\{\alpha,\beta\}\in\mathcal{M}]
    = \frac{\Delta}{\Delta+1}\cdot\frac{1}{\Delta}
    = \frac{1}{\Delta+1}.
\]

In both cases, conditioned on $\{\alpha,\beta\}\in\mathcal{M}$, the ordering is uniform and independent, so the ordered pair equals $(\alpha,\beta)$ with probability $1/2$. Therefore
\[
    \Pr[(\alpha,\beta)\in M] =
    \begin{cases}
        \displaystyle \frac{1}{2\Delta}, & \text{if $\Delta+1$ is even},\\[6pt]
        \displaystyle \frac{1}{2(\Delta+1)}, & \text{if $\Delta+1$ is odd}.
    \end{cases}
\]

Fix $u$ and let $\pi_{e_u}=(\varphi_b(u),\psi(u))$ be its required pair with $\varphi_b(u)\neq\psi(u)$. Recall that $u$ is $M$-active if $\pi_{e_u}\in M$. Using $(\alpha,\beta)=(\varphi_b(u),\psi(u))$ and the above probabilities, we obtain
\[
    \frac{1}{2(\Delta+1)}
    \le \Pr[u\text{ is $M$-active}]
    \le \frac{1}{2\Delta}.
\]
Therefore, a center is $M$-active with probability $\Theta(1/\Delta)$.
\end{proof}

\begin{lemma}\label{lem:diff-color-conflicts}
Let $u\in W$ be a fixed center with a nontrivial fan. For every $a \ge 1$ and every $\Delta \ge3$,
\[
    \Pr\left[ \bar{d}_u\ge\frac{3a}{2}\Delta \,\middle|\, \pi_{e_u}\in M \right]
    \le \frac{1}{a}.
\]
In particular, setting $a=2$, $\Pr\left[ \bar{d}_u<3\Delta \,\middle|\, \pi_{e_u}\in M \right] \ge 1/2$.
\end{lemma}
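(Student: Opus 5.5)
We will bound the conditional expectation $\mathbb{E}[\bar d_u\mid \pi_{e_u}\in M]$ by $\tfrac{3}{2}\Delta$ and then apply Markov's inequality. The key point is that the random matching procedure (Steps~(1)--(2) of Procedure Random Matching) is sampled independently of everything determined before it. This includes the coloring after Randomize Missing Colors, the designated colors $\varphi_b$, and the fans and ordered required pairs $\pi_{e_w}$ of all centers. So we condition on all of that and treat the required pairs as fixed. Let $C'(u)$ be the set of centers $w\neq u$ with a nontrivial fan that conflicts with $F_u$ and satisfies $\varphi_b(w)\neq\varphi_b(u)$. As in Lemma~\ref{lem:const-prob-same}, every such $w$ lies within distance two of $u$, so $|C'(u)|\le\Delta^2$ (trivial centers are handled separately and do not enter $\bar d_u$). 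Then
\[
\mathbb{E}\left[\bar d_u \,\middle|\, \pi_{e_u}\in M\right]=\sum_{w\in C'(u)}\Pr\left[\pi_{e_w}\in M \,\middle|\, \pi_{e_u}\in M\right].
\]

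Write $\pi_{e_u}=(\alpha,\beta)$ and $\pi_{e_w}=(\gamma,\delta)$ with $\gamma\neq\alpha$. The next step is a case analysis of the conditional probability that $(\gamma,\delta)\in M$ given $(\alpha,\beta)\in M$. If $\{\gamma,\delta\}=\{\alpha,\beta\}$, then $\gamma\ne\alpha$ forces $(\gamma,\delta)=(\beta,\alpha)$, which is impossible because the pair is ordered only one way; the probability is $0$. If $\{\gamma,\delta\}$ shares exactly one color with $\{\alpha,\beta\}$, then color-disjointness of $\mathcal{M}$ again gives probability $0$. Otherwise $\{\gamma,\delta\}$ is disjoint from $\{\alpha,\beta\}$. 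Conditioned on $\{\alpha,\beta\}\in\mathcal{M}$, the remaining matching is a uniformly random matching (or near-perfect matching) on the other $\Delta-1$ colors. So $\gamma$ is matched to $\delta$ with probability at most $1/(\Delta-2)$, and the independent ordering halves this. The conditional probability is therefore at most $\frac{1}{2(\Delta-2)}$. We will double-check the odd/even bookkeeping: when $\Delta-1$ is odd, one color is left out, and the probability becomes $\frac{\Delta-2}{\Delta-1}\cdot\frac{1}{\Delta-2}=\frac{1}{\Delta-1}$, which is smaller. This is where the hypothesis $\Delta\ge3$ is needed.

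Summing over $C'(u)$ gives
\[
\mathbb{E}\left[\bar d_u \,\middle|\, \pi_{e_u}\in M\right]\le\frac{\Delta^2}{2(\Delta-2)}.
\]
This exceeds $\tfrac32\Delta$ only when $\Delta<3$; for $\Delta\ge3$ we have $\Delta^2/(2(\Delta-2))\le \tfrac32\Delta$ exactly when $\Delta\ge 3$. Markov's inequality then gives $\Pr[\bar d_u\ge \tfrac{3a}{2}\Delta\mid\pi_{e_u}\in M]\le 1/a$. The conditioning on the earlier randomness is removed by averaging, since the bound holds for every fixed outcome of it. The case $a=2$ yields the stated $1/2$ bound.

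The main obstacle is justifying the independence claim cleanly. The event $\pi_{e_u}\in M$ depends on $\pi_{e_u}$, which is itself random through $\varphi_b$. We therefore condition first on $\mathcal{F}_b$ (which fixes all fans and required pairs) and only then condition on $\pi_{e_u}\in M$, so that we are dealing purely with the matching's randomness. A lesser subtlety is the $\Delta-2$ denominator: if it turns out to be slightly off, we would absorb the discrepancy by using $\Pr[(\gamma,\delta)\in M\mid(\alpha,\beta)\in M]\le 1/(2(\Delta-2))$ together with $|C'(u)|\le\Delta^2$ and checking the constant at $\Delta=3$ directly.
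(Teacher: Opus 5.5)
Your proposal is correct and follows essentially the same route as the paper. Condition on all randomness preceding $M$ and bound each of the at most $\Delta^2$ different-color conflicting centers' conditional activation probability by $0$ (shared color) or $1/(2(\Delta-2))$ (color-disjoint pairs). Summing gives $\Delta^2/(2(\Delta-2))\le 3\Delta/2$ for $\Delta\ge 3$, and Markov finishes. Your explicit treatment of the reversed pair $(\beta,\alpha)$, of the odd-palette case, and of the averaging over the earlier randomness is slightly more careful than the paper's write-up, but the argument is the same.
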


\begin{proof}
Condition on the entire outcome of the previous randomization process and of the fan construction; all designated colors, fans, and ordered required pairs are then fixed, and the only remaining randomness is $M$. By Lemma~\ref{lem:M-active-prob},
\[
    \Pr[\pi_{e_u}\in M] \ge \frac{1}{2(\Delta+1)} > 0,
\]
so we may condition on this event. Let a center $w\neq u$ be \defn{bad} if its fan $F_w$ conflicts with $F_u$ and $\varphi_b(w)\neq\varphi_b(u)$. For every bad center $w$, define the indicator random variable $X_w:=1$ if $\pi_{e_w}\in M$ and $0$ otherwise, so that $X_w=1$ when $w$ is $M$-active. Then
\[
    \bar{d}_u=\sum_{w\text{ bad}}X_w.
\]
Recall that if $F_w$ conflicts with $F_u$, then $\dist(u,w)\le 2$. Therefore, the number of bad centers is at most $\Delta^2$. Fix a bad center $w$. Since $\varphi_b(w)\neq\varphi_b(u)$, we have $\pi_{e_w}\neq\pi_{e_u}$. We first show that if the two required pairs share a color, then $w$ cannot contribute; since the pairs in $M$ are color-disjoint, two distinct required pairs that share a color cannot both belong to $M$. In all such cases, $\Pr[X_w=1\mid\pi_{e_u}\in M] = 0$.

Suppose now that the two required pairs are color-disjoint. Conditioned on $\pi_{e_u}\in M$, the two colors of $\pi_{e_u}$ are removed and the remaining colors are matched by a uniformly random maximum matching. If $\varphi_b(w)$ is matched, its partner is uniform among the $\Delta-2$ other remaining colors. Therefore
\[
    \Pr[ \{\varphi_b(w),\psi(w)\}\in\mathcal{M} \mid \pi_{e_u}\in M ]
    \le \frac{1}{\Delta-2},
\]
with the independent uniform ordering,
\[
    \Pr[X_w=1\mid\pi_{e_u}\in M] \le \frac{1}{2(\Delta-2)}.
\]

Summing over the at most $\Delta^2$ bad centers and using linearity of expectation, we have
\[
\begin{aligned}
    \mathbb{E}\left[ \bar{d}_u \,\middle|\, \pi_{e_u}\in M \right]
    &= \sum_{w\text{ bad}} \Pr[X_w=1\mid\pi_{e_u}\in M]\\
    &\le \frac{\Delta^2}{2(\Delta-2)}\\
    &\le \frac{3\Delta}{2},
\end{aligned}
\]
where the last inequality holds for $\Delta \ge3$. By Markov's inequality,
\[
\begin{aligned}
    \Pr\left[ \bar{d}_u\ge\frac{3a}{2}\Delta \,\middle|\, \pi_{e_u}\in M \right]
    &\le \frac{ \mathbb{E}[ \bar{d}_u \mid \pi_{e_u}\in M ] }{(3a/2)\Delta}\\
    &\le \frac{(3/2)\Delta}{(3a/2)\Delta}\\
    & = \frac{1}{a}.
\end{aligned}
\]
Finally, setting $a=2$ and taking the complementary event yields
\[
    \Pr[ \bar{d}_u<3\Delta  \mid\pi_{e_u}\in M ] \ge \frac{1}{2}.
\]
\end{proof}

\begin{theorem}\label{thm:active-conf-bound}
For every fixed center $u\in W$ with a nontrivial fan and every $\Delta\ge3$,
\[
\Pr\left[ D_u<23\Delta \,\middle| \,  \pi_{e_u}\in M \right] \ge 2/5.
\]
Moreover,
\[
    \Pr\left[ \pi_{e_u}\in M \text{ and } D_u<23\Delta \right]
    \ge \frac{1}{5(\Delta+1)}.
\]
\end{theorem}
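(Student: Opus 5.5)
The plan is to split $D_u=d_u+\bar d_u$ and bound the two terms separately: $d_u$ by the missing-color randomization (Lemma~\ref{lem:const-prob-same}) and $\bar d_u$ by the random matching (Lemma~\ref{lem:diff-color-conflicts}). A union bound then combines them, all conditioned on the event $\mathcal{E}:=\{\pi_{e_u}\in M\}$. Since $20\Delta+3\Delta=23\Delta$, it suffices to show $\Pr[d_u\ge 20\Delta\mid\mathcal{E}]<1/10$ and $\Pr[\bar d_u\ge 3\Delta\mid\mathcal{E}]\le 1/2$. Then $\Pr[D_u<23\Delta\mid\mathcal{E}]>1-1/10-1/2=2/5$.

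For the same-color term, first note that $d_u\le N_u$ deterministically. Indeed, every $M$-active $w$ counted in $d_u$ has $\varphi_b(w)=\varphi_b(u)$ and a fan conflicting with $F_u$, so it is counted in $N_u$. The key step is that $\mathcal{E}$ is independent of everything produced before Step~3, namely the coins of Randomize Missing Colors and hence the fans, required pairs, and $N_u$. Condition on any outcome of that earlier randomness, under which $u$ is nontrivial with ordered pair $(\alpha,\beta)$ for some $\alpha\neq\beta$. The matching $M$ is sampled freshly, and the computation in the proof of Lemma~\ref{lem:M-active-prob} shows $\Pr[(\alpha,\beta)\in M]$ equals $1/(2\Delta)$ or $1/(2(\Delta+1))$. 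This value depends only on the parity of $\Delta+1$, not on $(\alpha,\beta)$. So $\Pr[\mathcal{E}\mid\text{earlier outcome}]$ is a constant $q$, and $\mathcal{E}$ is independent of the earlier $\sigma$-field. Consequently $\Pr[N_u\ge 20\Delta\mid\mathcal{E}]=\Pr[N_u\ge20\Delta]<1/10$ by Lemma~\ref{lem:const-prob-same}.

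For the different-color term, the proof of Lemma~\ref{lem:diff-color-conflicts} in fact shows $\Pr[\bar d_u\ge3\Delta\mid\mathcal{E},\text{earlier outcome}]\le 1/2$ pointwise for every fixed earlier outcome (with $\Delta\ge3$). Averaging over earlier outcomes uses weights $\Pr[\text{outcome}\mid\mathcal{E}]$, which by the independence just shown equal the unconditional weights. So $\Pr[\bar d_u\ge3\Delta\mid\mathcal{E}]\le1/2$. The union bound then gives the first claim. For the second claim, multiply: $\Pr[\mathcal{E}\wedge D_u<23\Delta]=\Pr[\mathcal{E}]\cdot\Pr[D_u<23\Delta\mid\mathcal{E}]\ge \frac{1}{2(\Delta+1)}\cdot\frac25=\frac{1}{5(\Delta+1)}$, using the lower bound of Lemma~\ref{lem:M-active-prob}.

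The main obstacle is the conditioning subtlety in the same-color term. Lemma~\ref{lem:const-prob-same} is an unconditional statement, and a naive ratio bound $\Pr[N_u\ge20\Delta,\mathcal{E}]/\Pr[\mathcal{E}]$ would lose a factor $(\Delta+1)/\Delta$, which is too much for the constant $2/5$. The fix is to observe that the activation probability of a fixed ordered pair of distinct colors is exactly the same for every such pair. This makes $\mathcal{E}$ exactly independent of the earlier randomness, so no factor is lost.
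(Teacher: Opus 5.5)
Your proposal is correct and follows the paper's proof essentially step for step. You bound the same-color term through $d_u\le N_u$ and Lemma~\ref{lem:const-prob-same}, using the exact independence of $\{\pi_{e_u}\in M\}$ from the earlier randomness (since the activation probability of a fixed ordered pair of distinct colors depends only on $\Delta$); you bound $\bar d_u$ via Lemma~\ref{lem:diff-color-conflicts}; then a union bound gives $2/5$, and multiplying by $\Pr[\pi_{e_u}\in M]\ge 1/(2(\Delta+1))$ gives the second claim. Your extra averaging step for $\bar d_u$ is fine, though independence is not needed there: a bound that holds pointwise for every earlier outcome survives any convex combination of those outcomes.
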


\begin{proof}
By Lemma~\ref{lem:const-prob-same}, $\Pr[N_u<20\Delta]\ge 9/10 $ and  $\Pr[N_u\ge20\Delta]\le 1/10$. Let $\mathcal{R}$ denote the entire outcome of the randomization process and the fan construction. Conditional on any fixed outcome of $\mathcal{R}$, the ordered pair $\pi_{e_u}$ is fixed. Since the fan of $u$ is nontrivial, Lemma~\ref{lem:M-active-prob} gives
\[
    \Pr[\pi_{e_u}\in M\mid\mathcal{R}] =
    \begin{cases}
        \displaystyle\frac{1}{2\Delta}, & \text{if $\Delta+1$ is even},\\[6pt]
        \displaystyle\frac{1}{2(\Delta+1)}, & \text{if $\Delta+1$ is odd}.
    \end{cases}
\]
In particular, this probability depends only on $\Delta$ and not on the outcome of $\mathcal{R}$. Hence the event $[\pi_{e_u}\in M]$ is independent of $\mathcal{R}$, and therefore is independent of the event $[N_u \ge 20\Delta]$. Consequently,
\begin{equation}\label{eq:Nu-cond}
    \Pr\left[ N_u\ge20\Delta \,\middle|\, \pi_{e_u}\in M \right]
    = \Pr[N_u\ge20\Delta]
    \le \frac{1}{10}.
\end{equation}
Recall from Lemma~\ref{lem:diff-color-conflicts} that
\begin{equation}\label{eq:dbar-cond}
    \Pr\left[ \bar{d}_u\ge3\Delta  \,\middle|\, \pi_{e_u}\in M \right]
    \le  \frac{1}{2}.
\end{equation}

We now combine \eqref{eq:Nu-cond} and \eqref{eq:dbar-cond}:
\[
\begin{aligned}
    &\Pr\left[  N_u<20\Delta  \text{ and }  \bar{d}_u<3\Delta \,\middle|\, \pi_{e_u}\in M \right]\\
    &\qquad=  1- \Pr\left[  N_u\ge20\Delta  \text{ or }  \bar{d}_u\ge3\Delta  \,\middle|\,  \pi_{e_u}\in M  \right]\\
    &\qquad\ge  1- \Pr\left[  N_u\ge20\Delta \,\middle|\,  \pi_{e_u}\in M \right] - \Pr\left[ \bar{d}_u\ge3\Delta  \,\middle|\, \pi_{e_u}\in M  \right]\\
    &\qquad\ge 1-\frac{1}{10}-\frac{1}{2}
    = \frac{2}{5}.
\end{aligned}
\]
Consider the event  $[N_u<20\Delta  \text{ and } \bar{d}_u<3\Delta]$. Since $d_u$ counts only those centers counted by $N_u$ that are additionally $M$-active, we have $d_u\le N_u$. Therefore
\[
    D_u = d_u+\bar{d}_u
    \le N_u+\bar{d}_u
    < 20\Delta+3\Delta
    = 23\Delta.
\]
Thus the event $[N_u<20\Delta\text{ and }\bar{d}_u<3\Delta]$ is contained in $[D_u<23\Delta]$, and hence
\[
\begin{aligned}
    \Pr\left[  D_u<23\Delta \,\middle|\,  \pi_{e_u}\in M \right]
    &\ge \Pr\left[  N_u<20\Delta  \text{ and } \bar{d}_u<3\Delta  \,\middle|\,  \pi_{e_u}\in M  \right]\\
    &\ge \frac{2}{5}.
\end{aligned}
\]

Finally, by Lemma~\ref{lem:M-active-prob}, $\Pr[\pi_{e_u}\in M] \ge 1/({2(\Delta+1)})$. Therefore
\[
\begin{aligned}
    \Pr\left[ \pi_{e_u}\in M \text{ and } D_u<23\Delta \right]
    &=  \Pr[\pi_{e_u}\in M]\, \Pr\left[  D_u<23\Delta \,\middle|\,  \pi_{e_u}\in M \right]\\
    &\ge \frac{1}{2(\Delta+1)} \cdot \frac{2}{5}\\
    &=
    \frac{1}{5(\Delta+1)}.
\end{aligned}
\]
\end{proof}

We have shown that a fixed center is $M$-active with probability $\Theta(1/\Delta)$ (by Lemma~\ref{lem:M-active-prob}), and conditioned on being $M$-active, it has fewer than $23\Delta$ conflicting active fans with probability at least $2/5$ (by Theorem~\ref{thm:active-conf-bound}). Recall that the remaining conflicts are bounded by Lemma~\ref{lem:path-fan-conflicts}. Fix a constant $c_0>0$ such that, for every active center $u$, the number of active centers $w$ for which $P_w$ has a path-fan conflict with $F_u$ is at most $c_0\Delta$. Thus, whenever $D_u<23\Delta$, the total number of active centers that can block $u$ is less than $(23+c_0)\Delta$. A conflict degree of $O(\Delta)$ is still too large to process all active centers in parallel. Later on, the algorithm therefore performs an independent subsampling with probability $\Theta(1/\Delta)$, followed by a deterministic conflict-removal step.

\subsection{Trivial Fans and Activated Conflicts}\label{subsec:trivial}

The previous step of the algorithm activates only \emph{nontrivial} fans, through the random matching $M$. Trivial fans need separate treatment, since the two colors in the required pair of a trivial fan are the same, so its required pair cannot belong to $M$. As in the deterministic algorithm, one could instead separate the trivial and nontrivial fans and process only the larger one. In the randomized algorithm, however, it is more convenient to activate trivial fans alongside the nontrivial fans and analyze both within a common framework, which yields a uniform success probability bound for every fixed center regardless of its fan type. Moreover, randomizing the missing color of Section~\ref{subsec:rand-miss-color-par} may cause the fan subsequently constructed at a center to be trivial. A trivial fan with center $u$ is colored by a rotation alone, so it has no alternating path of its own; its only possible conflicts are fan-fan conflicts and incoming path-fan conflicts. It may still have fan-fan conflicts with as many as $\Delta^2$ other fans, and our analysis of the previous steps does not give a corresponding reduction for trivial fans. To thin these conflicts, we mimic the effect of the random matching and independently activate each trivial center $u$ by a coin $\zeta_u$ with $\Pr[\zeta_u=1]=1/({2\Delta})$, which is on the same $\Theta(1/\Delta)$ scale as the activation probability of a nontrivial center in Lemma~\ref{lem:M-active-prob}. The coins $\{\zeta_u\}$ are mutually independent and are independent of $M$ and of all previous randomness.

Let $W_{\mathrm{triv}} := \{u\in W:F_u\text{ is trivial}\}$. Let $A_{\mathrm{nontriv}}$ denote the set of $M$-active (nontrivial) centers obtained in the previous subsection, and let $A_{\mathrm{triv}} := \{u\in W_{\mathrm{triv}}:\zeta_u=1\}$ be the activated trivial fans, and $A := A_{\mathrm{nontriv}}\cup A_{\mathrm{triv}}$ be all activated fans. Thus both trivial and nontrivial centers are activated with probability $\Theta(1/\Delta)$, and the sampling and conflict-removal step of the next subsection will be applied to the entire set $A$. However, we first need to show that this independent activation reduces the fan-fan conflicts of a trivial center from $O(\Delta^2)$ to $O(\Delta)$ with constant probability.

For every center $u\in W$ let  $J_u := \bigl| \{w\in A\setminus\{u\}: V(F_w)\cap V(F_u)\neq\varnothing\} \bigr|$ denote the total number of active centers whose fans conflict with $F_u$, and let $T_u := \bigl| \{w\in A_{\mathrm{triv}}\setminus\{u\}:  V(F_w)\cap V(F_u)\neq\varnothing\} \bigr|$ denote the number of such centers that are trivial.

\begin{lemma}\label{lem:trivial-fan-conflicts}
Let $u\in W_{\mathrm{triv}}$ be a fixed \emph{trivial} center. Then  $\Pr[  J_u<5\Delta \mid  u\in A_{\mathrm{triv}}  ]  \ge  9/10$. In other words, conditioned on the trivial center $u$ being active, with probability at least $9/10$, less than $5\Delta$ other active centers have fans that conflict with $F_u$.
\end{lemma}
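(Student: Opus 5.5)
We bound $\mathbb{E}[J_u\mid u\in A_{\mathrm{triv}}]$ by $\Delta/2$ and then apply Markov's inequality, which gives $\Pr[J_u\ge 5\Delta\mid u\in A_{\mathrm{triv}}]\le 1/10$.

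First condition on the outcome $\mathcal{R}$ of the missing-color randomization and the fan construction. This fixes which centers are trivial, all fans $F_w$, and all required pairs. The remaining randomness is then the matching $M$ and the coins $\{\zeta_w\}$, which are mutually independent and independent of $\mathcal{R}$. The event $u\in A_{\mathrm{triv}}$ is the event $\zeta_u=1$, so it is independent of $M$ and of all $\zeta_w$ with $w\neq u$. Hence, conditioning on $u$ being active does not change the distribution of the other activation indicators.

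Let $C(u)$ be the set of centers $w\neq u$ whose fan shares a vertex with $F_u$. Every such $w$ has $\dist(u,w)\le 2$, so $|C(u)|\le\Delta^2$. Write
\[
J_u=\sum_{w\in C(u)}\mathbf{1}[w\in A]
\]
and split the sum into trivial and nontrivial $w$.
\begin{itemize}
\item For a trivial $w$, $\Pr[w\in A\mid\mathcal{R},\zeta_u=1]=\Pr[\zeta_w=1]=1/(2\Delta)$.
\item For a nontrivial $w$, $\Pr[w\in A\mid\mathcal{R},\zeta_u=1]=\Pr[\pi_{e_w}\in M\mid\mathcal{R}]\le 1/(2\Delta)$ by Lemma~\ref{lem:M-active-prob}.
\end{itemize}
This step uses two facts: the probability in Lemma~\ref{lem:M-active-prob} depends only on $\Delta$ and not on $\mathcal{R}$, and $M$ is independent of $\zeta_u$.

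By linearity of expectation,
\[
\mathbb{E}[J_u\mid\mathcal{R},\zeta_u=1]\le \frac{|C(u)|}{2\Delta}\le\frac{\Delta^2}{2\Delta}=\frac{\Delta}{2}.
\]
Averaging over $\mathcal{R}$, which is legitimate because $\zeta_u$ is independent of $\mathcal{R}$, gives $\mathbb{E}[J_u\mid u\in A_{\mathrm{triv}}]\le\Delta/2$. Markov's inequality then yields $\Pr[J_u\ge5\Delta\mid u\in A_{\mathrm{triv}}]\le 1/10$, which is the claim.

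The only step that needs care is the independence bookkeeping in the conditioning. One must check that conditioning on $\zeta_u=1$ does not bias the activation of neighboring nontrivial centers. It does not, because $M$ is sampled independently of all the $\zeta$ coins. Unlike the nontrivial case, no color-disjointness argument is needed here, because the trivial $u$ has no required pair in $M$ that could be conditioned on.
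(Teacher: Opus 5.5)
Your proof is correct and follows essentially the same route as the paper. Both condition on the missing-color randomization and the fan construction. Both bound the activation probability of each of the at most $\Delta^2$ potentially conflicting centers by $1/(2\Delta)$, using that $\zeta_u$ is independent of $M$ and of the other coins. Both then conclude $\mathbb{E}[J_u\mid u\in A_{\mathrm{triv}}]\le\Delta/2$ and finish with Markov's inequality.
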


\begin{proof}
Condition on the entire outcome of randomization of the missing color step and the fan construction, so that all fans and their conflicts are fixed. If $F_w$ conflicts with $F_u$, then $\dist(u,w)\le2$. Therefore, there are at most $\Delta^2$ centers $w\neq u$ whose fans can conflict with $F_u$. Fix one such center $w$. If $F_w$ is trivial, then  $\Pr[w\in A_{\mathrm{triv}}]  = 1/{(2\Delta)}$ by the definition of the coin $\zeta_w$. On the other hand, if $F_w$ is nontrivial, then by Lemma~\ref{lem:M-active-prob}, $\Pr[w\in A_{\mathrm{nontriv}}] \le 1/{(2\Delta)}$. Thus, in either case,  $\Pr[w\in A] \le 1/{(2\Delta)}$. Moreover, conditioning on $u\in A_{\mathrm{triv}}$ does not change any of these probabilities. The event $u\in A_{\mathrm{triv}}$ depends only on $\zeta_u$, whereas $\zeta_u$ is independent of $M$ and of every coin $\zeta_w$ with $w\neq u$. Hence, for every $w\neq u$ whose fan conflicts with $F_u$,  $\Pr[  w\in A \mid u\in A_{\mathrm{triv}} ] \le 1/{(2\Delta)}$. Therefore,
\[
 \mathbb{E}[   J_u  \mid u\in A_{\mathrm{triv}} ] \le \Delta^2\cdot\frac{1}{2\Delta} = \frac{\Delta}{2}.
\]

Applying Markov's inequality,
\[
\begin{aligned}
    \Pr[ J_u\ge 5 \Delta  \mid   u\in A_{\mathrm{triv}}  ]
    &\le \frac{ \mathbb{E}[ J_u \mid  u\in A_{\mathrm{triv}} ]}{5\Delta}\\
    &\le \frac{\Delta/2}{5\Delta}\\
    &= \frac{1}{10}.
\end{aligned}
\]
Taking the complementary event then gives $\Pr[  J_u<5\Delta  \mid u\in A_{\mathrm{triv}} ] \ge 1-1/10 = 9/10$.
\end{proof}

The activation of trivial fans can also introduce additional fan-fan conflicts for an active nontrivial center. The next lemma shows that this changes only the constant in the conflict bound proved in the previous section.

\begin{lemma}\label{lem:nontrivial-trivial-conflicts}
Let $u$ be a fixed \emph{nontrivial} center. Then  $\Pr[  T_u<5\Delta \mid u\in A_{\mathrm{nontriv}} ] \ge 9/10$. Informally, conditioned on the nontrivial center $u$ being active, with probability at least $9/10$, less than $5\Delta$ active trivial centers have fans that conflict with $F_u$.

Consequently, $\Pr[D_u+T_u<28\Delta \mid u\in A_{\mathrm{nontriv}} ] \ge 3/10$, i.e., conditioned on the nontrivial center $u$ being active, with probability at least $3/10$, less than $28\Delta$ active centers have fans that conflict with $F_u$.
\end{lemma}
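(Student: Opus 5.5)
We mirror the proof of Lemma~\ref{lem:trivial-fan-conflicts}, now for a nontrivial center $u$, and then combine the result with Theorem~\ref{thm:active-conf-bound} by a union bound.

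\emph{Step 1: conditioning.} Condition on the outcome $\mathcal{R}$ of the missing-color randomization and of the fan construction. All fans, their types and their required pairs are then fixed. Given $\mathcal{R}$, the remaining randomness is $M$ together with the independent coins $\{\zeta_w\}$.

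\emph{Step 2: the expectation of $T_u$.} Any center $w$ whose fan conflicts with $F_u$ satisfies $\dist(u,w)\le 2$, so there are at most $\Delta^2$ such $w\neq u$. Only trivial $w$ contribute to $T_u$, and $w\in A_{\mathrm{triv}}$ exactly when $\zeta_w=1$. The event $u\in A_{\mathrm{nontriv}}$ is determined by $M$ alone, whereas $\zeta_w$ is independent of $M$ and of $\mathcal{R}$. Hence
\[
\Pr[w\in A_{\mathrm{triv}}\mid u\in A_{\mathrm{nontriv}},\mathcal{R}]=\frac{1}{2\Delta}.
\]
By linearity of expectation, $\mathbb{E}[T_u\mid u\in A_{\mathrm{nontriv}},\mathcal{R}]\le \Delta^2/(2\Delta)=\Delta/2$.

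\emph{Step 3: the first claim.} Markov's inequality gives
\[
\Pr[T_u\ge 5\Delta\mid u\in A_{\mathrm{nontriv}},\mathcal{R}]\le \frac{1}{10}.
\]
This bound holds for every fixed $\mathcal{R}$. Averaging over $\mathcal{R}$, using that the conditional law of $\mathcal{R}$ given $u\in A_{\mathrm{nontriv}}$ is well defined, yields $\Pr[T_u<5\Delta\mid u\in A_{\mathrm{nontriv}}]\ge 9/10$.

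\emph{Step 4: the consequence.} Note that $u\in A_{\mathrm{nontriv}}$ is the same event as $\pi_{e_u}\in M$. Theorem~\ref{thm:active-conf-bound} gives $\Pr[D_u\ge 23\Delta\mid u\in A_{\mathrm{nontriv}}]\le 3/5$, and Step~3 gives $\Pr[T_u\ge 5\Delta\mid u\in A_{\mathrm{nontriv}}]\le 1/10$. A union bound then shows that, with conditional probability at least $1-3/5-1/10=3/10$, both $D_u<23\Delta$ and $T_u<5\Delta$ hold, and so $D_u+T_u<28\Delta$.

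Every active center conflicting with $F_u$ is either nontrivial, and then counted in $D_u$, or trivial, and then counted in $T_u$. Therefore $D_u+T_u$ bounds the number of active centers whose fans conflict with $F_u$, which gives the informal restatement.

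\emph{Main obstacle.} The delicate point is the conditioning in Steps~2 and~3. Theorem~\ref{thm:active-conf-bound} handles its conditioning through the independence of $\pi_{e_u}\in M$ from $\mathcal{R}$. Here we must additionally check that conditioning on $u$ being $M$-active does not bias the trivial coins $\zeta_w$. This follows because the coins $\zeta_w$ are drawn independently of both $M$ and $\mathcal{R}$, and the set of trivial centers is itself $\mathcal{R}$-measurable.
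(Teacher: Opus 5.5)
Your proposal is correct and follows the paper's proof essentially step for step. Like the paper, you condition on the missing-color randomization and fan construction, use that the trivial-activation coins are independent of $M$ to get $\mathbb{E}[T_u\mid u\in A_{\mathrm{nontriv}}]\le\Delta/2$, apply Markov, and then union-bound with Theorem~\ref{thm:active-conf-bound} to get $3/10$; your explicit averaging over $\mathcal{R}$ in Step~3 makes precise a step the paper leaves implicit.
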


\begin{proof}
Condition on the entire outcome of the missing-color randomization and the fan construction, so that all fans and their conflicts are fixed. As before, every center whose fan conflicts with $F_u$ lies within distance at most two of $u$. Hence at most $\Delta^2$ trivial centers can have fans that conflict with $F_u$. Every such trivial center $w$ is activated independently with probability $\Pr[w\in A_{\mathrm{triv}}] = 1/{(2\Delta)}$. Furthermore, the coins $\{\zeta_w\}$ used to activate the trivial centers are independent of $M$. Since the event $u\in A_{\mathrm{nontriv}}$ is determined by $M$, conditioning on $u\in A_{\mathrm{nontriv}}$ does not change the activation probability of any trivial center $w$. Therefore, $\Pr[ w\in A_{\mathrm{triv}} \mid  u\in A_{\mathrm{nontriv}} ]  =  1/{(2\Delta)}$. Similarly,
\[
\mathbb{E}[  T_u \mid u\in A_{\mathrm{nontriv}} ] \le \Delta^2\cdot\frac{1}{2\Delta}  =  \frac{\Delta}{2}.
\]

Applying Markov's inequality,
\[
\begin{aligned}
    \Pr[  T_u\ge5\Delta  \mid  u\in A_{\mathrm{nontriv}}  ]
    &\le  \frac{  \mathbb{E}[  T_u \mid  u\in A_{\mathrm{nontriv}}  ] }{5\Delta}\\
    &\le \frac{\Delta/2}{5\Delta}\\
    &=  \frac{1}{10}.
\end{aligned}
\]
Hence $\Pr[ T_u<5\Delta \mid  u\in A_{\mathrm{nontriv}} ] \ge 1-1/10 = 9/10$, which proves the first claim.

\smallskip
We now combine this bound with the conflict bound for active nontrivial centers from the previous subsection. By Theorem~\ref{thm:active-conf-bound}, $\Pr[ D_u<23\Delta \mid  u\in A_{\mathrm{nontriv}}  ] \ge  2/5$, or equivalently, $\Pr[  D_u\ge23\Delta  \mid  u\in A_{\mathrm{nontriv}} ] \le 3/5$. Then also from the first part of this lemma,
\[
\begin{aligned}
    &\Pr[  D_u<23\Delta \text{ and } T_u<5\Delta  \mid  u\in A_{\mathrm{nontriv}} ]\\
    &\qquad=   1-  \Pr[  D_u\ge23\Delta  \text{ or }  T_u\ge5\Delta  \mid  u\in A_{\mathrm{nontriv}} ]\\
    &\qquad \ge 1-\frac{3}{5}-\frac{1}{10}\\
    &\qquad = \frac{3}{10}.
\end{aligned}
\]

Whenever both $D_u<23\Delta$ and $T_u<5\Delta$ hold, $D_u$ counts less than $23\Delta$ active nontrivial centers whose fans conflict with $F_u$, while $T_u$ counts less than $5\Delta$ active trivial centers whose fans conflict with $F_u$. Since $A_{\mathrm{nontriv}}$ and $A_{\mathrm{triv}}$ are disjoint, together these account for all active centers whose fans conflict with $F_u$. Therefore, $D_u+T_u < 23\Delta+5\Delta = 28\Delta$. Since the simultaneous event $D_u<23\Delta$ and $T_u<5\Delta$ occurs with probability at least $3/10$ conditioned on $u\in A_{\mathrm{nontriv}}$, we conclude that $\Pr[ D_u+T_u<28\Delta \mid u\in A_{\mathrm{nontriv}}  ]  \ge 3/10$. This proves the second claim of the lemma.
\end{proof}

We can now combine the trivial and nontrivial cases into a common bound as needed for the next step of the analysis. For an active center $u\in A$, whether its fan is trivial or nontrivial, let $B_u$ denote the number of active centers that can block $u$. Namely, $B_u$ counts the active centers $w\neq u$ for which $F_w$ has a fan-fan conflict with $F_u$ or $P_w$ has a path-fan conflict with $F_u$.

\begin{lemma}\label{lem:active-conflict-degree}
There is a constant $c_1>0$ such that, for every fixed center $u\in W$, whether $F_u$ is trivial or nontrivial,
\[
\Pr[B_u<c_1\Delta\mid u\in A]\ge \frac{3}{10}.
\]
In other words, conditioned on $u$ being active, with probability at least $3/10$, it has fewer than $c_1\Delta$ conflicts that can prevent it from being colored in parallel.
\end{lemma}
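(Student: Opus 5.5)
The plan is to decompose $B_u$ into fan-fan blockers and path-fan blockers, bound each by earlier results, and then split on whether $F_u$ is trivial or nontrivial. Write $B_u\le J_u+Q_u$. Here $J_u$ is the number of active centers whose fans intersect $F_u$, as defined in Section~\ref{subsec:trivial}. $Q_u$ is the number of active centers $w$ whose path $P_w$ has a path-fan conflict toward $u$.

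The path-fan term is deterministic. The active nontrivial centers have required pairs in the ordered matching $M$, whose underlying unordered pairs form the color-disjoint set $\mathcal{M}$. So Lemma~\ref{lem:path-fan-conflicts} applies with $\mathcal{P}=\mathcal{M}$ and gives $Q_u\le c_0\Delta$ for every outcome of the randomness. This uses the constant $c_0$ already fixed after Theorem~\ref{thm:active-conf-bound}. Active trivial centers have $P_w=\varnothing$ and contribute nothing to $Q_u$. The bound holds pointwise, so it survives any conditioning.

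For the fan-fan term, split on the type of $F_u$. This type is determined by the missing-color randomization and the fan construction. Those outcomes precede $M$ and the coins $\zeta$, so we may condition on them as the paper does.

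If $F_u$ is trivial, then $u\in A$ means $u\in A_{\mathrm{triv}}$. Lemma~\ref{lem:trivial-fan-conflicts} gives $\Pr[J_u<5\Delta\mid u\in A]\ge 9/10\ge 3/10$.

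If $F_u$ is nontrivial, then $u\in A$ means $u\in A_{\mathrm{nontriv}}$, and $J_u=D_u+T_u$. The active centers whose fans meet $F_u$ are either nontrivial, counted by $D_u=d_u+\bar d_u$, or trivial, counted by $T_u$. Lemma~\ref{lem:nontrivial-trivial-conflicts} gives $\Pr[D_u+T_u<28\Delta\mid u\in A]\ge 3/10$. The $\Delta\ge3$ hypothesis of Theorem~\ref{thm:active-conf-bound} should be noted. For $\Delta\le 2$ the claim is trivial: $B_u\le J_u+Q_u\le \Delta^2+c_0\Delta=O(\Delta)$ always, so choosing $c_1$ large absorbs this case.

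Setting $c_1:=28+c_0+4$, in both cases $B_u<c_1\Delta$ holds whenever the relevant fan-fan event occurs. This gives the claim.

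The main point needing care is the path-fan term. Lemma~\ref{lem:path-fan-conflicts} counts path-fan conflicts from paths of \emph{active} u-edges under a color-disjoint active set. Here the active set mixes trivial centers, which have no paths, with $M$-active ones, and $M$ may leave one color unmatched. So I would verify that its hypothesis is exactly met by $\mathcal{M}$. I would also check that fan-fan and path-fan blockers may overlap, so $B_u\le J_u+Q_u$ is an upper bound rather than an equality. Neither issue affects the bound.
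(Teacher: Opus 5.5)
Your proposal is correct and follows the paper's proof. You bound the path-fan blockers pointwise by $c_0\Delta$ via Lemma~\ref{lem:path-fan-conflicts} (applied with the color-disjoint set $\mathcal{M}$), then split on whether $F_u$ is trivial or nontrivial and invoke Lemma~\ref{lem:trivial-fan-conflicts} or Lemma~\ref{lem:nontrivial-trivial-conflicts}, with $c_1=28+c_0$ up to your extra slack. Your explicit treatment of $\Delta\le 2$ is a small addition the paper leaves implicit, since it silently needs $\Delta\ge3$ through Theorem~\ref{thm:active-conf-bound}. One caution on wording: in the nontrivial case you should not literally condition on the full outcome of the missing-color randomization. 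The same-color bound behind Lemma~\ref{lem:nontrivial-trivial-conflicts} (via Lemma~\ref{lem:const-prob-same}) holds only on average over that randomness, so simply cite the lemma as stated.
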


\begin{proof}
By Lemma~\ref{lem:path-fan-conflicts}, there is a constant $c_0>0$ such that every center $u$ has at most $c_0\Delta$ active blockers caused by path-fan conflicts oriented from $w$ to $u$. It therefore remains to bound the fan-fan conflicts. Suppose first that $F_u$ is nontrivial. By Lemma~\ref{lem:nontrivial-trivial-conflicts}, conditioned on $u\in A$, the total number of active fan-fan conflicts of $u$ is less than $28\Delta$ with probability at least $3/10$. Hence, in this case, $ B_u < (28+c_0)\Delta$.

Suppose instead that $F_u$ is trivial. By Lemma~\ref{lem:trivial-fan-conflicts}, conditioned on $u\in A$, the number of active fan-fan conflicts of $u$ is less than $5\Delta$ with probability at least $9/10$. Since a trivial fan has no alternating path of its own, its only path-related blockers are incoming path-fan conflicts, which are already included in the $c_0\Delta$ bound. Thus, with probability at least $9/10$, $B_u < (5+c_0)\Delta \le (28+c_0)\Delta$. Setting $ c_1:=28+c_0$ completes the proof.
\end{proof}

Therefore both trivial and nontrivial centers are activated with probability $\Theta(1/\Delta)$, and conditioned on being active, every center has only $O(\Delta)$ conflicts (or blockers) with constant probability. The next step of the algorithm applies a common sampling and conflict-removal procedure to the full active set $A$, to remove the remaining conflicts.

\begin{lemma}\label{lem:trivial-activation-runtime}
Given the classification of the fans and the set $A_{\mathrm{nontriv}}$, the activation of the trivial centers and the construction of $A=A_{\mathrm{nontriv}}\cup A_{\mathrm{triv}}$ can be implemented in $O(m)$ work and $O(\log n)$ span.
\end{lemma}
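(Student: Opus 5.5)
The argument is a direct implementation of the activation rule using only constant-depth parallel primitives plus a compaction. Assume, as the statement allows, that the classification of every fan (trivial or nontrivial) is stored with its center. Also assume $A_{\mathrm{nontriv}}$ is given as a marked subset of $W$, for instance by a boolean flag per center produced by the procedure of Lemma~\ref{lem:random-matching-runtime}.

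\emph{Step 1 (coins).} For every center $u\in W$ in parallel, check in $O(1)$ work whether $F_u$ is trivial. If it is, draw one uniformly random $O(\log n)$-bit word and set $\zeta_u:=1$ exactly when the word, read as an integer modulo a suitable range, falls in a prescribed subset of density $1/(2\Delta)$. This needs a constant number of arithmetic operations per center, in $O(1)$ span under the model's assumption on random words. Because each coin uses its own fresh word, the coins $\{\zeta_u\}$ are mutually independent and independent of $M$ and of all earlier randomness, as the analysis of Section~\ref{subsec:trivial} requires.

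\emph{Step 2 (marking and compaction).} Set an active flag for $u$ if $u$ is already marked in $A_{\mathrm{nontriv}}$, or if $F_u$ is trivial and $\zeta_u=1$. These two cases are mutually exclusive, so no write conflicts arise, and this costs $O(|W|)$ work and $O(1)$ span. To produce $A$ as an explicit array for the later sampling step, run a prefix sum over the flags and compact. This takes $O(|W|)$ work and $O(\log|W|)\le O(\log n)$ span with the standard PRAM prefix-sum algorithm, using linear work rather than the sorting bound charged in the preliminaries. Since $|W|\le m$, the total is $O(m)$ work and $O(\log n)$ span.

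\emph{Main obstacle.} The one delicate point is generating a Bernoulli$(1/(2\Delta))$ coin exactly, or closely enough, from a random word, when $2\Delta$ need not divide a power of two. I would handle this in one of two ways. The first is to note that the analysis only needs $\Pr[\zeta_u=1]=\Theta(1/\Delta)$ with value at most $1/(2\Delta)$; rounding down to a multiple of $2^{-\Theta(\log n)}$ then changes nothing, since Lemmas~\ref{lem:trivial-fan-conflicts} and~\ref{lem:nontrivial-trivial-conflicts} use only the upper bound and the constant-order lower bound. The second is to draw a uniform integer in $[2\Delta]$ by one modular reduction of a word of $\Omega(\log n)$ bits, with bias $1/\mathrm{poly}(n)$ absorbed the same way. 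Everything else is routine.
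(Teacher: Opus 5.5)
Your proposal is correct and follows essentially the same route as the paper: draw an independent coin $\zeta_u$ for each trivial center, mark the active ones, and extract $A$ with a linear-work filter. The paper filters $W_{\mathrm{triv}}$ and concatenates the result with the already-computed $A_{\mathrm{nontriv}}$, whereas you compact a single flag array over $W$, which is equivalent; your remarks on realizing a Bernoulli$(1/(2\Delta))$ coin from random words and on using a linear-work prefix sum instead of the preliminaries' sorting charge fill in details the paper leaves implicit without changing the argument.
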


\begin{proof}
For every $u\in W_{\mathrm{triv}}$, generate the independent coin $\zeta_u$ and mark $u$ active if $\zeta_u=1$. This takes $O(|W_{\mathrm{triv}}|)$ work and $O(1)$ span. The active trivial centers can then be filtered in $O(|W_{\mathrm{triv}}|)$ work and $O(\log n)$ span. Concatenating this list with the already computed set $A_{\mathrm{nontriv}}$ gives $A$ within the same bounds. Since $|W_{\mathrm{triv}}|\le |W|\le m$, the total cost is $O(m)$ work and $O(\log n)$ span.
\end{proof}

\subsection{Random Sampling and Conflict Removal}\label{subsec:random-sampling}

After the previous steps, the active set $A$ contains both trivial and nontrivial centers. Recall, for every active center $u\in A$, $B_u$ denotes the number of active centers whose conflicts can block $u$. By Lemma~\ref{lem:active-conflict-degree}, conditioned on $u$ being active, with probability at least $3/10$ we have $B_u<c_1\Delta$. Thus, with constant probability, an active center has only $O(\Delta)$ conflicts that can prevent it from being processed. Then the algorithm performs one additional independent sampling in which every active center is selected with probability $\Theta(1/\Delta)$. If $B_u<c_1\Delta$, then the expected number of centers whose conflicts can block $u$ and that are also selected is less than $1/2$.

\paragraph{Procedure Random Sampling.}
Independently sample every active center $u\in A$ with probability $p := 1/{(2c_1\Delta)}$, where all sampling decisions are mutually independent and independent of all previous randomness. Let $\mathcal{S}_0 := \{u\in A:u\text{ is sampled}\}$ be the set of sampled active centers.

After the sampling, a final deterministic conflict-removal step extracts a conflict-free subset of the sampled centers. It is worth contrasting this with the deterministic algorithm of Section~\ref{subsec:deterministic}. There, we need a worst-case guarantee on the size of the conflict-free subset, and merely orienting each conflict and deleting one endpoint gives no lower bound on how many centers remain: in the worst case, almost every center is the target of some oriented conflict and is deleted. We therefore used the large-independent-set subroutine (Lemma~\ref{lem:large-independent-set}) in that setting. However, here we need something weaker, namely a lower bound on the probability that each fixed center survives. The independent random sampling procedure provides exactly what is needed here: whenever $B_u<c_1\Delta$, with constant probability none of the centers whose conflicts are oriented toward $u$ is sampled, so $u$ survives (Lemma~\ref{lem:sampled-survival}). A simple deterministic orient-and-delete step therefore suffices.

\paragraph{Procedure Conflict Removal.}
Construct all conflicts among the centers of $\mathcal{S}_0$. For a fan-fan conflict between two sampled centers $u$ and $w$, orient the conflict according to a fixed deterministic rule (e.g. from the center with smaller identifier to the center with larger identifier). For a path-fan conflict, keep the orientation defined in Lemma~\ref{lem:path-fan-conflicts}. Thus every conflict among sampled centers is oriented toward one of its endpoints. For every sampled center $u\in\mathcal{S}_0$, check whether there exists another sampled center $w\in\mathcal{S}_0$ such that the conflict between $w$ and $u$ is oriented from $w$ to $u$. If such a center $w$ exists, remove $u$; otherwise, $u$ survives. Let $\mathcal{S}$ be the set of surviving centers. As shown later, $\mathcal{S}$ is conflict-free. 
The following lemma shows that a sampled center satisfying the bound $B_u<c_1\Delta$ survives the Conflict Removal procedure with constant probability.

\begin{lemma}\label{lem:sampled-survival}
For every fixed center $u\in W$, $\Pr[ u\in\mathcal{S} \mid  u\in\mathcal{S}_0,\,  B_u<c_1\Delta ] \ge 1/2$.
\end{lemma}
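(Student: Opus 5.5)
Condition on all randomness preceding the sampling step: the missing-color randomization, the fan construction, the random matching $M$, and the trivial-activation coins $\zeta$. Under this conditioning the active set $A$, the fans, the paths $P_w$, and hence the set of potential blockers of $u$ are all fixed. Call this set $\mathcal{B}_u\subseteq A\setminus\{u\}$: the active centers $w$ such that $F_w$ has a fan-fan conflict with $F_u$, or $P_w$ has a path-fan conflict oriented toward $u$. In particular $B_u=|\mathcal{B}_u|$ is determined, and the event $B_u<c_1\Delta$ is measurable with respect to this conditioning. It therefore suffices to prove, for every fixed outcome of the earlier randomness with $u\in A$ and $B_u<c_1\Delta$, that $\Pr[u\in\mathcal{S}\mid u\in\mathcal{S}_0]\ge 1/2$. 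Averaging over such outcomes then gives the stated conditional bound.

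Next I would check that removal of $u$ requires a sampled center of $\mathcal{B}_u$. The Conflict Removal procedure removes $u$ only if some sampled $w\neq u$ has a conflict with $u$ oriented from $w$ to $u$. A fan-fan conflict oriented toward $u$ means $w$ shares a fan vertex with $u$, so $w\in\mathcal{B}_u$. A path-fan conflict oriented toward $u$ is, by the orientation of Lemma~\ref{lem:path-fan-conflicts}, a conflict from $P_w$ to $F_u$, so again $w\in\mathcal{B}_u$. A path-fan conflict from $P_u$ to $F_w$ is oriented away from $u$ and cannot remove $u$. The remaining case is a pair with both a fan-fan and a path-fan relation, oriented in different ways; removal can only happen if $w$ is adjacent to $u$ through some relation that places it in $\mathcal{B}_u$, so $w\in\mathcal{B}_u$ in every case. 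Hence
\[
\{u\notin\mathcal{S},\ u\in\mathcal{S}_0\}\subseteq\{u\in\mathcal{S}_0\}\cap\{\exists\, w\in\mathcal{B}_u: w\in\mathcal{S}_0\}.
\]

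Finally, the sampling coins are mutually independent and independent of the earlier randomness. Conditioning on $u\in\mathcal{S}_0$ therefore does not affect the coins of the centers in $\mathcal{B}_u$, each of which is sampled with probability $p=1/(2c_1\Delta)$. A union bound gives
\[
\Pr[\exists\, w\in\mathcal{B}_u\cap\mathcal{S}_0\mid u\in\mathcal{S}_0]\le B_u\cdot p< c_1\Delta\cdot \frac{1}{2c_1\Delta}=\frac12,
\]
so $u$ survives with probability greater than $1/2$.

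The only delicate point is the conditioning bookkeeping. One has to make sure that the event $B_u<c_1\Delta$ and the set $\mathcal{B}_u$ depend only on randomness that precedes the sampling. This holds because $B_u$ is defined over all of $A$, not over $\mathcal{S}_0$. It also means the conflicts detected among sampled centers are exactly the restrictions of the fixed conflict relation on $A$, since fans and paths are computed before sampling.
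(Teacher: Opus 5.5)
Your proof is correct and follows the paper's argument essentially verbatim. You condition on all randomness preceding Random Sampling so that $A$, the conflicts, and $B_u$ are fixed, observe that only active centers counted by $B_u$ can cause $u$ to be removed, and use the independence of the sampling coins with a union bound to get a removal probability of at most $B_u p < 1/2$; your extra bookkeeping (averaging over prior outcomes, and the mixed fan-fan/path-fan orientation case) only makes explicit what the paper leaves implicit.
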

\begin{proof}
Condition on the entire outcome of all randomness preceding the Random Sampling procedure, so that the active set $A$, all conflicts, and the values $B_u$ are fixed. Suppose that $B_u<c_1\Delta$. Among the active centers $w\neq u$, only those whose conflict with $u$ is oriented toward $u$ can cause $u$ to be removed by the Conflict Removal procedure, and every such center is counted by $B_u$. Thus fewer than $c_1\Delta$ active centers can cause $u$ to be removed. Now condition further on $u\in\mathcal{S}_0$. Since the sampling decisions are mutually independent, conditioning on $u$ being sampled does not affect the sampling decision of any other active center. Hence every center that can cause $u$ to be removed is still sampled with probability $p=1/(2c_1\Delta)$. Conditioned on $u\in\mathcal{S}_0$, the event that $u$ is removed is exactly the event that at least one blocker of $u$ is also sampled. Therefore, by the union bound,
\[
\begin{aligned}
\Pr[u\text{ is removed}\mid u\in\mathcal{S}_0,\,B_u<c_1\Delta]
&\le B_u \,p\\
&<c_1\Delta\cdot\frac{1}{2c_1\Delta}=\frac{1}{2}.
\end{aligned}
\]
Therefore, $\Pr[u\in\mathcal{S}\mid u\in\mathcal{S}_0,\,B_u<c_1\Delta]\ge\frac{1}{2}$.
\end{proof}

We next show that both procedures can be implemented efficiently in parallel. The condition $B_u<c_1\Delta$ is used only in the probability analysis; neither procedure computes $B_u$ or determines which centers satisfy this condition.

\begin{lemma}\label{lem:random-sampling-runtime}
For every fixed realization of $\mathcal{S}_0$, the required bichromatic components can be identified and the Conflict Removal procedure can be implemented in $O(m \log n+\Delta^2|\mathcal{S}_0|\log n)$ work and $O(\log n)$ span. Consequently, Random Sampling, identification of the required bichromatic components, and Conflict Removal can together be implemented in $O(m\log n)$ expected work and $O(\log n)$ span.
\end{lemma}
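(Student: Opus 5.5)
The proof has two parts: a deterministic cost bound for a fixed realization of $\mathcal{S}_0$, and then an expectation over the sampling.

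\emph{Fixed $\mathcal{S}_0$.} The underlying color pairs of the nontrivial centers in $\mathcal{S}_0\subseteq A$ all lie in the ordered matching $M$, hence in the color-disjoint set $\mathcal{M}$. So Lemma~\ref{lem:parallel-pairs} applies with $\mathcal{P}:=\mathcal{M}$ (or the subset actually used by $\mathcal{S}_0$). It identifies and labels all required bichromatic components, with their endpoints and the edge-to-auxiliary-edge pointers, in $O(m\log n)$ work and $O(\log n)$ span.

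We then invoke Lemma~\ref{lem:build-conf-graph} on the set $\mathcal{S}_0$; its hypothesis holds for the same color-disjointness reason. This builds the ownership relation and generates every fan-fan and path-fan conflict record among sampled centers, deduplicated, in $O(m\log n+\Delta^2|\mathcal{S}_0|\log n)$ work and $O(\log n)$ span. The work bound is inherited because that lemma bounds the number of generated records by $O(\Delta^2|\mathcal{S}_0|)$.

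Two small additions are needed. First, each record must carry an orientation. Path-fan records are produced with a known direction (supply owner $f$ toward demand owner $e$), so we simply tag them. Fan-fan records are oriented by comparing identifiers in $O(1)$ work. Second, we must determine which centers have incoming conflicts. Every center that is the head of some oriented record writes a ``removed'' flag; since all writers write the same value, ARBITRARY CRCW handles this in $O(1)$ span and linear work in the number of records. The survivors $\mathcal{S}$ are then obtained by compaction in $O(|\mathcal{S}_0|\log n)$ work and $O(\log n)$ span. The Random Sampling step itself costs $O(|A|)$ work and $O(1)$ span using one random word per active center. This gives the first claim.

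\emph{Expected work.} The span bound $O(\log n)$ holds for every realization, so only the work needs averaging. By linearity,
\[
\mathbb{E}\bigl[\Delta^2|\mathcal{S}_0|\log n\bigr]=\Delta^2\log n\cdot p\,\mathbb{E}|A|\le \Delta^2\log n\cdot\frac{|W|}{2c_1\Delta}=O(\Delta|W|\log n).
\]
Here we use that each sampled center is drawn from $A$ independently with probability $p=1/(2c_1\Delta)$, and that $|A|\le|W|$.

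This is the main obstacle: it leaves an extra factor of $\Delta$ over $O(m\log n)$ unless we also exploit activation. We should use $\Pr[u\in A]\le 1/(2\Delta)$, which holds for nontrivial centers by Lemma~\ref{lem:M-active-prob} and for trivial centers by the choice of $\zeta_u$. Conditioning on the outcome of earlier randomness, $\mathbb{E}|A|\le |W|/(2\Delta)$. Since sampling is independent of activation,
\[
\mathbb{E}|\mathcal{S}_0|\le \frac{|W|}{4c_1\Delta^2}.
\]
Hence the active-set-dependent work is $O(|W|\log n)=O(m\log n)$ in expectation, using $|W|\le m$. Adding the deterministic $O(m\log n)$ terms yields $O(m\log n)$ expected work and $O(\log n)$ span overall.
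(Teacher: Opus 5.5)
Your proof is correct and follows the paper's argument. You apply Lemma~\ref{lem:parallel-pairs} (via color-disjointness of $\mathcal{M}$) and Lemma~\ref{lem:build-conf-graph} to the fixed $\mathcal{S}_0$, then bound $\mathbb{E}|\mathcal{S}_0|\le |W|/(4c_1\Delta^2)$ by combining $\Pr[u\in A]\le 1/(2\Delta)$ with the independent sampling probability $p$. The differences are only in implementation: you mark removed centers with idempotent concurrent writes instead of sorting removal records, and you explicitly keep each record's orientation through deduplication, which the paper leaves implicit.
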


\begin{proof}
We first consider Random Sampling. Sampling every center of $A$ independently takes $O(|A|)$ work and $O(1)$ span, and compacting the sampled centers into the array $\mathcal{S}_0$ takes $O(|A|)$ work and $O(\log n)$ span. Since $|A|\le |W|\le m$, this requires $O(m)$ work and $O(\log n)$ span. Fix a realization of $\mathcal{S}_0$. Every nontrivial center in $\mathcal{S}_0$ is $M$-active, so its underlying unordered required pair belongs to $\mathcal{M}$. Since the pairs of $\mathcal{M}$ are pairwise color-disjoint, Lemma~\ref{lem:parallel-pairs} identifies and labels all required bichromatic components in $O(m\log n)$ work and $O(\log n)$ span (trivial centers have no alternating path and therefore require no bichromatic component). Once the required bichromatic components are labeled, Lemma~\ref{lem:build-conf-graph} finds all conflicts among the centers of $\mathcal{S}_0$ and produces $O(\Delta^2|\mathcal{S}_0|)$ conflict records in $O(m \log n+ \Delta^2|\mathcal{S}_0|\log n)$ work and $O(\log n)$ span. From every conflict record, we generate in $O(1)$ work a removal record containing the center toward which the conflict is oriented. This produces $O(\Delta^2|\mathcal{S}_0|)$ removal records. Sorting these records by center identifier, marking every center that occurs, and filtering the unmarked centers of $\mathcal{S}_0$ to obtain $\mathcal{S}$ takes $O(\Delta^2|\mathcal{S}_0|\log n)$ work and $O(\log n)$ span. Hence, for every fixed realization of $\mathcal{S}_0$, once the required bichromatic components have been identified and labeled, Conflict Removal takes $O(\Delta^2|\mathcal{S}_0|\log n)$ work and $O(\log n)$ span.

It remains to bound the expected work dependent on the sampled set. Fix a center $u\in W$ and condition on the outcome of Randomize Missing Colors and the subsequent fan construction and classification, so that $F_u$ and whether it is trivial or nontrivial are fixed. Then $u$ is active with probability at most $1/(2\Delta)$. Thus, under every such conditioning, $\Pr[u\in A] \le 1/(2\Delta)$, and therefore the same bound holds unconditionally. Random Sampling selects every active center independently with probability $p=1/(2c_1\Delta)$, independently of all previous randomness. Hence
\[
\begin{aligned}
   \Pr[u\in\mathcal{S}_0]
   &=\Pr[u\in A]\cdot\frac{1}{2c_1\Delta}\\
   &\le\frac{1}{2\Delta}\cdot\frac{1}{2c_1\Delta}\\
   &=\frac{1}{4c_1\Delta^2}.
\end{aligned}
\]
By linearity of expectation,
\[
\begin{aligned}
   \mathbb{E}[|\mathcal{S}_0|]
   &=\sum_{u\in W}\Pr[u\in\mathcal{S}_0]\\
   &\le\frac{|W|}{4c_1\Delta^2}\\
   &\le\frac{m}{4c_1\Delta^2}.
\end{aligned}
\]

Therefore, the expected work dependent on the sampled set of Conflict Removal for some constants $\hat{c}$ and $\hat{c}'$ is
\[
\begin{aligned}
  \mathbb{E}\!\left[ \hat{c}\, m \log n+ \hat{c}' \, \Delta^2|\mathcal{S}_0|\log n\right]
   &=\hat{c}\,m \log n+ \hat{c}'\,\Delta^2\log n\cdot\mathbb{E}[|\mathcal{S}_0|]\\
  &=O(m\log n).
\end{aligned}
\]
Adding the $O(m)$ work of Random Sampling and the $O(m\log n)$ work for bichromatic-component identification gives $O(m\log n)$ expected work in total. Each stage has $O(\log n)$ span and there are only constant many stages, so the total span is $O(\log n)$.
\end{proof}

We now show that $\mathcal{S}$ has the two properties required for the round: it is conflict-free, so its Vizing operations can run in parallel, and every fixed center is in it with probability $\Theta(1/\Delta^2)$. Note that both hold uniformly for trivial and nontrivial fans.

\begin{theorem}\label{thm:conflict-free-set}
The set $\mathcal{S}$ is conflict-free: no two distinct centers in $\mathcal{S}$ have a fan-fan or path-fan conflict. Moreover, for every fixed center $u\in W$,
\[
    \Pr[u \in \mathcal{S}] = \Omega \! \left(\frac{1}{\Delta^2}\right).
\]
Consequently,
\[
    \mathbb{E}[|\mathcal{S}|] = \Omega \! \left(\frac{|W|}{\Delta^2} \right).
\]
\end{theorem}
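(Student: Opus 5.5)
The proof has two independent parts: a deterministic conflict-freeness argument and a probability lower bound assembled from earlier lemmas by chaining conditional probabilities.

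\emph{Conflict-freeness.} Take two distinct centers $u,w\in\mathcal{S}$ and suppose they conflict. By the Conflict Removal procedure, every conflict between two sampled centers is oriented toward one of its endpoints. For a fan-fan conflict this is the fixed identifier rule; for a path-fan conflict it is the orientation of Lemma~\ref{lem:path-fan-conflicts}. Shared-path conflicts are included, since they count as endpoint conflicts. Say the conflict points to $u$. Then $u$ has a sampled center $w$ whose conflict is oriented toward it, so $u$ is removed, a contradiction. There are no other conflict types: the active nontrivial pairs all lie in the color-disjoint matching $\mathcal{M}$, so by the discussion in Section~\ref{subsec:conflicts} path-path interactions reduce to path-fan conflicts. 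Hence $\mathcal{S}$ is conflict-free, and Lemma~\ref{lem:parallel-vizing-operations} applies.

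\emph{Probability bound.} Fix $u\in W$ and write
\[
\Pr[u\in\mathcal{S}]\ge \Pr[u\in A]\cdot\Pr[B_u<c_1\Delta\mid u\in A]\cdot\Pr[u\in\mathcal{S}_0\mid u\in A,\,B_u<c_1\Delta]\cdot\Pr[u\in\mathcal{S}\mid u\in\mathcal{S}_0,\,B_u<c_1\Delta].
\]
\begin{enumerate}
\item For the first factor, condition on the outcome of Randomize Missing Colors and the fan construction. If $F_u$ is nontrivial, Lemma~\ref{lem:M-active-prob} gives at least $1/(2(\Delta+1))$. If $F_u$ is trivial, the probability is exactly $1/(2\Delta)$. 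Either way $\Pr[u\in A]\ge 1/(2(\Delta+1))$ under every conditioning, so also unconditionally.
\item The second factor is at least $3/10$ by Lemma~\ref{lem:active-conflict-degree}.
\item The third factor equals $p=1/(2c_1\Delta)$. The sampling coin of $u$ is independent of all earlier randomness, which determines both $A$ and $B_u$.
\item The fourth factor is at least $1/2$ by Lemma~\ref{lem:sampled-survival}.
\end{enumerate}
Multiplying gives $\Pr[u\in\mathcal{S}]\ge \frac{1}{2(\Delta+1)}\cdot\frac{3}{10}\cdot\frac{1}{2c_1\Delta}\cdot\frac12=\Omega(1/\Delta^2)$. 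Linearity of expectation over $u\in W$ then yields $\mathbb{E}[|\mathcal{S}|]=\Omega(|W|/\Delta^2)$.

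\emph{Main obstacle.} The delicate step is making the conditionings in the chain compatible. Lemma~\ref{lem:sampled-survival} is stated conditioned on $u\in\mathcal{S}_0$ and $B_u<c_1\Delta$, but the chain also carries $u\in A$. This causes no trouble for two reasons:
\begin{itemize}
\item $u\in\mathcal{S}_0$ implies $u\in A$, so conditioning on $u\in A$ adds nothing once $u\in\mathcal{S}_0$ is given.
\item The proof of Lemma~\ref{lem:sampled-survival} goes through pointwise for every fixed outcome of the prior randomness satisfying $B_u<c_1\Delta$, because it uses only the independence of the other sampling coins. So it holds under any further conditioning on events measurable with respect to that prior randomness.
\end{itemize}
Lemma~\ref{lem:active-conflict-degree} needs care as well, because it was proved separately for trivial and nontrivial $u$. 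I would handle this by first conditioning on the fan type of $u$, which is determined before activation, applying the whole chain within each case, and then averaging over the two cases. The lemmas above also require $\Delta\ge 3$; for constant $\Delta$ the claimed bound is $\Omega(1)$ and must be argued separately, or absorbed into the constants.
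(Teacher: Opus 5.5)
Your proposal is correct and follows the paper's proof. Conflict-freeness comes from the orient-and-delete rule, and the probability bound uses the same four-factor chain $\frac{1}{2(\Delta+1)}\cdot\frac{3}{10}\cdot\frac{1}{2c_1\Delta}\cdot\frac12$, built from Lemmas~\ref{lem:M-active-prob}, \ref{lem:active-conflict-degree}, and \ref{lem:sampled-survival}, followed by linearity of expectation over $W$. Your extra care with the conditionings makes explicit what the paper's chain-rule step uses implicitly: $u\in\mathcal{S}_0$ implies $u\in A$, and Lemma~\ref{lem:sampled-survival} holds pointwise in the pre-sampling randomness. The paper does not split by fan type; it applies Lemma~\ref{lem:active-conflict-degree} directly to $\Pr[B_u<c_1\Delta\mid u\in A]$.
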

\begin{proof}
By construction, $\mathcal{S}$ is conflict-free; every conflict between two centers of $\mathcal{S}_0$ is oriented toward one of its endpoints, and Conflict Removal deletes that endpoint. Hence at least one endpoint of every fan-fan or path-fan conflict is removed from $\mathcal{S}_0$. Therefore no two distinct centers that remain in $\mathcal{S}$ conflict with each other.

We next bound the probability that a fixed center $u\in W$ belongs to $\mathcal{S}$. Regardless of whether $F_u$ is trivial or nontrivial, we have
\[
\begin{aligned}
    \Pr[u\in A]
    &\ge \frac{1}{2(\Delta+1)}
    && \text{(Definition of $A_{\mathrm{triv}}$ and}\\[-2pt]
    &&& \text{\phantom{(}Lemma~\ref{lem:M-active-prob})},\\
    \Pr[B_u<c_1\Delta \mid u\in A]
    &\ge \frac{3}{10}
    && \text{(Lemma~\ref{lem:active-conflict-degree})},\\
    \Pr[u\in\mathcal{S}_0 \mid u\in A,\, B_u<c_1\Delta]
    &= \frac{1}{2c_1\Delta}
    && \text{(Random Sampling)},\\
    \Pr[u\in\mathcal{S} \mid u\in\mathcal{S}_0,\, B_u<c_1\Delta]
    &\ge \frac{1}{2}
    && \text{(Lemma~\ref{lem:sampled-survival})}.
\end{aligned}
\]

To (lower) bound the probability that $u$ belongs to $\mathcal{S}$, we restrict attention to the case in which $u$ has less than $c_1\Delta$ active blockers ($B_u$). Moreover, every center that survives Conflict Removal must first have been sampled, and every sampled center must first have been active. Therefore, we can apply the chain rule to these successive events and obtain,
\[
\begin{aligned}
    \Pr[u\in\mathcal{S}]
    &\ge \Pr[  u\in A,\,  B_u<c_1\Delta,\, u\in\mathcal{S}_0,\,  u\in\mathcal{S} ]\\
    &= \Pr[u\in A]\, \Pr[B_u<c_1\Delta\mid u\in A]\\
    &\qquad\cdot \Pr[  u\in\mathcal{S}_0  \mid  u\in A,\,  B_u<c_1\Delta ]\\
    &\qquad\cdot \Pr[  u\in\mathcal{S}  \mid  u\in\mathcal{S}_0,\,  B_u<c_1\Delta ]\\
    &\ge \frac{1}{2(\Delta+1)} \cdot \frac{3}{10} \cdot \frac{1}{2c_1\Delta} \cdot \frac{1}{2}\\
    &= \frac{3}{80c_1\Delta(\Delta+1)} \,\ge\, \frac{3}{160c_1\Delta^2} = \Omega\!\left(\frac{1}{\Delta^2}\right),
\end{aligned}
\]
where the last inequality simply uses $\Delta+1\le2\Delta$. Finally, by linearity of expectation,
\[
\begin{aligned}
    \mathbb{E}[|\mathcal{S}|] &= \sum_{u\in W}\Pr[u\in\mathcal{S}]\\
    &\ge |W|\cdot\frac{3}{160c_1\Delta^2}\\
    &= \Omega\!\left(\frac{|W|}{\Delta^2}\right).
\end{aligned}
\]
\end{proof}

\subsection{Complete Algorithm and its Analysis}\label{subsec:complete-rand}
We now describe the complete randomized algorithm, given in Algorithm~\ref{alg:rand-color-match}. First, each uncolored edge is assigned a center and a designated missing color at its center. The algorithm then repeats the following until all uncolored edges are colored (lines~4--16).

At the beginning of a round, let $W_0$ be the set of centers of the currently uncolored edges. In line~5, the algorithm constructs the power-of-two subpalette $\Gamma$ and the corresponding subset $W\subseteq W_0$ by Lemma~\ref{lem:power-two-subpalette}, $|W|>|W_0|/2$ where the designated missing color of every center in $W$ belongs to $\Gamma$. The centers in $W_0\setminus W$ are not processed in the current round which are only constant fraction of the centers. The Randomize Missing Colors procedure is then applied to the centers in $W$ (line~6), so that the final designated missing color of every $u\in W$ is uniform on $\Gamma$ while the coloring remains proper and the set of uncolored edges is unchanged.

Next, the fan of every center in $W$ is constructed and classified as trivial or nontrivial, and the required pair of every nontrivial fan is determined (lines~7--8). The algorithm then samples a random ordered matching $M$ of the full palette $[\Delta+1]$ using the Random Matching procedure (line~9). A nontrivial center is activated if its ordered required pair belongs to $M$, while each trivial center is activated independently with probability $1/(2\Delta)$ (line~10). Let $A$ denote the resulting set of active centers.

In line~11, every center in $A$ is independently sampled with probability $p=1/(2c_1\Delta)$, where $c_1$ is the constant from Lemma~\ref{lem:active-conflict-degree}, and let $\mathcal{S}_0$ be the set of sampled centers. For the nontrivial centers in $\mathcal{S}_0$, the required bichromatic components are identified and labeled in parallel using Lemma~\ref{lem:parallel-pairs} (line~12). Since all their required pairs belong to the random matching $M$, their distinct required pairs are pairwise color-disjoint. The Conflict Removal procedure is then applied to $\mathcal{S}_0$ to obtain a conflict-free subset $\mathcal{S}$ (line~13). Finally, all Vizing operations corresponding to the centers in $\mathcal{S}$ are performed in parallel using Lemma~\ref{lem:parallel-vizing-operations} (line~14), so every corresponding u-edge becomes colored while properness is preserved. The next round is applied to the remaining uncolored edges.

\begin{algorithm}[t]
\caption{Randomized color-matching algorithm}\label{alg:rand-color-match}
\begin{algorithmic}[1]
\STATE \textbf{Input:} a proper partial $(\Delta+1)$-edge coloring $\chi$ with uncolored edges $U_\chi$ that form a matching
\STATE \textbf{Output:} a proper $(\Delta+1)$-edge coloring $\chi$ in which all edges of $U_\chi$ are colored
\STATE assign each $e\in U_\chi$ a center $\cen(e)$ and a designated missing color $\varphi(\cen(e))$
\WHILE{$U_\chi\neq\varnothing$}
    \STATE let $W_0:=\{\cen(e):e\in U_\chi\}$, and construct a power-of-two subpalette $\Gamma$ and $W\subseteq W_0$ with $|W|>|W_0|/2$ (Lemma~\ref{lem:power-two-subpalette})
    \STATE run \textsc{Randomize Missing Colors} on the centers of $W$ and the subpalette $\Gamma$ (Section~\ref{subsec:rand-miss-color-par})
    \STATE construct the fan $F_u$ for every $u\in W$ in parallel (Lemma~\ref{lem:parallel-fans})
    \STATE classify every $F_u$ as trivial or nontrivial, and compute the required pair $\pi_{e_u}$ of every nontrivial fan
    \STATE sample a random ordered matching $M$ of the full palette $[\Delta+1]$ (Section~\ref{subsec:random-matching})
    \STATE activate every nontrivial $u$ with $\pi_{e_u}\in M$, and independently activate every trivial $u$ with probability $1/(2\Delta)$; let $A$ be the set of active centers
    \STATE independently sample every $u\in A$ with probability $p:=1/(2c_1\Delta)$, and let $\mathcal{S}_0$ be the set of sampled centers
    \STATE identify and label the required bichromatic components of the nontrivial centers in $\mathcal{S}_0$ in parallel (Lemma~\ref{lem:parallel-pairs})
    \STATE run \textsc{Conflict Removal} on $\mathcal{S}_0$ to obtain the conflict-free set $\mathcal{S}$ (Section~\ref{subsec:random-sampling})
    \STATE perform the Vizing operations corresponding to all centers in $\mathcal{S}$ and color them  in parallel (Lemma~\ref{lem:parallel-vizing-operations})
    \STATE update $U_\chi$ to the remaining uncolored edges
\ENDWHILE
\STATE \textbf{return} $\chi$
\end{algorithmic}
\end{algorithm}

\begin{restatedRandTheoremP}
Given a proper partial $(\Delta+1)$-edge coloring whose uncolored edges form a matching, Algorithm~\ref{alg:rand-color-match} colors all of them in $O(m\Delta^2\log^2 n\log\Delta)$ work and $O(\Delta^2\log^2 n\log\Delta)$ span, both with high probability.
\end{restatedRandTheoremP}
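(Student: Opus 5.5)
The plan has three parts: a per-round cost bound, a per-round progress bound, and a high-probability bound on the number of rounds, $O(\Delta^2\log n)$. Correctness is immediate. In every round, the set $\mathcal{S}$ is conflict-free by Theorem~\ref{thm:conflict-free-set}, and the underlying pairs of its nontrivial centers lie in the color-disjoint set $\mathcal{M}$. Hence Lemma~\ref{lem:parallel-vizing-operations} colors all of $\mathcal{S}$ while preserving properness and creating no new uncolored edge. Lemma~\ref{lem:rand-parallel-bit} ensures that the randomization step also never changes $U_\chi$, so the uncolored set remains a sub-matching of the original throughout.

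\emph{Per-round cost.} Summing the previous lemmas gives the following bounds for one round:
\begin{itemize}
\item Lemma~\ref{lem:randomize-missing-colors-runtime}: $O(m\log n\log\Delta)$ work and $O(\log n\log\Delta)$ span.
\item Lemma~\ref{lem:parallel-fans}: $O(m\log\Delta)$ work.
\item Lemmas~\ref{lem:random-matching-runtime} and~\ref{lem:trivial-activation-runtime}: $O(m\log n)$ work.
\item Lemma~\ref{lem:random-sampling-runtime}: $O(m\log n+\Delta^2|\mathcal{S}_0|\log n)$ work and $O(\log n)$ span.
\item Lemma~\ref{lem:parallel-vizing-operations}: $O(m)$ work.
\end{itemize}
The span of a round is therefore deterministically $O(\log n\log\Delta)$. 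The work of a round is $O(m\log n\log\Delta+\Delta^2|\mathcal{S}_0|\log n)$.

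\emph{Progress and number of rounds.} Let $\lambda_t$ be the number of uncolored edges at the start of round $t$. Theorem~\ref{thm:conflict-free-set} together with $|W|>\lambda_t/2$ gives $\mathbb{E}[\lambda_{t+1}\mid\mathcal{H}_t]\le(1-c/\Delta^2)\lambda_t$ for some constant $c>0$, where $\mathcal{H}_t$ denotes the history up to round $t$. Iterating with the tower property yields $\mathbb{E}[\lambda_T]\le\lambda_0 e^{-cT/\Delta^2}\le n^2e^{-cT/\Delta^2}$. Take $T=C\Delta^2\log n$ with $C$ large enough. Then $\mathbb{E}[\lambda_T]\le n^{-a}$, and Markov's inequality gives $\Pr[\lambda_T\ge1]\le n^{-a}$. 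So with high probability the loop ends within $O(\Delta^2\log n)$ rounds. The span bound $O(\Delta^2\log n)\cdot O(\log n\log\Delta)$ follows, and so does the main work term $O(m\Delta^2\log^2 n\log\Delta)$.

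\emph{Sample-dependent work.} The one remaining obstacle is that the $\Delta^2|\mathcal{S}_0|\log n$ term is bounded only in expectation. I would make it high-probability as follows. Each center lies in $\mathcal{S}_0$ with probability at most $1/(4c_1\Delta^2)$. This bound holds conditionally on all prior randomness, even though the activations are not independent across centers. Summing over all rounds, $\sum_t|\mathcal{S}_0^{(t)}|$ is a sum over $T\cdot m$ center-round slots (at most $m$ centers per round). Its conditional expectations are bounded in total by $Tm/(4c_1\Delta^2)=O(m\log n)$. To apply Lemma~\ref{lem:conditional-chernoff} I need indicators that are conditionally bounded given the past. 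I would expose the random choices in rounds and, within a round, view $|\mathcal{S}_0|$ as a sum of per-round terms $Y_t\in[0,m]$ with $\mathbb{E}[Y_t\mid\text{past}]\le m/(4c_1\Delta^2)$. Because $c=m$ is large, this would be weak. A cleaner route is to order the per-center sampling coins (given $A$) as a sequence of $\{0,1\}$ variables. Each such coin has conditional mean $p\cdot\mathbf{1}[u\in A]$, but the activation indicators themselves are dependent. I would therefore bound $|A|$ separately: first bound $|A_{\mathrm{triv}}|$ by Chernoff, then treat $|A_{\mathrm{nontriv}}|$, whose correlations come from the matching $M$. The fallback is to truncate the round-count argument to the expected bound $O(m\log n)$ per round and use Markov together with the high-probability round count, accepting a union bound over rounds. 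Getting an honest $1-1/\mathrm{poly}(n)$ bound on $\sum_t\Delta^2|\mathcal{S}_0^{(t)}|\log n=O(m\Delta^2\log^2 n)$ is the step I expect to be hardest. Lemma~\ref{lem:conditional-chernoff} applied to the sampling coins, with $c=1$ and $\mu=\Theta(m\log n)$, is the first thing I would try, after conditioning on activations. Once this holds, the total work is $O(m\Delta^2\log^2 n\log\Delta)$ with high probability.
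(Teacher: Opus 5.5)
Your correctness argument, per-round cost accounting, and $O(\Delta^2\log n)$ round bound (drift inequality, tower property, Markov on $\lambda_T$) match the paper's proof. The gap is the step you flag yourself: you never prove the high-probability bound on the sample-dependent work $\sum_t\Delta^2|\mathcal{S}_0^{(t)}|\log n$. The route you discard as weak is exactly the one that works.

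Set $Y_t:=|\mathcal{S}_0^{(t)}|\in[0,m]$ for $t<T$, and set $\mathcal{S}_0^{(t)}:=\varnothing$ for rounds after termination. Since $Y_0,\dots,Y_{t-1}$ are $\mathcal{H}_t$-measurable and $\mathbb{E}[Y_t\mid\mathcal{H}_t]\le m/(4c_1\Delta^2)$, the tower property gives $\mathbb{E}[Y_t\mid Y_0,\dots,Y_{t-1}]\le m/(4c_1\Delta^2)$. Lemma~\ref{lem:conditional-chernoff} with range bound $m$, $\mu=Tm/(4c_1\Delta^2)$ and $\delta=1$ then yields
\[
\Pr\Bigl[\sum_{t<T}Y_t\ge 2\mu\Bigr]\le\exp\!\Bigl(-\frac{\mu}{3m}\Bigr)=\exp\!\Bigl(-\frac{T}{12c_1\Delta^2}\Bigr).
\]
The exponent depends only on the ratio of $\mu$ to the range, so a range as large as $m$ costs nothing. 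For $T=C\Delta^2\log n$ the exponent is $C\log n/(12c_1)$, so the failure probability is $n^{-\Omega(C)}$. The paper phrases this with $X_t:=Y_t/m\in[0,1]$ and $q=1/(4c_1\Delta^2)$. Hence $\sum_t|\mathcal{S}_0^{(t)}|=O(m\log n)$, and the sample-dependent work is $O(m\Delta^2\log^2 n)$, with high probability. A union bound with the round-count failure event finishes the theorem.

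None of your alternatives closes the gap as written.
\begin{itemize}
\item Markov's inequality, per round or on the total, gives only constant success probability. A union bound over $T$ rounds inflates the per-round bound by a polynomial factor.
\item Applying Lemma~\ref{lem:conditional-chernoff} to individual sampling coins after conditioning on activations does not fit the lemma. The lemma needs deterministic bounds $a_i$, but the conditional means $p\cdot\mathbf{1}[u\in A_t]$ are random. You would also still need a high-probability bound on $\sum_t|A_t|$, which is the same problem one level up. You leave the treatment of $|A_{\mathrm{nontriv}}|$ entirely undone.
\item Ordering per-center indicators within a round is problematic. Learning that earlier centers were sampled reveals pairs of $M$. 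Once many pairs are revealed, a later center's conditional activation probability is no longer $O(1/\Delta)$; it can be a constant. So no uniform $O(1/\Delta^2)$ conditional bound is available.
\end{itemize}
Aggregating per round avoids all intra-round dependence through linearity of expectation. That covers dependence through $M$ and through shared coins in Randomize Missing Colors, and it is why the paper takes that route.
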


\begin{proof}
We wish to prove correctness, bound the number of rounds with high probability, and then bound the total work and span. Assume $\Delta\ge3$ (other wise. Let $\lambda_t$ denote the number of uncolored edges at the beginning of round $t$, and let $W_{0,t}$ be the set of their centers. Since every uncolored edge has exactly one center, $|W_{0,t}|=\lambda_t$. Let $\mathcal{H}_t$ be the $\sigma$-field generated by all random choices made before round $t$. Thus the current partial coloring, the current set of uncolored edges, their centers, their designated missing colors, and all other data maintained by the algorithm at the beginning of round $t$ are $\mathcal{H}_t$-measurable. Moreover, $\mathcal{H}_{t-1}\subseteq\mathcal{H}_t$ for every $t \ge 1$.

\paragraph{Bounding the number of rounds.}We first bound the progress made in one round. Let $W_t\subseteq W_{0,t}$ be the subset kept by the power-of-two subpalette construction in round $t$. Conditioned on $\mathcal{H}_t$, the state at the beginning of round $t$ is fixed. Since the power-of-two subpalette construction is deterministic given this state, the resulting subset $W_t$ is also fixed. By Lemma~\ref{lem:power-two-subpalette}, $|W_t|>|W_{0,t}|/2 =\lambda_t/2$. Let $\mathcal{S}_t$ be the conflict-free set produced at the end of round $t$. By Theorem~\ref{thm:conflict-free-set}, there is an absolute constant $\eta>0$ such that, for every fixed center $u\in W_t$,
\[
    \Pr[u \in \mathcal{S}_t \mid \mathcal{H}_t] \ge \frac{\eta}{\Delta^2}.
\]
Conditioning on $\mathcal{H}_t$, the state at the beginning of round $t$ is fixed, while all random choices used during that round are fresh, so the analysis of Theorem~\ref{thm:conflict-free-set} carries over. With the constants used there, $\eta=3/(160c_1)$. By linearity of expectation,
\[
\begin{aligned}
    \mathbb{E} \bigl[ |\mathcal{S}_t| \mid \mathcal{H}_t \bigr]
    & = \sum_{u\in W_t} \Pr[u\in \mathcal{S}_t \mid \mathcal{H}_t]\\
    &\ge |W_t| \frac{\eta}{\Delta^2}\\
    & > \frac{\eta\, \lambda_t}{2\Delta^2}.
\end{aligned}
\]
Every center in $\mathcal{S}_t$ corresponds to a distinct currently uncolored edge, and Lemma~\ref{lem:parallel-vizing-operations} guarantees that all these edges become colored while no new uncolored edge is created. Therefore, $\lambda_{t+1} =\lambda_t-|\mathcal{S}_t|$. Taking conditional expectations,
\[
\begin{aligned}
    \mathbb{E}[\lambda_{t+1}\mid\mathcal{H}_t]
    &=\mathbb{E}[\lambda_{t}\mid\mathcal{H}_t]-\mathbb{E}\bigl[|\mathcal{S}_t|\mid\mathcal{H}_t\bigr]\\
    &=\lambda_t-\mathbb{E}\bigl[|\mathcal{S}_t|\mid\mathcal{H}_t\bigr]\\
    &\le \lambda_t-\frac{\eta}{2\Delta^2}\lambda_t\\
    &=\left(1-\frac{\eta}{2\Delta^2}\right)\lambda_t.
\end{aligned}
\]
Taking expectations of both sides again,
\[
\begin{aligned}
    \mathbb{E}[\lambda_{t+1}]
    &=\mathbb{E}\!\left[ \mathbb{E}[\lambda_{t+1}\mid\mathcal{H}_t] \right]\\
    &\le \mathbb{E}\!\left[ \left(1-\frac{\eta}{2\Delta^2}\right)\lambda_t \right]\\
    &= \left(1-\frac{\eta}{2\Delta^2}\right) \mathbb{E}[\lambda_t].
\end{aligned}
\]
Iterating this inequality and using $1-x\le e^{-x}$ and $\lambda_0 \le m$ gives
\[
\begin{aligned}
    \mathbb{E}[\lambda_t]
    &\le \lambda_0 \left(1-\frac{\eta}{2\Delta^2}\right)^t\\
    &\le \lambda_0 \exp\!\left(-\frac{\eta t}{2\Delta^2}\right)\\
    &\le m \exp\!\left(-\frac{\eta t}{2\Delta^2}\right).
\end{aligned}
\]
Since $\lambda_t$ is a nonnegative integer, applying Markov's inequality gives
\[
\begin{aligned}
    \Pr[\lambda_t>0]
    &=\Pr[\lambda_t\ge1]\\
    &\le \mathbb{E}[\lambda_t]\\
    &\le m \exp\!\left(-\frac{\eta t}{2\Delta^2}\right).
\end{aligned}
\]

Fix an arbitrary constant $a>0$. We show that the algorithm terminates within $O(\Delta^2\log n)$ rounds with probability at least $1-n^{-(a+1)}$. Let  $r:=\left\lceil c\Delta^2\log n\right\rceil$, where $c>0$ is a sufficiently large constant depending only on $a$. Using the bound above and the fact that $m < n^2$ for a simple graph, we obtain
\[
\begin{aligned}
    \Pr[\lambda_r>0]
    &\le m \exp \!\left(-\frac{\eta \,r}{2\Delta^2}\right)\\
    &\le n^2 \exp \!\left(-\frac{\eta\, c\log n}{2}\right)\\
    &= n^{\,2-\eta c/2}.
\end{aligned}
\]
Choose $c$ such that the exponent is at most $-(a+1)$, i.e.,  for a $c\ge 2(a+3)/\eta$, we have $\Pr[\lambda_r>0]\le n^{-(a+1)}$. Therefore, with probability at least $1-n^{-(a+1)}$, no uncolored edge remains after $r=O(\Delta^2\log n)$ rounds.

\paragraph{Bounding the work.} We next obtain a high-probability bound on the work of Conflict Removal. Lemma~\ref{lem:random-sampling-runtime} gives only an expected-work bound, while all other work in a round is deterministically bounded. It therefore remains to bound the total work dependent on the sampled-set $\mathcal{S}_{0,t}$ of Conflict Removal over all rounds. If the algorithm terminates before round $r$, we define all sets associated with subsequent rounds to be empty. Let $A_t$ be the set of active centers in round $t$, and let $\mathcal{S}_{0,t}$ be the set obtained from $A_t$ by Random Sampling. Fix $u\in W_t$. Condition temporarily on the entire outcome of the Randomize Missing Colors procedure and the subsequent fan construction and classification in round $t$. Under every such conditioning, the fan of $u$ and its classification are fixed. A fan is activated with probability at most $1/(2\Delta)$. Therefore,  $\Pr[u\in A_t\mid\mathcal{H}_t] \le 1/{(2\Delta)}$. By linearity of expectation,
\[
\begin{aligned}
   \mathbb{E}[|A_t|\mid\mathcal{H}_t]
   &=\sum_{u\in W_t}\Pr[u\in A_t\mid\mathcal{H}_t]\\
   &\le\frac{|W_t|}{2\Delta}\\
   &\le\frac{m}{2\Delta}.
\end{aligned}
\]

After $A_t$ is determined, Random Sampling selects every center of $A_t$ independently with probability $p=1/(2c_1\Delta)$, using fresh randomness independent of all previous choices. Hence, conditioning first on $A_t$,
\[
\begin{aligned}
   \mathbb{E}[|\mathcal{S}_{0,t}|\mid\mathcal{H}_t]
   &=\mathbb{E}\!\left[ \mathbb{E}[|\mathcal{S}_{0,t}|\mid A_t,\mathcal{H}_t] \mid\mathcal{H}_t \right]\\
   &=\mathbb{E}[p|A_t|\mid\mathcal{H}_t]\\
   &=p\,\mathbb{E}[|A_t|\mid\mathcal{H}_t]\\
   &\le\frac{m}{4c_1\Delta^2}.
\end{aligned}
\]

Let $X_t:= |\mathcal{S}_{0,t}| / m$ and $q:= 1/(4c_1\Delta^2)$. Then $0\le X_t\le1$ and $\mathbb{E}[X_t\mid\mathcal{H}_t]\le q$. Moreover, $X_0,\ldots,X_{t-1}$ are $\mathcal{H}_t$-measurable, since they are completely determined by the outcomes of the preceding rounds. Therefore, 
\[
\begin{aligned}
  \mathbb{E}[X_t\mid X_0,\ldots,X_{t-1}]
  &=\mathbb{E}\!\left[ \mathbb{E}[X_t\mid\mathcal{H}_t] \mid X_0,\ldots,X_{t-1} \right]\\
  &\le q.
\end{aligned}
\]

We can therefore apply Lemma~\ref{lem:conditional-chernoff} to $X_0,\ldots,X_{r-1}$ (after shifting the indices by one)  with range bound $1$, $a_i=q$ for every $i$, and $\mu=qr$. Setting $\delta=1$ gives
\[
  \Pr\!\left[ \sum_{t=0}^{r-1}X_t\ge2qr \right]
  \le \exp\!\left(-\frac{qr}{3}\right).
\]
 Since $r=\lceil c\Delta^2\log n\rceil$,
\[
qr =\frac{r}{4c_1\Delta^2} \ge\frac{c\log n}{4c_1},
\]
and hence
\[
\begin{aligned}
  \Pr\!\left[ \sum_{t=0}^{r-1}X_t\ge2qr \right]
  &\le \exp\!\left(-\frac{c\log n}{12c_1}\right)\\
   &= n^{-c/(12c_1)}.
\end{aligned}
\]
Thus, if $c\ge12c_1(a+1)$, then
\[
  \Pr\!\left[ \sum_{t=0}^{r-1}X_t\ge2qr \right]
  \le n^{-(a+1)}.
\]
Since $qr= r/(4c_1\Delta^2) =O(\log n)$, it follows that with probability at least $1-n^{-(a+1)}$,
\[
\begin{aligned}
  \sum_{t=0}^{r-1}|\mathcal{S}_{0,t}|
  &=m\sum_{t=0}^{r-1}X_t\\
  &<2mqr\\
  &=O(m\log n).
\end{aligned}
\]
Let us now bound the total work. Fix a round $t$ and a realization of $\mathcal{S}_{0,t}$, their underlying unordered pairs belong to $\mathcal{M}$ and are pairwise color-disjoint. Hence, by Lemma~\ref{lem:parallel-pairs}, all required bichromatic components can be identified and labeled in $O(m\log n)$ work and $O(\log n)$ span. Since this cost is independent of $|\mathcal{S}_{0,t}|$, we include it in the fixed deterministic per-round costs. By Lemma~\ref{lem:random-sampling-runtime}, once these components have been identified, the work depending on the sampled set for Conflict Removal is $O(m\log n+ \Delta^2|\mathcal{S}_{0,t}|\log n)$ per round, the $O(m \log n)$ term is independent of the $|\mathcal{S}_{0,t}|$ and is included among the fixed per-round costs. The remaining term $O(\Delta^2|\mathcal{S}_{0,t}|\log n)$ work, over all rounds this costs, 
\[
\begin{aligned}
  \sum_{t=0}^{r-1}O(\Delta^2|\mathcal{S}_{0,t}|\log n)
  &= O\!\left( \Delta^2\log n \sum_{t=0}^{r-1}|\mathcal{S}_{0,t}| \right)\\
  &= O(m\Delta^2\log^2 n),
\end{aligned}
\]

The remaining work of each round is $O(m\log n\log\Delta)$. Randomize Missing Colors, including the construction of the power-of-two subpalette, takes $O(m\log n\log\Delta)$ work by Lemma~\ref{lem:randomize-missing-colors-runtime}; constructing and classifying the fans and determining their required pairs takes $O(m\log\Delta)$ work by Lemma~\ref{lem:parallel-fans}; Random Matching takes $O(m\log n)$ work by Lemma~\ref{lem:random-matching-runtime}; activation and sampling take $O(m)$ work; bichromatic-component identification takes $O(m\log n)$ work by Lemma~\ref{lem:parallel-pairs}; the final Vizing operations take $O(m)$ work by Lemma~\ref{lem:parallel-vizing-operations}; and updating the remaining uncolored edges takes $O(m)$ work. Since $r=O(\Delta^2\log n)$, these fixed per-round costs contribute $O(m\Delta^2\log^2 n\log\Delta)$ work in total. Finally combining both costs, with probability at least $1-n^{-(a+1)}$, the total work is $O(m\Delta^2\log^2 n\log\Delta)$.

%%%
\paragraph{Bounding the span.}By Lemma~\ref{lem:randomize-missing-colors-runtime}, constructing the power-of-two subpalette and running Randomize Missing Colors takes $O(\log n\log \Delta)$ span. All remaining steps have $O(\log n)$ span: fan construction and classification take $O(\log\Delta)$ span, Random Matching takes $O(\log n)$ span, activation and sampling take $O(\log n)$ span, bichromatic-component identification and Conflict Removal take $O(\log n)$ span, the parallel Vizing operations take $O(1)$ span by Lemma~\ref{lem:parallel-vizing-operations}, and updating the remaining uncolored edges takes $O(1)$ span. Hence each round has span $O(\log n\log\Delta)$. On the event that the algorithm terminates within  $r=O(\Delta^2\log n)$ rounds, its total span is therefore $O( \Delta^2\log^2 n\log \Delta)$

It remains to bound the probability that either the termination bound or the work bound fails. Recall that $r=\left\lceil c\Delta^2\log n\right\rceil$. Choose $c$ sufficiently large that
\[
    c\ge \max\left\{ \frac{2(a+3)}{\eta}, 12c_1(a+1) \right\}.
\]
Then by the above analysis, the probability that the algorithm has not terminated by round $r$ is at most $n^{-(a+1)}$ and the probability $\sum_{t=0}^{r-1}|\mathcal{S}_{0,t}|$ exceeds the $O(m\log n)$ bound is also at most $n^{-(a+1)}$. Therefore, the probability that either fails is at most $n^{-(a+1)}+n^{-(a+1)} \le n^{-a}$. Thus, with probability at least $1-n^{-a}$, the algorithm terminates within $r=O(\Delta^2\log n)$ rounds, for which the claimed work and span bounds hold with high probability.

\paragraph{Correctness.} At the beginning of every round, the uncolored edges form a matching. The Randomize Missing Colors procedure preserves properness and leaves the set of uncolored edges unchanged. Fan construction, activation, sampling, bichromatic component identification, and Conflict Removal do not modify the coloring. By Theorem~\ref{thm:conflict-free-set}, the surviving set $\mathcal{S}_t$ is conflict-free. Since the required bichromatic components have been identified, Lemma~\ref{lem:parallel-vizing-operations} implies that all Vizing operations corresponding to $\mathcal{S}_t$ can be performed in parallel while preserving properness and coloring its u-edges. No new uncolored edge is created. Hence, after each round, the remaining uncolored edges remain a matching. When the algorithm terminates, no uncolored edge remains, so the resulting coloring is a proper $(\Delta+1)$-edge coloring.

\end{proof}

\subsection{Solving Problem~\ref{prob:delta-plus-one}}\label{subsec:solve_problem_1}

We solve Problem~\ref{prob:delta-plus-one} by plugging our color-matching algorithms into the parallel edge-coloring framework of Elkin and Khuzman~\cite{elkin2026efficient}. We first summarize their framework and then replace its color-reduction subroutine with our color-matching algorithms. Given a graph $G$ with maximum degree $\Delta$, their Procedure \textsc{Edge-Coloring} first partitions $E(G)$ into $\lceil\Delta/2\rceil$ edge-disjoint subgraphs of maximum degree at most two. To obtain this partition, their algorithm first augments $G$ so that it is Eulerian, computes an Eulerian cycle~\cite{atallah1984finding}, constructs the corresponding bipartite in/out graph, and then computes a $\lceil\Delta/2\rceil$-edge coloring of this bipartite graph~\cite{lev1981fast}. The color classes of this coloring define the degree-two subgraphs. This partition can be computed in $O(m\log^2 n\log\Delta)$ work and $O(\log^2 n\log\Delta)$ span after ignoring isolated vertices. Each (at most) degree-two subgraph is then colored with at most three colors using their result for graphs of maximum degree at most two. Let $h=\left \lceil \log \left \lceil \Delta/2 \right \rceil \right \rceil$ and $p=2^h$, where empty subgraphs are added so that there are exactly $p$ initial subgraphs. These subgraphs are merged pairwise through $h$ levels. At every merge, the two child colorings are first combined using disjoint palettes, producing one extra color relative to the palette bound maintained for the parent, and one invocation of their Procedure \textsc{Reduce-Color} removes this extra color. After all $h$ merge levels are completed, the resulting coloring uses either $\Delta+1$ or $\Delta+2$ colors; in the latter case, one additional invocation of Procedure \textsc{Reduce-Color} is needed to yield a proper $(\Delta+1)$-edge coloring.

There are two minor differences between their Procedure \textsc{Reduce-Color} and our color-matching algorithms. First, Procedure \textsc{Reduce-Color} begins with a complete coloring and uncolors one color class, whereas our algorithms assume that the uncolored matching is already given. We therefore first uncolor the edges of one color class, choosing a color class with the fewest edges for our deterministic algorithm. These edges form a matching, which is then passed to our color-matching algorithm. This uncoloring step costs $O(m_H)$ work and $O(\log n)$ span on a graph $H$ with $m_H$ edges, and is dominated by the color-matching call. Second, our color-matching algorithms were stated for a $(\Delta+1)$-color palette, whereas Procedure \textsc{Reduce-Color} may operate with a (slightly) larger palette. The following observation shows that this does not affect our asymptotic bounds.

\begin{observation}\label{obs:larger-palette}
Let $H$ be a graph with $m_H$ edges and maximum degree $\Delta_H$, and suppose that the available palette is $[D]$, where $D\ge\Delta_H+1$. Suppose further that $\Delta_H=O(\hat{\Delta})$ and $D=O(\hat{\Delta})$. Then the randomized and deterministic color-matching algorithms remain valid with the same asymptotic work and span bounds, with $\hat{\Delta}$ instead of $\Delta$. More precisely, the parts of the algorithms and their analyses that depend on the structure of the input graph, such as fan sizes and neighborhood sizes depend on $\Delta_H$, while the parts that depend on the palette size depend on $D$. Since both are $O(\hat{\Delta})$, all such terms are bounded by the corresponding expressions in $\hat{\Delta}$. In particular, a fixed ordered pair is selected by the random matching with probability $\Theta(1/D)=\Omega(1/\hat{\Delta})$. For the deterministic algorithm, it suffices that the uncolored matching has size $O(m_H/\hat{\Delta})$.
\end{observation}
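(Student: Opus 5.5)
The plan is to re-audit, one step at a time, every place where $\Delta$ enters the two color-matching algorithms and their analyses. Each occurrence of $\Delta$ gets tagged as either \emph{structural}, meaning it counts incident edges, fan vertices, or 2-hop neighborhoods, or \emph{palette-based}, meaning it counts colors, color pairs, or subpalette size. In the first case we replace $\Delta$ by $\Delta_H$; in the second we replace $\Delta+1$ by $D$. Correctness is not in question. Vizing's fan/path argument only needs $|\miss_\chi(v)|\ge D-\Delta_H\ge 1$, so every fan is well defined and the trivial/nontrivial case analysis before Lemma~\ref{lem:parallel-vizing-operations} goes through verbatim with palette $[D]$. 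Lemmas~\ref{lem:parallel-pairs} and~\ref{lem:parallel-pair-flips} also never use the palette size beyond color-disjointness.

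The structural bounds come next. A fan has at most $\Delta_H+1$ vertices, and a vertex lies in at most $\Delta_H+1$ fans. This gives Observation~\ref{obv:fan-fan-conflicts}, Lemma~\ref{lem:path-fan-conflicts}, Lemma~\ref{lem:conflict-graph-edges} and Lemma~\ref{lem:build-conf-graph} with $\Delta_H$ in place of $\Delta$, and $|C(u)|\le\Delta_H^2$ in the randomized analysis.

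The palette bounds follow. Lemma~\ref{lem:complete-graph-edge-coloring} applied to $K_D$ yields $\ell=O(D)$ color-disjoint classes. The subpalette has $k=2^{\lfloor\log_2 D\rfloor}>D/2$, so $b=O(\log D)$ phases are used. Theorem~\ref{thm:expectation-same-color} then gives $\mathbb{E}[m_b]\le |C|/k+1\le 2\Delta_H^2/D+1=O(\hat\Delta)$, using $D>\Delta_H$. The proof of Lemma~\ref{lem:M-active-prob} repeated on $[D]$ gives activation probability between $1/(2D)$ and $1/(2(D-1))$. In Lemma~\ref{lem:diff-color-conflicts} the expected number of different-color active conflicts is at most $\Delta_H^2/(2(D-3))=O(\hat\Delta)$. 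We set the trivial activation probability to $1/(2(D-1))$ to match. With these replacements every Markov bound above yields constants depending only on the ratios $\Delta_H/\hat\Delta$ and $D/\hat\Delta$, which are bounded. The sampling probability is then chosen as $1/(2c_1\hat\Delta)$.

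Chaining these bounds, a fixed center survives a round with probability $\Omega(1/D)\cdot\Omega(1/\hat\Delta)=\Omega(1/\hat\Delta^2)$. In the deterministic algorithm, each round colors an $\Omega(1/(\ell\Delta_H^2))=\Omega(1/\hat\Delta^3)$ fraction. The round counts, the per-round costs (where $\log D=O(\log\hat\Delta)$), and the amortizations in the proofs of Theorems~\ref{thm:deterministic} and~\ref{thm:randomized-color-matching} carry over with $m_H$ and $\hat\Delta$. For the deterministic amortization, the step $\sum_t|A_t|=O(\hat\Delta^2)\lambda_0$ requires $\lambda_0=O(m_H/\hat\Delta)$, which is exactly the stated hypothesis. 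It holds when the minimum color class of a $D$-coloring is uncolored, since that class has at most $m_H/D$ edges.

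The main obstacle is the small-case constants. Several inequalities, such as $\Delta^2/(2(\Delta-2))\le 3\Delta/2$, relied on $\Delta\ge3$ and on $D=\Delta+1$ exactly. I would handle them by assuming $D\ge 4$ and treating constant $\hat\Delta$ trivially by an $O(1)$-factor change of constants. I would also check that the conditional independence argument in Theorem~\ref{thm:active-conf-bound} only uses that the activation probability is independent of $\mathcal{R}$, which still holds on $[D]$.
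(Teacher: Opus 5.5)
Your proposal is correct and follows the same route as the paper. The paper justifies this observation only inline: it splits every $\Delta$-dependence into a structural part governed by $\Delta_H$ (fan sizes, two-hop neighborhoods, conflict counts) and a palette part governed by $D$ (number of color-disjoint classes, subpalette size, the $\Theta(1/D)$ activation probability), notes that both are $O(\hat\Delta)$, and invokes the hypothesis $\lambda_0=O(m_H/\hat\Delta)$ for the deterministic amortization. Your explicit audit, for example $\mathbb{E}[m_b]\le 2\Delta_H^2/D+1$ via $D>\Delta_H$ and the $\Delta_H^2/(2(D-3))$ different-color bound for $D\ge 4$, simply spells out what the paper leaves implicit.
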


At level $t\in\{0,\ldots,h-1\}$, their framework guarantees that the current graph $H$ has maximum degree at most $2^{t+2}$ and that, before the reduction, its coloring uses at most $2^{t+2}+2$ colors. Consider an invocation of Procedure \textsc{Reduce-Color} on a proper $k$-edge coloring of $H$. After uncoloring one color class, the remaining palette has size $ D=k-1\le 2^{t+2}+1$, and the framework guarantees that $D\ge\Delta_H+1$. We apply Observation~\ref{obs:larger-palette} with $\hat{\Delta}=D$. Since $\Delta_H<D=\hat{\Delta}$ and $D=\hat{\Delta}=O(2^t)$, the randomized and deterministic color-matching algorithms have the same asymptotic bounds with $\hat{\Delta}=O(2^t)$ replacing $\Delta$.

For the deterministic algorithm, we uncolor a color class of minimum cardinality. Since the coloring uses $k=D+1$ colors, this class contains at most $m_H/k = O(m_H/\hat{\Delta})$ edges, as required. We first analyze the final coloring algorithm for Problem~\ref{prob:delta-plus-one} when Algorithm~\ref{alg:rand-color-match} is used. The deterministic analysis is analogous and is given briefly afterwards.

\begin{theorem}\label{thm:rand-delta-plus-one}
There is a randomized parallel algorithm that, given an $n$ vertex, $m$ edge simple graph $G$ of maximum degree $\Delta$, outputs a proper $(\Delta+1)$-edge coloring in $O(m\Delta^2\log^2 n\log\Delta)$ work and $O(\Delta^2\log^2 n\log\Delta)$ span, both with high probability on ARBITRARY CRCW PRAM.
\end{theorem}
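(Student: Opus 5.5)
We plug Algorithm~\ref{alg:rand-color-match} into the Elkin--Khuzman divide-and-conquer framework summarized above and add up the costs level by level. First, we pay for the framework's preprocessing. The partition of $E(G)$ into $\lceil\Delta/2\rceil$ subgraphs of maximum degree at most two costs $O(m\log^2 n\log\Delta)$ work and $O(\log^2 n\log\Delta)$ span. Coloring each such subgraph with three colors by their degree-two result costs at most the same. Both are dominated by the target bounds.

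Next we analyze the $h=O(\log\Delta)$ merge levels. At level $t$ the subgraphs at that level are edge-disjoint, so their edge counts satisfy $\sum_H m_H\le m$. Every merge performs one invocation of \textsc{Reduce-Color}, which we implement in three steps: uncolor one color class in $O(m_H)$ work and $O(\log n)$ span, then run Algorithm~\ref{alg:rand-color-match} on the resulting matching, using Observation~\ref{obs:larger-palette} with $\hat\Delta=D=O(2^t)$. By Theorem~\ref{thm:randomized-color-matching}, each call then costs $O(m_H 4^t\log^2 n\log\Delta)$ work and $O(4^t\log^2 n\log\Delta)$ span. We run all merges of one level in parallel. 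The work of level $t$ is then $O(m4^t\log^2n\log\Delta)$ and its span is $O(4^t\log^2 n\log\Delta)$. Summing the geometric series over $t\le h-1$, with $2^h=O(\Delta)$, gives $O(m\Delta^2\log^2 n\log\Delta)$ work and $O(\Delta^2\log^2 n\log\Delta)$ span. The possible final extra \textsc{Reduce-Color} on $G$ with palette $\Delta+1$ costs the same. Correctness follows because each call to the color-matching algorithm returns a proper coloring with one fewer color, which is exactly the framework's invariant.

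The main obstacle is the probability bookkeeping. Theorem~\ref{thm:randomized-color-matching} holds ``with high probability'' for each call, but there are $O(\Delta)=O(n)$ calls in total. We handle this with a union bound. We fix the constant $a$ in the proof of Theorem~\ref{thm:randomized-color-matching} so that each call fails with probability at most $n^{-(a+2)}$, still measured in terms of the global $n$. That proof only uses $m_H<n^2$ and a round count $r=\lceil c\hat\Delta^2\log n\rceil$ with $\log n$ taken for the global $n$, so this choice is legitimate even when $H$ is small. A union bound over at most $n$ calls then gives overall failure probability at most $n^{-(a+1)}$. One more point needs care: the per-call bound must be written as $O(m_H\hat\Delta^2\log^2 n\log\hat\Delta)$ with the global $\log n$. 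This ensures that summing over the calls in a level really yields $m_t\le m$ times a common factor. Finally, calls of the same level run in parallel, so the span of a level is the maximum over its calls, not the sum.
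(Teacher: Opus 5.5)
Your proposal is correct and follows essentially the same route as the paper. You replace each \textsc{Reduce-Color} by uncoloring one color class and calling Algorithm~\ref{alg:rand-color-match} with $\hat\Delta=D=O(2^t)$ via Observation~\ref{obs:larger-palette}. You then sum per-level costs over edge-disjoint subgraphs with $\sum_t 4^t=O(\Delta^2)$, and union-bound the failure probability over fewer than $\Delta+1\le n$ calls. Your remark that each call's round count and failure bound should be measured with the global $n$ usefully spells out what the paper does implicitly; the only omission is the trivial $\Delta\le 2$ case, which the paper handles with the degree-two algorithm.
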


\begin{proof}
We use Procedure \textsc{Edge-Coloring} of Elkin and Khuzman~\cite{elkin2026efficient}, replacing every invocation of Procedure \textsc{Reduce-Color} by Algorithm~\ref{alg:rand-color-match} with some modifications described above. For $\Delta\le2$, we use their parallel algorithm for graphs of maximum degree at most two, so assume $\Delta\ge3$.

Consider an invocation of Procedure \textsc{Reduce-Color} on a proper $k$-edge coloring of a graph $H$. We uncolor the edges colored $k$, which form a matching, and run Algorithm~\ref{alg:rand-color-match} with the remaining palette $[k-1]$. By Observation~\ref{obs:larger-palette}, the algorithm remains valid for this palette and, whenever it succeeds, produces a proper $(k-1)$-edge coloring. Thus each invocation has the same input output guarantee as Procedure \textsc{Reduce-Color} but with better runtime bounds. Consequently, conditioned on the success of all color-matching calls, the correctness analysis of~\cite{elkin2026efficient} carries over unchanged. In particular, after the $h$ levels the coloring uses either $\Delta+1$ or $\Delta+2$ colors, and in the latter case one additional reduction yields a proper $(\Delta+1)$-edge coloring.

We now analyze the work and span. At level $t\in\{0,\ldots,h-1\}$, every graph $H$ processed at this level is the union of $2^{t+1}$ initial subgraphs, each of maximum degree at most two. Hence $\Delta(H)\le 2^{t+2}$. Moreover, before the reduction its coloring uses at most $2^{t+2}+2$ colors. Therefore, after uncoloring the last color class, the remaining palette has size at most $2^{t+2}+1$. Therefore, after uncoloring one color, let $D$ denote the size of the remaining palette, so that $D\le2^{t+2}+1$. Setting $\hat{\Delta}=D$, we have $\hat{\Delta}\le2^{t+2}+1=O(2^t)$. Thus, by Observation~\ref{obs:larger-palette} and Theorem~\ref{thm:randomized-color-matching}, a color-matching call on a graph $H$ with $m_H$ edges at level $t$ costs $ O(m_H4^t\log^2 n\log\Delta) $ work and $ O(4^t\log^2 n\log\Delta) $ span, where constant factors are suppressed and $\log\hat{\Delta}=O(\log\Delta)$.

The graphs processed at any fixed level are pairwise edge-disjoint, so their numbers of edges sums to $m$. Since their color-matching calls are performed in parallel, level $t$ costs $O(m4^t\log^2 n\log\Delta)$ work and $O(4^t\log^2 n\log\Delta)$ span. Since  $h=\left\lceil\log\left\lceil \Delta/ 2 \right\rceil\right\rceil$ and $2^h=O(\Delta)$, we have
\[
    \sum_{t=0}^{h-1}4^t = O(4^h) = O(\Delta^2).
\]
Hence all levels together use $O(m\Delta^2\log^2 n\log\Delta)$ work and $O(\Delta^2 \log^2 n \log\Delta)$ span. The possible one additional call to the algorithm on the full graph has the same asymptotic bounds. The initial partition into degree-two subgraphs and their colorings require $O(m\log^2 n\log\Delta)$ work and $O(\log^2 n\log\Delta)$ span, and are therefore dominated.
%%%

Finally, it remains to consider the randomness introduced by the color-matching calls. Since their framework is deterministic the randomness appears only because we replace each invocation of Procedure \textsc{Reduce-Color} by our randomized Algorithm~\ref{alg:rand-color-match}. We want to show that all of these randomized calls succeed simultaneously with high probability. There are initially $2^h$ subgraphs. At level $t\in\{0,\ldots,h-1\}$, the number of merged graphs, and hence the number of color-matching calls, is $2^h/2^{t+1}$. Therefore, the total number of color-matching calls made during the $h$ levels is
\[
\begin{aligned}
    \sum_{t=0}^{h-1}\frac{2^h}{2^{t+1}}
    &= 2^h\sum_{t=0}^{h-1}\frac{1}{2^{t+1}}\\
    &= 2^h\left(1-\frac{1}{2^h}\right)\\
    &= 2^h-1.
\end{aligned}
\]
There is also possibly one additional color-matching call after the final merge, so the total number of calls is at most $2^h$. Since $h=\left\lceil\log\left\lceil \Delta/ 2\right\rceil\right\rceil$, we have $2^h < 2\left\lceil \Delta/2 \right\rceil \le \Delta+1$. And since $G$ is simple, $\Delta<n$, and therefore the entire execution contains less than $n$ randomized color-matching calls.

Fix an arbitrary constant $c>0$. By Theorem~\ref{thm:randomized-color-matching}, the failure probability of each color-matching call can be made at most $ n^{-(c+1)}$ by choosing the constant in the number of rounds sufficiently large. Consider an ordering of the color-matching calls as $1,\ldots,\tau$, where $\tau\le 2^h<n$. For each $j\in[\tau]$, let $\mathcal{E}_j$ be the event that the first $j-1$ color-matching calls succeed and the $j$-th call fails. Conditioned on the success of the first $j-1$ calls, the $j$-th call receives a valid instance of the color-matching problem, then $\Pr[\mathcal{E}_j] \le  n^{-(c+1)}$. The event that some color-matching call fails is the union of the events $\mathcal{E}_1,\ldots,\mathcal{E}_\tau$. Therefore,
\[
\begin{aligned}
    \Pr[\text{a call fails}]
    &= \Pr\left[\bigcup_{j=1}^{\tau}\mathcal{E}_j\right]\\
    &\le \sum_{j=1}^{\tau}\Pr[\mathcal{E}_j]\\
    &\le \tau\,n^{-(c+1)}\\
    &< n\cdot n^{-(c+1)}\\
    &= n^{-c}.
\end{aligned}
\]
Thus all color-matching calls succeed simultaneously with probability at least $1-n^{-c}$ so the algorithm succeeds with high probability.
\end{proof}

\begin{theorem}\label{thm:deter-delta_plus_one}
There is a deterministic parallel algorithm that, given an $n$-vertex, $m$-edge simple graph $G$ of maximum degree $\Delta$, outputs a proper $(\Delta+1)$-edge coloring in $O(m\Delta^3\log^2 n)$ work and $O(\Delta^3\log^4 n)$ span on the ARBITRARY CRCW PRAM.
\end{theorem}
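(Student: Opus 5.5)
The plan mirrors the proof of Theorem~\ref{thm:rand-delta-plus-one}: run Procedure \textsc{Edge-Coloring} of Elkin and Khuzman~\cite{elkin2026efficient} and replace every invocation of Procedure \textsc{Reduce-Color} by Algorithm~\ref{alg:det-color-matching}. For $\Delta\le 2$ we fall back on their algorithm for graphs of maximum degree two, so we assume $\Delta\ge3$. Consider a call on a proper $k$-edge coloring of a graph $H$ with $m_H$ edges. We uncolor a color class of \emph{minimum} cardinality; it is a matching of size at most $m_H/k=O(m_H/\hat{\Delta})$, where $D=k-1$ is the remaining palette and $\hat{\Delta}:=D$. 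The framework guarantees $D\ge\Delta_H+1$, so by Observation~\ref{obs:larger-palette} Algorithm~\ref{alg:det-color-matching} is valid on this instance. Theorem~\ref{thm:deterministic} then shows the call deterministically returns a proper $(k-1)$-edge coloring. Hence each replacement has the same input/output behaviour as \textsc{Reduce-Color}, and their correctness argument carries over verbatim. After the $h$ merge levels we hold a $(\Delta+1)$- or $(\Delta+2)$-coloring, and at most one more call finishes the job. Since everything is deterministic, no union bound over calls is needed.

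For the cost, at level $t\in\{0,\ldots,h-1\}$ each merged graph $H$ has $\Delta_H\le 2^{t+2}$ and palette $D\le 2^{t+2}+1$, so $\hat{\Delta}=O(2^t)$. By Observation~\ref{obs:larger-palette} together with Theorem~\ref{thm:deterministic}, a call on $H$ costs $O(m_H 8^t\log^2 n)$ work and $O(8^t\log^4 n)$ span. The $\log n$ factors stay in terms of the global $n$, which is harmless since $H$ has at most $n$ vertices. The graphs at one level are edge-disjoint and processed in parallel, so level $t$ costs $O(m8^t\log^2 n)$ work and $O(8^t\log^4 n)$ span. Summing the geometric series with $2^h=O(\Delta)$ gives $\sum_{t<h}8^t=O(8^h)=O(\Delta^3)$, so all levels together cost $O(m\Delta^3\log^2 n)$ work and $O(\Delta^3\log^4 n)$ span. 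The optional final call on $G$ has the same bounds.

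The preprocessing also fits within these bounds. Partitioning into degree-two subgraphs and coloring them with three colors takes $O(m\log^2 n\log\Delta)$ work and $O(\log^2 n\log\Delta)$ span, and uncoloring a minimum class takes $O(m_H\log n)$ work and $O(\log n)$ span per call (a sort or count over colors). Both are dominated.

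The only step needing care, and the one I expect to be the main obstacle, is verifying the hypotheses of the deterministic color-matching bound at every level. The uncolored matching must have size $O(m_H/\hat{\Delta})$, and the constants hidden in $\hat{\Delta}$ versus $\Delta_H$ must not blow up. I would check this directly: $\Delta_H<D$ and $D\le 2^{t+2}+1$. The round-count argument in the proof of Theorem~\ref{thm:deterministic} partitions pairs of the palette $[D]$ into $O(D)$ color-disjoint sets (Lemma~\ref{lem:color-pair-partition} applied with $D$ colors). It also gets fan-fan conflict degree $O(\Delta_H^2)$, which yields $O(D\cdot\Delta_H^2\log n)=O(\hat{\Delta}^3\log n)$ rounds. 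Its work amortization uses $\lambda_0=O(m_H/\hat{\Delta})$, which the minimum-class choice supplies.
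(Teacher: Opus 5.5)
Your proposal is correct and follows essentially the same route as the paper. You plug Algorithm~\ref{alg:det-color-matching} into the Elkin--Khuzman merge framework, uncolor a minimum-cardinality class to meet the $O(m_H/\hat{\Delta})$ size requirement, apply Observation~\ref{obs:larger-palette} with $\hat{\Delta}=D=O(2^t)$, and sum the per-level costs of $O(m8^t\log^2 n)$ work and $O(8^t\log^4 n)$ span as a geometric series to get $O(\Delta^3)$. Your extra check of the hypotheses (partitioning the palette $[D]$ into $O(D)$ color-disjoint sets, the $O(\Delta_H^2)$ fan-fan conflict bound, and $\lambda_0=O(m_H/\hat{\Delta})$ for the work amortization) just spells out what the paper folds into Observation~\ref{obs:larger-palette}.
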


\begin{proof}
We use the same framework and correctness argument as in Theorem~\ref{thm:rand-delta-plus-one}, replacing Algorithm~\ref{alg:rand-color-match} by the deterministic Algorithm~\ref{alg:det-color-matching}. As described above, for each invocation we uncolor a color class of minimum cardinality, so that the resulting uncolored matching satisfies the size requirement of the deterministic color-matching algorithm. By Observation~\ref{obs:larger-palette}, each such call has the same input-output effect as Procedure \textsc{Reduce-Color}, so the correctness argument carries over unchanged and is now deterministic. It remains only to bound the work and span.

At level $t\in\{0,\ldots,h-1\}$, setting $\hat{\Delta}=D$ gives $\hat{\Delta}=O(2^t)$. By Observation~\ref{obs:larger-palette} and Theorem~\ref{thm:deterministic}, a deterministic color-matching call on a graph $H$ with $m_H$ edges therefore requires $O(m_H8^t\log^2 n)$ work and $O(8^t\log^4 n)$ span. Since the graphs at each level are edge-disjoint, their edge counts sum to $m$, and all calls at the same level run in parallel. Hence level $t$ costs $O(m8^t\log^2 n)$ work and $O(8^t\log^4 n)$ span.

Since $2^h=O(\Delta)$, $\sum_{t=0}^{h-1}8^t=O(8^h)=O(\Delta^3)$. Therefore, all levels together cost $O(m\Delta^3\log^2 n)$ work and $O(\Delta^3\log^4 n)$ span. The possible additional color-matching call on the full graph and the initial steps of the framework are within the same bounds.
\end{proof}

\bibliographystyle{plain}
\bibliography{references}

\appendix
\section{Additional Details}\label{sec:appendix}

\begin{proof}[\textbf{Proof of Lemma~\ref{lem:parallel-paths}}]
We use the same connected-components routine and endpoint-identification idea as Elkin and Khuzman~\cite{elkin2026efficient}. We apply the deterministic CRCW PRAM connected-components algorithm of Shiloach and Vishkin~\cite{vishkin1982log}. After $O(\log n)$ parallel rounds, we can produce for every connected component a rooted tree whose vertices all have the same root. We define $\operatorname{comp}(v)$ to be the identifier of the root of the component containing $v$. Thus $ \operatorname{comp}(u)=\operatorname{comp}(v) $ if and only if $u$ and $v$ belong to the same connected component. Similarly we can assign identifier to edges, after the vertex labels are known, every edge $e=(u,v)$  stores $ \operatorname{comp}(e):=\operatorname{comp}(u)=\operatorname{comp}(v) $ independently in $O(1)$ work per edge and $O(1)$ span.

On a graph with $n$ vertices and $m_H$ edges, the connected-components algorithm requires $O((n+m_H)\log n)$ work and $O(\log n)$ span. Since $\Delta(H)\le2$, we have $m_H\le n$. Therefore, the connected-components step costs $O(n\log n)$ work and $O(\log n)$ span. It remains to identify the endpoints of every path component containing at least one edge. We use two arrays $L_1$ and $L_2$, each of size $n$ and indexed by component identifiers, and initialize all their entries to $\bot$ in parallel, which requires $O(n)$ work and $O(1)$ span. Since $H$ is stored by adjacency lists, the degree of each vertex is available from its adjacency-list representation, so every vertex can determine in $O(1)$ work whether its degree is one. 

In the first parallel phase, every degree-one vertex $v$ writes its identifier to $ L_1[\operatorname{comp}(v)]$. A path component containing at least one edge has exactly two degree-one vertices. Thus at most two processors write to any entry of $L_1$, and under ARBITRARY CRCW one of these two identifiers is retained. In the second parallel phase, every degree-one vertex $v$ with $ v\neq L_1[\operatorname{comp}(v)] $ writes its identifier to $ L_2[\operatorname{comp}(v)]$. Consequently, $L_1$ and $L_2$ contain the two endpoints of every path component containing at least one edge. Cycle components have no degree-one vertices, and isolated vertices have degree zero, so neither produces an endpoint entry.

The endpoint-identification procedure consists of initialization and two parallel write phases, and hence costs $O(n)$ work and $O(1)$ span. Together with the connected components computation, the total complexity is $O(n\log n)$ work and $O(\log n)$ span.
\end{proof}

\paragraph{\textbf{Data Structure.}}
We use a data representation similar to that of Elkin and Khuzman~\cite{elkin2026efficient}, with one small modification described later. The graph is stored by adjacency lists, and every vertex and edge has a unique identifier. For every edge $e$, we store its current color $\chi(e)\in[\Delta+1]\cup{\bot}$. We also maintain the tables $\operatorname{Color2Edge}(v,c)$ and $\operatorname{Edge2Color}(v,e)$ as in~\cite{elkin2026efficient}: given a color $c$, $\operatorname{Color2Edge}(v,c)$ determines in $O(1)$ work whether $v$ has an incident edge colored $c$ and, if so, returns that edge, while $\operatorname{Edge2Color}(v, e)$ returns in $O(1)$ work the current color of an edge incident to $v$. In particular, whether a given color is missing at $v$ can be tested in $O(1)$ work.

The data structure can be initialized once at the beginning of a color-matching invocation. Since isolated vertices play no role in edge coloring, we restrict attention to the non-isolated vertices, thus $n=O(m)$. We can initialize the $\operatorname{Color2Edge}$ and $\operatorname{Edge2Color}$ tables by scanning all colored edges in parallel, and then scan the $\Delta+1$ colors at each vertex to identify the required missing colors. For each vertex, all $\Delta+1$ colors are tested in parallel, and a parallel reduction is used to select the required one or two missing colors in $O(\log\Delta)$ span. This takes $O(n\Delta+m)=O(m\Delta)$ work and $O(\log\Delta)$ span. It is important, however, that we do not rebuild this data structure from scratch in every round. An $O(m\Delta)$-work initialization is within the scale of the per-round work in the algorithm of Elkin and Khuzman, but repeating such an initialization in every round of our algorithms would introduce an additional factor of $\Delta$ into our total work. We therefore maintain the data structure throughout the execution and explicitly update it whenever the coloring changes. The only operations that modify colors are path flips and fan rotations together with the final coloring of a u-edge.

For every vertex, we maintain at least one color that is currently missing at that vertex. Our only modification is that every endpoint of a currently uncolored edge stores two distinct missing colors rather than one. Since the uncolored edges form a matching, every endpoint $v$ of an uncolored edge has at most $\Delta-1$ colored incident edges and therefore has at least two distinct missing colors in $[\Delta+1]$. At the center $u$ of a u-edge, one of these colors is distinguished as the designated missing color $\varphi(u)$, while the other is kept as a reserve. This second stored color is used when the u-edge is eventually colored: when the edge is colored, the other color remains missing and can be updated as the designated missing color without scanning the palette for a replacement. Once the u-edge has been colored, its endpoints are no longer incident to an uncolored edge, so only one stored missing color is needed at those vertices. At vertices not incident to an uncolored edge, a single stored missing color suffices.

\begin{proof}[\textbf{Proof of Lemma~\ref{lem:parallel-fans}}]
Lemma~2 of Elkin and Khuzman~\cite{elkin2026efficient} shows that a maximal fan centered at $u$ for a given uncolored edge can be constructed in $O(\deg(u)\log\Delta)$ work and $O(\log\Delta)$ span. Since their construction allows an arbitrary missing color at the center, we use the designated missing color $\varphi(u)$.

We first specify how the constructed fans are represented. For every designated center $u$, we dedicate an array of length $O(\deg(u))$. Since the uncolored edges form a matching, their designated centers are distinct, and hence the total space required by these arrays is in order of $ \sum_{u\in W}\deg(u)\le \sum_{v\in V}\deg(v)=2m$. Let $ F=(u,\varphi(u)),(v_1,c_1),\ldots,(v_k,c_k)$ be the maximal fan constructed at $u$. We keep this indexed sequence in the array associated with $u$. For every $i\in[k]$, the $i$-th entry stores the leaf $v_i$, the identifier of the fan edge $(u,v_i)$, and the designated missing color $ c_i=\varphi(v_i)$. We also store the fan length $k$. Thus, after construction, every fan vertex, fan edge, and associated missing color can be accessed by its index in $O(1)$ work and $O(1)$ span.

We next classify the fan: let $c_k$ be the designated missing color of its last leaf. We can test in $O(1)$ work and $O(1)$ span whether $c_k$ is missing at $u$. If it is, then $F$ is trivial. Otherwise, by maximality of the fan, $c_k=c_j$ for one $j<k$, and hence $F$ is nontrivial with required pair $ (\varphi(u),c_k)$. We store index $j<k$ with $c_j=c_k$, so that the corresponding fan prefix can be identified later. Once $c_k$ is known, all leaves with indices $i<k$ can test in parallel whether $c_i=c_k$. Since one such index exists for every nontrivial fan, finding and storing $j$ requires $O(k)$ work and $O(1)$ span.

All fans can be constructed simultaneously even if they overlap in the original graph. The construction only reads the current coloring, the stored missing colors, and the coloring data structures, while the output of the fan centered at $u$ is written only to the array associated with $u$. Therefore, since $W$ denotes the set of centers, $\sum_{u\in W}\deg(u) \le \sum_{v\in V}\deg(v) = 2m$. Summing the work of the fan-construction routine over all centers requires $ O(m\log\Delta) $ work, while all fan constructions run simultaneously in $O(\log\Delta)$ span.

Moreover, if $k_u$ denotes the length of the fan centered at $u$, then $k_u\le\deg(u)$. Hence the additional work for storing the indexed fan representations, testing whether the terminal color is missing at the center, recording the trivial/nontrivial classification and required pair, and, for every nontrivial fan, finding and storing an index $j$ with $c_j=c_k$, is $O(m)$ in total. These operations consist of a constant number of parallel reads, comparisons, and writes once the fans have been constructed, and therefore add only $O(1)$ span.

Thus all maximal fans can be constructed and stored, classified as trivial or nontrivial, and the required pair of every nontrivial fan can be determined in $O(m\log\Delta)$ work and $O(\log\Delta)$ span.
\end{proof}

\begin{proof}[\textbf{Proof of Lemma~\ref{lem:parallel-fan-rotations}}]
Consider a trivial fan $ F=(u,\alpha),(v_1,c_1), \ldots,(v_k,c_k)\in\mathcal{F}$, by the indexed representation of the fan, each leaf $v_i$, fan edge $e_i=(u,v_i)$, and color $c_i$ can be accessed in $O(1)$ work and $O(1)$ span. Before the rotation, $e_1$ is uncolored and $\chi(e_i)=c_{i-1}$ for every $i\ge2$, while $c_k$ is missing at $u$. Hence the rotation and final coloring can be performed directly by setting $\chi(e_i)\leftarrow c_i$ for all $i\in[k]$ in parallel. This recoloring is proper: $c_1,\ldots,c_{k-1}$ are the distinct old colors of the fan edges incident to $u$, $c_k$ is missing at $u$, and each $c_i$ is missing at $v_i$. In particular, $(u,v_k)$ is colored with $c_k$.

The data structures can be updated in a constant number of parallel phases. For every recolored edge, its color and $\operatorname{Edge2Color}$ entries are updated directly. For $\operatorname{Color2Edge}$, we first delete all entries corresponding to old fan-edge colors and then insert all entries corresponding to the new colors. Within each phase, the relevant entries are distinct because the fan-edge colors at the center are distinct and the leaves are distinct. The stored missing colors are updated locally: for $i\ge2$, if the stored missing color $c_i$ at $v_i$ is used by the rotation, it is replaced by the newly missing color $c_{i-1}$; at $v_1$ and $u$, one of the two stored missing colors remains missing after the uncolored edge is colored and is retained. Thus all data-structure updates require $O(1)$ work per involved edge or vertex and $O(1)$ span.

Since the fans in $\mathcal{F}$ are pairwise vertex-disjoint, updates belonging to different fans do not interfere. The total number of fan edges involved is at most $m$. Therefore, all fans can be rotated and their corresponding uncolored edges colored in $O(m)$ work and $O(1)$ span.
\end{proof}

\begin{proof}[\textbf{Proof of Theorem~\ref{thm:conc}}]
Recall that the set $W$ and the subpalette $\Gamma$ are fixed before any of the phase coins are sampled. For $0\le i\le b$, define $S_i$ and $m_i$ as in the proof of Theorem~\ref{thm:expectation-same-color}. Thus $m_0=|C|$ and $D=m_b$. We first show that, for each fixed $i\in[b]$, the following holds with probability at least $1-e^{-\lambda}$,
\begin{equation}\label{eq:rec}
  m_i \le \frac{1}{2} \,m_{i-1}+\sqrt{2\lambda m_{i-1}}+1.
\end{equation}

Fix $i\in[b]$. Let $\mathcal{G}_{i-1}$ be the $\sigma$-field generated by $\mathcal{F}_{i-1}$ with $u$'s phase-$i$ coin toss $\xi_{i,p_i(u),K_i(u)}$; then conditioning on $\mathcal{G}_{i-1}$ fixes the entire outcome of the first $i-1$ phases as well as the value of $u$'s coin in phase $i$, while leaving every other phase-$i$ coin fair and independent. Since $\mathcal{F}_{i-1} \subseteq \mathcal{G}_{i-1}$, every $\mathcal{F}_{i-1}$-measurable quantity is also determined by $\mathcal{G}_{i-1}$. In particular, by Lemma~\ref{lem:agree-and-measurable}, the set $S_{i-1}$ and the value $m_{i-1}$ are $\mathcal{F}_{i-1}$-measurable, and hence also fixed once we condition on $\mathcal{G}_{i-1}$.

If $m_{i-1}=0$, then $S_i\subseteq S_{i-1}=\emptyset$ and hence $m_i=0$, so Inequality~\eqref{eq:rec} holds trivially; we therefore assume $m_{i-1}\ge1$. Recall that at most one center of $S_{i-1}$ is tied to $u$ in phase $i$. If such a center exists, denote it by $w^*$ and set $T_{i-1}:=S_{i-1}\setminus\{w^*\}$; otherwise set $T_{i-1}:=S_{i-1}$. As before, in either case $w^*$ contributes at most $1$ to $m_i$, so
\begin{equation}\label{eq:mi-split}
  m_i\,\le\,1+\bigl|\{w\in T_{i-1}:w\in S_i\}\bigr|.
\end{equation}

If $T_{i-1}=\emptyset$, then the right-hand side of the above inequality is at most $1$, and therefore inequality~\eqref{eq:rec} again holds trivially; so we assume $T_{i-1}\neq\emptyset$.

Partition $T_{i-1}$ into classes according to the relation of being tied in phase $i$, with each untied center forming a singleton class. Thus two centers belong to the same class if and only if they are tied in phase $i$, and every class has size at most two. By Lemma~\ref{lem:agree-and-measurable} the partition is $\mathcal{F}_{i-1}$-measurable as well, hence determined under $\mathcal{G}_{i-1}$. Let the classes be indexed by $j\in[r]$, with sizes $s_j\in\{1,2\}$, so that
\[
  \sum_{j=1}^r s_j=|T_{i-1}|\le m_{i-1}.
\]

Since no center of $T_{i-1}$ is tied to $u$, none of the classes uses the same coin as $u$. Moreover, different classes use distinct phase-$i$ coins. Hence the class coins are mutually independent and are also independent of $\mathcal{G}_{i-1}$. For each $j$, let $Y_j:=\bigl|\{w\in\text{class }j:w\in S_i\}\bigr|\in\{0,1,2\}$. Given $\mathcal{G}_{i-1}$, the value of $Y_j$ is determined by the single coin that the members of class $j$ read; as these class coins are independent, so are $Y_1,\dots,Y_r$ conditioned on $\mathcal{G}_{i-1}$. Since the classes partition $T_{i-1}$ and from \eqref{eq:mi-split} we have
\begin{equation}\label{eq:mi-sum}
  m_i\,\le 1+\bigl|\{w\in T_{i-1}:w\in S_i\}\bigr| = \,1+\sum_{j=1}^r Y_j .
\end{equation}

Fix $w\in T_{i-1}$. Since $w$ is not tied to $u$, its phase-$i$ coin $\xi_{i,p_i(w),K_i(w)}$ is distinct from $u$'s coin, hence fair and independent of $\mathcal{G}_{i-1}$. Under $\mathcal{G}_{i-1}$ the bits $(\varphi_{i-1}(w))^{(i)}$ and $(\varphi_{i-1}(u))^{(i)}$ and the coin $\xi_{i,p_i(u),K_i(u)}$ are all fixed, so the survival condition, $(\varphi_{i-1}(w))^{(i)}\oplus\xi_{i,p_i(w),K_i(w)} =(\varphi_{i-1}(u))^{(i)}\oplus\xi_{i,p_i(u),K_i(u)}$ holds if and only if $w$'s coin $\xi_{i,p_i(w),K_i(w)}$ takes one particular value. Since this coin is fair and independent of $\mathcal{G}_{i-1}$, this event occurs with probability $\frac{1}{2}$. Hence $\Pr[w\in S_i\mid\mathcal{G}_{i-1}]=\tfrac12$ for every $w\in T_{i-1}$. We can now bound the conditional expectation of the number of centers of $T_{i-1}$ that survive phase $i$ (i.e., that still agree with $u$ on the first $i$ bits):
\begin{equation}\label{eq:mean-half}
  \mathbb{E}\,\left[\sum_{j=1}^r Y_j\ \middle|\ \mathcal{G}_{i-1} \right] =\sum_{w\in T_{i-1}} \Pr[w\in S_i\mid \mathcal{G}_{i-1}] =\frac{|T_{i-1}|}{2} \le\frac{m_{i-1}}{2}.
\end{equation}

The variables $Y_1,\dots,Y_r$ are independent given $\mathcal{G}_{i-1}$, and each $Y_j$ takes values in an interval of length $s_j$, the size of class $j$. Before applying Hoeffding's inequality we first need the sum of the squared interval lengths: since $s_j\le2$ for every $j$, and $\sum_j s_j=|T_{i-1}|\le m_{i-1}$,
\[
  \sum_{j=1}^r s_j^{\,2} \,\le\,2\sum_{j=1}^r s_j  \,\le\,2\,m_{i-1}.
\]

Recall that, conditioned on $\mathcal{G}_{i-1}$, the variables $Y_1,\dots,Y_r$ are independent, and each $Y_j$ takes values in $[0,s_j]$. Let $\mu:= \mathbb{E}[\sum_{j}Y_j \mid \mathcal{G}_{i-1}]$. From Hoeffding's inequality we get for every $t\ge0$,
\[
  \Pr \, \left[ \sum_{j=1}^r Y_j\ge\mu+t \, \middle|\, \mathcal{G}_{i-1} \right]
  \le \exp\,\left(-\frac{2t^{2}}{\sum_{j}s_j^{2}}\right).
\]
By \eqref{eq:mean-half} we have $\mu\le m_{i-1}/2$, so the threshold $m_{i-1}/2+t$ is at least $\mu+t$, and the tail probability only decreases. Combining with the fact that $\sum_j s_j^{2} \le 2\,m_{i-1}$,
\begin{equation}
  \Pr\,\left[ \sum_{j=1}^r Y_j \ge \frac{m_{i-1}}{2}+t \, \middle|\, \mathcal{G}_{i-1} \right] \le
  \exp\,\left( -\frac{2t^{2}}{\sum_j s_j^{2}} \right) \le
  \exp\,\left( -\frac{t^{2}}{m_{i-1}} \right).
\end{equation}

Setting $t:=\sqrt{2\,\lambda\,m_{i-1}}$, the right-hand side becomes
\[
  \exp\, \left( -\frac{t^{2}}{m_{i-1}} \right) =\exp\, \left(- \frac{2\, \lambda\,m_{i-1}}{m_{i-1}}\right)
  =e^{-2\lambda} \le e^{-\lambda},
\]
so with probability at least $1-e^{-\lambda}$, conditioned on $\mathcal{G}_{i-1}$, we have
\[
  \sum_{j=1}^r Y_j \,<\,\frac{m_{i-1}}{2}+\sqrt{2\lambda\,m_{i-1}} .
\]

Combining this with \eqref{eq:mi-sum} gives us
\[
  m_i\,\le\,1+\sum_{j=1}^r Y_j \, < \,\frac{m_{i-1}}{2}+\sqrt{2\lambda\,m_{i-1}}+1,
\]
which is what we wanted to show in inequality~\eqref{eq:rec}. Let $E_i$ denote the event that inequality~\eqref{eq:rec} holds. We have shown $\Pr[E_i\mid\mathcal{G}_{i-1}]\ge 1-e^{-\lambda}$. Since $\mathcal{G}_{i-1}$ adds to $\mathcal{F}_{i-1}$ only the value of $u$'s coin, and this bound holds for either of its two values, it also holds conditioned on $\mathcal{F}_{i-1}$ alone. Since the unconditional probability $\Pr[E_i]$ is the average of $\Pr[E_i\mid\mathcal{F}_{i-1}]$ over the outcomes of the first $i-1$ phases, and each such conditional probability is at least $1-e^{-\lambda}$, we conclude $\Pr[E_i]\ge 1-e^{-\lambda}$. Finally, by a union bound over the $b$ phases, inequality~\eqref{eq:rec} holds for every $i\in[b]$ simultaneously with probability at least $1-b\,e^{-\lambda}$:
\[
  \Pr \, \left[ \, \bigcap_{i=1}^b E_i\,\right] \,\ge\,
  1-\sum_{i=1}^b\Pr[ \overline{E_i}] \,\ge\,
  1-\sum_{i=1}^b e^{-\lambda} \,=\, 1-b\,e^{-\lambda}.
\]

From the above, inequality~\eqref{eq:rec} holds simultaneously for every phase $i\in[b]$ with probability at least $1-b\,e^{-\lambda}$. Assume from now on that this is the case; we want to show it implies $D\le \kappa (\Delta +\lambda)$, which then holds with the same probability. Let us first convert inequality~\eqref{eq:rec} into a linear recursion for $\sqrt{m_i}$. Since $\lambda \ge1$,
\begin{align}
  \left(\sqrt{\frac{m_{i-1}}{2}}+\sqrt{\lambda} \right)^{\,2}
  &=\frac{m_{i-1}}{2}
    +2\sqrt{\frac{m_{i-1}}{2}}\,\sqrt{\lambda}
    +\lambda \notag\\
  &=\frac{m_{i-1}}{2}+\sqrt{2\lambda\,m_{i-1}}+\lambda \notag\\
  &\ge\frac{m_{i-1}}{2}+\sqrt{2\lambda\,m_{i-1}}+1
  \,\ge\,m_i, \label{eq:sqrt-step}
\end{align}
where the first inequality uses $\lambda \ge1$, and the last inequality is from \eqref{eq:rec}. Then taking the square roots of \eqref{eq:sqrt-step} we have
\begin{equation}\label{eq:sqrt-rec}
  \sqrt{m_i} \, \le \,\frac{1}{\sqrt{2}} \sqrt{m_{i-1}}+ \sqrt{\lambda}.
\end{equation}

We now bound $\sqrt{m_\ell}$ in terms of its initial value $\sqrt{m_0}$. We claim that for every $0\le \ell\le b$ the following holds,
\begin{equation}\label{eq:sqrt-unroll}
  \sqrt{m_\ell} \,\le\, 2^{-\ell/2}\sqrt{m_0}+\sqrt{\lambda}\,\sum_{j=0}^{\ell-1}2^{-j/2}.
\end{equation}

We prove this by induction on $\ell$. For $\ell=0$ both sides equal $\sqrt{m_0}$, since the sum is empty. Assuming \eqref{eq:sqrt-unroll} for $\ell-1$ and applying \eqref{eq:sqrt-rec},
\begin{align*}
  \sqrt{m_\ell}
  &\,\le\, \frac{1}{\sqrt{2}} \sqrt{m_{\ell-1}}+\sqrt {\lambda}\\
  &\,\le\, \frac{1}{\sqrt{2}} \left( 2^{-(\ell-1)/2} \sqrt{m_0} +\sqrt{ \lambda} \sum_{j=0}^{\ell-2}2^{-j/2} \right)+\sqrt{ \lambda}\\
  &\,=\, \frac{2^{-(\ell-1)/2}}{\sqrt{2}} \, \sqrt{m_0} +\sqrt{ \lambda} \left( \frac{1}{\sqrt{2}} \sum_{j=0}^{\ell-2}2^{-j/2}+1 \right)\\
  &\,=\, 2^{-\ell/2} \sqrt{m_0} +\sqrt{ \lambda} \left (\sum_{j=1}^{\ell-1}2^{-j/2}+1 \right)\\
  &\,=\, 2^{-\ell/2} \sqrt{m_0} +\sqrt{ \lambda} \sum_{j=0}^{\ell-1}2^{-j/2},
\end{align*}
This completes the induction. Set $\ell=b$ to bound the sum on the right hand side of the inequality~\eqref{eq:sqrt-unroll}:
\[
  \sum_{j=0}^{b-1}2^{-j/2}
  \le\sum_{j=0}^{\infty}2^{-j/2}
  =\frac{1}{1-1/\sqrt{2}}
  =2+\sqrt{2},
\]

Recall that $m_0=|C|$, $|C|\le\Delta^2$, and $|\Gamma|=k=2^b>(\Delta+1)/2$. Then
\begin{align*}
  \sqrt{m_b}
  &\le 2^{-b/2}\sqrt{|C|}+(2+\sqrt{2})\sqrt{\lambda}\\
  &= \sqrt{\frac{|C|}{2^b}}+(2+\sqrt{2})\sqrt{\lambda}\\
  &\le \sqrt{\frac{\Delta^2}{k}}+(2+\sqrt{2})\sqrt{\lambda}\\
  &< \sqrt{\frac{2\Delta^2}{\Delta+1}}+(2+\sqrt{2})\sqrt{\lambda}\\
  &\le \sqrt{2\Delta}+(2+\sqrt{2})\sqrt{\lambda}
\end{align*}
and therefore
\begin{align*}
  D=m_b
  &< \left(\sqrt{2\Delta}+(2+\sqrt{2})\sqrt{\lambda}\right)^2\\
  &=2\Delta+2(2+\sqrt{2})\sqrt{2\Delta\lambda} +(2+\sqrt{2})^2\lambda\\
  &\le 2\Delta+(2+\sqrt{2})(2\Delta+\lambda) +(2+\sqrt{2})^2\lambda\\
  &<16(\Delta+\lambda).
\end{align*}
Thus, taking $\kappa:=16$, whenever inequality~\eqref{eq:rec} holds for all phases, we have $D<\kappa(\Delta+\lambda)$.
Since this happens with probability at least $1-b\,e^{-\lambda}$, we conclude
\[
  \Pr \bigl[D \ge \kappa(\Delta+\lambda) \bigr] \le b\,e^{-\lambda}.
\]

Set $\lambda:=c\log n$ for a constant $c>1$. Since $\Delta+1\le n$ and $b=\lfloor\log_2(\Delta+1)\rfloor$, we have $b\le\log_2 n$, so the failure probability is
\[
  b\, e^{-\lambda} \, \le \, (\log_2 n)\,e^{-c\log n}
  \, = \, \frac{\log_2 n}{n^{c}}
\]
since $c>1$ this is at most $1/\mathrm{poly}(n)$. Hence, with probability at least $1-1/\mathrm{poly}(n)$,
\[
    D\,\le\,\kappa(\Delta+\lambda)\,=\,\kappa\bigl(\Delta+c\log n\bigr)\,=\,O(\Delta+\log n)
\]
Finally, if $\Delta=\Omega(\log n)$, then $D=O(\Delta)$, again with probability at least $1-1/\mathrm{poly}(n)$.
\end{proof}

\end{document}